\definecolor{shadecolor}{rgb}{0.95, 0.95, 0.86}
\renewcommand{\d}{{\mathrm d}}
\numberwithin{equation}{section}
\newtheorem{theo}{Theorem}[section]
\newtheorem{rem}[theo]{Remark}
\newtheorem{problem}[theo]{Riemann-Hilbert Problem}
\newtheorem{prop}[theo]{Proposition} 
\newtheorem{cor}[theo]{Corollary}
\begin{document}

\title[Asymptotic behavior of a log gas in the bulk scaling limit I.]{On the asymptotic behavior of a log gas in the bulk scaling limit in the presence of a varying external potential I.}

\author{Thomas Bothner}
\address{Centre de recherches math\'ematiques,
Universit\'e de Montr\'eal, Pavillon Andr\'e-Aisenstadt, 2920 Chemin de la tour, Montr\'eal, Qu\'ebec H3T 1J4, Canada}
\email{bothner@crm.umontreal.ca}

\author{Percy Deift}
\address{Courant Institute of Mathematical Sciences, 251 Mercer St., New York, NY 10012, U.S.A.}
\email{deift@cims.nyu.edu}

\author{Alexander Its}
\address{Department of Mathematical Sciences,
Indiana University-Purdue University Indianapolis,
402 N. Blackford St., Indianapolis, IN 46202, U.S.A.}
\email{itsa@math.iupui.edu}

\author{Igor Krasovsky}
\address{Department of Mathematics, Imperial College, London SW7 2AZ, United Kingdom}
\email{i.krasovsky@imperial.ac.uk}

\keywords{Sine kernel determinant, transition asymptotics, Riemann-Hilbert problem, Deift-Zhou nonlinear steepest descent method.}

\subjclass[2010]{Primary 82B23; Secondary 33E05, 34E05, 34M50.}

\thanks{T.B. acknowledges the support of Concordia University through a postdoctoral fellow top-up award as well as the hospitality of the Banff International Research Station where part of this work was completed. P.D. acknowledges support of NSF Grant DMS-1300965. A.I. acknowledges support of NSF Grants DMS-1001777 and DMS-1361856, and of SPbGU grant N 11.38.215.2014, as well as the hospitality of the Berlin Technical University and the Imperial College of London where part of this work was completed. I.K. is grateful for the hospitality of Indiana University-Purdue University Indianapolis in April 2013 and acknowledges support of the European Community Seventh Framework grant ``Random and Integrable Models in Mathematical Physics". The authors would like to thank Jinho Baik for bringing \cite{BP} to their attention. Also the authors would like to thank the referees for their careful reading of the original version of this paper, and for their comments and questions, which led the authors to reformulate, correct and streamline the paper in a variety of ways.}

\begin{abstract}
We study the determinant $\det(I-\gamma K_s), 0<\gamma <1$, of the integrable Fredholm operator $K_s$ acting on the interval $(-1,1)$ with kernel $K_s(\lambda, \mu)= \frac{\sin s(\lambda - \mu)}{\pi (\lambda-\mu)}$. This determinant arises in the analysis of a log-gas of interacting particles in the bulk-scaling limit, at inverse temperature $\beta=2$, in the presence of an external potential $v=-\frac{1}{2}\ln(1-\gamma)$ supported on an interval of length $\frac{2s}{\pi}$. We evaluate, in particular, the double scaling limit of $\det(I-\gamma K_s)$ as $s\rightarrow\infty$ and $\gamma\uparrow 1$, in the region $0\leq\kappa=\frac{v}{s}=-\frac{1}{2s}\ln(1-\gamma)\leq 1-\delta$, for any fixed $0<\delta<1$. This problem was first considered by Dyson in \cite{Dy1}.
\end{abstract}

\date{\today}
\maketitle

\section{Introduction and statement of results}\label{sec0}

Let $K_s$ be the trace class operator with kernel
\begin{equation*}
	K_s(\lambda,\mu) = \frac{\sin s(\lambda-\mu)}{\pi(\lambda-\mu)},\ \ s>0
\end{equation*}
acting on $L^2(-1,1)$. Consider the determinant $\det(I-\gamma K_s)$, where $0\leq\gamma\leq 1$. We write $\gamma$ in the form
\begin{equation*}
	\gamma=1-e^{-2v}=1-e^{-2\kappa s},\hspace{0.5cm} \kappa=\frac{v}{s},\hspace{0.5cm} 0\leq v\leq+\infty.
\end{equation*}
The determinant $\det(I-\gamma K_s)$ arises in the following way. In the bulk scaling limit for $N\times N$ Hermitian matrices from the Gaussian Unitary Ensemble (GUE), $N\rightarrow\infty$, one considers the eigenvalues $\{\eta_j\}_{j=1}^N$ of the matrices in a neighborhood of a point, say $\eta_B$, in the bulk of the spectrum, on the scale where the average number of eigenvalues is one per unit length. More precisely we consider $x_j=\rho_N(\eta_j-\eta_B),1\leq j\leq N$, where $\rho_N\d\eta,N\rightarrow\infty$, is the expected number of eigenvalues in an interval $\d\eta$ near $\eta_B$: Clearly the expected number of eigenvalues per unit $x$-length is now one. As $N\rightarrow\infty$, the $x_j$'s converge to a determinantal random point process (see \cite{Sosh} and the references therein to the seminal work of A. Lenard) with correlation kernel $K(\lambda,\mu)=\frac{\sin\pi(\lambda-\mu)}{\pi(\lambda-\mu)}$. Hence, by general theory, if $\phi\in L^{\infty}(\mathbb{R})$ with support inside a bounded set $B$,
\begin{equation*}
	\mathbb{E}\Big(\prod_{j}\big(1+\phi(x_j)\big)\Big)=\det(I+K\phi)\Big|_{L^2(B)}.
\end{equation*}
In particular (see \cite{Dy1}) if $\mathcal{V}(x)$ is an external potential such that $\mathcal{V}(x)=2v>0$ for $-\frac{s}{\pi}<x<\frac{s}{\pi},s>0$,\footnote{Dyson \cite{Dy1} uses ``$\pi s$" wherever we use ``$s$". In particular, he considers the kernel $\frac{\sin\pi s(\lambda-\mu)}{\pi(\lambda-\mu)}$ on $L^2(-1,1)$, etc.} and zero otherwise, then
\begin{equation}\label{Dcorr}
	\mathbb{E}\left(\exp\bigg[-\sum_{j}\mathcal{V}(x_j)\bigg]\right)=\det(I-\gamma K)\Big|_{L^2(-\frac{s}{\pi},\frac{s}{\pi})}=\det(I-\gamma K_s)\Big|_{L^2(-1,1)}.
\end{equation}
For GUE, the eigenvalues (``particles") exhibit repulsion proportional to $e^{-2\sum_{j<k}\ln|x_j-x_k|^{-1}}$ and, in analogy with thermodynamics, the point process $\{x_j\}$ is often called a log-gas, or a Coulomb gas, at inverse temperature $\beta=2$. This terminology was introduced by F. Dyson in the early $60$'s.\smallskip
%
%In the bulk scaling limit for $N\times N$ Hermitian matrices taken from the Gaussian Unitary Ensemble, one considers eigenvalues $\{x_j\}$ of the matrices in a neighborhood of a point, say $x_B$, in the bulk of the spectrum, on the scale where the average number of eigenvalues is one per unit length. In \cite{Dy1} Dyson observes\footnote{Dyson uses $``\pi s$'' wherever we use $``s$''. In particular, he considers the kernel $\frac{\sin\pi s(\lambda-\mu)}{\pi(\lambda-\mu)}$ on $L^2(-1,1)$, etc.} that $\det(I-\gamma K_s)$ is the expected value, in the bulk-scaling limit, of $\exp[-\sum_j\mathcal{V}(x_j)]$, where the external potential $\mathcal{V}(x)=2v>0$ for $x$ in an interval of (scaled) length $\frac{2s}{\pi}$ near $x_B$, and zero elsewhere. After a simple change of variables, we have $x_B=0$ and the interval has the form $(-\frac{s}{\pi},\frac{s}{\pi})$.\smallskip

The effect of the potential $2v>0$ is to push the particles $\{x_j\}$ out of the region $(-\frac{s}{\pi},\frac{s}{\pi})$. When $v>0$ is very small compared to $s$, we should have $\det(I-\gamma K_s)\sim 1$. However, as $v$ grows with respect to $s$, we anticipate that particles begin to leak out from the edges of the interval $(-\frac{s}{\pi},\frac{s}{\pi})$ into the potential free region $(-\infty,-\frac{s}{\pi})\cup(\frac{s}{\pi},\infty)$. Eventually, when $v\gg s$, there are no particles left in $(-\frac{s}{\pi},\frac{s}{\pi})$. This is consistent with the well known fact in random matrix theory that $\det(I-K_s)=\lim_{v\rightarrow\infty}\det(I-\gamma K_s)$ is the probability (or equivalently, the expectation) that there are no (bulk-scaled) eigenvalues in $(-\frac{s}{\pi},\frac{s}{\pi})$.\smallskip

We are interested in the behavior of $\det(I-\gamma K_s)$ as $s$ and $v\rightarrow\infty$ in the $(s,v)$ plane, compare Figure \ref{Deiftdia}.
\begin{figure}[tbh]
\begin{center}
\includegraphics[width=0.35\textwidth]{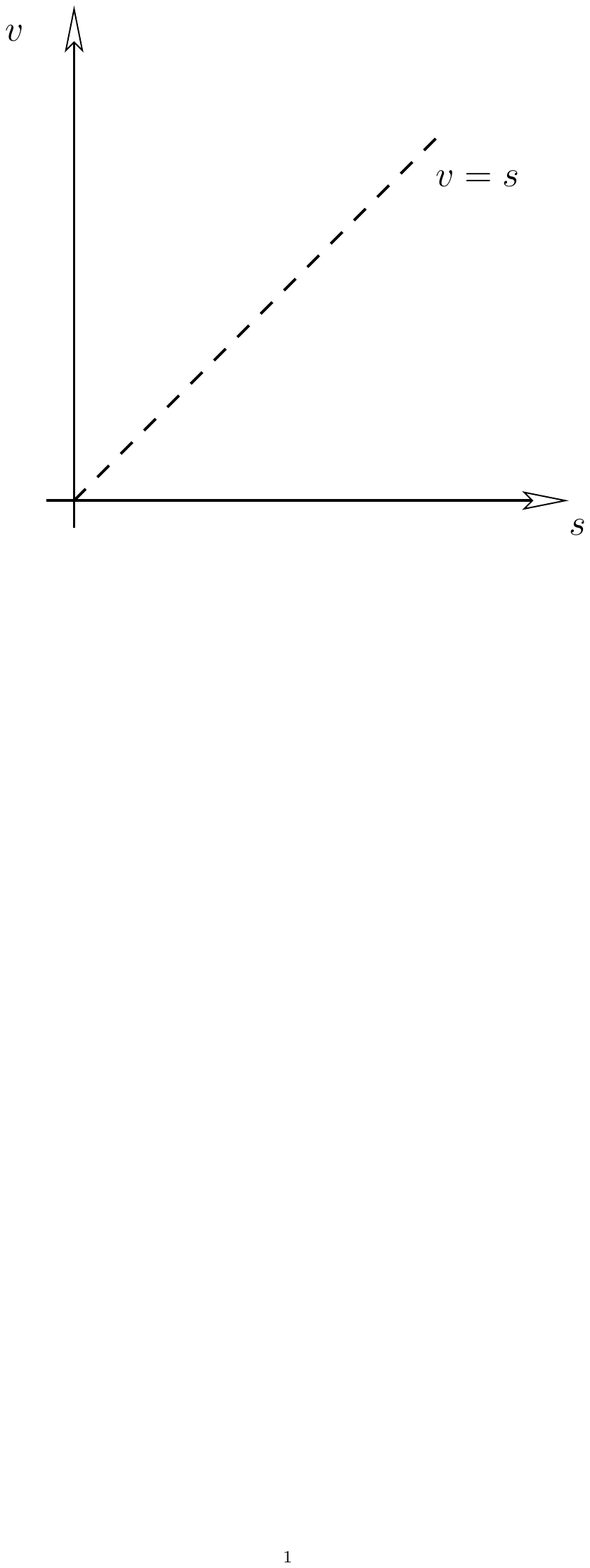}
\end{center}
\caption{The double scaling diagram}
\label{Deiftdia}
\end{figure}
%\begin{figure}[tbh]
%\begin{center}
%\resizebox{0.25\textwidth}{!}{\input{Deiftdia2.pdf_t}}
%\caption{The double scaling diagram}
%\label{Deiftdia}
%\end{center}
%\end{figure}

For a fixed $0\leq v\leq\infty$, the behavior is known: in the case $v=\infty$, i.e. $\gamma=1$, as $s\rightarrow\infty$
\begin{equation}\label{DW}
	\det(I-K_s) = \exp\left[-\frac{s^2}{2}\right]s^{-\frac{1}{4}}c_0\Big(1+\mathcal{O}\big(s^{-1}\big)\Big),
\end{equation}
where
\begin{equation}\label{DWconstant}
	c_0 = \exp\left[\frac{1}{12}\ln 2+3\zeta'(-1)\right]
\end{equation}
and $\zeta(\cdot)$ is the Riemann zeta function. This result was conjectured by Dyson in \cite{Dy0}, and eventually proved in \cite{E,K,DIIZ}. Many authors contributed to the analysis of $\det(I-K_s)$, including des Cloizeaux, Mehta and Widom (see \cite{DIIZ}, and \cite{DII} for a historical review). For $v\in[0,\infty)$ fixed, as $s\rightarrow\infty$,
\begin{equation}\label{BWBB1}
	\det(I-\gamma K_s)=\exp\left[-\frac{4vs}{\pi}\right](4s)^{\frac{2v^2}{\pi^2}}b^2(v)\Big(1+\mathcal{O}\left(s^{-1}\right)\Big),
\end{equation}
where
\begin{equation}\label{BWBB2}
	b(v) = e^{(1+\gamma)\frac{v^2}{\pi^2}}\prod_{k=1}^{\infty}\left(1+\frac{v^2}{\pi^2k^2}\right)^ke^{-\frac{v^2}{\pi^2 k}}
\end{equation}
and $\gamma$ is Euler's constant. This result was proved by Basor and Widom in 1983 \cite{BW} and then independently by Budylin and Buslaev in 1995 \cite{BB}.
\begin{rem}
Note that $b(v)$ can be expressed in terms of the Barnes $G$-Function (see \cite{BE,NIST})
\begin{equation}\label{BG}
	b(v) = G\left(1+\frac{iv}{\pi}\right)G\left(1-\frac{iv}{\pi}\right).
\end{equation}
As the asymptotic behavior of $G(1+z)$ as $z\rightarrow\infty,|\textnormal{arg}\,z|\leq\pi-\eta,\,0<\eta<\pi,$ is known (see e.g. \cite{BE,NIST}), the asymptotic behavior of $b(v)$ as $v\rightarrow\infty$ is known through \eqref{BG}. We find
\begin{equation*}
	\ln b(v) = -\frac{v^2}{\pi^2}\ln\left(\frac{v}{\pi}\right)+\frac{3v^2}{2\pi^2}-\frac{1}{6}\ln\left(\frac{v}{\pi}\right)+2\zeta'(-1)+\mathcal{O}\left(v^{-1}\right),\ \ v\rightarrow+\infty.
\end{equation*}
\end{rem}
\begin{rem}\label{Busref} Although \eqref{BWBB1}, \eqref{BWBB2} were first proven in \cite{BW,BB} for fixed $0\leq v<\infty$, we will derive the following stronger estimation in the sequel \cite{BDIK} to the current paper -- in this expansion $v$ does not have to be held fixed any longer: $\exists\,s_0,c_1,c_2>0$ such that%$\exists\,s_0=s_0(\epsilon,\delta)>0$ such that
\begin{equation}\label{BBrefined}
	\det(I-\gamma K_s)=\exp\left[-\frac{4vs}{\pi}\right](4s)^{\frac{2v^2}{\pi^2}}b^2(v)\Big(1+r(s,v)\Big)%\mathcal{O}\left(s^{-1+6(1-\epsilon)}\right)\Big)
\end{equation}
where $s\geq s_0,0\leq v<s^{\frac{1}{3}}$ and
\begin{equation*}
	\big|r(s,v)\big|<c_1\frac{v}{s}+c_2\frac{v^3}{s}.%+e^{-\epsilon s}.
\end{equation*}
%uniformly for $s\geq v^{\frac{1}{1-\epsilon}}\geq s_0, \frac{6}{7}\leq\epsilon<1,0<\delta<1$. 
This more general estimate is needed to control certain error-terms in Theorem \ref{theo1} below (see also Remark \ref{bohiPa}). The error term $r(s,v)$ can be differentiated with respect to $s$.
\end{rem}

We are interested, in particular, how the asymptotic behavior of $\det(I-\gamma K_s)$ is transformed from exponential decay $\exp\left[-\frac{4vs}{\pi}\right]$ in \eqref{BWBB1}, to super-exponential decay $\exp\left[-\frac{s^2}{2}\right]$ in \eqref{DW}, as $v$ increases from $0$ to $\infty$ (in this connection, see again Remark \ref{bohiPa}).\bigskip

To this end, we collect certain a priori bounds for $\det(I-\gamma K_s)$ which will motivate subsequent steps: From the formula
\begin{equation*}
	K_s(\lambda,\mu) = \frac{\sin s(\lambda-\mu)}{\pi(\lambda-\mu)} = \frac{1}{2\pi}\int_{-s}^se^{it\lambda}\,\overline{e^{it\mu}}\,\d t,
\end{equation*}
it follows that $K_s$ is a positive definite, trace class operator in $L^2(-1,1)$ with norm $\|K_s\|<1$ and trace norm
\begin{equation}\label{a0}
	\left\|K_s\right\|_1=\int_{-1}^1K_s(\lambda,\lambda)d\lambda = \frac{2s}{\pi}.
\end{equation}
Moreover $K_s$ is an increasing function of $s$, $\langle f,K_sf\rangle\leq \langle f,K_{s'}f\rangle$ for $s<s'$ and $f\in L^2(-1,1)$, and so $\det(I-\gamma K_s)$ is a decreasing function of $s$: Clearly, $\det(I-\gamma K_s)$ is also a decreasing function of $v$.
The eigenvalues $\{\lambda_n=\lambda_n(s)\}_{n=0}^{\infty}$ of $K_s$ satisfy $1>\lambda_0\geq\lambda_1\geq\ldots\geq\lambda_n\geq\ldots\geq 0$
and for any fixed $n\geq 0$,
\begin{equation}\label{a1}
	\lambda_n(s) = 1-\frac{\sqrt{\pi}}{n!}2^{3n+2}s^{n+\frac{1}{2}}e^{-2s}\Big(1+\mathcal{O}\left(s^{-1}\right)\Big)
\end{equation}
as $s\rightarrow\infty$ (see \cite{Sl}). In particular, as $s\rightarrow\infty$,
\begin{equation}\label{a2}
	\lambda_0(s)=1-4\sqrt{\pi s}e^{-2s}\Big(1+\mathcal{O}\left(s^{-1}\right)\Big).
\end{equation}
In order to derive the bounds for $\det(I-\gamma K_s)$ first differentiate the relation
\begin{equation*}
	\ln\det(I-\gamma K_s) = \textnormal{tr}\big(\ln(I-\gamma K_s)\big)
\end{equation*}
with respect to $\gamma$ and reintegrate, obtaining
\begin{equation}\label{a3}
	\det(I-\gamma K_s) = \exp\left[-\int_0^{\gamma}\textnormal{tr}\left(\left(I-\mu K_s\right)^{-1}K_s\right)d\mu\right]
\end{equation}
and so
\begin{equation}\label{a4}
	e^{-\frac{4vs}{\pi}}\leq e^{\frac{2s}{\pi\lambda_0}\ln(1-\lambda_0\gamma)}\leq \det(I-\gamma K_s)\leq e^{-\frac{2s\gamma}{\pi}}.
\end{equation}
We see from \eqref{a4} that $\det(I-\gamma K_s)$ is bounded below by $\exp\left[-\frac{4vs}{\pi}\right]$ which is, in particular, consistent with \eqref{BWBB1} as $s\rightarrow\infty$. From \eqref{a2}, for $s$ sufficiently large, say $s>s_0$,
\begin{equation}\label{a5}
	1-\lambda_0\gamma = 4\sqrt{\pi s}e^{-2s}\Big(1+\mathcal{O}\left(s^{-1}\right)\Big)+e^{-2v}+\mathcal{O}\left(\sqrt{s}e^{-2(s+v)}\right),
\end{equation}
and we see that the ratio 
\begin{equation*}
	\kappa\equiv\frac{v}{s},
\end{equation*}
which is proportional to the energy per unit interval, plays a crucial role. For if $\kappa>1$, then to leading order $\ln\left(1-\lambda_0\gamma\right)\sim-2s$ and so
\begin{equation*}
	\det(I-\gamma K_s)\gtrsim e^{-\frac{4s^2}{\pi}}
\end{equation*}
which is consistent with the leading order behavior in \eqref{DW}, $\det(I-K_s)\sim e^{-\frac{s^2}{2}}$. On the other hand if $\kappa<1$, then $1-\lambda_0\gamma\sim e^{-2v}$ and so
\begin{equation*}
	\det(I-\gamma K_s)\gtrsim e^{-\frac{4vs}{\pi}}
\end{equation*}
which is consistent with \eqref{BWBB1}, but inconsistent with \eqref{DW} if $\kappa<\frac{\pi}{8}<1$. This gives a rough measure of the transition of the asymptotics from \eqref{BWBB1} to \eqref{DW}. Note also from \eqref{a4}, that if $v=o(s^{-1})$, then $\det(I-\gamma K_s)\rightarrow 1$ as $s\rightarrow\infty$, and very few particles are expelled from $(-\frac{s}{\pi},\frac{s}{\pi})$. And if $v\sim s^{-1}$, then $\det(I-\gamma K_s)\geq\textnormal{const}>0,s\rightarrow\infty$. Thus in order for the potential to have a noticeable effect on the fluid, one must have $vs\rightarrow\infty$, i.e. $\kappa s^2\rightarrow\infty$.\bigskip

In \cite{Dy1}, Dyson considered $\det(I-\gamma K_s)$ as $s,v\rightarrow\infty$ subject to $0<\kappa=\frac{v}{s}<1$. In this limit he showed that 
\begin{equation}\label{Dyres}
	\det(I-\gamma K_s)\sim C(v)\theta_4\big(sV(\kappa)|\,\tau(\kappa)\big)\exp\left[-\frac{s^2}{2}\big(1-a^2(\kappa)\big)+vsV(\kappa)\right].
\end{equation}
Here $a=a(\kappa)$ is the unique solution of equation
\begin{equation}\label{Dya}
	\int_a^1\sqrt{\frac{\mu^2-a^2}{1-\mu^2}}\,\d\mu = \kappa,
\end{equation}
and $\theta_4=\theta_4(z|\,\tau)$ is the fourth Jacobi theta function
\begin{equation*}
	\theta_4(z|\,\tau) = 1+2\sum_{k=1}^{\infty}(-1)^ke^{i\pi k^2\tau}\cos(2\pi kz)
\end{equation*}
with module 
\begin{equation}\label{Dymod}
	\tau(\kappa) = 2c(\kappa)\int_{-a}^a\frac{\d\mu}{\sqrt{(a^2-\mu^2)(1-\mu^2)}}\in i\mathbb{R}_+, \ \ \ 
%\end{equation}
%where
%\begin{equation}\label{newc}
	c(\kappa)=\frac{i}{2}\left(\int_a^1\frac{\d\lambda}{\sqrt{(1-\lambda^2)(\lambda^2-a^2)}}\right)^{-1}.
\end{equation}
Furthermore,
\begin{equation}\label{Dyfreq}
	V=V(\kappa) = -\frac{1}{\pi}\int_{-a}^{a}\sqrt{\frac{a^2-\mu^2}{1-\mu^2}}\,\d\mu
\end{equation}
and $C(v)$ is a constant, independent of $s$. Dyson analyzed $\det(I-\gamma K_s)$ by expanding formally the LHS of \eqref{Dcorr} to obtain the formula
\begin{equation*}
	\det(I-\gamma K_s) = \frac{Z(v,s)}{Z(0,s)}
\end{equation*}
where
\begin{equation}\label{part}
	Z(v,s) = \sum\exp\left[-(W+2v\sharp)\right]
\end{equation}
is the partition function of the above Coulomb gas $\{x_j\}$ with external potential $\mathcal{V}(x)=2v$ for $x\in(-\frac{s}{\pi},\frac{s}{\pi})$ and zero otherwise. Here $\sharp=\sharp(x)$ is the number of particles from the configuration $x=\{x_j\}$ that lie in $\left(-\frac{s}{\pi},\frac{s}{\pi}\right)$. In \eqref{part}, the sum is taken over all the Coulombic interactions
\begin{equation*}
	W=-\sum_{j,k}\ln|x_j-x_k|
\end{equation*}
of the particles $\{x_j\}$ in the configuration. In \cite{Dy1}, Dyson did not attempt to give \eqref{part} a rigorous meaning, but in analogy with standard thermodynamic/statistical mechanical calculations, he observed that for large $s$ (or $v$), $Z(v,s)$ may be approximated by $\exp(-\Phi)$ where $\Phi$ is the free energy
\begin{equation}\label{Dapprox}
	\Phi=\inf(W+2v\sharp)
\end{equation}
and the infimum is taken over all the configurations of the gas. He then showed how to solve (a suitably regularized version of) \eqref{Dapprox} explicitly, and this provided the basis for his calculations of the asymptotics of $\det(I-\gamma K_s)$. We will comment further on Dyson's calculation following Remark \ref{Dycorr} below.\smallskip

It is not at all clear how to make Dyson's calculations rigorous and we will take a different approach expressing $\det(I-\gamma K_s)$ in terms of a Riemann-Hilbert problem (RHP). This approach uses the fact that the operator $K_s$ is {\it integrable}, cf. \cite{IIKS} or \cite{DIZ}. We write
\begin{equation}\label{IIKS1}
	\gamma K_s(\lambda,\mu) = \frac{f^t(\lambda)h(\mu)}{\lambda-\mu},\hspace{0.5cm}f(\lambda)=\sqrt{\frac{\gamma}{2\pi i}}\binom{e^{is\lambda}}{e^{-is\lambda}},\ \ \ h(\lambda) = \sqrt{\frac{\gamma}{2\pi i}}\binom{e^{-is\lambda}}{-e^{is\lambda}}
\end{equation}
and note by general theory that the resolvent operator $(I-\gamma K_s)^{-1}$, is again of integrable type. Moreover, by general theory, the kernel of $(I-\gamma K_s)^{-1}$ can be constructed in terms of the solution of the following RHP.
\begin{problem}[Master RHP]\label{def1}
Determine the $2\times 2$ matrix valued, piecewise analytic function $Y=Y(\lambda;s,\gamma)$ such that
\begin{itemize}
	\item $Y(\lambda)$ is analytic for $\lambda\in\mathbb{C}\backslash[-1,1]$
	\item Along the line segment $[-1,1]$, oriented from left to right,
	\begin{equation}\label{Yjump}
		Y_+(\lambda) = Y_-(\lambda)\begin{pmatrix}
		1-\gamma & \gamma e^{2i\lambda s}\\
		-\gamma e^{-2i\lambda s} & 1+\gamma\\
		\end{pmatrix},\ \ \lambda\in(-1,1).
	\end{equation}
	\item $Y(\lambda)$ has at most logarithmic singularities at $\lambda=\pm 1$. More precisely, 
	\begin{equation}\label{Ysing}
		Y(\lambda) = \check{Y}(\lambda)\left[I+\frac{\gamma}{2\pi i}\begin{pmatrix}
		-1 & 1\\
		-1 & 1\\
		\end{pmatrix}\ln\left(\frac{\lambda-1}{\lambda+1}\right)\right]e^{-is\lambda\sigma_3},\hspace{0.5cm}\lambda\rightarrow\pm 1,\ \ \ \sigma_3=\begin{pmatrix}
		1 & 0\\
		0 & -1
		\end{pmatrix},
	\end{equation}
	where $\check{Y}(\lambda)$ is analytic at $\lambda=\pm 1$, and we choose the principal branch of the logarithm.
	\item As $\lambda\rightarrow\infty$,
	\begin{equation*}
		Y(\lambda)= I+\mathcal{O}\big(\lambda^{-1}\big),\ \ \lambda\rightarrow\infty.
	\end{equation*}
\end{itemize}
\end{problem}
This RHP is uniquely solvable if and only if $(I-\gamma K_s)^{-1}$ exists (see again \cite{IIKS,DIZ}), and its solution can be written in terms of the Cauchy integral
\begin{equation}\label{Yintegraleq}
	Y(\lambda) = I-\int_{-1}^1\big(Y(w)f(w)\big)h^t(w)\frac{dw}{w-\lambda} = I+\frac{Y_1}{\lambda}+\mathcal{O}\left(\lambda^{-2}\right),\hspace{0.5cm}\lambda\rightarrow\infty,\ \ \ Y_1=\big(Y_1^{jk}\big).
\end{equation}
Let %our result for $\det(I-\gamma K_s)$ as $s,v\rightarrow\infty,0<\kappa=\frac{v}{s}<1$, is the following. 
$a=a(\kappa),\tau=\tau(\kappa),V=V(\kappa)$ be as in \eqref{Dya},\eqref{Dymod},\eqref{Dyfreq} and $\theta_3(z|\,\tau) = \theta_4(z-\frac{1}{2}|\,\tau)$ be the third Jacobi theta function. In addition, let
\begin{equation}\label{Mformula}
	M(x,\kappa) = \frac{\Xi_0(x,\kappa)\Theta_0(x,\kappa)+6a(\kappa)\,\Xi_2(z,\kappa)}{48a(\kappa)(1+a(\kappa))}+\frac{i}{4\pi}\frac{\kappa}{\theta_3(x|\tau)}\frac{\partial^2}{\partial y^2}\theta_3(y|\tau)\big|_{y=x}\frac{\partial\tau}{\partial\kappa}
\end{equation}
where $\Xi_j(x,\kappa),j=0,2$ and $\Theta_0(x,\kappa)$ are certain explicit functions of the Jacobi theta functions $\theta_j(x|\,\tau),j=0,1,2,3$ (see Appendix \ref{thetaid}) given in \eqref{abbrev1} and \eqref{ell1} below. In our notation, $\theta_0(x|\tau)\equiv \theta_4(x|\tau)$. %Using the steepest-descent method of \cite{DZ,DVZ}, 
 We obtain the following.
\begin{theo} \label{theo1} For any fixed $\delta\in(0,1)$, there exist positive constants $s_0=s_0(\delta)$, $c=c(\delta)$ and $C_0=C_0(\delta)$ such that
%\begin{equation*}
%	\frac{\partial}{\partial s}\ln\det(I-\gamma K_s)=-s\left(1-a^2\right)+\frac{\partial}{\partial s}\ln\theta(sV)-\frac{M(sV,\kappa)}{s}+\mathcal{O}\left(s^{-2}\right)
%\end{equation*}
%with $v\geq v_0$ fixed, and
\begin{eqnarray}\label{theo1res}
	\ln\det(I-\gamma K_s)&=&-\frac{s^2}{2}\big(1-a^2(\kappa)\big)+vsV(\kappa)+\ln\theta_3\big(sV(\kappa)|\,\tau(\kappa)\big)+A(v)\\
	&&+\int_s^{\infty}\!\!M\Big(tV\left(vt^{-1}\right),vt^{-1}\Big)\frac{\d t}{t}+J(s,v),\nonumber%\ \ \ \ \ \textnormal{where}\ \ \ |r(s)|\leq \frac{c_0}{s^{1-\epsilon}}\nonumber%\mathcal{O}\left(s^{-(1-\epsilon)}\right)\nonumber%O\big(s^{-(1-\epsilon)}\big)\nonumber
\end{eqnarray}
where
\begin{equation}\label{Avexp}
	A(v) = 2\ln b(v)-\frac{v^2}{\pi^2}\left(3+2\ln\left(\frac{\pi}{v}\right)\right),
\end{equation}
with $b(v)$ as in \eqref{BWBB2} and
\begin{equation*}
	\left|\int_s^{\infty}\!\!M\Big(tV\left(vt^{-1}\right),vt^{-1}\Big)\frac{\d t}{t}\right|\leq C_0,\hspace{0.75cm}\big|J(s,v)\big|\leq cs^{-\frac{1}{4}}\ln s,%\frac{c\ln s}{s^{\frac{1}{4}}},
\end{equation*}
for $s\geq s_0$ and $0<v\leq s(1-\delta)$. %Here $A(v)$ is independent of $s$.
\end{theo}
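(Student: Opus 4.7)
I would follow the Riemann--Hilbert / Deift--Zhou steepest-descent scheme foreshadowed in the introduction, combined with a differential identity in $s$ (at fixed $v$) that recovers $\ln\det(I-\gamma K_s)$ by integration from infinity. The constant of integration is fixed by matching to the refined Basor--Widom asymptotics \eqref{BBrefined} in the regime $\kappa(t)=v/t\to 0$, which is valid for $t$ large with $v<t^{1/3}$ and so is accessible as an endpoint of the $s$-integration.

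\textbf{Differential identity and steepest descent.} Using the integrable structure \eqref{IIKS1} together with the expansion \eqref{Yintegraleq}, I would derive an expression of the form $\partial_s\ln\det(I-\gamma K_s) = F(Y_1)$, with $F$ an explicit bilinear functional of the entries of $Y_1$. All remaining work is devoted to computing $Y_1$ asymptotically. To that end, introduce a $g$-function built from the equilibrium measure supported on $[-1,-a]\cup[a,1]$, where $a=a(\kappa)$ is defined by \eqref{Dya}; the associated genus-one elliptic curve $w^2=(1-\lambda^2)(\lambda^2-a^2)$ is what brings in the modulus $\tau(\kappa)$ of \eqref{Dymod} and the frequency $V(\kappa)$ of \eqref{Dyfreq}. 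Perform the standard $Y\to T\to S$ transformations (normalization by $g$ followed by lens opening around the two bands). The global parametrix can then be solved in closed form in terms of the Jacobi theta function $\theta_3(\,\cdot\,|\tau)$ on the torus. Install Airy-type parametrices at the four soft edges $\pm a$ (where the equilibrium density vanishes like a square root) and parabolic-cylinder / confluent-hypergeometric parametrices at the two hard edges $\pm 1$ (forced by the logarithmic singularities encoded in \eqref{Ysing}, which are Fisher--Hartwig in nature because $0<\gamma<1$). A small-norm argument for the final RHP then produces $Y_1$ with an error uniformly controlled in $\kappa\in(0,1-\delta]$ and matching the $s^{-1/4}\ln s$ rate of $J(s,v)$.

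\textbf{Integration and identification of $A(v)$.} Substituting the asymptotics of $Y_1$ into the differential identity, the leading contribution reproduces $\partial_s\!\bigl[-\tfrac{s^2}{2}(1-a^2(\kappa))+vsV(\kappa)+\ln\theta_3(sV(\kappa)|\tau(\kappa))\bigr]$, while the next-order term is precisely $-M(sV(\kappa),\kappa)/s$ with $M$ as in \eqref{Mformula}. Since $\partial_s\int_s^{\infty} M(tV(v/t),v/t)\,\d t/t = -M(sV(\kappa),\kappa)/s$, integrating the differential identity from $s$ to $T$ and sending $T\to\infty$ along the Basor--Widom regime $v/T\to 0$ produces the integral term of \eqref{theo1res}. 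The constant of integration is then read off by equating the resulting expansion with \eqref{BBrefined} and \eqref{BWBB2} in the overlap region; routine manipulation of $b(v)$ and its expansion in powers of $v/\pi$ yields precisely the closed-form $A(v)$ in \eqref{Avexp}.

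\textbf{Main obstacle.} The hardest part is enforcing uniformity in $\kappa$ throughout $(0,1-\delta]$. As $\kappa\downarrow 0$, one has $a(\kappa)\uparrow 1$, so the soft edges $\pm a$ coalesce with the hard edges $\pm 1$; the Airy and parabolic-cylinder parametrices begin to overlap, and their individual contributions to $Y_1$ blow up even though their sum does not. Merging this construction cleanly is delicate, but it is forced on us, because the integration step requires an error bound strong enough to survive the integral $\int_s^{\infty}(\,\cdot\,)\,\d t/t$ and to match the Basor--Widom expansion \eqref{BBrefined} with an error that is still $o(1)$ after integration. Controlling this merger while retaining the $s^{-1/4}\ln s$ rate for $J(s,v)$ is, in my view, the key technical input of the proof.
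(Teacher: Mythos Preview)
Your overall scheme---differential identity in $s$, $g$-function on a genus-one curve, theta-function outer parametrix, Airy parametrices at $\pm a$, integration with matching to Basor--Widom---is the right one and coincides with the paper. Two points, however, depart from what actually happens there.

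First, the local parametrices at $\pm 1$ are built from Bessel (Hankel) functions $H_0^{(1)},H_0^{(2)}$, not parabolic-cylinder or confluent-hypergeometric functions. The logarithmic singularity in \eqref{Ysing} is absorbed by multiplying the Bessel parametrix by an explicit correction factor $I-\frac{e^{-2s\kappa}}{2\pi i}\bigl(\begin{smallmatrix}-1&1\\-1&1\end{smallmatrix}\bigr)\ln\frac{\lambda-1}{\lambda+1}$; since $1-\gamma=e^{-2s\kappa}$, this correction is exponentially close to the identity on the boundary circle and plays no role in the matching to leading orders. No Fisher--Hartwig-type parametrix enters.

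Second, and more importantly, your ``main obstacle'' and its proposed resolution (a merged Airy/hard-edge parametrix as $a\uparrow 1$) is not the route the paper takes. The steepest-descent analysis here is carried out only for $\kappa$ bounded away from zero: first $\kappa\in[\delta,1-\delta]$, then relaxed to $s^{1-\epsilon}\le v\le s(1-\delta)$ by letting the local radii shrink like $s^{-\epsilon}$, which costs a loss of $s^{\epsilon}$ in the small-norm bound. The complementary region $0<v<s^{1/3}$---exactly where your edges would collide---is handled by an \emph{entirely different} RHP deformation, deferred to Part~II \cite{BDIK}, which produces the refined estimate \eqref{BBrefined}. The constant $A(v)$ is read off using only the fixed-$v$ result \eqref{BWBB1}, not the refined one. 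The final bound $|J(s,v)|\le cs^{-1/4}\ln s$ then arises not from a single uniform small-norm estimate but from splitting the integral $J(s,v)=-\int_s^\infty j(t,v)\,\d t$ at $t=v^{k+1}$ (choosing $k=3$) and patching the two regime-specific bounds for $j$; the exponent $\tfrac14$ is an artifact of this splitting. So do not try to merge the parametrices: use two separate analyses in overlapping regions and let the integration do the matching.
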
% \textcolor{red}{(not sure if derivation is rigorous, see p. 41)}
%\begin{equation*}
%	A(v) = 2\ln g(v)-\frac{v^2}{\pi^2}\left(3+2\ln\left(\frac{\pi}{2v}\right)\right)
%\end{equation*}
%is independent of $s$ and the error term $o(1)$ explicitly dependent on $\epsilon$ as written in \eqref{DZ9} below.
%\end{theo}
\begin{rem}\label{seq} The integrability of $\frac{1}{t}M(tV(vt^{-1}),vt^{-1})$ for fixed $v>0$ as $t\rightarrow\infty$, is proved in Proposition \ref{integrab} below. In the sequel \cite{BDIK} to the current paper we will show that in fact,
\begin{equation*}
	\int_s^{\infty}\!\!M\left(tV\big(vt^{-1}\big),vt^{-1}\right)\frac{\d t}{t}=A_0\big(vs^{-1}\big)+\mathcal{O}\left(s^{-1}\right)
\end{equation*}
uniformly for $s\geq s_0$ and $0<v\leq s(1-\delta)$. Here,
\begin{equation}\label{Masymp2}
	A_0\big(vs^{-1}\big)=\int_0^{\frac{v}{s}}a_0(u)\frac{\d u}{u}
\end{equation}
and $a_0(u)$ is an average of $M(x,u)$ over the ``fast" variable $x$, i.e.
\begin{equation*}
	a_0(u)=\int_0^1M(x,u)\d x.
\end{equation*}
It is also shown that $a_0(u)=\mathcal{O}(u)$ as $0\leq u\leq1-\delta$, and hence the integral in \eqref{Masymp2} exists. Observe that $A_0(vs^{-1})$ depends on the ``slow" variable $\kappa=\frac{v}{s}$, and $A_0(vs^{-1})\rightarrow 0$ if $v=o(s)$.
\end{rem}
\begin{rem} The explicit form \eqref{Avexp} for the ``constant" term $A(v)$ is derived in Section \ref{A:int} below using only the known asymptotics \eqref{BWBB1} for fixed $v$.% together with certain asymptotic estimates taken from the forthcoming paper \cite{BDIK}.
\end{rem}
\begin{rem} Theorem \ref{theo1} fully describes the transition of the asymptotics from the case of fixed $v$ given by \eqref{BWBB1} to the case $v\leq s(1-\delta)$. Indeed, using the small $\kappa$ expansions for $a(\kappa), V(\kappa), \tau(\kappa)$ (see \eqref{aasy},\eqref{c:1},\eqref{c:2} below) and the estimate \eqref{integrab112} for $M$, we recover \eqref{BWBB1} from \eqref{theo1res} (with a worse estimate for the error term).
\end{rem}
\begin{rem}\label{oscidie} For $0<\delta\leq\kappa=\frac{v}{s}\leq 1-\delta$, $V(\kappa)$ and $-i\tau(\kappa)$ vary over compact intervals in $(0,\infty)$, and we see from \eqref{theo1res} that $\det(I-\gamma K_s)$ experiences non-trivial modulated oscillations. But as $\kappa\downarrow 0,V\downarrow-\frac{2}{\pi}$ and $-i\tau\rightarrow+\infty$ (see Corollary \ref{cor1}), and so the oscillations become strictly periodic, but the amplitude of the oscillations die out exponentially. On the other hand, as $\kappa\uparrow 1$, $V\uparrow 0$ and $-i\tau\downarrow 0$ (see again Corollary \ref{cor1}), and so the oscillations have longer and longer periods. Thus in both cases, the oscillatory behavior of $\det(I-\gamma K_s)$ is lost, but the mechanisms are different.
\end{rem}
\begin{rem} In the calculations below we will often use the notation $\theta(y|\tau)\equiv \theta_3(y|\tau)$ without further comment.
\end{rem}
\begin{rem}\label{Dycorr} Note that we obtain the third Jacobi theta function in \eqref{theo1res} as opposed to $\theta_4(z)$ in \eqref{Dyres}. In his calculations, Dyson arrives at the oscillatory term $\theta_4(sV+c')$, where the constant $c'$ is to be determined. Recalling the relation between $\theta_3(z)$ and $\theta_4(z)$ (see \eqref{conn}), we see that if $c'=\frac{1}{2}$, then Dyson's $\theta_4$ oscillation in \eqref{Dyres} would become our $\theta_3$ oscillation. Dyson determines $c'$ by comparing two asymptotic expansions for a quantity $B_0$ (expansion $(3.38)$ versus $(3.39)$ in \cite{Dy1}) and finds $c'=0$. But a careful examination of Dyson's argument shows that he neglects certain logarithmically growing terms and so, in particular, lower order terms such as the constant $c'$ cannot be determined by this argument. We believe that this is the source of the discrepancy between Theorem \ref{theo1} and \eqref{Dyres}.
\end{rem}
%\begin{rem} We will derive the following explicit formula for the "constant" term $A(v)$ in the sequel \cite{BDIK} to the current paper,
%\begin{equation*}
%	A(v) = 2\ln b(v)-\frac{v^2}{\pi^2}\left(3+2\ln\left(\frac{\pi}{v}\right)\right),
%\end{equation*}
%with $b(v)$ as in \eqref{BWBB2}. See also Remark \ref{finrem} below.
%\end{rem}
Dyson solves the minimization problem \eqref{Dapprox} by replacing the Coulomb gas $\{x_j\}$ by a continuum of charges, a so-called ``Coulombic fluid", with density $\rho(x)$: The continuum minimization problem can then be solved explicitly using elementary methods from complex variables. The Coulombic fluid approximation only yields the leading asymptotics for $\det(I-\gamma K_s)$ as $s$ and/or $v\rightarrow\infty$. The oscillations in \eqref{Dyres},\eqref{theo1res} occur at lower order and are obtained by Dyson by ``weaving together" the fluid minimizer together with certain identities of Mehta from random matrix theory. Regarding the oscillations Dyson makes the following observation (see p. $16$ in \cite{Dy1})
\begin{quote}
	``The oscillations are a manifestation of the discrete nature of the Coulomb gas. They do not appear in the Coulomb fluid approximation. We may picture the Coulomb gas responding as $t\, (=\frac{s}{\pi}$ in our scaling) increases with a discrete bump each time a single charge moves across the gap from outside to inside the interval $[-t,t]$."\smallskip
\end{quote} 
One can make this observation more concrete in the following way: We have for any $s$, with $\mathcal{V}=2v\chi_{(-\frac{s}{\pi},\frac{s}{\pi})}$
\begin{eqnarray}
	\det(I-\gamma K_s) = \mathbb{E}\left(e^{-\sum_j\mathcal{V}(x_j)}\,\right)&=&\sum_{n=0}^{\infty}\,\mathbb{E}\left\{e^{-\mathcal{V}}:\ \textnormal{number of particles in}\ \left(-\frac{s}{\pi},\frac{s}{\pi}\right)=n\right\}\nonumber\\
	&=&\sum_{n=0}^{\infty}e^{-2nv}p_n(s)\label{eq:1}
\end{eqnarray}
where 
\begin{equation*}
	p_n(s) = \textnormal{Prob}\,\left\{ n\ \textnormal{particles in}\ \left(-\frac{s}{\pi},\frac{s}{\pi}\right)\right\} = \textnormal{Prob}\,\left\{\sharp(x)=n\right\}
\end{equation*}
and again $\sharp(x)$ is the number of particles in the configuration $x=(x_1,x_2,\ldots)$ that lie in $(-\frac{s}{\pi},\frac{s}{\pi})$. As noted by Dyson in \cite{Dy1} the relation \eqref{eq:1} is just the Taylor series expansion of $\det(I-\gamma K_s)$ about $\gamma=1$ in powers of $\gamma-1=-e^{-2v}$, using the standard fact (see e.g. \cite{Me}) that
\begin{equation}\label{pd}
	p_n(s) = \frac{(-1)^n}{n!}\frac{\partial^n}{\partial\gamma^n}\det(I-\gamma K_s)\Big|_{\gamma=1}.
\end{equation}
Differentiating $\mathbb{E}_s(e^{-\mathcal{V}}) = \mathbb{E}_s(e^{-2v\sharp})$ with respect to $v$ in \eqref{eq:1} at $v=0$ yields
\begin{equation*}
	-2\mathbb{E}_s(\sharp) = -2\sum_{n=0}^{\infty}np_n(s),
\end{equation*}
and so,
\begin{equation}\label{eq:2}
	\frac{2s}{\pi} = \mathbb{E}_s(\sharp) = \sum_{n=0}^{\infty}np_n(s).
\end{equation}
Now if we increase $s$ by an amount $\Delta s=\frac{\pi}{2}$, on the average, one particle enters $(-\frac{s}{\pi},\frac{s}{\pi})$. Thus we make the reasonable ansatz that
\begin{equation}\label{eq:3}
	p_{n+1}\left(s+\frac{\pi}{2}\right) = p_n(s).
\end{equation}
Substituting this relation into \eqref{eq:2} yields
\begin{eqnarray*}
	\frac{2s}{\pi}+1&=&\sum_{n=0}^{\infty}np_n\left(s+\frac{\pi}{2}\right) = \sum_{n=1}^{\infty}np_n\left(s+\frac{\pi}{2}\right) = \sum_{n=0}^{\infty}(n+1)p_{n+1}\left(s+\frac{\pi}{2}\right)\\
	&=&\sum_{n=0}^{\infty}(n+1)p_n(s) = \frac{2s}{\pi}+\sum_{n=0}^{\infty}p_n(s) = \frac{2s}{\pi}+1,
\end{eqnarray*}
which shows that \eqref{eq:3} is consistent, at this level. Now substituting \eqref{eq:3} into \eqref{eq:1}, we obtain
\begin{eqnarray*}
	\mathbb{E}_{s+\frac{\pi}{2}}\left(e^{-\mathcal{V}}\right) &=& \sum_{n=0}^{\infty}e^{-2nv}p_n\left(s+\frac{\pi}{2}\right) = p_0\left(s+\frac{\pi}{2}\right)+\sum_{n=1}^{\infty}e^{-2nv}p_n\left(s+\frac{\pi}{2}\right)\\
	&=&p_0\left(s+\frac{\pi}{2}\right)+e^{-2v}\sum_{n=0}^{\infty}e^{-2nv}p_n(s)= p_0\left(s+\frac{\pi}{2}\right)+e^{-2v}\,\mathbb{E}_s\left(e^{-\mathcal{V}}\,\right)
\end{eqnarray*}
and therefore
\begin{equation*}
	e^{\frac{4v}{\pi}(s+\frac{\pi}{2})}\mathbb{E}_{s+\frac{\pi}{2}}\left(e^{-\mathcal{V}}\,\right) = e^{\frac{4vs}{\pi}}\mathbb{E}_s\left(e^{-\mathcal{V}}\,\right)+e^{\frac{4v}{\pi}(s+\frac{\pi}{2})}p_0\left(s+\frac{\pi}{2}\right).
\end{equation*}
However for a fixed $v$, by \eqref{DW} and \eqref{pd},
\begin{equation}\label{eq:4}
	e^{\frac{4v}{\pi}(s+\frac{\pi}{2})}p_0\left(s+\frac{\pi}{2}\right)=\mathcal{O}\left(e^{-\frac{s^2}{2}+\frac{4vs}{\pi}}\right)
\end{equation}
as $s\rightarrow\infty$. Hence to leading order, we see that
\begin{equation}\label{eq:5}
	Q_v(s) = e^{\frac{4vs}{\pi}}\det(I-\gamma K_s)
\end{equation}
is periodic with period $\frac{\pi}{2}$. Now as $V\sim -\frac{2}{\pi}$ for $\kappa\downarrow 0$ (see Corollary \ref{cor1} below), this matches with the fact that $sV\sim-\frac{2s}{\pi}$ and $\theta_3(x|\,\tau)$ has period one! Also, using \eqref{aasy} below,
\begin{equation*}
	-\frac{s^2}{2}\left(1-a^2(\kappa)\right)+vsV(\kappa)\sim -\frac{4sv}{\pi},\ \ \ \kappa\downarrow 0
\end{equation*}
which completes the matching of \eqref{eq:5} with \eqref{Dyres} as $s\rightarrow\infty$ in the small $\kappa$ region.\smallskip

Notice that the error term in \eqref{eq:4} is small only if $\frac{4vs}{\pi} = \frac{4\kappa s^2}{\pi}$ is small compared to $s^2$. Certainly, as $\kappa\uparrow 1$, this fails. Furthermore, as noted above (see Remark \ref{oscidie}), the frequency in $\theta_3(sV|\tau)$ goes to zero as $\kappa\uparrow 1$.\bigskip% Furthermore (see Corollary \ref{cor1} below)
%\begin{equation*}
%	V=\mathcal{O}\left(\left|\frac{1-\kappa}{\ln(1-\kappa)}\right|\right),\ \ \ \kappa\uparrow 1,
%\end{equation*}
%so we see that the frequency in $\theta_3(sV|\,\tau)$ goes to zero as $\kappa\uparrow 1$.\bigskip%, which is therefore consistent.\bigskip
\begin{rem}\label{bohiPa} Bohigas and Pato in \cite{BP} interpret \eqref{eq:1} as the gap probability of a random particle system obtained from the Coulomb gas by dropping at random a fraction $1-\gamma$ of levels. As $\gamma\downarrow 0$, the gap probability becomes Poissonian, and so this particle system interpolates between Poisson statistics ($\gamma\downarrow 0$) and random matrix statistics ($\gamma\uparrow 1$). More precisely, for this new system,
\begin{align*}
	\textnormal{Prob}\,\left\{\textnormal{no particles in}\,\left(-\frac{s}{\pi},\frac{s}{\pi}\right)\right\}=&\sum_{n=0}^{\infty}\,\bigg[\textnormal{Prob}\,\left\{n\,\,\textnormal{Coulomb particles in}\,\left(-\frac{s}{\pi},\frac{s}{\pi}\right)\right\}\\
	&\hspace{1cm}\times\,\textnormal{Prob}\,\left\{\textnormal{these}\,\,n\,\,\textnormal{particles are removed at random}\right\}\bigg]\\
	=&\sum_{n=0}^{\infty}p_n(s)(1-\gamma)^n=\det(I-\gamma K_s).
\end{align*}
In order to keep to unity the average level density of the remaining fraction $\gamma$ of the particles, the scale is contracted accordingly. Thus Bohigas and Pato consider
\begin{equation*}
	P(\gamma,s)=\textnormal{Prob}\,\left\{\textnormal{no particles in}\,\left(-\frac{s}{\gamma\pi},\frac{s}{\gamma\pi}\right)\right\} = \det\Big(I-\gamma K_{\frac{s}{\gamma}}\Big).
\end{equation*}
It follows from \eqref{BBrefined} that for $s>s_0$, at leading order as $\gamma=1-e^{-2v}\downarrow 0$,
\begin{equation}\label{B.1}
	P(\gamma,s)\sim \exp\left[-\frac{4v}{\pi}\frac{s}{\gamma}\right]\sim\exp\left[-\frac{2s}{\pi}\right]
\end{equation}
(in \cite{BP}, Bohigas and Pato use the fixed $v$ result \eqref{BWBB1},\eqref{BWBB2} to derive \eqref{B.1}, but as $\gamma$ is varying, a rigorous analysis requires \eqref{BBrefined}). But \eqref{B.1} is precisely the gap probability for the particle system obtained from a Poisson process, again by dropping a fraction $1-\gamma$ of levels. Indeed, for such a system the gap probability is given by
\begin{equation*}
	\sum_{k=0}^{\infty}\left[\frac{1}{k!}\left(\frac{2s}{\pi\gamma}\right)^ke^{-\frac{2s}{\pi\gamma}}\right](1-\gamma)^k = e^{-\frac{2s}{\pi\gamma}}e^{\frac{2s}{\pi\gamma}(1-\gamma)} = e^{-\frac{2s}{\pi}}
\end{equation*}
(in \cite{BP}, $\frac{2s}{\pi}\leftrightarrow s$). Thus the new system interpolates between a Poisson particle system ($\gamma\downarrow 0$) and random matrix theory where at $\gamma=1$, $P(\gamma,s)\sim e^{-\frac{s^2}{2}}$.
\end{rem}

The proof of \eqref{theo1res} involves an analysis of $\det(I-\gamma K_s)$ for values of $(s,v),s\geq s_0$, in two complementary regions
\begin{enumerate}[leftmargin=4cm]
	\item[(i)]\ \ $s^{1-\epsilon}\leq v\leq (1-\delta)s$
\end{enumerate}
and
\begin{enumerate}[leftmargin=4cm]
	\item[(ii)]\ \ $0<v<s^{\frac{1}{3}}$%v\leq f_1 s^{1-\epsilon}$
\end{enumerate}
where $0<\epsilon<1$. In this paper, Part I, we analyze $\det(I-\gamma K_s)$ in region (i) by applying the steepest-descent method to the above $Y$-RHP. In the sequel to this paper, Part II, already mentioned above (see \cite{BDIK}), we analyze $\det(I-\gamma K_s)$ in region (ii) in two different ways:
\begin{enumerate}
	\item[(ii)(a)] by applying the steepest-descent method to the above $Y$-RHP, but now the steepest-descent deformation of the RHP is very different from the deformation in this paper	\item[(ii)(b)] by expressing $\det(I-\gamma K_s)$ as a continuum limit of a Toeplitz determinant with Fisher-Hartwig singularities (see e.g.\,\cite{DIII}), and controlling the limit appropriately using RH techniques (cf.\,\cite{CK}) for $(s,v)$ satisfying $0<v<s^{\frac{1}{3}}$.%v\leq f_1 s^{1-\epsilon}$.
\end{enumerate}
These two approaches, (ii)(a) and (ii)(b), yield slightly different estimates and provide contrasting views on the asymptotic behavior of $\det(I-\gamma K_s)$. The reader will note that in Section \ref{error} below (see also Remark \ref{Busref}), we use estimates from Part II for $(s,v)$ in region (ii) to obtain the overall bound $|J(s,v)|\leq cs^{-\frac{1}{4}}\ln s,v\leq(1-\delta)s$, in Theorem \ref{theo1}. However, although the analysis of region (i) and the analysis of region (ii) both use steepest-descent/RH methods, they are very different, and it is natural to split the paper into two parts, I and II.\smallskip

In contrast with the region $v\leq(1-\delta)s$, we wish to point out that for $(s,v)$ above, on and slightly below the major diagonal in Figure \ref{Deiftdia}, $\det(I-\gamma K_s)$ can be analyzed using general functional analytical tools together with the known asymptotics \eqref{DW}, but without any further Riemann-Hilbert analysis.\bigskip

Indeed, the derivation of \eqref{a3} shows that for any positive definite, trace class operator $B$, say,
\begin{equation*}
	1\leq \det(I+B)=\exp\left[\int_0^1\textnormal{tr}\,\left((I+\mu B)^{-1}B\right)\,d \mu\right]\leq e^{\textnormal{tr}\,B}\leq 1+(\textnormal{tr}\, B)e^{\textnormal{tr}\,B},
\end{equation*}
i.e.
\begin{equation}\label{Lids1}
	\big|\det(I+B)-1\big|\leq (\textnormal{tr}\,B)e^{\textnormal{tr}\,B}\leq \|B\|_1\exp(\|B\|_1).
\end{equation}
Now let $s\geq s_0,v\geq v_0$ be such that $\kappa\geq 1-\chi(\frac{\ln s}{s})$  where $\chi$ is a (real) constant. Choose an integer $q=q(\chi)$ such that $q=1$ if $\chi< \frac{1}{4}$ and $q>2\chi+\frac{1}{2}$ if $\chi\geq\frac{1}{4}$, and write %such that $q>2\chi+\frac{1}{2}$ in case $\chi>0$ or $q=1$ in case $-\frac{1}{4}<\chi\leq 0$.  Write
\begin{equation*}
	K_q = K_s\cdot P_s^q,
\end{equation*}
where $P_s^q$ is the projection onto the space of the eigenvectors of $K_s$ with associated eigenvalues $\{\lambda_j:\ j\geq q\}$.  We have
\begin{equation*}
	\frac{\det(I-\gamma K_s)}{\det(I-K_s)} =\det\left(I+e^{-2v}K_s(I-K_s)^{-1}\right)
\end{equation*}
and by Lidskii's theorem
\begin{eqnarray*}
	\det\left(I+e^{-2v}K_s(I-K_s)^{-1}\right)&=&\prod_{j=0}^{\infty}\left(1+e^{-2v}\frac{\lambda_j}{1-\lambda_j}\right) \\
	&=& \prod_{j=0}^{q-1}\left(1+e^{-2v}\frac{\lambda_j}{1-\lambda_j}\right)\det\left(I+e^{-2v}K_q(I-K_q)^{-1}\right)%\prod_{j=0}^{q-1}\left(1+e^{-2v}\frac{\lambda_j}{1-\lambda_j}\right)\det\left(I+e^{-2v}K_q(I-K_q)^{-1}\right).
\end{eqnarray*}
Now by \eqref{Lids1}
\begin{equation*}
	\big|\det\left(I+e^{-2v}K_q(I-K_q)^{-1}\right)-1\big|\leq\|e^{-2v}K_q(I-K_q)^{-1}\|_1\exp\left[\|e^{-2v}K_q(I-K_q)^{-1}\|_1\right]
\end{equation*}
and by \eqref{a1},
\begin{equation*}
	\|e^{-2v}K_q(I-K_q)^{-1}\|_1\leq e^{-2v}\|K_q\|_1\|(I-K_q)^{-1}\|\leq e^{-2v}\frac{\|K_q\|_1}{1-\lambda_q}\leq C_q e^{-2(v-s)}s^{-q+\frac{1}{2}}\big(1+\mathcal{O}\left(s^{-1}\right)\big).
\end{equation*}
Since by assumption
\begin{equation*}
	e^{-2(v-s)}s^{-q+\frac{1}{2}} \leq s^{-(q-2\chi-\frac{1}{2})}=o(1),\  %\  \  \  \forall\,\chi\in\mathbb{R}
\end{equation*}
we obtain
\begin{equation*}
	\det(I-\gamma K_s)=\prod_{j=0}^{q-1}\left(1+e^{-2v}\frac{\lambda_j}{1-\lambda_j}\right)\Big(1+\mathcal{O}\left(s^{-(q-2\chi-\frac{1}{2})}\right)\Big)\det(I-K_s).%\prod_{j=0}^{q-1}\left(1+e^{-2v}\frac{\lambda_j}{1-\lambda_j}\right)\big(1+O\left(s^{-2\chi-q+\frac{1}{2}}\right)\big)\det(I-K_s)
\end{equation*}
The point is that in the last equation, the finite product for $s\geq s_0$ is known explicitly from \eqref{a1},
\begin{equation*}
	1+e^{-2v}\frac{\lambda_j}{1-\lambda_j}=1+\frac{j!}{\sqrt{\pi}}2^{-3j-2}e^{-2(v-s)}s^{-j-\frac{1}{2}}\left(1+\mathcal{O}\left(s^{-1}\right)\right),\ \ \ 0\leq j\leq q-1%O\left(e^{-2(v-s)}s^{-1/2}\right)%\prod_{j=0}^{q-1}\left(1+e^{-2v}\frac{\lambda_j}{1-\lambda_j}\right)=\prod_{j=0}^{q-1}\left(1+\frac{e^{-2(v-s)}s^{-j-\frac{1}{2}}}{\sqrt{\pi}2^{3j+2}}\big(1+O\left(s^{-1}\right)\big)\right)=1+O\left(e^{-2(v-s)}s^{-1/2}\right)
\end{equation*}
and it depends explicitly on our precise assumptions on $v-s=-s(1-\kappa)$. We have therefore proved our second main result:
\begin{theo}\label{theo2} There exist positive constants $s_0=s_0(\chi),v_0=v_0(\chi)$ such that
\begin{equation*}
	\det(I-\gamma K_s)=\exp\left[-\frac{s^2}{2}\right]s^{-\frac{1}{4}}c_0\prod_{j=0}^{q-1}\left(1+e^{-2v}\frac{\lambda_j}{1-\lambda_j}\right)\Big(1+\mathcal{O}\left(\max\left\{s^{-(q-2\chi-\frac{1}{2})},s^{-1}\right\}\right)\Big)%\Big(1+O\left(s^{-2(\chi+\frac{1}{4})}\right)\Big)
\end{equation*}
uniformly for $s\geq s_0,v\geq v_0$ and $\kappa\geq 1-\chi(\frac{\ln s}{s})$ for any $\chi\in\mathbb{R}$. The universal constant term $c_0$ is given in \eqref{DWconstant} and $q=q(\chi)\in\mathbb{Z}_{\geq 1}$ was introduced above.
\end{theo}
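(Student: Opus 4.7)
The plan is to factor
\begin{equation*}
\det(I-\gamma K_s) = \det(I-K_s)\cdot \det\!\left(I + e^{-2v}K_s(I-K_s)^{-1}\right),
\end{equation*}
evaluate the second determinant via Lidskii's theorem, and then extract its dominant behavior by combining the Dyson--Widom asymptotic \eqref{DW} for $\det(I-K_s)$ with the precise eigenvalue expansions \eqref{a1}, \eqref{a2} for $K_s$. Since $K_s$ is positive with all eigenvalues strictly less than $1$, $K_s(I-K_s)^{-1}$ is positive trace class, and Lidskii's theorem yields
\begin{equation*}
\det\!\left(I + e^{-2v}K_s(I-K_s)^{-1}\right) = \prod_{j=0}^{\infty}\left(1 + e^{-2v}\frac{\lambda_j}{1-\lambda_j}\right).
\end{equation*}

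The next step is to peel off the first $q$ factors, where the integer $q=q(\chi)\geq 1$ is chosen with $q>2\chi+\tfrac{1}{2}$ (and $q=1$ suffices when $\chi<\tfrac{1}{4}$). Writing $P_s^q$ for the spectral projection of $K_s$ onto the eigenspaces with indices $j\geq q$ and setting $K_q=K_sP_s^q$, the remaining infinite product equals $\det(I+e^{-2v}K_q(I-K_q)^{-1})$. Applying the elementary bound \eqref{Lids1}, together with $\|K_q\|_1\leq \|K_s\|_1=\frac{2s}{\pi}$ from \eqref{a0} and $\|(I-K_q)^{-1}\|=(1-\lambda_q)^{-1}$, the asymptotic \eqref{a1} produces
\begin{equation*}
\|e^{-2v}K_q(I-K_q)^{-1}\|_1 \leq C_q\, e^{-2(v-s)}\, s^{-q+\frac{1}{2}}\bigl(1+\mathcal{O}(s^{-1})\bigr).
\end{equation*}

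The hypothesis $\kappa\geq 1-\chi(\ln s/s)$ rewrites as $v-s\geq -\chi\ln s$, so $e^{-2(v-s)}\leq s^{2\chi}$, and the right-hand side above is $\mathcal{O}(s^{-(q-2\chi-\frac{1}{2})})=o(1)$ by the choice of $q$. Consequently the tail determinant equals $1+\mathcal{O}(s^{-(q-2\chi-\frac{1}{2})})$. Multiplying the remaining explicit finite product $\prod_{j=0}^{q-1}(1+e^{-2v}\lambda_j/(1-\lambda_j))$ by \eqref{DW} then gives the stated expansion, the error being the maximum of the $s^{-(q-2\chi-\frac{1}{2})}$ produced above and the $\mathcal{O}(s^{-1})$ contribution inherited both from \eqref{DW} and from the subleading corrections in \eqref{a1}.

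No serious obstacle arises: the argument is a purely functional-analytic computation built on Lidskii and the trace-norm bound \eqref{Lids1}. The only quantitative balance to strike is between the polynomial growth $e^{-2(v-s)}\leq s^{2\chi}$ permitted by the hypothesis and the polynomial decay $s^{-q+\frac{1}{2}}$ in the trace norm coming from the Landau--Slepian eigenvalue asymptotics \eqref{a1}, which is exactly what dictates $q>2\chi+\tfrac{1}{2}$. This is the one place in the paper where the super-exponential decay in \eqref{DW} is already strong enough to dominate the effect of the potential on the finitely many near-maximal eigenvalues, so that no Riemann--Hilbert analysis is needed.
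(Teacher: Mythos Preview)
Your proposal is correct and follows essentially the same argument as the paper: factor out $\det(I-K_s)$, expand the remaining determinant via Lidskii, split off the first $q$ eigenvalue factors, and control the tail using the trace-norm bound \eqref{Lids1} together with the Slepian asymptotics \eqref{a1} and the constraint $\kappa\geq 1-\chi(\ln s/s)$. The choice of $q$ and the resulting error $\mathcal{O}(\max\{s^{-(q-2\chi-1/2)},s^{-1}\})$ match the paper exactly.
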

As a Corollary to the latter Theorem, we state that for $\chi<\frac{1}{4}$ the asymptotic behavior of $\det(I-\gamma K_s)$ is unchanged to leading order from the case $v=\infty$ given in \eqref{DW}
\begin{cor}\label{cor1} There exist positive constants $s_0=s_0(\chi),v_0=v_0(\chi)$ such that
\begin{equation*}
	\det(I-\gamma K_s) = \exp\left[-\frac{s^2}{2}\right]s^{-\frac{1}{4}}c_0\Big(1+\mathcal{O}\left(\max\left\{s^{-(\frac{1}{2}-2\chi)},s^{-1}\right\}\right)\Big)%\Big(1+O\left(s^{-2(\chi+\frac{1}{4})}\right)\Big)
\end{equation*}
uniformly for $s\geq s_0,v\geq v_0$ and $\kappa\geq 1-\chi(\frac{\ln s}{s})$ for any $\chi<\frac{1}{4}$.
\end{cor}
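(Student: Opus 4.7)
The plan is to specialize Theorem \ref{theo2} to the case $q=1$, which, according to the construction preceding Theorem \ref{theo2}, is exactly the admissible choice when $\chi<\frac{1}{4}$. In that regime the finite product in Theorem \ref{theo2} reduces to the single factor $1+e^{-2v}\lambda_0(s)(1-\lambda_0(s))^{-1}$, and the entire task is to show that this factor equals $1+\mathcal{O}(s^{-(1/2-2\chi)})$. No further Riemann--Hilbert analysis is needed; the only external input is the eigenvalue asymptotic \eqref{a2}.

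Substituting $1-\lambda_0(s)=4\sqrt{\pi s}\,e^{-2s}(1+\mathcal{O}(s^{-1}))$ from \eqref{a2} yields
\begin{equation*}
	e^{-2v}\,\frac{\lambda_0(s)}{1-\lambda_0(s)}=\frac{e^{-2(v-s)}}{4\sqrt{\pi s}}\bigl(1+\mathcal{O}(s^{-1})\bigr).
\end{equation*}
Under the hypothesis $\kappa\geq 1-\chi\frac{\ln s}{s}$, i.e. $v\geq s-\chi\ln s$, we have $e^{-2(v-s)}\leq s^{2\chi}$, hence
\begin{equation*}
	e^{-2v}\,\frac{\lambda_0(s)}{1-\lambda_0(s)}=\mathcal{O}\bigl(s^{-(1/2-2\chi)}\bigr),
\end{equation*}
and this is $o(1)$ precisely because $\chi<\frac{1}{4}$. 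Consequently
\begin{equation*}
	1+e^{-2v}\,\frac{\lambda_0(s)}{1-\lambda_0(s)}=1+\mathcal{O}\bigl(s^{-(1/2-2\chi)}\bigr).
\end{equation*}

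With $q=1$, the residual error term in Theorem \ref{theo2} reads $\mathcal{O}(\max\{s^{-(1/2-2\chi)},s^{-1}\})$, which is of exactly the same order as the error from the product estimate above. Multiplying the two factors $1+\mathcal{O}(s^{-(1/2-2\chi)})$ and $1+\mathcal{O}(\max\{s^{-(1/2-2\chi)},s^{-1}\})$ into the leading expression $\exp[-s^2/2]\,s^{-1/4}\,c_0$ and absorbing both errors into a single $\mathcal{O}(\max\{s^{-(1/2-2\chi)},s^{-1}\})$ yields the claimed asymptotic.

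The argument is a direct deduction from Theorem \ref{theo2} and \eqref{a2}, so there is no substantive obstacle. The one point that deserves care is the book-keeping of the two thresholds: checking that the condition $\chi<\frac{1}{4}$ both (a) legitimizes the choice $q=1$ in Theorem \ref{theo2}, and (b) ensures that the single product-factor contributes an error of order $s^{-(1/2-2\chi)}$ rather than spoiling the leading $c_0$. Both (a) and (b) are governed by the same inequality $q-2\chi-\frac{1}{2}>0$ with $q=1$, which is what makes the corollary as clean as stated.
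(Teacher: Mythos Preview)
Your argument is correct and is exactly the intended one: the paper states the result as an immediate corollary of Theorem \ref{theo2} with the prescribed choice $q=1$ for $\chi<\tfrac14$, and the only content beyond that is the estimate of the single product factor via \eqref{a2}, which you carry out precisely.
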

In Theorem \ref{theo2}, suppose $\chi>0$ and $\kappa=\frac{v}{s} = 1-\chi(\frac{\ln s}{s})$. Let $q\geq 1$ such that
\begin{equation}\label{Lids2}
	2\chi+\frac{3}{2}\geq q>2\chi+\frac{1}{2}.
\end{equation}
Then, as $0<q-2\chi-\frac{1}{2}\leq 1$, it follows from Theorem \ref{theo2}, and also \eqref{a1} for $n=q-1$, that as $s\rightarrow\infty$,
\begin{equation}\label{Lids3}
	\det(I-\gamma K_s) = \exp\left[-\frac{s^2}{2}\right]s^{-\frac{1}{4}}c_0\prod_{j=0}^{q-2}\left(1+e^{-2v}\frac{\lambda_j}{1-\lambda_j}\right)\left(1+\mathcal{O}\left(s^{-(q-2\chi-\frac{1}{2})}\right)\right).
\end{equation}
Here $\prod_{j=0}^{q-2}(\ldots)\equiv 1$ if $q=1$. For $0\leq j\leq q-2$, $e^{-2v}(1-\lambda_j)^{-1}\sim e^{(2\chi-j-\frac{1}{2})\ln s}\geq e^{(2\chi-q+\frac{3}{2})\ln s}$ and so by \eqref{Lids2}, all the terms in the product contribute to $\det(I-\gamma K_s)$ as $s\rightarrow\infty$. We see that the curves
\begin{equation}\label{Lids4}
	v=s-\chi \ln s,\ \ \ \chi=\frac{q}{2}-\frac{1}{4},\ \ q=1,2,\ldots
\end{equation}
are "Stokes lines" for $\det(I-\gamma K_s)$: The asymptotics of $\det(I-\gamma K_s)$ changes each time one of these lines is crossed, see Figure \ref{Deiftdia2}. The ``Stokes" structure of the asymptotics indicated in Theorem \ref{theo2} is a very interesting phenomenon which needs to be more deeply understood. The only similar type of asymptotic behavior known to the authors is the appearance of the ``soliton cascade" in the long-time asymptotic solution of the Cauchy problem for the KdV equation with step-like initial date (E. Khruslov, $1975$, see \cite{KK} for more details).
\begin{figure}[tbh]
\begin{center}
\includegraphics[width=0.43\textwidth]{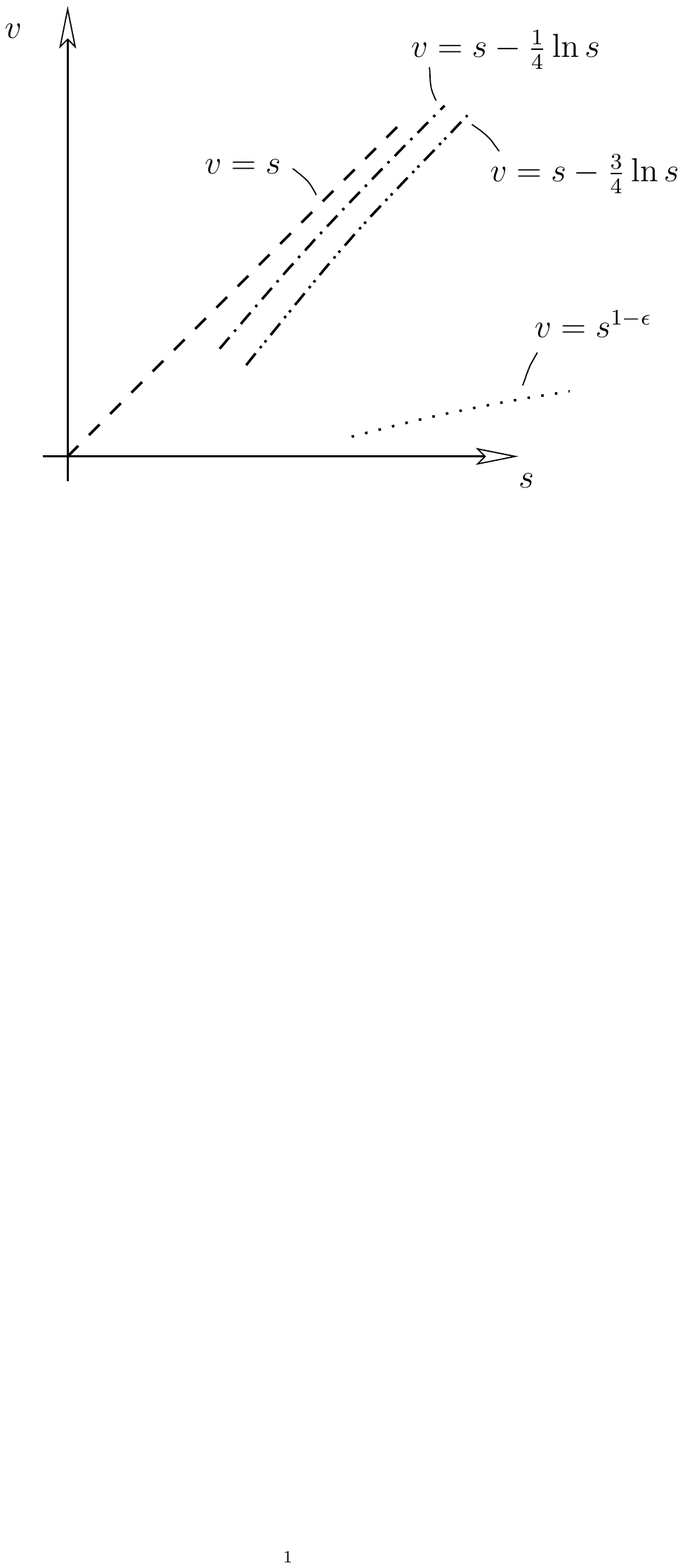}
\end{center}
\caption{"Stokes lines" displayed in the double scaling diagram}
\label{Deiftdia2}
\end{figure}
%\begin{figure}[tbh]
%\begin{center}
%\resizebox{0.3\textwidth}{!}{\input{Deiftdia3.pdf_t}}
%\caption{"Stokes lines" displayed in the double scaling diagram}
%\label{Deiftdia2}
%\end{center}
%\end{figure}

We conclude this introduction with a short outline for the remainder of the article, Part I. Our major technical tool in the asymptotic analysis of $\det(I-\gamma K_s)$ in the region (i), $s^{1-\epsilon}\leq v\leq(1-\delta)s$, is the nonlinear steepest decent analysis (see \cite{DZ,DVZ}) of the Riemann-Hilbert problem stated in Definition \ref{def1} (an analogous steepest-descent analysis also plays a key role in Part II). Here, a very important part is played by a $g$-function transformation which makes it possible to convert oscillatory jumps into exponentially decaying ones. Once this is achieved in Section \ref{sec1}, we turn our attention to the construction of model solutions to the initial master RHP. These solutions are given in terms of Jacobi theta functions on an appropriate elliptic Riemann surface as well as Airy and Bessel functions. However, as opposed to the more standard cases in \cite{DKMVZ,DIZ}, the latter parametrices involve certain "corrections": see Section \ref{sec2} for all details. The initial master RHP is then solved iteratively. We then use the crucial fact that the logarithmic derivative
\begin{equation*}
	\frac{\partial}{\partial s}\ln\det(I-\gamma K_s)
\end{equation*}
with $v$ fixed can be expressed in terms of the solution of the RHP \ref{def1} (see \eqref{IIKS2} below). In Section \ref{sec5} we substitute the main asymptotic terms of $Y(z)$ into this expression. This gives $\frac{\partial}{\partial s}\ln\det(I-\gamma K_s)$ under the constraint $\delta s\leq v\leq(1-\delta)s$. The integrability of $\frac{1}{t}M(tV(vt^{-1}),vt^{-1})$ for fixed $v>0,t\rightarrow\infty$ (see Remark \ref{seq}) is also established. In (the short) Section \ref{A:int}, formula \eqref{Avexp} for the integration constant $A(v)$ is derived, using (only) the fixed $v$-asymptotics \eqref{BWBB1}. In Section \ref{error} the estimate on $\frac{\partial}{\partial s}\ln\det(I-\gamma K_s)$ is extended to the full region (i),\,$s^{1-\epsilon}\leq v\leq (1-\delta)s$ for any $0<\epsilon<1$. Combined with the estimate on $\det(I-\gamma K_s)$ for $(s,v)$ in region (ii),\,$0<v<s^{\frac{1}{3}}$, we eventually derive the overall bound on $J$, $|J(s,v)|\leq c s^{-\frac{1}{4}}\ln s$, which allows us to complete the proof of Theorem \ref{theo1} by integrating $\frac{\partial}{\partial s}\ln\det(I-\gamma K_s)$ over the $s$-axis. Appendix \ref{thetaid} contains some identities for theta functions.\smallskip

Finally we note that it is an open, and very challenging, problem to determine the asymptotics of $\det(I-\gamma K_s)$ in the gap between the regions covered by Theorems \ref{theo1} and \ref{theo2}. Together with the estimates in region (ii), and the bound on the integral of $M$ in Remark \ref{seq}, we plan to address this problem in Part II.
%
%
%\newpage
%
%
%
%
%
%
%
% and we devote Section \ref{sec5} to the derivation of the leading terms in \eqref{theo1res}. At this stage we will have derived \eqref{theo1res} under the additional constraint $0<\delta\leq\kappa\leq 1-\delta<1$. \textcolor{magenta}{Then, in Section \ref{Avproof}, we first compute the ``integration constant" $A(v)$ with the help of the known asymptotics \eqref{BWBB1}.  After that, using the refined estimation \eqref{BBrefined} we control the error term $J(s,v)$ in Section \ref{error}. Finally we note that it is an open, and very challenging problem, to determine the asymptotics of $\det(I-\gamma K_s)$ in the gap between the regions covered by Theorems \ref{theo1} and \ref{theo2}.}\smallskip
%
%\textcolor{cyan}{As noted above, in the sequel \cite{BDIK} to this paper we derive various estimates needed to prove Theorem \ref{theo1} (see also Remark \ref{seq} and Section \ref{esticruc}). In \cite{BDIK} we use the same RHP \ref{def1} to analyze $\det(I-\gamma K_s)$, but now the deformation of the RHP is very different from the deformation used in this paper. It is thus natural to split an otherwise very lengthy analysis into two parts, I and II.}
%
%
%
\section{Riemann-Hilbert analysis - preliminary steps}\label{sec1}
The leading $s$ dependent terms in the statement of Theorem \ref{theo1} will be derived by integrating the well-known local identity (cf. \cite{DIZ})
\begin{equation}\label{IIKS2}
	\frac{\partial}{\partial s}\ln\det(I-\gamma K_s)=-\textnormal{tr}\,\left((I-\gamma K_s)^{-1}\frac{\partial}{\partial s}(\gamma K_s)\right)=-i\big(Y_1^{11}-Y_1^{22}\big)
\end{equation}
where $v$ is fixed and $Y_1=(Y_1^{jk})$ is the residue of $Y(\lambda)$ at $\lambda=\infty$ as stated in \eqref{Yintegraleq}.\smallskip%, and the derivative is taken with $v$ fixed.\smallskip

For certain (large) values $s_0,v_0$, we will, from now on, always assume that $s\geq s_0,v\geq v_0$ and in addition
\begin{equation}\label{dbscale}
	0<\delta\leq\kappa\leq1-\delta<1
\end{equation}
where $0<\delta<\frac{1}{2}$ remains fixed. Furthermore we fix the following notation. Set $J=(-1,-a)\cup(a,1)$ with $0<a<1$ given implicitly via \eqref{Dya}, i.e. 
\begin{equation*}
	\kappa = \frac{v}{s} = \int_a^1\sqrt{\frac{\mu^2-a^2}{1-\mu^2}}\,\d\mu.
\end{equation*}
As we shall see shortly, the latter equation defines a one-to-one correspondence between $\kappa\in[\delta,1-\delta]$ and $a\in[\delta',1-\delta']$. Now consider
the elliptic curve
\begin{equation*}
	\Gamma = \big\{(z,w):\ w^2=p(z)\big\},\hspace{0.5cm} p(z)=\left(z^2-1\right)\left(z^2-a^2\right),
\end{equation*}
of genus $g=1$. We use the representation of $\Gamma$ as two-sheeted covering of the Riemann sphere $\mathbb{CP}^1$ by gluing two copies of the slit plane $\mathbb{C}\backslash J$ together along $J$, see \cite{FK}. We let $\sqrt{p(z)}\sim z^2$ as $z\rightarrow\infty$ on the {\it first} sheet, and $\sqrt{p(z)}\sim-z^2$ as $z\rightarrow\infty$ on the {\it second} sheet.  Let $\{A_i\}_{i=0}^1$ and $\{B_1\}$ denote the cycles depicted in Figure \ref{figure1}, the cycles $A_i$ lie on the first sheet and the cycle $B_1$ passes from the first sheet through the slit $(-1,-a)$ to the second sheet, and back again through $(a,1)$. The cycles $\{A_1,B_1\}$ form a canonical homology basis for $\Gamma$.
\begin{figure}[tbh]
\begin{center}
\includegraphics[width=0.75\textwidth]{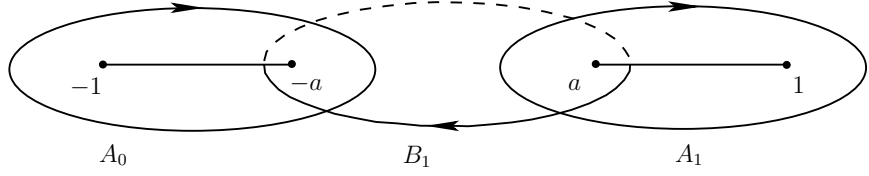}
\end{center}
\caption{Standard homology basis for $\Gamma$}
\label{figure1}
\end{figure}
%\begin{figure}[tbh]
%\begin{center}
%\resizebox{0.65\textwidth}{!}{\input{cycles.pdf_t}}
%\caption{Standard homology basis for $\Gamma$}
%\label{figure1}
%\end{center}
%\end{figure}

One of the major ingredients in the solution of the $Y$-RHP is the use of a $g$-function transformation. Let
\begin{equation}\label{gfunc}
	g(z) = \int_1^z\sqrt{\frac{\mu^2-a^2}{\mu^2-1}}\,\d\mu,\hspace{0.5cm}z\in\mathbb{C}\backslash[-1,1]
\end{equation}
with the path of integration chosen in the simply connected domain $\mathbb{C}\mathbb{P}^1\backslash[-1,1]$ and 
\begin{equation*}
	-\pi<\textnormal{arg}\left(\frac{\mu^2-a^2}{\mu^2-1}\right)\leq\pi\ \ \textnormal{with}\ \ \sqrt{\frac{\mu^2-a^2}{\mu^2-1}}>0\ \ \textnormal{for}\ \ \mu>1.\footnote{Alternatively, we could define $g(z)$ on the first sheet $\mathbb{C}\backslash J$ of $\Gamma$.}
\end{equation*}
This implies that $g(z)$ is single-valued and analytic on $\mathbb{C}\mathbb{P}^1\backslash[-1,1]$ with the asymptotics
\begin{equation}\label{gfuncasy}
	g(z) = z+\ell-\frac{1-a^2}{2z}+\mathcal{O}\big(z^{-2}\big),\hspace{0.5cm}z\rightarrow\infty,\hspace{0.5cm} \ell=\int_1^{\infty}\left(\sqrt{\frac{\mu^2-a^2}{\mu^2-1}}-1\right)\d\mu-1.
\end{equation}
Let us now discuss the correspondence between the branch points $z=\pm a$ and the double scaling parameter $\kappa$ as defined in \eqref{Dya} 
\begin{prop} In the double-scaling limit $s\rightarrow\infty,\gamma\uparrow 1$ with $\kappa\in(0,1)$, the branch point $z=a$ is uniquely defined via \eqref{Dya}. In particular
\begin{equation}\label{aasy}
	a=1-\frac{2\kappa}{\pi}-\frac{\kappa^2}{\pi^2}+\mathcal{O}\big(\kappa^3\big),\hspace{0.5cm}\kappa\downarrow 0
\end{equation}
and
\begin{equation}\label{aasy2}
	a = 2\sqrt{\frac{1-\kappa}{|\ln(1-\kappa)|}}\left(1+\frac{\ln|\ln(1-\kappa)|}{2\ln(1-\kappa)}+\frac{1-\omega}{\ln(1-\kappa)}+\mathcal{O}\left(\frac{\ln^2|\ln(1-\kappa)|}{\ln^2(1-\kappa)}\right)\right),\hspace{0.5cm}\kappa\uparrow 1
\end{equation}	
with
\begin{equation*}
	\omega=\sum_{k=2}^{\infty}\binom{\frac{1}{2}}{k}\frac{(-1)^k}{k-1}=\frac{1}{2}-\ln 2 \approx -0.193147<0.
\end{equation*}
\end{prop}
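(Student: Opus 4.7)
The plan is to establish (a) a strictly monotone correspondence between $\kappa\in(0,1)$ and $a\in(0,1)$ via \eqref{Dya} and (b) to extract the two endpoint expansions \eqref{aasy} and \eqref{aasy2}. Set
\begin{equation*}
F(a):=\int_a^1\sqrt{\frac{\mu^2-a^2}{1-\mu^2}}\,\d\mu,\qquad a\in[0,1].
\end{equation*}
For (a), I would note $F(1)=0$ and compute $F(0)=\int_0^1\mu(1-\mu^2)^{-1/2}\d\mu=1$. Since the integrand vanishes at the lower endpoint $\mu=a$, differentiating under the integral sign gives
\begin{equation*}
F'(a)=-a\int_a^1\frac{\d\mu}{\sqrt{(\mu^2-a^2)(1-\mu^2)}}<0,\qquad 0<a<1,
\end{equation*}
so $F:[0,1]\to[0,1]$ is a smooth, strictly decreasing bijection; in particular each $\kappa\in[\delta,1-\delta]$ determines a unique $a\in[\delta',1-\delta']$ for some $\delta'=\delta'(\delta)$, proving the first assertion.

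For the expansion \eqref{aasy} near $\kappa\downarrow 0$, set $\eta:=1-a$ and substitute $\mu=1-\eta u$, $u\in[0,1]$. A direct computation gives
\begin{equation*}
\sqrt{\frac{\mu^2-a^2}{1-\mu^2}}=\sqrt{\frac{1-u}{u}}\sqrt{\frac{2-\eta(1+u)}{2-\eta u}},
\end{equation*}
and Taylor expanding the second square root in $\eta$ yields $1-\eta/4+\eta^2 c(u)+O(\eta^3)$ for an explicit polynomial $c(u)$. Using the elementary beta-integrals $\int_0^1\sqrt{(1-u)/u}\,\d u=\pi/2$ and $\int_0^1\sqrt{u(1-u)}\,\d u=\pi/8$, I obtain
\begin{equation*}
\kappa=F(a)=\frac{\pi\eta}{2}\Bigl(1-\frac{\eta}{4}-\frac{\eta^2}{16}+O(\eta^3)\Bigr).
\end{equation*}
Inverting this series iteratively gives $\eta=\frac{2\kappa}{\pi}+\frac{\kappa^2}{\pi^2}+O(\kappa^3)$, which is exactly \eqref{aasy}.

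For \eqref{aasy2}, near $\kappa\uparrow 1$ I would convert $F(a)$ into complete elliptic integrals. The substitution $\mu^2=1-(1-a^2)\cos^2\phi$, $\phi\in[0,\pi/2]$, turns $\sqrt{(\mu^2-a^2)/(1-\mu^2)}$ into $\tan\phi$ and a short calculation produces the classical identity
\begin{equation*}
\kappa=E(k)-a^2K(k),\qquad k^2=1-a^2,
\end{equation*}
where $K,E$ are the complete elliptic integrals of the first and second kind. Plugging in the standard asymptotic expansions as $k\uparrow 1$ (see e.g.\ \cite{NIST}),
\begin{equation*}
K(k)=\ln\tfrac{4}{a}+\tfrac{a^2}{4}\bigl(\ln\tfrac{4}{a}-1\bigr)+O(a^4\ln a^{-1}),\qquad E(k)=1+\tfrac{a^2}{2}\bigl(\ln\tfrac{4}{a}-\tfrac12\bigr)+O(a^4\ln a^{-1}),
\end{equation*}
I obtain $1-\kappa=\tfrac{a^2}{2}\bigl(\ln\tfrac{4}{a}+\tfrac{1}{2}\bigr)+O(a^4\ln a^{-1})$, from which $a\to 0$ as $\kappa\uparrow 1$.

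The main obstacle is the inversion of this last implicit relation, because the answer is a nested logarithmic series. I would proceed by bootstrap: write $L:=|\ln(1-\kappa)|$, make the ansatz $a=2\sqrt{(1-\kappa)/L}\,(1+\delta)$ with $\delta=o(1)$ to be determined, take logs to get $\ln(4/a)=\tfrac{L}{2}+\tfrac{1}{2}(\ln L+\ln 4)-\delta+o(1)$, and substitute back into the relation above. Matching terms of order $1$, $(\ln L)/L$, and $1/L$ successively produces
\begin{equation*}
2\delta=-\frac{\ln L+1+\ln 4}{L}+O\!\left(\frac{\ln^2 L}{L^2}\right)=\frac{\ln|\ln(1-\kappa)|}{\ln(1-\kappa)}+\frac{2(1-\omega)}{\ln(1-\kappa)}+O\!\left(\tfrac{\ln^2|\ln(1-\kappa)|}{\ln^2(1-\kappa)}\right),
\end{equation*}
with $\omega=\tfrac12-\ln 2$ emerging from the constants as claimed. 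The constant $\omega$ most naturally appears as $\sum_{k\geq 2}\binom{1/2}{k}\frac{(-1)^k}{k-1}$ if one expands $(1-k^2)^{1/2}$ inside the elliptic-integral expansions rather than using the closed form $1/2-\ln 2$ directly; I would verify both representations agree. Tracking the remainder through the bootstrap — in particular showing that each iteration reduces the error by a factor of $(\ln L)/L$ — is the delicate bookkeeping step and the chief technical obstacle.
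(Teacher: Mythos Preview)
Your proposal is correct. Your monotonicity argument and your derivation of the $\kappa\downarrow 0$ expansion are essentially identical to the paper's: the paper also computes $F'(a)$ explicitly (with the same elliptic-integral formula) and obtains $\kappa=\frac{\pi}{2}(1-a)\bigl(1-\frac{1-a}{4}-\frac{(1-a)^2}{16}+O((1-a)^3)\bigr)$, which is your series with $\eta=1-a$.

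For the $\kappa\uparrow 1$ expansion the routes differ. The paper expands $\sqrt{1-a^2/\mu^2}$ binomially in powers of $a^2$ and integrates term by term against $\mu/\sqrt{1-\mu^2}$, obtaining directly
\[
\kappa=1-\tfrac{a^2}{2}|\ln a|-\tfrac{a^2}{2}\bigl(1-\omega+\ln 2\bigr)+\tfrac{a^4}{16}|\ln a|+O(a^4),
\]
and this is precisely where the series $\omega=\sum_{k\ge 2}\binom{1/2}{k}\frac{(-1)^k}{k-1}$ appears naturally (each $k\ge 2$ term contributes a constant piece $\frac{(-1)^k}{k-1}$ after integrating). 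Your route via $\kappa=E(k)-a^2K(k)$ together with the standard $k'\to 0$ asymptotics from \cite{NIST} yields the same relation $1-\kappa=\frac{a^2}{2}\bigl(\ln\frac{4}{a}+\frac{1}{2}\bigr)+O(a^4\ln a^{-1})$, as one checks since $1-\omega+\ln 2=\frac{1}{2}+\ln 4$. Your approach is cleaner if one is willing to quote the elliptic-integral expansions; the paper's approach is self-contained and explains the form of $\omega$. The paper, incidentally, does not spell out the bootstrap inversion any more than you do: it simply states that \eqref{aasy2} follows from the expansion above, so your discussion of the iteration is already at least as detailed.
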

\begin{proof} We consider the function
\begin{equation}\label{imp}
	F(a,\kappa) = \kappa-\int_a^1\sqrt{\frac{\mu^2-a^2}{1-\mu^2}}\,\d\mu,\hspace{1cm}a\in[0,1]
\end{equation}
which is real analytic in a neighborhood of the point $\left(a\in(0,1],\kappa\in[0,\infty)\right)$. Since
\begin{eqnarray}
	\int_a^1\sqrt{\frac{\mu^2-a^2}{1-\mu^2}}\,\d\mu &=& \sum_{k=0}^2\binom{\frac{1}{2}}{k}(-1)^k(1-a)^k\int_a^1\sqrt{\frac{\mu-a}{1-\mu}}\frac{\d\mu}{(1+\mu)^k}+\mathcal{O}\left((1-a)^4\right)\nonumber\\
	&=&\frac{\pi}{2}(1-a)\left(1-\frac{1-a}{4}-\frac{(1-a)^2}{16}+\mathcal{O}\big((1-a)^3\big)\right),\hspace{0.5cm} a\uparrow 1\label{aexp}\\
	\int_a^1\sqrt{\frac{\mu^2-a^2}{1-\mu^2}}\,\d\mu &=& \sum_{k=0}^{\infty}\binom{\frac{1}{2}}{k}(-1)^ka^{2k}\int_a^1\frac{\mu^{1-2k}}{\sqrt{1-\mu^2}}\,\d\mu\nonumber\\
	&=&1-\frac{a^2}{2}\left|\ln a\right|-\frac{a^2}{2}\left(1-\omega+\ln 2\right)+\frac{a^4}{16}\left|\ln a\right|+\mathcal{O}\big(a^4\big),\hspace{0.5cm} a\downarrow 0\label{aexp2}
\end{eqnarray}
we have 
\begin{equation*}
	\lim_{a\downarrow 0}F(a,\kappa) = \kappa-1,\hspace{0.5cm} \lim_{a\uparrow 1}F(a,\kappa) = \kappa>0
\end{equation*}
and the partial derivatives of $F$ are
\begin{equation*}
	F_a(a,\kappa) = a\int_a^1\frac{\d\mu}{\sqrt{(\mu^2-a^2)(1-\mu^2)}}>0,\hspace{1cm}F_{\kappa}(a,\kappa)=1.
\end{equation*}
Thus in the proposed double scaling limit, there is a real analytic solution $a(\kappa)$ of the equation $F(a,\kappa)=0$ in a neighborhood of the point $\left(a,\kappa\right)$. The expansions \eqref{aasy} and \eqref{aasy2} follow now directly from \eqref{imp} and \eqref{aexp}, \eqref{aexp2}.
\end{proof}
We also state expansions of the frequency $V=V(\kappa)$ \eqref{Dyfreq} as well as the module $\tau=\tau(\kappa)$ and the normalization constant $c=c(\kappa)$ \eqref{Dymod} which follow from \eqref{aasy}, \eqref{aasy2} and standard formulae for elliptic integrals. These expansions will be used later on.
\begin{cor}\label{cor1} As $\kappa\downarrow 0$, we have
\begin{eqnarray}
	V=V(\kappa)&=& -\frac{2}{\pi}\left(1+\frac{\kappa}{\pi}\ln\kappa -\frac{\kappa}{\pi}(1+\ln 4\pi)+\mathcal{O}\big(\kappa^2\ln\kappa\big)\right),\label{c:1}\\
	\tau=\tau(\kappa)&=&-\frac{2i}{\pi}\ln\frac{\kappa}{4\pi}+o(1),\label{c:2}\\%\,\Big(1+\mathcal{O}\left(\frac{1}{\ln\kappa}\right)\Big).
	c=c(\kappa)&=&\frac{i}{\pi}\left(1-\frac{\kappa}{\pi}+\mathcal{O}\left(\kappa^2\right)\right).\nonumber
\end{eqnarray}
On the other hand if $\kappa\uparrow 1$, then
\begin{eqnarray*}
	V=V(\kappa)&=&\frac{2(1-\kappa)}{\ln(1-\kappa)}\left(1+\frac{\ln|\ln(1-\kappa)|}{\ln(1-\kappa)}+\frac{2(1-\omega)}{\ln(1-\kappa)}+\mathcal{O}\left(\frac{\ln^2|\ln(1-\kappa)|}{\ln^2(1-\kappa)}\right)\right),\\
	\tau=\tau(\kappa)&=&-\frac{2\pi i}{\ln(1-\kappa)}\left(1+\frac{\ln|\ln(1-\kappa)|}{\ln(1-\kappa)}+\frac{\omega'}{\ln(1-\kappa)}+\mathcal{O}\left(\frac{\ln^2|\ln(1-\kappa)|}{\ln^2(1-\kappa)}\right)\right),\\
	c=c(\kappa)&=&-\frac{i}{\ln(1-\kappa)}\left(1+\frac{\ln|\ln(1-\kappa)|}{\ln(1-\kappa)}+\frac{\omega'}{\ln(1-\kappa)}+\mathcal{O}\left(\frac{\ln^2|\ln(1-\kappa)|}{\ln^2(1-\kappa)}\right)\right),
\end{eqnarray*}
with
\begin{equation*}
	\omega'=\sum_{k=1}^{\infty}\binom{-\frac{1}{2}}{k}\frac{(-1)^k}{k}=2\ln 2\approx 1.386294>0.
\end{equation*}
\end{cor}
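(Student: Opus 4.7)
The plan is to first express each of $V(\kappa)$, $\tau(\kappa)$, and $c(\kappa)$ in terms of the complete elliptic integrals $K(k)=\int_0^{\pi/2}(1-k^2\sin^2\theta)^{-1/2}\,d\theta$ and $E(k)=\int_0^{\pi/2}(1-k^2\sin^2\theta)^{1/2}\,d\theta$, and then to substitute the asymptotic expansions \eqref{aasy} and \eqref{aasy2} for $a(\kappa)$ into the standard expansions of $K$ and $E$ near modulus $0$ or $1$.

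Concretely, the substitution $\mu=a\sin\theta$ turns the integral in \eqref{Dyfreq} into
\begin{equation*}
V(\kappa)=-\frac{2}{\pi}\int_0^{\pi/2}\frac{a^2\cos^2\theta}{\sqrt{1-a^2\sin^2\theta}}\,d\theta=\frac{2}{\pi}\bigl[(1-a^2)K(a)-E(a)\bigr],
\end{equation*}
the same substitution in $\int_{-a}^a d\mu/\sqrt{(a^2-\mu^2)(1-\mu^2)}$ yields $2K(a)$, and the substitution $\lambda^2=1-(1-a^2)\sin^2\theta$ in \eqref{Dymod} yields $\int_a^1 d\lambda/\sqrt{(1-\lambda^2)(\lambda^2-a^2)}=K(a')$ with complementary modulus $a'=\sqrt{1-a^2}$. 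Hence
\begin{equation*}
c(\kappa)=\frac{i}{2K(a')},\qquad \tau(\kappa)=\frac{2iK(a)}{K(a')}.
\end{equation*}
These identifications reduce the whole problem to tracking the asymptotics of $K(a),K(a'),E(a)$ as $a\to 1$ (case $\kappa\downarrow 0$) or as $a\to 0$ (case $\kappa\uparrow 1$), using the expansions of $a(\kappa)$ already established in the proof of the preceding proposition.

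For the regime $\kappa\downarrow 0$, we have $a'=\sqrt{1-a^2}=2\sqrt{\kappa/\pi}\bigl(1+O(\kappa)\bigr)$ from \eqref{aasy}. Substituting into the known asymptotics
\begin{equation*}
K(k)=\ln\frac{4}{k'}+\tfrac{1}{4}(k')^2\Bigl(\ln\frac{4}{k'}-1\Bigr)+O\bigl((k')^4\ln k'\bigr),\quad
E(k)=1+\tfrac12(k')^2\Bigl(\ln\frac{4}{k'}-\tfrac12\Bigr)+O\bigl((k')^4\ln k'\bigr)
\end{equation*}
as $k\to 1$ (with $k'=\sqrt{1-k^2}$), together with the Taylor series of $K$ and $E$ at modulus $0$ applied to $K(a')$, delivers \eqref{c:1} and \eqref{c:2} and the expansion of $c(\kappa)$ after elementary algebra. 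For the regime $\kappa\uparrow 1$, the roles reverse: $a$ is small (cf.\ \eqref{aasy2}), $a'$ is close to $1$, and one uses the series of $K(a),E(a)$ at $0$ combined with the logarithmic expansion of $K(a')$ at $1$. The constant $\omega$ already arose in \eqref{aexp2} from the expansion $\int_a^1 \mu^{1-2k}(1-\mu^2)^{-1/2}\,d\mu$, and $\omega'$ will appear analogously from the series expansion $(1-a^2)^{-1/2}=\sum_k\binom{-1/2}{k}(-a^2)^k$ combined with the logarithm $\ln\frac{4}{a'}$; pulling these together and substituting the expansion of $a(\kappa)$ from \eqref{aasy2} will produce the triple-log structure stated for $V,\tau,c$.

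The only technical obstacle is bookkeeping: the expansions of $a(\kappa)$ as $\kappa\uparrow 1$ carry nested logarithms $\ln(1-\kappa)$, $\ln|\ln(1-\kappa)|$, which interact non-trivially with $\ln\frac{4}{a'}$ in $K(a')$, and one has to be careful that the error terms remain of order $\ln^2|\ln(1-\kappa)|/\ln^2(1-\kappa)$ after all substitutions. The $\kappa\downarrow 0$ case is by contrast straightforward, requiring only that the $\kappa\ln\kappa$ correction in $V$ be matched to the leading $(1-a^2)K(a)$ cancellation. Once these bookkeeping steps are executed, the four expansions in the corollary follow directly.
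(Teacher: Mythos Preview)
Your proposal is correct and follows essentially the same approach as the paper, which provides no proof beyond the sentence ``follow from \eqref{aasy}, \eqref{aasy2} and standard formulae for elliptic integrals.'' Your identifications $V=\frac{2}{\pi}[(1-a^2)K(a)-E(a)]$, $c=\frac{i}{2K(a')}$, $\tau=\frac{2iK(a)}{K(a')}$ are exactly what is meant by ``standard formulae,'' and substituting the expansions of $a(\kappa)$ is precisely what the paper intends.
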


Let us continue with the discussion of the properties of the $g$-function \eqref{gfunc}. Set
\begin{equation*}
	\Omega(z) = g_+(z)+g_-(z),\hspace{0.35cm}z\in \mathbb{R};\hspace{0.5cm}\textnormal{where}\ \ \ \ \ g_{\pm}(z)=\lim_{\epsilon\downarrow 0}g(z\pm i\epsilon).
\end{equation*}
Then $\Omega(z)$ is real and constant on the connected components of $J$, indeed
\begin{eqnarray*}
	\Omega(z) &=& 0,\hspace{0.5cm}z\in(a,1)\\
	\Omega(z) &=&-2\int_{-a}^a\sqrt{\frac{a^2-\mu^2}{1-\mu^2}}\,\d\mu,\hspace{0.5cm}z\in(-1,-a).
\end{eqnarray*}
Moreover, we have the following Proposition.
\begin{prop}\label{prop2} Let
\begin{equation*}
	\Pi(z) = i\left(g_+(z)-g_-(z)\right),\hspace{0.5cm}z\in\mathbb{R}.
\end{equation*}
Then $\Pi(z)$ is real, with
\begin{equation*}
	\Pi(z)<0,\hspace{0.25cm}z\in J,\hspace{1cm}\Pi(z)=0,\hspace{0.25cm}z\in\mathbb{R}\backslash[-1,1]\hspace{1cm}\textnormal{and}\hspace{1cm}\Pi(z) = -2\kappa,\hspace{0.25cm}z\in(-a,a).
\end{equation*}
\end{prop}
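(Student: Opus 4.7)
The plan is to establish the four assertions --- reality of $\Pi$, vanishing on $\mathbb{R}\setminus[-1,1]$, strict negativity on $J$, and the constant value $-2\kappa$ on $(-a,a)$ --- by analyzing the boundary values of $g$ and of $R(z):=g'(z)=\sqrt{(z^2-a^2)/(z^2-1)}$ on the real axis, using the branch prescription $R(\mu)>0$ for $\mu>1$.

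Reality and the vanishing off $[-1,1]$ follow at once. By Schwarz reflection, $R(\bar z)=\overline{R(z)}$ on the domain $\mathbb{C}\setminus[-1,1]$ (which is symmetric under conjugation and on whose real subset $(1,\infty)$ the function $R$ is real positive), and integrating along reflected paths from $z=1$ gives $g(\bar z)=\overline{g(z)}$. Hence $g_-(x)=\overline{g_+(x)}$ and $\Pi(x)=-2\operatorname{Im}g_+(x)\in\mathbb{R}$. On $\mathbb{R}\setminus[-1,1]$, $g$ admits an analytic extension across the real axis, so $g_+=g_-$ and $\Pi\equiv 0$.

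For $x\in(a,1)$ I would track the argument of $(\mu^2-a^2)/(\mu^2-1)$ as $\mu$ is moved from large positive real through the upper (resp.\ lower) half-plane to $x\pm i0$; the ratio enters the cut with argument $\mp\pi$, giving $R_\pm(x)=\mp i\sqrt{(x^2-a^2)/(1-x^2)}$, and therefore
\begin{equation*}
\Pi'(x)=i\bigl(R_+(x)-R_-(x)\bigr)=2\sqrt{\frac{x^2-a^2}{1-x^2}}>0.
\end{equation*}
Combined with $\Pi(1)=0$ (which holds since the integrand in \eqref{gfunc} is integrable at $\mu=1$, so $g$ is continuous there), integration yields
\begin{equation*}
\Pi(x)=-2\int_x^1\sqrt{\frac{\mu^2-a^2}{1-\mu^2}}\,\d\mu,\qquad x\in(a,1),
\end{equation*}
strictly negative on $(a,1)$ and equal to $-2\kappa$ at $x=a$ by \eqref{Dya}. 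The analogous argument on $(-1,-a)$ --- exploiting the evenness $R(-\mu)=R(\mu)$ on the first sheet, established at $\mu=\infty$ and extended by the identity theorem, which under the reflection $\mu+i\epsilon\leftrightarrow-\mu-i\epsilon$ swaps upper and lower boundary values --- yields $\Pi(-1)=0$, strict negativity on $(-1,-a)$, and $\Pi(-a)=-2\kappa$.

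For $x\in(-a,a)$, $R(z)$ is analytic across the interval (its only branch cuts lie on $J$), so $\Pi'\equiv 0$ and $\Pi$ is constant. The constant is pinned to $-2\kappa$ by continuity with the endpoint values $\Pi(\pm a)=-2\kappa$ obtained in the previous step; equivalently, one computes $g_+(x)-g_-(x)=\oint_\gamma R(\mu)\,\d\mu$ for a loop $\gamma$ encircling $[a,1]$ counterclockwise, and the jump formula $R_--R_+=2i\sqrt{(\mu^2-a^2)/(1-\mu^2)}$ on $(a,1)$ collapses the integral to $2i\kappa$, giving $\Pi=-2\kappa$. The principal technical burden throughout is the careful branch bookkeeping for $\sqrt{(\mu^2-a^2)/(\mu^2-1)}$ --- distinguishing the sign-flip across $J$ from the single-valuedness across $(-a,a)$ and determining $R_\pm$ on $(-1,-a)$ via the reflection symmetry --- but once this is settled, all quantitative claims reduce to elementary real integration.
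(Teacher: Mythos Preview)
Your proposal is correct and follows essentially the same approach as the paper: both obtain the explicit integral formulas $\Pi(z)=-2\int_z^1\sqrt{(\mu^2-a^2)/(1-\mu^2)}\,\d\mu$ on $(a,1)$, the analogous formula on $(-1,-a)$, and the constant $-2\kappa$ on $(-a,a)$ via \eqref{Dya}. The paper simply writes these formulas down, whereas you spell out the underlying branch bookkeeping (Schwarz reflection for reality, argument tracking for $R_\pm$, evenness of $R$ for the left slit) that the paper leaves implicit; the substance is the same.
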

\begin{proof} First $g_+(z)-g_-(z)=0$ for $z\in(-\infty,-1)\cup(1,\infty)$. Second,
\begin{eqnarray*}
	\Pi(z)&=&-2\int_z^1\sqrt{\frac{\mu^2-a^2}{1-\mu^2}}\,\d\mu<0,\hspace{0.5cm}z\in(a,1)\\
	\Pi(z)&=&-2\int_a^1\sqrt{\frac{\mu^2-a^2}{1-\mu^2}}\,\d\mu+2\int_z^{-a}\sqrt{\frac{\mu^2-a^2}{1-\mu^2}}\,\d\mu=
	-2\int_{-z}^1\sqrt{\frac{\mu^2-a^2}{1-\mu^2}}\,\d\mu<0,\hspace{0.15cm}z\in(-1,-a)
\end{eqnarray*}
and finally for the gap $(-a,a)$ by \eqref{Dya}
\begin{equation*}
	\Pi(z) = -2\int_a^1\sqrt{\frac{\mu^2-a^2}{1-\mu^2}}\,\d\mu = -2\kappa,\ \ \ z\in(-a,a).
\end{equation*}
\end{proof}
We now employ the standard $g$-function transformation.
%---------------------------------------------------------------------------------------------------------------------------------------------------------------------------------------------------------------------------------------------------------------------
\subsection{First transformation of the RHP} Define
\begin{equation}\label{XRHP}
	X(\lambda) = e^{is\ell\sigma_3}Y(\lambda)e^{-is(g(\lambda)-\lambda)\sigma_3},\hspace{0.5cm}\lambda\in\mathbb{C}\backslash[-1,1]
\end{equation}
which transforms the initial $Y$-RHP of Section \ref{sec1} to the following $X$-RHP
\begin{itemize}
	\item $X(\lambda)$ is analytic for $\lambda\in\mathbb{C}\backslash[-1,1]$
	\item Along the line segment $[-1,1]$ oriented from left to right
	\begin{equation}\label{Xjump}
		X_+(\lambda) = X_-(\lambda)\begin{pmatrix}
		(1-\gamma)e^{-s\Pi(\lambda)} & \gamma e^{is\Omega(\lambda)}\\
		-\gamma e^{-is\Omega(\lambda)} & (1+\gamma)e^{s\Pi(\lambda)}\\
		\end{pmatrix},\hspace{0.5cm} \lambda\in[-1,1].
	\end{equation}
	\item $X(\lambda)$ has at most logarithmic singularities at $\lambda = \pm 1$, more precisely from \eqref{Ysing},
	\begin{equation}\label{Xsing}
		e^{-is\ell\sigma_3}X(\lambda)e^{isg(\lambda)\sigma_3} = \check{X}(\lambda)\left[I+\frac{\gamma}{2\pi i}\begin{pmatrix}
		-1 & 1\\
		-1 & 1\\
		\end{pmatrix}\ln\left(\frac{\lambda-1}{\lambda+1}\right)\right]
	\end{equation}
	where $\check{X}(\lambda)$ is analytic at $\lambda=\pm 1$ and the branch of the logarithm is fixed by the condition $-\pi<\textnormal{arg}\,\frac{\lambda-1}{\lambda+1}<\pi$.
	\item From \eqref{gfuncasy} as $\lambda\rightarrow\infty$, 
	\begin{equation*}
		X(\lambda)=I+\mathcal{O}\big(\lambda^{-1}\big).
	\end{equation*}
\end{itemize}
Let us take a closer look at the jump matrix $G_X(\lambda)$ in \eqref{Xjump}. For the right slit, we can write
\begin{equation}\label{jright}
	G_X(\lambda) = \begin{pmatrix}
	e^{-s(2\kappa+\Pi(\lambda))}& \gamma \\
	-\gamma  & (1+\gamma)e^{s\Pi(\lambda)}\\
	\end{pmatrix},\hspace{0.5cm}\lambda\in(a,1)
\end{equation}
with $\kappa$ as in \eqref{Dya}. Since for $\lambda\in(a,1)$
\begin{equation}\label{e01}
	2\kappa+\Pi(\lambda) = 2\left(\int_a^1\sqrt{\frac{\mu^2-a^2}{1-\mu^2}}\,\d\mu-\int_{\lambda}^1\sqrt{\frac{\mu^2-a^2}{1-\mu^2}}\,\d\mu\right) = 2\int_a^{\lambda}\sqrt{\frac{\mu^2-a^2}{1-\mu^2}}\,\d\mu>0,
\end{equation}
and by Proposition \ref{prop2}, $\Pi(\lambda)<0$ in the right slit, we have
\begin{equation}\label{jrightl}
	G_X(\lambda)\begin{pmatrix}
	0 & 1\\
	-1&0\\
	\end{pmatrix}^{-1}\rightarrow I,\hspace{1cm}\textnormal{as}\hspace{0.25cm} s\rightarrow\infty,\gamma\uparrow 1:\,\, \kappa\in[\delta,1-\delta]
\end{equation}
uniformly on any compact subset of the right slit $(a,1)$. For the left slit we argue in a similar way:
\begin{equation}\label{jleft}
	G_X(\lambda) = \begin{pmatrix}
	e^{-s(2\kappa+\Pi(\lambda))}& \gamma e^{is\Omega(\lambda)}\\
	-\gamma e^{-is\Omega(\lambda)} & (1+\gamma)e^{s\Pi(\lambda)}\\
	\end{pmatrix},\hspace{0.5cm}\lambda\in(-1,-a),
\end{equation}
and this time (compare Proposition \ref{prop2})
\begin{equation}\label{e02}
	2\kappa+\Pi(\lambda) = 2\int_{\lambda}^{-a}\sqrt{\frac{\mu^2-a^2}{1-\mu^2}}\,\d\mu>0,\hspace{0.5cm}\lambda\in(-1,-a).
\end{equation}
Hence
\begin{equation}\label{jleftl}
	G_X(\lambda)\begin{pmatrix}
	0 & e^{is\Omega(\lambda)}\\
	-e^{-is\Omega(\lambda)} & 0\\
	\end{pmatrix}^{-1}\rightarrow I,\hspace{1cm}\textnormal{as}\hspace{0.25cm} s\rightarrow\infty,\gamma\uparrow 1:\,\, \kappa\in[\delta,1-\delta],
\end{equation}
uniformly on any compact subset of the left slit $(-1,-a)$. In the gap $(-a,a)$, we recall Proposition \ref{prop2} once more:
\begin{equation}\label{mjump}
	G_X(\lambda) = \begin{pmatrix}
	1 & \gamma e^{is\Omega(\lambda)}\\
	-\gamma e^{-is\Omega(\lambda)} & 1-\gamma^2\\
	\end{pmatrix} = \begin{pmatrix}
	1 & 0\\
	-\gamma e^{-is\Omega(\lambda)} & 1\\
	\end{pmatrix}\begin{pmatrix}
	1 & \gamma e^{is\Omega(\lambda)}\\
	0 & 1\\
	\end{pmatrix} = S_L(\lambda)S_U(\lambda).
\end{equation}
\begin{prop}\label{prop3} Define for $z\in(-a,a)$
\begin{equation*}
	H_1(z) = i\Omega(z),\hspace{0.5cm} H_2(z)=-i\Omega(z).
\end{equation*}
Then $H_1(z)$ admits analytic continuation into a neighborhood of the line-segment $(-a,a)$ in the upper half-plane such that
\begin{equation*}
	\Re\,\big(H_1(z)\big)<0,\hspace{0.75cm}\Im\, z>0,\ \Re\, z\in(-a,a).
\end{equation*}
Similarly, $H_2(z)$ admits local analytic continuation into the lower half-plane such that
\begin{equation*}
	\Re\,\big(H_2(z)\big)<0,\hspace{0.75cm}\Im\, z<0,\ \Re\, z\in(-a,a).
\end{equation*}
\end{prop}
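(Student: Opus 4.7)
The plan is to construct the analytic continuation of $H_1$ from the segment $(-a,a)$ into the upper half-plane by identifying it with a simple modification of the $g$-function already in hand. By Proposition \ref{prop2} we have the jump $g_+(x)-g_-(x) = -i\Pi(x) = 2i\kappa$ on $(-a,a)$, so
\begin{equation*}
\Omega(x) = g_+(x) + g_-(x) = 2g_+(x) - 2i\kappa, \qquad x \in (-a,a),
\end{equation*}
and hence $i\Omega(x) = 2ig_+(x) + 2\kappa$. This motivates setting
\begin{equation*}
H_1(z) := 2ig(z) + 2\kappa
\end{equation*}
for $z$ in the open upper half-plane, where $g$ is already analytic by construction. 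Its boundary value from above on $(-a,a)$ coincides with $i\Omega(x)$, so $H_1$ is the desired analytic continuation.

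Next I would extract the needed boundary behavior of $g$ directly from \eqref{gfunc}. Integrating $g'$ along a path just above the real axis from $1$ into the gap, and using $g'_+(\mu) = -i\sqrt{(\mu^2-a^2)/(1-\mu^2)}$ on $(a,1)$ together with $g'_+(\mu) = \sqrt{(a^2-\mu^2)/(1-\mu^2)}$ on $(-a,a)$ (the analytic continuation of the former across the branch point $\mu = a$ in the upper half-plane), one finds
\begin{equation*}
g_+(x) = i\kappa + \int_a^x \sqrt{\frac{a^2-\mu^2}{1-\mu^2}}\,d\mu, \qquad x \in (-a,a).
\end{equation*}
In particular $\Im g_+(x) = \kappa$, so $\Re H_1(x+i0^+) = 2\kappa - 2\Im g_+(x) = 0$, consistent with $i\Omega(x)$ being purely imaginary. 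To obtain strict negativity above the gap I would use the Cauchy--Riemann identity $\partial_y \Im g(x+iy) = \Re g'(x+iy)$; its boundary value $\Re g'_+(x) = \sqrt{(a^2-x^2)/(1-x^2)}$ is strictly positive on $(-a,a)$, and by continuity of $g'$ in the open upper half-plane this positivity persists throughout a one-sided neighborhood of each compact subinterval. Integrating in $y$ then yields $\Im g(x+iy) > \kappa$ for small $y>0$, i.e.\ $\Re H_1(z) < 0$.

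The one subtle point concerns the behavior near the endpoints $z = \pm a$, at which $g'$ has a square-root zero and the naive continuity argument degenerates. This I would handle by a local expansion: near $z=a$ one has $g'(z) \sim -i\sqrt{2a/(1-a^2)}\,(z-a)^{1/2}$, with the prefactor determined by matching against the real-axis boundary values of $g'_+$ on $(a,1)$. Writing $z-a = re^{i\theta}$ with $\theta \in (0,\pi)$ then gives $\Re g'(z) = \sqrt{2a/(1-a^2)}\,r^{1/2}\sin(\theta/2) > 0$ throughout the upper half-plane near $a$, and the statement at $z = -a$ follows symmetrically, so the neighborhood tapers but reaches right up to (though not including) the endpoints. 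The argument for $H_2$ is entirely parallel: one sets $H_2(z) := -2ig(z) + 2\kappa$ for $z$ in the open lower half-plane and uses $g_-(x) = g_+(x) - 2i\kappa$ to verify that its boundary value from below on $(-a,a)$ equals $-i\Omega(x)$, while $\Re g'_-(x) = \sqrt{(a^2-x^2)/(1-x^2)} > 0$ drives the corresponding sign estimate $\Re H_2(z) < 0$. The main obstacle is simply careful bookkeeping of the branches of $g'$ when identifying boundary values; once this is done, every positivity claim reduces to the elementary fact that $\sqrt{(a^2-x^2)/(1-x^2)} > 0$ on $(-a,a)$.
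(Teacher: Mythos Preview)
Your argument is correct and at bottom identical to the paper's: both reduce to the observation that $\partial_y\,\Re H_1(x+iy)\big|_{y=0}=-2\sqrt{(a^2-x^2)/(1-x^2)}<0$ on $(-a,a)$, together with $\Re H_1(x)=0$. The paper gets there in two lines by writing $i\Omega(z)=-2i\int_z^a\sqrt{(a^2-\mu^2)/(1-\mu^2)}\,d\mu$ and differentiating in $y$ directly, whereas you route through the identification $H_1=2ig+2\kappa$ and Cauchy--Riemann for $g$; this is the same computation in different clothing. Your extra care near $z=\pm a$ is not needed for the statement as phrased (a \emph{local} continuation into a neighborhood of the open segment), but it does no harm.
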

\begin{proof} We notice that
\begin{equation*}
	i\Omega(z) = -2i\int_z^a\sqrt{\frac{a^2-\mu^2}{1-\mu^2}}\,\d\mu,\hspace{0.5cm}z\in(-a,a)
\end{equation*}
and therefore
\begin{eqnarray*}
	\frac{\d}{\d y}H_1(x+iy)\Big|_{y=0}&=&-2\sqrt{\frac{a^2-x^2}{1-x^2}}<0,\hspace{0.5cm}x\in(-a,a)\\
	\frac{\d}{\d y}H_2(x-iy)\Big|_{y=0}&=&-2\sqrt{\frac{a^2-x^2}{1-x^2}}<0,\hspace{0.5cm}x\in(-a,a).
\end{eqnarray*}
Hence we can locally continue $H_1(z)$ into the upper half-plane with $\Re\left(H_1(z)\right)<0$ and $H_2(z)$ into the lower half-plane with $\Re\left(H_2(z)\right)<0$.
\end{proof}
The latter Proposition enables us to open lens.
%------------------------------------------------------------------------------------------------------------------------------------------------------------------------------------------------------------------------------------------
\subsection{Second transformation of the RHP - opening of lens}\label{sec3}
Let $\mathcal{L}^{\pm}$ denote the {\it upper (lower) lens}, shown in Figure \ref{figure2}, which is bounded by the contour $\gamma^{\pm}$. Define
\begin{equation}\label{SRHP}
	S(\lambda) = \left\{
                                   \begin{array}{ll}
                                     X(\lambda)S_U^{-1}(\lambda), & \hbox{$\lambda\in\mathcal{L}^+$,} \\
                                     X(\lambda)S_L(\lambda), & \hbox{$\lambda\in\mathcal{L}^-$,} \\ 
                                     X(\lambda), & \hbox{otherwise},
                                   \end{array}
                                 \right.
\end{equation}
then $S(\lambda)$ solves the following RHP
\begin{itemize}
	\item $S(\lambda)$ is analytic for $\lambda\in\mathbb{C}\backslash\big([-1,-a]\cup\gamma^+\cup\gamma^-\cup[a,1]\big)$
	\item The following jumps hold, with orientation fixed as in Figure \ref{figure2}
	\begin{equation}\label{Sjump}
		S_+(\lambda) = S_-(\lambda)\left\{
                                   \begin{array}{ll}
                                     G_X(\lambda), & \hbox{$\lambda\in[-1,-a]\cup[a,1]$,} \\
                                     S_U(\lambda), & \hbox{$\lambda\in\gamma^+$,} \\ 
                                     S_L(\lambda), & \hbox{$\lambda\in\gamma^-$,}
                                   \end{array}
                                 \right.
  \end{equation}
  where $G_X(\lambda)$ is given by \eqref{jright},\eqref{jleft}, and $S_U(\lambda)$ and $S_L(\lambda)$ denote the analytic continuations of the corresponding matrices in \eqref{mjump} to the upper and lower half-plane.
  \item $S(\lambda)$ has at most logarithmic endpoint singularities at the branch points $\lambda=\pm 1$, see \eqref{Xsing}.
  \item As $\lambda\rightarrow\infty$
  \begin{equation*}
  	S(\lambda)=I+\mathcal{O}\big(\lambda^{-1}\big).
  \end{equation*} 
\end{itemize}
\begin{figure}[tbh]
\begin{center}
\includegraphics[width=0.55\textwidth]{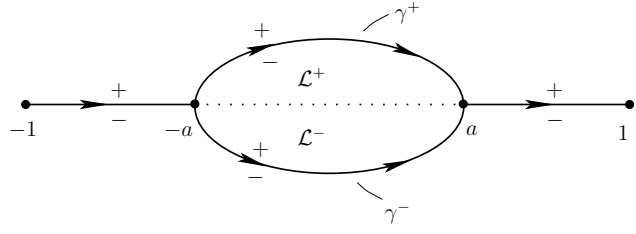}
\end{center}
\caption{Opening of lenses}
\label{figure2}
\end{figure}
%\begin{figure}[tbh]
%\begin{center}
%\resizebox{0.5\textwidth}{!}{\input{lenses.pdf_t}}
%\caption{Opening of lenses}
%\label{figure2}
%\end{center}
%\end{figure}

Recalling Proposition \ref{prop3}, we see that in the considered double scaling limit \eqref{dbscale}
\begin{equation}\label{Sesti}
	S_U(\lambda)\rightarrow I,\hspace{0.5cm}\lambda\in\gamma^+;\hspace{1cm}S_L(\lambda)\rightarrow I,\hspace{0.5cm}\lambda\in\gamma^-
\end{equation}
away from the branch points $\lambda=\pm a$ and the stated convergence is in fact exponentially fast. Hence the jumps on the lens boundaries $\gamma^+\cup\gamma^-$ decay exponentially fast and recalling \eqref{jrightl} and \eqref{jleftl}, we therefore expect that the major contribution to the asymptotical solution of the $X$-RHP will arise from the slits $(-1,-a)\cup(a,1)$ as well as the neighborhoods of $\lambda=\pm 1,\pm a$. This will be proven rigorously with the help of explicit model functions/parametrices in the relevant neighborhoods. %------------------------------------------------------------------------------------------------------------------------------------------------------------------------------------------------------------------------------------------------------------------

\section{Riemann-Hilbert analysis for $\kappa\in[\delta,1-\delta]$ - model problems}\label{sec2}
\subsection{The outer parametrix}
We first consider the following problem. Find a piecewise analytic $2\times 2$ matrix valued function $M(\lambda)$ such that
\begin{itemize}
	\item $M(\lambda)$ is analytic for $\lambda\in\mathbb{C}\backslash\overline{J}$
	\item Along the left slit $(-1,-a)$ as oriented in Figure \ref{figure2}, 
	\begin{equation*}
		M_+(\lambda) = M_-(\lambda)\begin{pmatrix}
		0 & e^{is\Omega(\lambda)}\\
	-e^{-is\Omega(\lambda)} & 0\\
	\end{pmatrix};
	\end{equation*}
	along the right slit $(a,1)$,
	\begin{equation*}
		M_+(\lambda) = M_-(\lambda)\begin{pmatrix}
		0 & 1\\
		-1 & 0\\
		\end{pmatrix}
	\end{equation*}
	\item $M(\lambda)$ has $L^2$-boundary values on $J$
	\item As $\lambda\rightarrow\infty$,
	\begin{equation*}
		M(\lambda)\rightarrow I.
	\end{equation*}
\end{itemize}
Such model problems appeared, for instance, in \cite{DIZ} or \cite{DKMVZ} and can be explicitly solved in terms of theta-functions on the Riemann surface $\Gamma$. Recall that the space of holomorphic one-forms on $\Gamma$ is one-dimensional, and denote 
\begin{equation}\label{Anorm}
	\omega = \frac{c\,\d\lambda}{\sqrt{p(\lambda)}},\hspace{0.5cm} c^{-1} = -2i\int_a^1\frac{\d\lambda}{\sqrt{(1-\lambda^2)(\lambda^2-a^2)}},
\end{equation}
the unique holomorphic differential with $A$-period normalization
\begin{equation*}
	\oint_{A_1}\omega = 1
\end{equation*}
and $B$-period
\begin{equation}\label{Bper}
	\tau = \oint_{B_1}\omega = 2c\int_{-a}^a\frac{\d\lambda}{\sqrt{(a^2-\lambda^2)(1-\lambda^2)}}.
\end{equation}
As we see from \eqref{Anorm}, $\tau$ is pure imaginary and $-i\tau$ is positive for $\kappa\in[\delta,1-\delta]$, thus the associated theta function
\begin{equation}\label{thetaf}
	\theta(z)\equiv \theta_3(z|\tau) = \sum_{k\in\mathbb{Z}}\exp\left\{\pi ik^2\tau+2\pi ikz\right\}=1+2\sum_{k=1}^{\infty}e^{\pi ik^2\tau}\cos\left(2\pi kz\right),\hspace{0.5cm}z\in\mathbb{C}
\end{equation}
is a well-defined entire function, which is even and quasi-periodic
\begin{equation}\label{thetaprop}
	\theta(z+n+m\tau) = \theta(z)e^{-\pi i\tau m^2-2\pi izm},\hspace{0.5cm} n,m\in\mathbb{Z},\ z\in\mathbb{C}.
\end{equation}
Moreover, note for future purposes that, the following ratio of theta functions
\begin{equation*}
	f(z) = \frac{\theta(z+c_1+c_2)}{\theta(z+c_2)},\hspace{0.5cm}c_i,z\in\mathbb{C}
\end{equation*}
formally satisfies the quasi-periodicity relation
\begin{equation}\label{ratioj}
	f(z+n+m\tau) = f(z)e^{-2\pi imc_1}.
\end{equation}
Next we set
\begin{equation}\label{udef}
	u(z) = \int_1^z\omega,\hspace{0.5cm}z\in\mathbb{C}\backslash[-1,1],
\end{equation}
where the integral is taken along any path from $1$ to $z$ in the upper (lower) half plane if  $\Im\,z>0\,(<0)$, see for instance Figure \ref{figure3} below.
\begin{figure}[tbh]
\begin{center}
\includegraphics[width=0.55\textwidth]{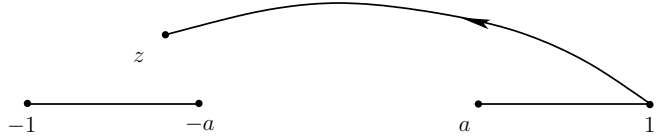}
\end{center}
\caption{Definition of the Abelian integral $u(z)$}
\label{figure3}
\end{figure}
%\begin{figure}[tbh]
%\begin{center}
%\resizebox{0.5\textwidth}{!}{\input{integration.pdf_t}}
%\caption{Definition of the Abelian integral $u(z)$}
%\label{figure3}
%\end{center}
%\end{figure}

Since we also need to consider integration on the Riemann surface $\Gamma$, we extend the definition
\begin{equation*}
	u(P)=\int_1^P\omega,\hspace{0.5cm}P\in\Gamma
\end{equation*}
and let $\Lambda = \mathbb{Z}+\tau\mathbb{Z}$ be the period lattice. The analytical properties of $u(z)$ with respect to $\Lambda$ are summarized in the following Proposition.
\begin{prop}\label{prop4} The abelian integral $u(z)$ is analytic on the first sheet, with a cut along $(-1,1)$. We have
\begin{equation*}
	u_+(z)-u_-(z) = \left\{
                                   \begin{array}{ll}
                                     0, & \hbox{$z\in(-\infty,-1)\cup(1,\infty)$,} \\
                                     -1, & \hbox{$z\in(-a,a)$,} 
                                   \end{array}
                                 \right.
\end{equation*}
and
\begin{equation*}
	u_+(z) +u_-(z) = \left\{
                                   \begin{array}{ll}
                                     0, & \hbox{$z\in(a,1)$,} \\
                                     \tau, & \hbox{$z\in(-1,-a)$,} 
                                   \end{array}
                                 \right.
\end{equation*}
and for $z\rightarrow\infty$,
\begin{equation*}
	u(z) = u_{\infty}-\frac{c}{z}+\mathcal{O}\big(z^{-2}\big),\hspace{0.5cm} u_{\infty} = \int_1^{P_1(\infty)}\omega,
\end{equation*}
where $P_1(\infty)$ denotes the pre-image of $\infty$ on the first sheet of $\Gamma$.
\end{prop}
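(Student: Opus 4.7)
The plan is to verify the four claims of Proposition \ref{prop4} in sequence, using the hyperelliptic structure of $\omega$. For \emph{single-valuedness and analyticity}, observe that $\omega$ is holomorphic on the first sheet $\mathbb{C}\setminus J$, which fails to be simply connected only because of the cycle $A_1$; adjoining the extra cut $(-a,a)$ produces the simply connected domain $\mathbb{C}\setminus[-1,1]$, on which the line integral from $1$ is path-independent and defines a single-valued analytic function.

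For the \emph{jumps} $u_+ - u_-$: on $(-\infty,-1)\cup(1,\infty)$ there is no cut of $u$, so $u_+ = u_-$. For $x \in (-a,a)$, concatenating the upper path from $1$ to $x$ with the reversed lower path from $x$ to $1$ produces a closed loop on the first sheet encircling the right cut $(a,1)$. With the cycle orientations of Figure \ref{figure1} this loop is homologous to $-A_1$, so the normalization \eqref{Anorm} gives $u_+(x)-u_-(x) = -\oint_{A_1}\omega = -1$.

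For the \emph{sums} $u_+ + u_-$: deform both paths from $1$ to $x$ onto the real axis. On each slit in $J$ one has $\sqrt{p}_+ = -\sqrt{p}_-$, so $\omega_+ + \omega_- = 0$ and the contributions from both sides of any slit cancel. For $x \in (a,1)$ both deformed paths lie entirely on the cut $(a,1)$, so $u_+ + u_- = 0$. For $x \in (-1,-a)$ the deformed paths additionally cross the gap $(-a,a)$ — in the upper half-plane for $u_+$ and in the lower half-plane for $u_-$; since $p > 0$ on the real gap, $\omega$ is analytic there and the two gap traversals coincide, contributing $2\int_a^{-a}\omega$. Analytically continuing $\sqrt{p}$ from $\lambda = +\infty$ (where $\sqrt{p}\sim \lambda^2$) around the branch point at $a$ shows that on the first sheet $\sqrt{p(\lambda)} = -\sqrt{(1-\lambda^2)(a^2-\lambda^2)}$ for $\lambda \in (-a,a)$, whence
\begin{equation*}
2\int_a^{-a}\omega = 2c\int_{-a}^a\frac{d\lambda}{\sqrt{(a^2-\lambda^2)(1-\lambda^2)}} = \tau
\end{equation*}
by \eqref{Bper}.

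Finally, the \emph{expansion at infinity} follows from writing $\sqrt{p(\lambda)} = \lambda^2\sqrt{(1-\lambda^{-2})(1-a^2\lambda^{-2})}$ on the first sheet and expanding binomially: $c/\sqrt{p(\lambda)} = c\lambda^{-2} + \mathcal{O}(\lambda^{-4})$, so integrating from $z$ out to $P_1(\infty)$ yields $u(z) = u_\infty - c/z + \mathcal{O}(z^{-3})$, which is stronger than the stated $\mathcal{O}(z^{-2})$ bound. The only genuinely delicate point is the branch determination in the sums argument: a sign error in the first-sheet value of $\sqrt{p}$ on the gap would flip the sign of $\tau$ in the formula on $(-1,-a)$. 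This is resolved by the explicit analytic continuation around $a$ indicated above, using the local behavior $\sqrt{p(\lambda)}\sim \sqrt{2a(a^2-1)(\lambda-a)}$ with principal branches fixed consistently with the normalization $\sqrt{p}\sim z^2$ at infinity.
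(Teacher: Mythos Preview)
Your proof is correct and follows essentially the same approach as the paper's (very terse) proof, which simply records the two key identities $u_+(z)-u_-(z)=-\oint_{A_1}\omega$ on $(-a,a)$ and $u_+(z)+u_-(z)=\oint_{B_1}\omega$ on $(-1,-a)$ without further explanation. You supply the contour-deformation and branch-tracking details that the paper leaves implicit, and your expansion at infinity is sharper than what the statement requires.
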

\begin{proof} The stated jump properties can be verified easily, we only note that
\begin{equation*}
	u_+(z)-u_-(z) = 2ic\int_a^1\frac{\d\lambda}{\sqrt{(\lambda^2-a^2)(1-\lambda^2)}} = -\int_{A_1}\omega,\hspace{0.5cm}z\in(-a,a)
\end{equation*}
and
\begin{equation*}
	u_+(z)+u_-(z) = 2c\int_{-a}^a\frac{\d\lambda}{\sqrt{(a^2-\lambda^2)(1-\lambda^2)}} = \int_{B_1}\omega,\hspace{0.5cm}z\in(-1,-a).
\end{equation*}
\end{proof}
Let us assume temporarily that $\theta\left(u(\lambda)\pm d\right)\not\equiv 0$, where $d\in\mathbb{C}$ is a constant which will be fixed later on. Introduce another constant
\begin{equation}\label{Vconst}
	V  \equiv -\frac{1}{\pi}\int_{-a}^a\sqrt{\frac{a^2-\mu^2}{1-\mu^2}}\,\d\mu= \frac{1}{2\pi}\Omega(\lambda),\ \ \ \textnormal{if}\ \ \lambda\in(-1,-a)
\end{equation}
and set
\begin{equation}\label{Ndef}
	N(\lambda,\pm d) = \left(\frac{\theta\left(u(\lambda)+sV\pm d\right)}{\theta\left(u(\lambda)\pm d\right)},\frac{\theta\left(-u(\lambda)+sV\pm d\right)}{\theta\left(-u(\lambda)\pm d\right)}\right) = \left(N_1(\lambda,\pm d),N_2(\lambda,\pm d)\right),
\end{equation}
where $u(\lambda)$ is given by \eqref{udef} with the integral on the first sheet of $\Gamma$.
\begin{prop}\label{prop5} The functions $N(z,d)$ and $N(z,-d)$ are single-valued and meromorphic in $\mathbb{C}\backslash J$ with
\begin{eqnarray*}
	N_+(z,\pm d)&=&N_-(z,\pm d)\begin{pmatrix}
	0 & 1\\
	1 & 0\\
	\end{pmatrix},\hspace{0.5cm}z\in(a,1)\\
	N_+(z,\pm d) &=& N_-(z,\pm d)\begin{pmatrix}
	0 & e^{2\pi isV}\\
	e^{-2\pi isV} & 0\\
	\end{pmatrix},\hspace{0.5cm}z\in(-1,-a).
\end{eqnarray*}
\end{prop}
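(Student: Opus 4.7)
The plan is to combine the jump data of the Abelian integral $u$ from Proposition~\ref{prop4} with the quasi-periodicity \eqref{thetaprop} of the theta function, treating each component of $\mathbb{R}$ separately; the argument is identical for the two choices of sign in $\pm d$, so I treat both in parallel. Since $u$ is analytic on $\mathbb{C}\setminus[-1,1]$, each entry of $N$ is automatically meromorphic there, with only isolated poles at the zeros of $\theta(\pm u(\lambda)\pm d)$, which are excluded by the standing assumption that this theta does not vanish identically.

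To upgrade analyticity on $\mathbb{C}\setminus[-1,1]$ to the claimed analyticity on $\mathbb{C}\setminus J$, I need to rule out jumps of $N$ across those portions of the cut of $u$ lying inside $\mathbb{C}\setminus J$, namely $(-\infty,-1)\cup(1,\infty)$ and $(-a,a)$. On the unbounded rays Proposition~\ref{prop4} gives $u_+=u_-$ and there is nothing to do. On the gap $(-a,a)$ we have $u_+-u_-=-1\in\mathbb{Z}$, and the defining series \eqref{thetaf} shows $\theta(z+n)=\theta(z)$ for every integer $n$; hence numerator and denominator of $N_1$ are individually unchanged, and analogously for $N_2$ since $-u_+-(-u_-)=+1$. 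For the jumps on the slits I use Proposition~\ref{prop4} once more: on $(a,1)$, where $u_+=-u_-$, substituting directly into \eqref{Ndef} interchanges the formulas for $N_1$ and $N_2$, yielding the stated swap matrix. On $(-1,-a)$ the relation $u_+=\tau-u_-$ places a full $B$-period into the arguments of $N_1^+$; applying \eqref{thetaprop} with $m=1$ produces in both numerator and denominator a factor $e^{-i\pi\tau-2\pi iz}$, and because the shift by $sV$ is independent of $\lambda$ these two factors are identical and cancel in the ratio, leaving the scalar $e^{-2\pi isV}$. An entirely analogous computation with $m=-1$ for $N_2^+$ produces $e^{+2\pi isV}$, and the two assemble into the off-diagonal jump matrix claimed in the proposition.

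The only substantive step is this last cancellation on $(-1,-a)$: it works precisely because the shift $sV$ in \eqref{Ndef} is a \emph{constant}, so the $\lambda$-dependent quasi-period factor $e^{-2\pi imu(\lambda)}$ from \eqref{thetaprop} appears identically in the numerator and denominator of each $N_j$ and drops out of the ratio. Once this structural observation is made, everything reduces to line-by-line bookkeeping with Proposition~\ref{prop4} and \eqref{thetaprop}; indeed, the ``quasi-periodicity of a ratio'' principle recorded in \eqref{ratioj} packages exactly this cancellation.
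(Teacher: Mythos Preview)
Your proof is correct and follows essentially the same route as the paper's: single-valuedness/meromorphy on $\mathbb{C}\setminus J$ comes from the integer periodicity of $\theta$ together with the jump data for $u$ in Proposition~\ref{prop4}, and the jump matrices on the two slits come from $u_+=-u_-$ on $(a,1)$ and $u_+=\tau-u_-$ on $(-1,-a)$ combined with the ratio relation~\eqref{ratioj}. The paper's own argument is simply a terse summary (``can be easily derived from Proposition~\ref{prop4} and equation~\eqref{ratioj}''), and you have filled in precisely those details; one small wording slip is that on $(-1,-a)$ the numerator and denominator factors from~\eqref{thetaprop} are not literally identical but differ by $e^{-2\pi i sV}$, which is of course exactly the point you are making.
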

\begin{proof} All values of the multivalued function $u(z)$ differ by an integer since we choose integration on the first sheet of $\Gamma$. But $\theta(z)$ is periodic with respect to a shift by an integer, hence $N(z,\pm d)$ are single-valued and since $\theta(u(z)-d) = \theta(-u(z)+d)$ is not identically zero, the functions $N(z,\pm d)$ are in fact meromorphic on $\mathbb{C}\backslash J$. The remaining jump properties can be easily derived from Proposition \ref{prop4} and equation \eqref{ratioj}.
\end{proof}
In order to correct the jumps along the slits, we consider
\begin{equation}\label{bdef}
	\beta(z) = \left(\frac{(z+a)(z-1)}{(z+1)(z-a)}\right)^{\frac{1}{4}},\hspace{0.5cm}z\in\mathbb{C}\backslash\overline{J}
\end{equation}
with the branch fixed by the condition $\beta(z)>0$ for $z>1$.
\begin{prop}\label{prop6} The function $\beta(z)-\left(\beta(z)\right)^{-1}$ has only one zero $z_0=0$. This root lies in the gap of $J$. The function $\beta(z)+\left(\beta(z)\right)^{-1}$ has no zeros in $\mathbb{C}\backslash \overline{J}$.
\end{prop}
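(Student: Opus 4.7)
The plan is to reduce both claims to solving the single equation $\beta(z)^4=1$. Indeed, $\beta\pm\beta^{-1}=0$ is equivalent to $\beta^2=\mp 1$, and in either case to $\beta^4=1$. Using the definition
\begin{equation*}
\beta(z)^4=\frac{(z+a)(z-1)}{(z+1)(z-a)},
\end{equation*}
this becomes $(z+a)(z-1)=(z+1)(z-a)$, which collapses to the linear equation $2(a-1)z=0$. Since $0<a<1$, the unique solution in $\mathbb{C}\setminus\overline J$ is $z_0=0$, and indeed $0\in(-a,a)$ lies in the gap of $J$.

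It then remains to pick out which fourth root of unity the value $\beta(0)$ is. I would do this by analytic continuation from $+i\infty$ down along the positive imaginary axis, using that $\beta(z)\to 1$ uniformly at infinity. With the identities $(y^2-a^2)-2iay=-(a+iy)^2$ and $(y^2-1)+2iy=-(1-iy)^2$ one finds
\begin{equation*}
\beta(iy)^4=\frac{\bigl[(a+iy)(1-iy)\bigr]^2}{(a^2+y^2)(1+y^2)},
\end{equation*}
and the branch of the square root compatible with $\beta(iy)^2\to 1$ as $y\to\infty$ yields
\begin{equation*}
\beta(iy)^2=\frac{(a+iy)(1-iy)}{\sqrt{(a^2+y^2)(1+y^2)}}=\frac{(a+y^2)+i(1-a)y}{\sqrt{(a^2+y^2)(1+y^2)}}.
\end{equation*}
For $y>0$ both $a+y^2>0$ and $(1-a)y>0$, so the numerator lies strictly in the open first quadrant and $\beta(iy)^2$ never meets $(-\infty,0]$. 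The continuous fourth root of $\beta^4$ starting from $\beta(+\infty)=1$ therefore has $\arg\beta(iy)\in[0,\pi/4)$ throughout, and letting $y\downarrow 0$ gives $\beta(iy)^2\to 1$ and hence $\beta(0)=1$.

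Combining the two steps, $z_0=0$ is the unique zero of $\beta-\beta^{-1}$ on $\mathbb C\setminus\overline J$, while $\beta(0)+\beta(0)^{-1}=2\neq 0$, so $\beta+\beta^{-1}$ has no zero there. The main (and essentially only) non-routine step is the identification of $\beta(0)$ among the four candidates $\{\pm 1,\pm i\}$: the lone normalization $\beta>0$ on $(1,\infty)$ has to be transported into the gap through the multiply connected domain $\mathbb C\setminus\overline J$. The positive imaginary axis is a natural path for this because $|\beta^4|\equiv 1$ along it, reducing the whole question to an elementary quadrant analysis of $\beta^2$.
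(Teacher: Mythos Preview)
Your proof is correct and shares the paper's first step: reduce $\beta\pm\beta^{-1}=0$ to $\beta^4=1$, which gives the single candidate $z_0=0$ in $\mathbb C\setminus\overline J$.

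The difference lies in the second step, where you must decide whether $\beta(0)^2=+1$ or $-1$. You transport the normalization from $(1,\infty)$ to $0$ along the positive imaginary axis, computing $\beta(iy)^2$ explicitly and tracking its argument. The paper instead simply asserts $\beta(z)^2>0$ for $z\in(-a,a)$. The implicit justification is Schwarz reflection: since $\beta$ is real on $(1,\infty)$, one has $\beta(\bar z)=\overline{\beta(z)}$ throughout the domain, so $\beta$ is real on the gap $(-a,a)$ as well; hence $\beta^2>0$ there and $\beta(0)^2=+1$. This is shorter and avoids any explicit path computation. Your route is more laborious but has the virtue of being fully explicit and not relying on the reflection symmetry; either way the conclusion is the same.
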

\begin{proof} Since
\begin{equation*}
	\beta(z)\pm\left(\beta(z)\right)^{-1}=0\ \ \Leftrightarrow\ \ \left(\beta(z)\right)^2\pm 1 = 0\ \Rightarrow\ \ \beta^4(z) = 1\ \ \Leftrightarrow\ \  2z(a-1)=0
\end{equation*}
we have only one zero $z_0\in(-a,a)$. On the other hand $\left(\beta(z_0)\right)^2=\pm 1$, and $\left(\beta(z)\right)^2>0$ for $z\in(-a,a)$, which proves the Proposition. 
\end{proof}
As our next step, we define for $z\in\mathbb{C}\backslash\overline{J}$
\begin{equation}\label{hdef}
	\phi(z) = \frac{1}{2}\left(\beta(z)+\left(\beta(z)\right)^{-1}\right),\hspace{0.5cm}\hat{\phi}(z) = \frac{1}{2i}\left(\beta(z)-\left(\beta(z)\right)^{-1}\right),
\end{equation}
which satisfy
\begin{equation*}
	\phi_+(z) = -\hat{\phi}_-(z),\hspace{0.5cm}\hat{\phi}_+(z)=\phi_-(z),\hspace{1cm}z\in J,
\end{equation*}
and as $z\rightarrow\infty$
\begin{equation*}
	\phi(z) = 1+\frac{(1-a)^2}{8z^2}+\mathcal{O}\big(z^{-3}\big),\hspace{1cm}\hat{\phi}(z) = -\frac{1-a}{2iz}\left(1+\mathcal{O}\left(z^{-2}\right)\right).
\end{equation*}
By Proposition \ref{prop6}, the function $\hat{\phi}(z)$ has only one zero $z_0=0$ in $\mathbb{C}\backslash\overline{J}$. We will now choose the constant $d$ in \eqref{Ndef} so that $\theta\left(u(z)-d\right)$ also vanishes at $z_0$. First, we note the following.
\begin{prop}\label{prop7} We have
\begin{equation}\label{Ab1}
	u_{\infty}-\frac{\tau}{4}=0%\equiv 0\mod\Lambda
\end{equation}
\end{prop}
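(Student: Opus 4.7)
The plan is to verify the identity $u_\infty = \tau/4$ by reducing both sides to the same real complete elliptic integral via elementary substitutions. The Riemann-surface structure has already been encoded in the definitions of $\omega$, $u$ and $\tau$, so once the two quantities are written as definite integrals over real intervals the identity becomes a classical calculation.

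First I would express $u_\infty$ explicitly. Take the integration path from $1$ to $P_1(\infty)$ along the real axis on the first sheet. For real $\lambda>1$ we have $p(\lambda)=(\lambda^2-1)(\lambda^2-a^2)>0$, and the first-sheet branch satisfies $\sqrt{p(\lambda)}\sim\lambda^2>0$ at infinity, so $\sqrt{p(\lambda)}$ equals the positive real square root on all of $(1,\infty)$. Hence
\begin{equation*}
u_\infty \;=\; c\int_1^{\infty}\frac{d\lambda}{\sqrt{(\lambda^2-1)(\lambda^2-a^2)}}.
\end{equation*}
On the other side, formula \eqref{Bper} gives $\tau=2c\int_{-a}^{a}d\lambda/\sqrt{(a^2-\lambda^2)(1-\lambda^2)}$, and since the integrand is even in $\lambda$,
\begin{equation*}
\frac{\tau}{4}\;=\;c\int_0^{a}\frac{d\lambda}{\sqrt{(a^2-\lambda^2)(1-\lambda^2)}}.
\end{equation*}

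Next I would apply two substitutions. In the integral for $u_\infty$ set $\lambda=1/\mu$; then $d\lambda=-d\mu/\mu^2$, $\lambda^2-1=(1-\mu^2)/\mu^2$, $\lambda^2-a^2=(1-a^2\mu^2)/\mu^2$, so that
\begin{equation*}
u_\infty \;=\; c\int_0^{1}\frac{d\mu}{\sqrt{(1-\mu^2)(1-a^2\mu^2)}}.
\end{equation*}
In the integral for $\tau/4$ set $\lambda=a\mu$; then $(a^2-\lambda^2)=a^2(1-\mu^2)$ and $(1-\lambda^2)=1-a^2\mu^2$, and the factor of $a$ from the Jacobian cancels, giving the same complete elliptic integral
\begin{equation*}
\frac{\tau}{4}\;=\;c\int_0^{1}\frac{d\mu}{\sqrt{(1-\mu^2)(1-a^2\mu^2)}}.
\end{equation*}
Comparing the two expressions yields $u_\infty-\tau/4=0$.

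The only subtlety I anticipate is verifying the sign/branch choice in the first-sheet representation of $u_\infty$; once that is settled the rest is an elementary change of variables. A more conceptual alternative would be to invoke the fact that on the hyperelliptic curve $\Gamma$ the combined symmetries $(z,w)\mapsto(z,-w)$ and $(z,w)\mapsto(-z,w)$ force the Abel--Jacobi image of $P_1(\infty)$, relative to the basepoint $(1,0)$, to be a half-period; identifying which half-period then recovers $\tau/4$. I prefer the direct computation above since it uses nothing beyond the definitions already in place.
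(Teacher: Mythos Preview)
Your proof is correct and follows essentially the same approach as the paper. The paper's proof compresses the argument into a single substitution, writing $u_\infty = c\int_1^\infty\frac{d\lambda}{\sqrt{(\lambda^2-1)(\lambda^2-a^2)}}=c\int_{-a}^0\frac{dw}{\sqrt{(a^2-w^2)(1-w^2)}}=\tau/4$; your version is the same identity made explicit by reducing both sides separately to the Legendre normal form $c\int_0^1 d\mu/\sqrt{(1-\mu^2)(1-a^2\mu^2)}$ via $\lambda=1/\mu$ and $\lambda=a\mu$ respectively.
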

\begin{proof} Observe that
\begin{equation*}
	u_{\infty}=\int_1^{P_1(\infty)}\omega = c\int_1^{\infty}\frac{\d\lambda}{\sqrt{(\lambda^2-1)(\lambda^2-a^2)}}=c\int_{-a}^0\frac{\d w}{\sqrt{(a^2-w^2)(1-w^2)}}=\frac{\tau}{4}
\end{equation*}
and thus the stated identity follows.
\end{proof}
%\begin{proof} Notice that
%\begin{equation*}
%	u_{\infty}-\frac{\tau}{4} = \int_1^{P_1(\infty)}\omega + \int_{-a}^{P_1(0)}\omega.
%\end{equation*}
%The function
%\begin{equation*}
%	y(z) = \left(\frac{(z+1)(z-a)}{(z+a)(z-1)}\right)^{\frac{1}{2}}-1 = \frac{(z+1)(z-a)}{\sqrt{p(z)}}-1
%\end{equation*}
%is meromorphic on $\Gamma$. Moreover (compare the proof of Proposition \ref{prop6}), $y(z)$ has a simple zero at $P_1(0)$ and as $P\rightarrow P_1(\infty)$,
%\begin{equation*}
%	y(z) = \frac{1-a}{z}+\mathcal{O}\big(z^{-2}\big),
%\end{equation*}
%so that $y(z)$ has a simple zero at $P_1(\infty)$. On the other hand, since $\sqrt{z+a}$ and $\sqrt{z-1}$ are the local variables near the relevant branch points, the divisor of $y(z)$ is
%\begin{equation*}
%	P_1(\infty)+P_1(0)-1+a,
%\end{equation*}
%thus \eqref{Ab1} follows by Abel's theorem (see, for instance, \cite{FK}).
%\end{proof} 
We now choose
%Since Proposition \ref{prop7} implies the identity
%\begin{equation}\label{p7res}
%	u_{\infty}-\frac{\tau}{4} = n^{\ast}+m^{\ast}\tau
%\end{equation}
%for some (fixed) integers $n^{\ast},m^{\ast}\in\mathbb{Z}$, we now choose 
\begin{equation}\label{dchoice}
	d = -\frac{\tau}{4}%-m^{\ast}\tau
\end{equation}
so that
\begin{equation}\label{dconse}
	u_{\infty}+d=0%n^{\ast},
\end{equation}
and
\begin{equation*}
	u(z_0)-d =\int_1^{P_1(0)}\omega-d\equiv \frac{1}{2}\left(1+\tau\right)\mod\Lambda.
\end{equation*}
\begin{prop}\label{prop8} The multi-valued functions $\theta\left(u(P)\pm d\right)$ with $d$ as in \eqref{dchoice} are not identically zero on $\Gamma$. We have
\begin{equation*}
	\theta\left(u(P)+d\right) = 0\hspace{0.15cm}\Leftrightarrow\hspace{0.15cm}P=P_2(z_0)\hspace{0.5cm}\textnormal{and}\hspace{0.5cm}\theta\left(u(P)-d\right)=0\hspace{0.15cm}\Leftrightarrow\hspace{0.15cm}P=P_1(z_0)
\end{equation*}
and
\begin{equation}\label{uzero}
	\theta(sV)\neq 0\hspace{0.5cm}\textnormal{for all}\ s\in\mathbb{R}.
\end{equation}
\end{prop}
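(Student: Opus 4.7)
The plan is to match $u(P_{1,2}(z_0))\mp d$ with lattice translates of $\tfrac12+\tfrac{\tau}{2}$, which is the unique zero of $\theta_3(\,\cdot\,|\tau)$ in a fundamental parallelogram, and then invoke Riemann's vanishing theorem in its genus-one form to conclude that these are the only zeros on $\Gamma$. The non-vanishing claim \eqref{uzero} will reduce to a one-line imaginary-part argument.

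The technical heart of the proof is the explicit evaluation of $u(P_1(0))$. Choose a path from the basepoint $P_1(1)$ to $P_1(0)$ on the first sheet lying in the upper half-plane. On the slit $(a,1)$, tracking $\arg p(\lambda)$ as $\lambda$ rounds the branch point $1$ from above yields $\sqrt{p(\lambda)}_+ = i\sqrt{(1-\lambda^2)(\lambda^2-a^2)}$, and combining with the $A_1$-normalization \eqref{Anorm} gives $u_+(a) = -\tfrac12$. Crossing the branch point at $\lambda=a$ multiplies $\sqrt{p}$ by one more factor of $i$, so on $(-a,a)$ from above the integrand is real and negative; recognizing the resulting integral from $a$ down to $0$ as half of the $B_1$-period \eqref{Bper} produces
\[
u(P_1(0)) \equiv -\tfrac12 + \tfrac{\tau}{4} \pmod{\Lambda}, \qquad \Lambda := \mathbb{Z} + \tau\mathbb{Z}.
\]
The hyperelliptic involution $\sigma\colon(z,w)\mapsto(z,-w)$ fixes $P_1(1)$ and satisfies $\sigma^*\omega=-\omega$, hence $u(P_2(0)) \equiv -u(P_1(0)) \equiv \tfrac12 - \tfrac{\tau}{4} \pmod{\Lambda}$. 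Plugging in $d = -\tau/4$ from \eqref{dchoice}, both $u(P_1(0)) - d$ and $u(P_2(0)) + d$ reduce to $\tfrac12 + \tfrac{\tau}{2}$ modulo $\Lambda$, so $\theta$ vanishes there.

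To complete the identification, I would rule out identical vanishing by evaluating at the branch point $P_1(1)$ where $u=0$: $\theta(\pm d) = \theta(\mp\tau/4)$ is nonzero because its imaginary part $\mp\Im\tau/4$ is not of the form $(\tfrac12+m)\Im\tau$ that every zero of $\theta_3(\,\cdot\,|\tau)$ must have. Uniqueness of the zero in each case then follows from Riemann's vanishing theorem for $g=1$, or equivalently from the fact that the Abel-Jacobi map $u\colon\Gamma\to\mathbb{C}/\Lambda$ is a biholomorphism for a genus-one Riemann surface, while $\theta_3(\,\cdot\,|\tau)$ has exactly one zero per fundamental parallelogram. Finally for \eqref{uzero}: by \eqref{Vconst} the quantity $V$ is real and $\tau\in i\mathbb{R}_+$, so $\Im(sV)=0$ for every $s\in\mathbb{R}$, whereas every zero of $\theta_3(\,\cdot\,|\tau)$ has nonzero imaginary part $(\tfrac12+m)\Im\tau$.

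The only step requiring genuine care is the sign-sensitive tracking of branches in evaluating $u_+(0)$; once $u(P_{1,2}(0))\bmod\Lambda$ are pinned down, the rest is a direct lattice identification together with the standard genus-one counting of zeros of a theta section.
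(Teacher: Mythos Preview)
Your proof is correct and takes a somewhat different, more elementary route than the paper. The paper rewrites $u(P)+d$ in the form $\int_{-1}^P\omega - K - \int_{-1}^{P_2(0)}\omega$ with $K=-\tfrac12-\tfrac{\tau}{2}$ the Riemann constant, then invokes the Riemann--Roch/non-speciality machinery to conclude that the theta section is not identically zero and vanishes exactly at $P_2(0)$ (and symmetrically for $-d$). You instead compute $u(P_1(0))\equiv -\tfrac12+\tfrac{\tau}{4}\pmod\Lambda$ directly by path-tracking, obtain $u(P_2(0))$ via the hyperelliptic involution, match both of $u(P_1(0))-d$ and $u(P_2(0))+d$ to the unique theta zero $\tfrac12+\tfrac{\tau}{2}$, rule out identical vanishing by evaluating at the basepoint, and get uniqueness from the fact that the Abel--Jacobi map is a biholomorphism in genus one together with $\theta_3$ having a single simple zero per fundamental parallelogram. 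Your argument is self-contained for $g=1$ and avoids introducing $K$ or divisor language; the paper's formulation is the standard one that generalizes to higher genus. For \eqref{uzero} both proofs reduce to the same observation that $sV\in\mathbb{R}$ while every zero of $\theta_3(\,\cdot\,|\tau)$ has imaginary part $(\tfrac12+m)\Im\tau\neq 0$.
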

\begin{proof} We first rewrite \eqref{dchoice} as follows
\begin{equation*}
	d = \int_{-a}^{P_1(0)}\omega = -K-\int_{-1}^{P_2(0)}\omega-\frac{\tau}{2}
\end{equation*}
where we introduced the Riemann constant (see \cite{FK})
\begin{equation}\label{RK}
	K=\int_{-1}^a\omega=-\frac{1}{2}-\frac{\tau}{2}.
\end{equation}
Hence
\begin{equation*}
	u(P)+d = \int_1^P\omega+d = \int_{-1}^P\omega +\frac{\tau}{2}+d \equiv \int_{-1}^P\omega -K-\int_{-1}^{P_2(0)}\omega \mod \Lambda.
\end{equation*}
Since the divisor $P_2(0)$ is non-special, it follows by the Riemann-Roch theorem that (e.g. \cite{FK})
\begin{equation*}
	\theta\left(\int_{-1}^P\omega-K-\int_{-1}^{P_2(0)}\omega\right)
\end{equation*}
is not identically zero on $\Gamma$ and has roots precisely at $P=P_2(0)$. On the other hand
\begin{equation*}
	-d = K+\int_{-1}^{P_2(0)}\omega + \frac{\tau}{2} \equiv -K-\int_{-1}^{P_1(0)}\omega+\frac{\tau}{2}\mod\Lambda,
\end{equation*}
which follows from \eqref{RK}. The divisor $P_1(0)$ is also non-special, hence we have by similar reasoning as above that $\theta\left(u(P)-d\right)$ does not vanish identically and has zeros precisely at $P=P_1(0)$. The remaining statement \eqref{uzero} follows from Proposition \ref{prop7} and \eqref{dchoice} as well as the fact that the zeros of the theta function $\theta(z)$ are given by the points $z\equiv\frac{1}{2}+\frac{\tau}{2}\mod\Lambda$.
\end{proof}
Combining all the above, we define
\begin{equation}\label{Qdef}
	Q(\lambda) = \begin{pmatrix}
	N_1(\lambda,d)\phi(\lambda) & N_2(\lambda,d)\hat{\phi}(\lambda)\\
	-N_1(\lambda,-d)\hat{\phi}(\lambda) & N_2(\lambda,-d)\phi(\lambda)\\
	\end{pmatrix},\hspace{0.5cm}\lambda\in\mathbb{C}\backslash\overline{J}.
\end{equation}
Then $Q(\lambda)$ is analytic in $\mathbb{C}\backslash\overline{J}$ with jumps
\begin{eqnarray*}
	Q_+(\lambda)&=&Q_-(\lambda)\begin{pmatrix}
	0 & 1\\
	-1 & 0\\
	\end{pmatrix},\hspace{0.5cm}\lambda\in(a,1)\\
	Q_+(\lambda)&=&Q_-(\lambda)\begin{pmatrix}
	0 & e^{is\Omega(\lambda)}\\
	-e^{-is\Omega(\lambda)} & 0\\
	\end{pmatrix},\hspace{0.5cm}\lambda\in(-1,-a)
\end{eqnarray*}
and normalization (recall \eqref{dconse} and the fact that $\theta(z)$ is even)
\begin{equation*}
	Q(\lambda)= Q(\infty)\Bigg[I+\frac{1}{\lambda}\begin{pmatrix}
	-c\,\frac{\theta'(sV)}{\theta(sV)} & -\frac{\theta(0)}{\theta(sV)}\frac{\theta(u_{\infty}-sV-d)}{\theta(u_{\infty}-d)}\frac{1-a}{2i}\smallskip\\
	\frac{\theta(0)}{\theta(sV)}\frac{\theta(u_{\infty}+sV-d)}{\theta(u_{\infty}-d)}\frac{1-a}{2i} & c\,\frac{\theta'(sV)}{\theta(sV)}\\
	\end{pmatrix}+\mathcal{O}\big(\lambda^{-2}\big)\Bigg]
\end{equation*}
as $\lambda\rightarrow\infty$, where $\theta'(x)=\frac{\d}{\d x}\theta(x)$ and with
\begin{equation*}
	Q(\infty) = \frac{\theta(sV)}{\theta(0)}I.% \begin{pmatrix}
	%\frac{\theta(u_{\infty}+sV+d)}{\theta(u_{\infty}+d)} & 0\\
	%0 & \frac{\theta(u_{\infty}-sV+d)}{\theta(u_{\infty}+d)}\\
	%\end{pmatrix}.
\end{equation*}
Now set
\begin{equation}\label{Mdef}
	M(\lambda) = \left(Q(\infty)\right)^{-1}Q(\lambda),\hspace{0.5cm}\lambda\in\mathbb{C}\backslash\overline{J}
\end{equation}
and we are immediately lead to the following Proposition, which gives the solution of the model problem.
\begin{prop}\label{prop9} The matrix valued function $M(\lambda)$ given by \eqref{Mdef} is analytic in $\mathbb{C}\backslash\overline{J}$ with jumps
\begin{eqnarray*}
	M_+(\lambda) &=& M_-(\lambda)\begin{pmatrix}
	0 & 1\\
	-1 & 0\\
	\end{pmatrix},\hspace{0.5cm}\lambda\in(a,1)\\
	M_+(\lambda) &=&M_-(\lambda)\begin{pmatrix}
	0 & e^{is\Omega(\lambda)}\\
	-e^{-is\Omega(\lambda)} & 0\\
	\end{pmatrix},\hspace{0.5cm}\lambda\in(-1,-a),
\end{eqnarray*}
has square integrable singularities at the branchpoints $\lambda=\pm a,\pm 1$, and 
\begin{equation}\label{Masy}
	M(\lambda) = I+\frac{M_1}{\lambda}+\mathcal{O}\big(\lambda^{-2}\big),\hspace{0.5cm} M_1=\big(M_1^{jk}\big),\ \ \ \ \textnormal{as}\ \ \lambda\rightarrow\infty.
\end{equation}
\end{prop}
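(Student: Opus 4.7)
The plan is to verify the four stated properties of $M(\lambda)$ one at a time, working from the explicit formula \eqref{Mdef} together with the building blocks already assembled in Propositions \ref{prop4}--\ref{prop8}.

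\textbf{Analyticity in $\mathbb{C}\setminus\overline{J}$.} The functions $\phi,\hat\phi$ are analytic in $\mathbb{C}\setminus\overline{J}$ by construction \eqref{bdef}--\eqref{hdef}. The theta ratios $N_j(\lambda,\pm d)$ are single-valued and meromorphic in $\mathbb{C}\setminus J$ by Proposition \ref{prop5}, with potential poles only at the zeros of the denominators $\theta(\pm u(\lambda)\pm d)$. Using evenness of $\theta$, these denominators are $\theta(u(\lambda)\pm d)$, and by Proposition \ref{prop8} (applied with $u$ on the first sheet of $\Gamma$), only $\theta(u(\lambda)-d)$ vanishes in $\mathbb{C}\setminus J$, namely at $z_0=0$. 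Since for $g=1$ the divisor of $\theta(u(P)-d)$ has degree one and $z=0$ is a regular (non-branch) point, this zero is simple. The corresponding potentially singular factors, $N_1(\lambda,-d)$ and $N_2(\lambda,d)$, appear in $Q(\lambda)$ only in the off-diagonal entries, each multiplied by $\hat\phi(\lambda)$. By Proposition \ref{prop6} the function $\hat\phi$ has a simple zero at $z_0=0$, so the pole cancels. Hence $Q$ (and therefore $M=Q(\infty)^{-1}Q$, with $Q(\infty)\ne 0$ by \eqref{uzero}) extends to an analytic function on $\mathbb{C}\setminus\overline{J}$.

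\textbf{Jump relations.} On both slits $\phi_+=-\hat\phi_-$ and $\hat\phi_+=\phi_-$, inherited from $\beta_+=-i\beta_-^{-1}$. Substituting the jumps of $N(\lambda,\pm d)$ from Proposition \ref{prop5} into \eqref{Qdef} and rewriting each entry through $\phi_-,\hat\phi_-$ produces, on $(a,1)$, exactly the matrix $\left(\begin{smallmatrix}0&1\\-1&0\end{smallmatrix}\right)$. On $(-1,-a)$ the same computation yields the off-diagonal matrix with entries $\pm e^{\pm 2\pi isV}$; by the identification $\Omega(\lambda)=2\pi V$ for $\lambda\in(-1,-a)$ recorded in \eqref{Vconst}, this is precisely the required jump. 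Multiplication by the constant $Q(\infty)^{-1}$ on the left does not affect the jumps.

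\textbf{Endpoint behavior.} From \eqref{bdef}, $\beta(\lambda)\sim(\lambda-\lambda_0)^{\pm 1/4}$ near each branch point $\lambda_0\in\{\pm a,\pm 1\}$, so $\phi,\hat\phi\sim(\lambda-\lambda_0)^{-1/4}$. The theta ratios remain bounded near $\pm a,\pm 1$ (none of the branch points coincide with $z_0=0$), hence the entries of $M$ are $O(|\lambda-\lambda_0|^{-1/4})$ and are locally square integrable.

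\textbf{Large $\lambda$ asymptotics.} Let $\lambda\to\infty$. From Proposition \ref{prop4}, $u(\lambda)=u_\infty-c\lambda^{-1}+O(\lambda^{-2})$, and from \eqref{bdef}--\eqref{hdef}, $\phi(\lambda)=1+O(\lambda^{-2})$ and $\hat\phi(\lambda)=-(1-a)/(2i\lambda)+O(\lambda^{-3})$. The normalization \eqref{dchoice} together with Proposition \ref{prop7} gives $u_\infty+d=0$, so using evenness of $\theta$,
\begin{equation*}
N_1(\infty,d)=\frac{\theta(u_\infty+sV+d)}{\theta(u_\infty+d)}=\frac{\theta(sV)}{\theta(0)},\qquad N_2(\infty,-d)=\frac{\theta(-u_\infty+sV-d)}{\theta(-u_\infty-d)}=\frac{\theta(sV)}{\theta(0)},
\end{equation*}
so that $Q(\infty)=\frac{\theta(sV)}{\theta(0)}I$ (noting $\theta(sV)\ne 0$ by \eqref{uzero}). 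Multiplying by $Q(\infty)^{-1}$ then gives $M(\lambda)=I+M_1\lambda^{-1}+O(\lambda^{-2})$, which is \eqref{Masy}; the entries of $M_1$ are read off by expanding $N_j$ to first order using $u(\lambda)-u_\infty=-c\lambda^{-1}+O(\lambda^{-2})$ and by keeping the leading $\hat\phi$-term in the off-diagonal entries.

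\textbf{Main obstacle.} The only step that requires real care is the pole cancellation at $z_0=0$: one must match the zero locus of $\theta(u(\lambda)-d)$ on the first sheet (computed via Proposition \ref{prop8} and the Riemann vanishing theorem) against the simple zero of $\hat\phi$ at $z_0$. This is precisely the reason for the particular choice $d=-\tau/4$ in \eqref{dchoice}, combined with $u_\infty=\tau/4$ from Proposition \ref{prop7}. The remaining verifications are direct bookkeeping given the preparatory propositions.
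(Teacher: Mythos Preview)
Your proof is correct and follows essentially the same approach as the paper. The paper does not give a separate formal proof of this proposition; instead, all the ingredients (analyticity of $Q$ via the pole cancellation implicit in Proposition~\ref{prop8}, the jump computation for $Q$, and the asymptotic expansion at infinity using $u_\infty+d=0$) are assembled in the paragraphs immediately preceding the statement, and the proposition is then recorded as a direct consequence. Your write-up makes those steps explicit and organized, but the logic and the key observation---that the choice $d=-\tau/4$ forces the unique first-sheet zero of $\theta(u(\lambda)-d)$ to coincide with the simple zero $z_0=0$ of $\hat\phi$---are exactly the paper's.
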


%----------------------------------------------------------------------------------------------------------------------------------------------------------------------------------------------------------------------------------------------------------------------
\subsection{Approximate parametrix near the branch point $\lambda=a$}
As opposed to ``standard'' cases in which we can construct a local model function that precisely models the jump properties of the initial problem, we are facing here a novelty in the parametrix analysis. Let us fix a small neighborhood $\mathcal{U}$ of radius less than $r=\min\left\{\frac{a}{2},\frac{1}{2}(1-a)\right\}$ of the point $a$ as shown in Figure \ref{figure4}.
\begin{figure}[tbh]
\begin{center}
\includegraphics[width=0.63\textwidth]{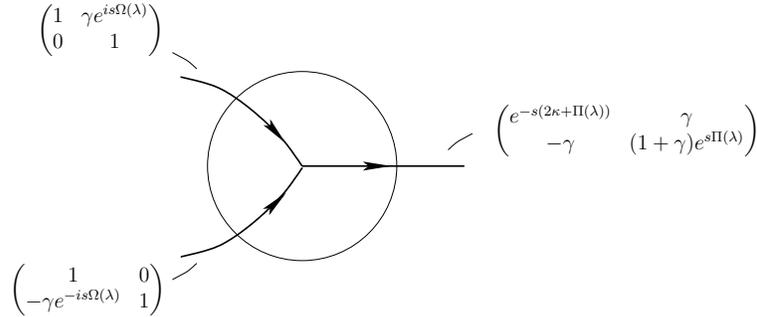}
\end{center}
\caption{A neighborhood $\mathcal{U}$ of $\lambda=a$ with jumps for $S(\lambda)$}
\label{figure4}
\end{figure}
%\begin{figure}[tbh]
%\begin{center}
%\resizebox{0.35\textwidth}{!}{\input{aneighborhood.pdf_t}}
%\caption{A neighborhood $\mathcal{U}$ of $\lambda=a$ with jumps for $S(\lambda)$}
%\label{figure4}
%\end{center}
%\end{figure}

Observe that by Proposition \ref{prop2},
\begin{equation}\label{app:1}
	\Pi(\lambda) = -2\int_{\lambda}^1\sqrt{\frac{\mu^2-a^2}{1-\mu^2}}\,\d\mu = -2\kappa+c_0\left(\lambda-a\right)^{\frac{3}{2}}+\mathcal{O}\left((\lambda-a)^{\frac{5}{2}}\right),\hspace{0.5cm} c_0=\frac{4}{3}\sqrt{\frac{2a}{1-a^2}}\,>0
\end{equation}
for $\lambda\in\mathcal{U}\cap(a,1)$, where the function $(z-a)^{\frac{3}{2}}$ is defined for $z\in\mathbb{C}\backslash[a,\infty)$ and the branch of the root is fixed by the condition
\begin{equation*}
	\textnormal{arg}\,\left(z-a\right)=\pi\ \ \textnormal{if}\ \ z<a.
\end{equation*}
Furthermore,
\begin{equation*}
	i\Omega(\lambda) = -2i\int_{\lambda}^a\sqrt{\frac{a^2-\mu^2}{1-\mu^2}}\,\d\mu = c_0\left(\lambda-a\right)^{\frac{3}{2}}+\mathcal{O}\left((\lambda-a)^{\frac{5}{2}}\right)
\end{equation*}
for $\lambda\in\mathcal{U}\cap\gamma^+$ and
\begin{equation*}
	-i\Omega(\lambda) = -c_0\left(\lambda-a\right)^{\frac{3}{2}}+\mathcal{O}\left((\lambda-a)^{\frac{5}{2}}\right),\hspace{0.25cm}\lambda\in\mathcal{U}\cap\gamma^-.
\end{equation*}
These behaviors suggest to use the following local variable
\begin{eqnarray}\label{change1}
	\zeta(\lambda) &=&\left(\frac{3s}{4}\right)^{\frac{2}{3}}\left(2i\int_a^{\lambda}\sqrt{\frac{\mu^2-a^2}{\mu^2-1}}\,\d\mu\right)^{\frac{2}{3}}\\%\big(2ig(z)+2\kappa\,\textnormal{sgn}\left(\textnormal{Im}\ z\right)\big)^{2/3}\\
	&=&\left(s\sqrt{\frac{2a}{1-a^2}}\,\right)^{\frac{2}{3}}(\lambda-a)\left(1+\frac{1+3a^2}{10a(1-a^2)}(\lambda-a)+\mathcal{O}\big((\lambda-a)^2\big)\right),\hspace{0.5cm} |\lambda-a|<r.\nonumber
\end{eqnarray}
With this variable, the exact jump matrices depicted in Figure \ref{figure4} can be written as
\begin{eqnarray}
	\begin{pmatrix}
	1 &  \gamma e^{is\Omega(\lambda)}\\
	0& 1\\
	\end{pmatrix} &=& e^{\frac{2}{3}\zeta^{\frac{3}{2}}(\lambda)\sigma_3}\begin{pmatrix}
	1 & \gamma\\
	0& 1\\
	\end{pmatrix}e^{-\frac{2}{3}\zeta^{\frac{3}{2}}(\lambda)\sigma_3},\hspace{0.25cm}\lambda\in\mathcal{U}\cap \gamma^+\label{jp1}\\
	\begin{pmatrix}
	1 & 0\\
	-\gamma e^{-is\Omega(\lambda)} & 1\\
	\end{pmatrix} &=& e^{\frac{2}{3}\zeta^{\frac{3}{2}}(\lambda)\sigma_3}\begin{pmatrix}
	1 & 0\\
	-\gamma& 1\\
	\end{pmatrix}e^{-\frac{2}{3}\zeta^{\frac{3}{2}}(\lambda)\sigma_3},\hspace{0.25cm}\lambda\in\mathcal{U}\cap\gamma^-\label{jp2}\\
	\begin{pmatrix}
	e^{-s(2\kappa+\Pi(\lambda))} & \gamma\\
	-\gamma & (1+\gamma)e^{s\Pi(\lambda)}\\
	\end{pmatrix} &=&e^{\frac{2}{3}\zeta_-^{\frac{3}{2}}(\lambda)\sigma_3}\begin{pmatrix}
	1 & \gamma\\
	-\gamma& 1-\gamma^2\\
	\end{pmatrix}e^{-\frac{2}{3}\zeta_+^{\frac{3}{2}}(\lambda)\sigma_3}.\label{jp3}
\end{eqnarray}
with $\lambda\in\mathcal{U}\cap (a,1)$ in the latter equality. In the neigbhorhood $\mathcal{U}$ of the branch point $\lambda=a$, the model function $M(\lambda)$ in \eqref{Mdef} can be written as follows
\begin{equation}\label{MMhat}
	M(\lambda) = \widehat{M}(\lambda)\delta(\lambda)^{-\sigma_3}\frac{i}{2}\begin{pmatrix}
	1 & -i\\
	-1 & -i\\
	\end{pmatrix},\hspace{1cm} \delta(\lambda) = \left(\frac{\lambda-a}{\lambda-1}\right)^{\frac{1}{4}}\rightarrow 1,\hspace{0.5cm}\lambda\rightarrow\infty
\end{equation}
where $\widehat{M}(\lambda)$ is analytic in $\mathcal{U}$. Since $\delta(\lambda)/\zeta^{\frac{1}{4}}(\lambda)$ is locally analytic we are thus looking for a solution to the following model problem
\begin{itemize}
	\item $A^{RH}(\zeta)$ is analytic for $\zeta\in\mathbb{C}\backslash\Gamma_{\zeta}$ where the contour $\Gamma_{\zeta}$ is depicted in Figure \ref{figure5}
	\begin{figure}[tbh]
\begin{center}
\includegraphics[width=0.32\textwidth]{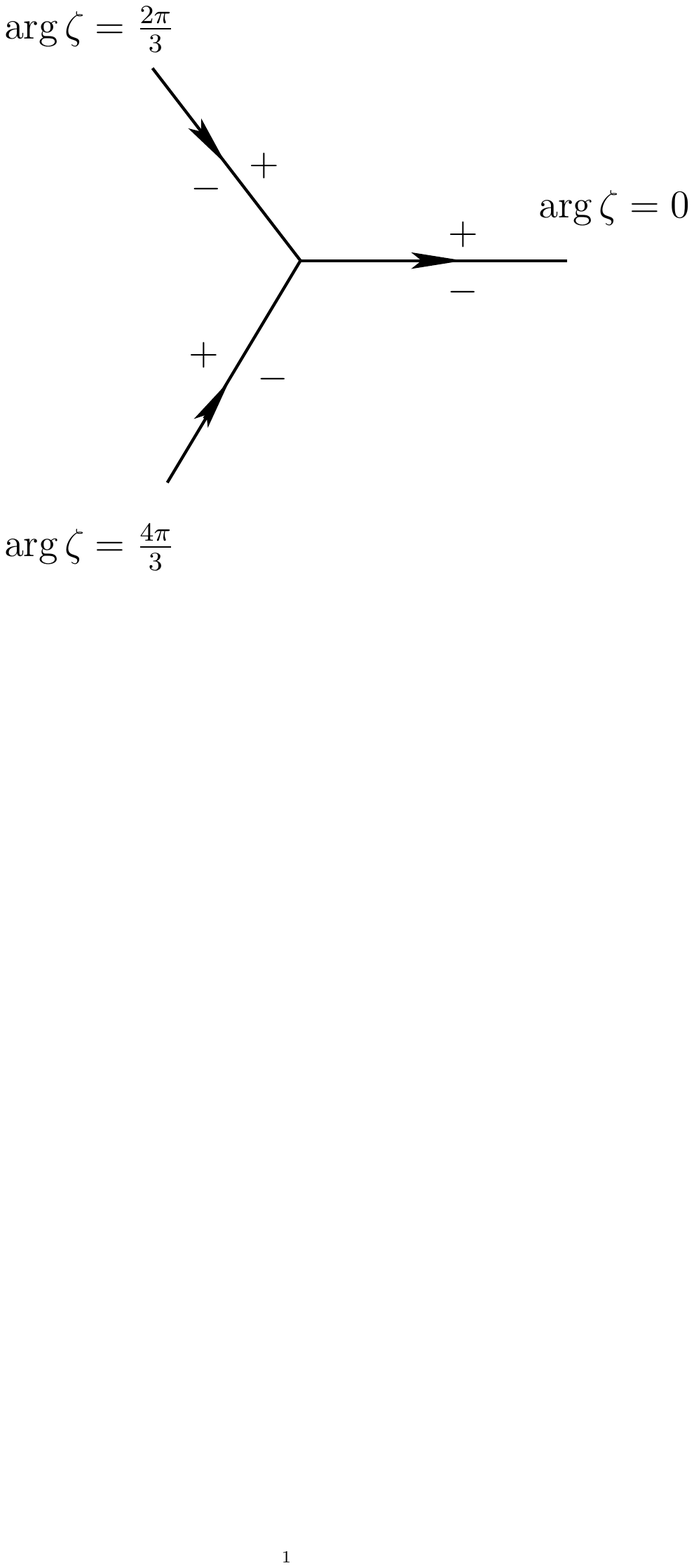}
\end{center}
\caption{Jump contour $\Gamma_{\zeta}$ for the model function $A^{RH}(\zeta)$}
\label{figure5}
\end{figure}
%	\begin{figure}[tbh]
%\begin{center}
%\resizebox{0.25\textwidth}{!}{\input{aAiry.pdf_t}}
%\caption{Jump contour $\Gamma_{\zeta}$ for the model function $A^{RH}(\zeta)$}
%\label{figure5}
%\end{center}
%\end{figure}
	\item $A_+^{RH}(\zeta) = A^{RH}_-(\zeta)L(\gamma)$ as $\zeta\in\Gamma_{\zeta}$ with a piecewise constant jump matrix
	\begin{equation}\label{ejaj}
		L = \begin{pmatrix}
		1 & \gamma\\
		0 & 1\\
		\end{pmatrix},\ \textnormal{arg}\,\zeta=\frac{2\pi}{3},\hspace{0.5cm} L = \begin{pmatrix}
		1 & 0\\
		-\gamma & 1\\
		\end{pmatrix},\ \textnormal{arg}\,\zeta=\frac{4\pi}{3},\hspace{0.5cm}L = \begin{pmatrix}
		1 & \gamma\\
		-\gamma & 1-\gamma^2\\
		\end{pmatrix},\ \textnormal{arg}\ \zeta=0.
	\end{equation}
	\item As $\zeta\rightarrow\infty$
	\begin{equation*}
		A^{RH}(\zeta) = \zeta^{-\frac{1}{4}\sigma_3}\frac{i}{2}\begin{pmatrix}
		1 & -i\\
		-1 & -i\\
		\end{pmatrix}\Big(I+\mathcal{O}\left(\zeta^{-1}\right)\Big)e^{\frac{2}{3}\zeta^{\frac{3}{2}}\sigma_3}.
	\end{equation*}
\end{itemize}
In this model problem, the jump matrix $L(\gamma)$ is $\zeta$-independent, hence the function
\begin{equation*}
	A(\zeta) = \frac{\d A^{RH}}{\d\zeta}(\zeta)\left(A^{RH}(\zeta)\right)^{-1}
\end{equation*}
is entire, and from its behavior at infinity we have
\begin{equation*}
	A(\zeta) = \begin{pmatrix}
	0 & -1\\
	-\zeta+d(\gamma) & 0\\
	\end{pmatrix} +\mathcal{O}\left(\zeta^{-\frac{1}{2}}\right),\hspace{0.5cm}\zeta\rightarrow\infty.
\end{equation*}
Here $d=d(\gamma)$ is independent of $\zeta$, and therefore by analyticity
\begin{equation}\label{sys1}
	\frac{\d A^{RH}}{\d\zeta}(\zeta)= \begin{pmatrix}
	0 & -1\\
	-\zeta+d(\gamma) & 0\\
	\end{pmatrix}A^{RH}(\zeta).
\end{equation}
This equation yields a matrix form for the Airy equation, indeed let
\begin{equation*}
	y(\zeta) = \left(A^{RH}(\zeta)\right)_{1j},\ \ y'(\zeta)=-\left(A^{RH}(\zeta)\right)_{2j},\ \ j=1,2,
\end{equation*}
then \eqref{sys1} gives us
\begin{equation*}
	y'' = (\zeta-d)y.%,\hspace{1cm} y'(\zeta)=-\left(A^{RH}(\zeta)\right)_{2j}.
\end{equation*}
So we can expect that an ``exact'' parametrix near $\lambda=a$ is constructed out of Airy-functions. However, the Stokes multipliers of Airy functions are constant, i.e. in particular, $\gamma$-independent. On the other hand, the jumps in \eqref{ejaj} are $\gamma$-dependent. Instead of an ``exact'' parametrix, we will construct an ``approximate'' one which models the behavior with $\gamma=1$ in \eqref{ejaj}. More precisely, we choose the jump behavior near $\lambda=a$ as shown in Figure \ref{figure41}.
\begin{figure}[tbh]
\begin{center}
\includegraphics[width=0.45\textwidth]{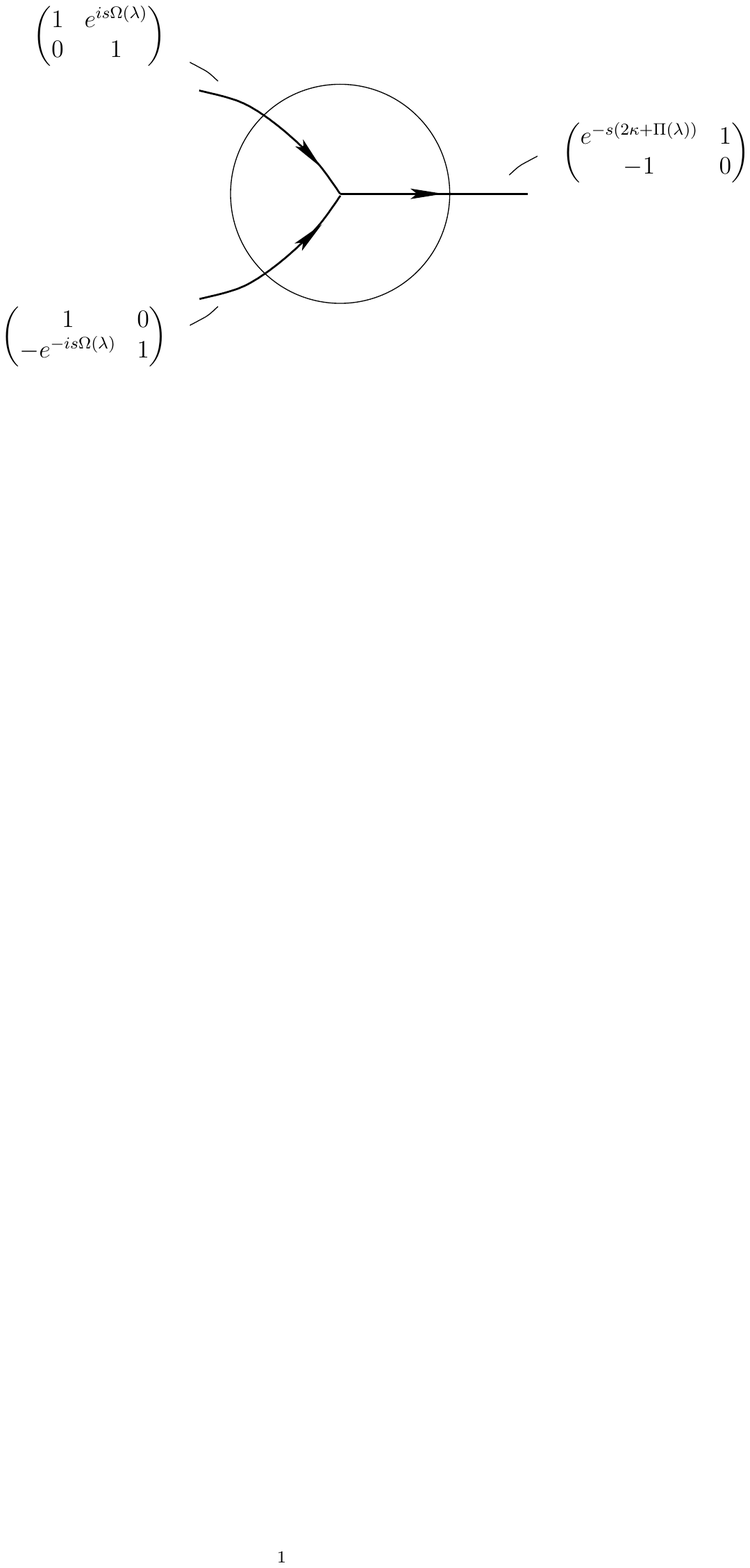}
\end{center}
\caption{A neighborhood $\mathcal{U}$ of $\lambda=a$ with approximate jumps}
\label{figure41}
\end{figure}
%\begin{figure}[tbh]
%\begin{center}
%\resizebox{0.35\textwidth}{!}{\input{aneighborhood2.pdf_t}}
%\caption{A neighborhood $\mathcal{U}$ of $\lambda=a$ with approximate jumps}
%\label{figure41}
%\end{center}
%\end{figure}

We use $\textnormal{Ai}(\zeta)$ for our construction, the solution to Airy's equation
\begin{equation*}
	w''=zw
\end{equation*}
uniquely determined by its asymptotics as $\zeta\rightarrow\infty$ and $-\pi<\textnormal{arg}\ \zeta<\pi$
\begin{equation*}
	\textnormal{Ai}(\zeta) = \frac{\zeta^{-\frac{1}{4}}}{2\sqrt{\pi}}e^{-\frac{2}{3}\zeta^{\frac{3}{2}}}\left(1-\frac{5}{48}\zeta^{-\frac{3}{2}}+\frac{385}{4608}\zeta^{-\frac{6}{2}}+\mathcal{O}\left(\zeta^{-\frac{9}{2}}\right)\right),
\end{equation*}
with the roots of $\zeta$ chosen with zero argument for $\zeta>0$. For $\pi<\textnormal{arg}\,\zeta<\frac{5\pi}{3}$ (see e.g. \cite{BE,NIST})
\begin{eqnarray}\label{p0}
	\textnormal{Ai}(\zeta) &=& \frac{\zeta^{-\frac{1}{4}}}{2\sqrt{\pi}}e^{-\frac{2}{3}\zeta^{\frac{3}{2}}}\left(1-\frac{5}{48}\zeta^{-\frac{3}{2}}+\frac{385}{4608}\zeta^{-\frac{6}{2}}+\mathcal{O}\left(\zeta^{-\frac{9}{2}}\right)\right)\\
	&&+\frac{i\zeta^{-\frac{1}{4}}}{2\sqrt{\pi}}e^{\frac{2}{3}\zeta^{\frac{3}{2}}}\left(1+\frac{5}{48}\zeta^{-\frac{3}{2}}+\frac{385}{4608}\zeta^{-\frac{6}{2}}+\mathcal{O}\left(\zeta^{-\frac{9}{2}}\right)\right),\hspace{0.5cm}\zeta\rightarrow\infty\nonumber,
\end{eqnarray}
and the following monodromy relation holds on the entire universal covering of the complex plane
\begin{equation}\label{p00}
	\textnormal{Ai}(\zeta)+e^{i\frac{2\pi}{3}}\textnormal{Ai}\left(\zeta e^{i\frac{2\pi}{3}}\right)+e^{-i\frac{2\pi}{3}}\textnormal{Ai}\left(\zeta e^{-i\frac{2\pi}{3}}\right) = 0.
\end{equation}
Now introduce for $\zeta\in\mathbb{C}$ the following unimodular function
\begin{equation}\label{p1}
	A_0(\zeta) = i\sqrt{\pi}e^{-i\frac{\pi}{6}\sigma_3}\begin{pmatrix}
	e^{-i\frac{\pi}{6}} & 0\\
	0 & i\\
	\end{pmatrix}\begin{pmatrix}
	\textnormal{Ai}\big(e^{-i\frac{2\pi}{3}}\zeta\big)&\textnormal{Ai}(\zeta)\smallskip\\
	e^{-i\frac{2\pi}{3}}\textnormal{Ai}'\big(e^{-i\frac{2\pi}{3}}\zeta\big) & \textnormal{Ai}'(\zeta)\\
	\end{pmatrix}e^{i\frac{\pi}{6}\sigma_3},\hspace{0.5cm}\zeta\in\mathbb{C},
\end{equation}
where $\textnormal{Ai}'(x)=\frac{\d}{\d x}\textnormal{Ai}(x)$ and which implies
\begin{equation*}
	A_0(\zeta) = \zeta^{-\frac{1}{4}\sigma_3}\frac{i}{2}\begin{pmatrix}
	1 & -i\\
	-1 & -i\\
	\end{pmatrix}\left[I+\frac{1}{48\zeta^{\frac{3}{2}}}\begin{pmatrix}-1 & 6i\\
	6i & 1\\
	\end{pmatrix}+\mathcal{O}\big(\zeta^{-\frac{6}{2}}\big)\right]e^{\frac{2}{3}\zeta^{\frac{3}{2}}\sigma_3}
\end{equation*}
as $\zeta\rightarrow\infty$ for $0<\textnormal{arg}\,\zeta<\pi$. For $\pi<\textnormal{arg}\,\zeta<\frac{5\pi}{3}$ we use \eqref{p0} and derive as $\zeta\rightarrow\infty$ in the latter sector
\begin{equation}\label{p12}
	A_0(\zeta) = \zeta^{-\frac{1}{4}\sigma_3}\frac{i}{2}\begin{pmatrix}
	1 & -i\\
	-1 & -i\\
	\end{pmatrix}\left[I+\frac{1}{48\zeta^{\frac{3}{2}}}\begin{pmatrix}-1 & 6i\\
	6i & 1\\
	\end{pmatrix}+\mathcal{O}\big(\zeta^{-\frac{6}{2}}\big)\right]e^{\frac{2}{3}\zeta^{\frac{3}{2}}\sigma_3}\begin{pmatrix}
	1 & 1\\
	0 & 1\\
	\end{pmatrix}.
\end{equation}
Finally in case $\frac{5\pi}{3}<\textnormal{arg}\,\zeta<2\pi$ we use \eqref{p00} and obtain
\begin{equation}\label{p13}
	A_0(\zeta) = \zeta^{-\frac{1}{4}\sigma_3}\frac{i}{2}\begin{pmatrix}
	1 & -i\\
	-1 & -i\\
	\end{pmatrix}\left[I+\frac{1}{48\zeta^{\frac{3}{2}}}\begin{pmatrix}-1 & 6i\\
	6i & 1\\
	\end{pmatrix}+\mathcal{O}\big(\zeta^{-\frac{6}{2}}\big)\right]e^{\frac{2}{3}\zeta^{\frac{3}{2}}\sigma_3}\begin{pmatrix}
	1 & 0\\
	-1 & 1\\
	\end{pmatrix}\begin{pmatrix}
	1 & 1\\
	0 & 1\\
	\end{pmatrix}
\end{equation}
as $\zeta\rightarrow\infty$. Keeping these properties in mind we obtain the solution to the $A^{RH}$-RHP for $\gamma=1$
\begin{equation}\label{p2}
	A^{RH}(\zeta) = \left\{
                                   \begin{array}{ll}
                                     A_0(\zeta), & \hbox{$\textnormal{arg}\,\zeta\in(0,\frac{2\pi}{3})$,}\smallskip \\
                                     A_0(\zeta)\begin{pmatrix}
                                     1 & -1\\
                                     0 & 1\\
                                     \end{pmatrix}, & \hbox{$\textnormal{arg}\,\zeta\in(\frac{2\pi}{3},\frac{4\pi}{3})$,}\smallskip \\
                                     A_0(\zeta)\begin{pmatrix}
                                     1 & -1\\
                                     0 & 1\\
                                     \end{pmatrix}\begin{pmatrix}
                                     1 & 0\\
                                     1 & 1\\
                                     \end{pmatrix}, & \hbox{$\textnormal{arg}\,\zeta\in(\frac{4\pi}{3},2\pi)$.}
                                   \end{array}
                                 \right.
\end{equation}
$A^{RH}(\zeta)$ has the following properties:
\begin{itemize}
	\item $A^{RH}(\zeta)$ is analytic for $\zeta\in\mathbb{C}\backslash \Gamma_{\zeta}$ where $\Gamma_{\zeta}$ is depicted in Figure \ref{figure5}.
	\item The following jump relations hold, with orientiation as in Figure \ref{figure5}:
	\begin{eqnarray*}
		A_+^{RH}(\zeta) &=&A_-^{RH}(\zeta)\begin{pmatrix}
		1 & 1\\
		0 & 1\\
		\end{pmatrix},\hspace{0.5cm}\textnormal{arg}\,\zeta=\frac{2\pi}{3},\\
		A_+^{RH}(\zeta)&=&A_-^{RH}(\zeta)\begin{pmatrix}
		1 & 0\\
		-1 & 1\\
		\end{pmatrix},\hspace{0.5cm}\textnormal{arg}\,\zeta=\frac{4\pi}{3},\\
		A_+^{RH}(\zeta)&=&A_-^{RH}(\zeta)\begin{pmatrix}
		1 & 1\\
		-1 & 0\\
		\end{pmatrix},\hspace{0.5cm}\textnormal{arg}\,\zeta=0.
	\end{eqnarray*}
	\item As $\zeta\rightarrow\infty$, our previous estimations imply directly for $\textnormal{arg}\,\zeta\in(0,\frac{2\pi}{3})$ 
	\begin{equation}\label{p3}
		A^{RH}(\zeta) = \zeta^{-\frac{1}{4}\sigma_3}\frac{i}{2}\begin{pmatrix}
	1 & -i\\
	-1 & -i\\
	\end{pmatrix}\left[I+\frac{1}{48\zeta^{\frac{3}{2}}}\begin{pmatrix}-1 & 6i\\
	6i & 1\\
	\end{pmatrix}+\mathcal{O}\big(\zeta^{-\frac{6}{2}}\big)\right]e^{\frac{2}{3}\zeta^{\frac{3}{2}}\sigma_3}.
	\end{equation}
	However, by construction, the latter asymptotics are in fact valid in a full neighborhood of infinity. If $\textnormal{arg}\,\zeta\in(\frac{2\pi}{3},\pi)$ we notice that
	\begin{equation*}
		e^{\frac{2}{3}\zeta^{\frac{3}{2}}\sigma_3}\begin{pmatrix}
		1 & -1 \\
		0 & 1\\
		\end{pmatrix}e^{-\frac{2}{3}\zeta^{\frac{3}{2}}\sigma_3} = \begin{pmatrix}
		1 & -e^{\frac{4}{3}\zeta^{\frac{3}{2}}}\\
		0 & 1\\
		\end{pmatrix}
	\end{equation*}
	but $\textnormal{Re}\big(\zeta^{\frac{3}{2}}\big)<0$ for $\textnormal{arg}\,\zeta\in(\frac{2\pi}{3},\pi)$, hence the latter matrix product approaches the identitiy matrix exponentially fast and we therefore restore \eqref{p3} in the sector under consideration. For the remaining sectors we use \eqref{p12} and \eqref{p13} to deduce \eqref{p3}.
\end{itemize}
The model function $A^{RH}(\zeta)$ will now be used in the construction of the approximate parametrix to the original $S$-RHP near $\lambda=a$. Set
\begin{equation}\label{p4}
	U(\lambda) = B_{r_1}(\lambda)A^{RH}\big(\zeta(\lambda)\big)e^{-\frac{2}{3}\zeta^{\frac{3}{2}}(\lambda)\sigma_3},\hspace{0.5cm}|\lambda-a|<r,
\end{equation}
where $\zeta=\zeta(\lambda)$ is given in \eqref{change1} and we introduced the matrix multiplier
\begin{equation*}
	B_{r_1}(\lambda) = M(\lambda)\begin{pmatrix}
	-i & i\\
	1 & 1\\
	\end{pmatrix}\delta(\lambda)^{\sigma_3}\left(\zeta(\lambda)\frac{\lambda-1}{\lambda-a}\right)^{\frac{1}{4}\sigma_3},\hspace{0.5cm}\delta(\lambda)=\left(\frac{\lambda-a}{\lambda-1}\right)^{\frac{1}{4}}\rightarrow 1,\hspace{0.5cm}\lambda\rightarrow\infty,
\end{equation*}
which involves the model function $M(\lambda)$ given in \eqref{Mdef}. Notice first, that $B_{r_1}(\lambda)$ is analytic in a full neighborhood of $\lambda=a$: For $\lambda\in(a,a+r)$
\begin{equation*}
	\left(B_{r_1}(\lambda)\right)_+ = M_-(\lambda)\begin{pmatrix}
	0 & 1\\
	-1 & 0\\
	\end{pmatrix}\begin{pmatrix}
	-i & i\\
	1 & 1\\
	\end{pmatrix}\delta_-(\lambda)^{\sigma_3}\begin{pmatrix}
	-i & 0\\
	0 & i\\
	\end{pmatrix}\left(\zeta(\lambda)\frac{\lambda-1}{\lambda-a}\right)^{\frac{1}{4}\sigma_3}_- = \left(B_{r_1}(\lambda)\right)_-,
\end{equation*}
hence $B_{r_1}(\lambda)$ is analytic across $(a,a+r)$ with a possible singularity at $\lambda=a$.  But since this singularity is at worst of square root type, it follows that the singularity at $\lambda=a$ is removable, and hence analyticity of $B_{r_1}(\lambda)$ in the whole disk around $\lambda=a$ follows. Alternatively, this follows from \eqref{MMhat}. Secondly, the parametrix $U(\lambda)$ has jumps along the curves as depicted in Figure \ref{figure4}. We assume that the contour in the $S$-RHP coincides with $\Gamma_{\zeta}$ in $\mathcal{U}$. This matching is achieved by a standard deformation argument.\smallskip

The role of the left multiplier $B_{r_1}$ in \eqref{p4} is to ensure a matching-relation between the local model function $U(\lambda)$ and the outer parametrix $M(\lambda)$: Observe that
\begin{equation*}
	B_{r_1}(\lambda)\zeta^{-\frac{1}{4}\sigma_3}(\lambda)\frac{i}{2}\begin{pmatrix}
	1 & -i\\
	-1 & -i\\
	\end{pmatrix} = M(\lambda),
\end{equation*}
which, in turn, implies with \eqref{p3} that
\begin{eqnarray}
	U(\lambda)&=&M(\lambda)\left[I+\frac{1}{48\zeta^{\frac{3}{2}}}\begin{pmatrix}
	-1 & 6i\\
	6i & 1\\
	\end{pmatrix}+\mathcal{O}\big(\zeta^{-\frac{6}{2}}\big)\right]\nonumber\\
	&=&\left[I+\big(Q(\infty)\big)^{-1}Q(\lambda)\left\{\frac{1}{48\zeta^{\frac{3}{2}}}\begin{pmatrix}
	-1 & 6i\\
	6i & 1\\
	\end{pmatrix}+\mathcal{O}\left(\zeta^{-\frac{6}{2}}\right)\right\}\big(Q(\lambda)\big)^{-1}Q(\infty)\right]M(\lambda)\label{p5}
\end{eqnarray}
as $s\rightarrow\infty$ and $0<r_1\leq|\lambda-a|\leq r_2<r=\min\left\{\frac{a}{2},\frac{1}{2}(1-a)\right\}$ (hence $|\zeta|\rightarrow\infty$). Since the function $\zeta(\lambda)$ is of order $\mathcal{O}\big(s^{\frac{2}{3}}\big)$ on the latter annulus and $Q(\lambda)$ is bounded in $s$, equation \eqref{p5} yields 
\begin{equation*}
	U(\lambda)=\big(I+o(1)\big)M(\lambda),\hspace{0.5cm}s\rightarrow\infty,\ 0<r_1\leq|\lambda-a|\leq r_2<\frac{a}{2},
\end{equation*}
which is very important for our subsequent steps.
%------------------------------------------------------------------------------------------------------------------------------------------------------------------------------------------------------------------------------------------------------------------------------------------------------------
\subsection{Approximate parametrix near the branch point $\lambda=-a$}

We again construct an approximate parametrix in a small neigbhorhood $\mathcal{V}$ of radius less than $r=\min\left\{\frac{a}{2},\frac{1}{2}(1-a)\right\}$ as shown in Figure \ref{figure6}. First note that by Proposition \ref{prop2}
\begin{equation*}
	\Pi(\lambda) = -2\int_{-\lambda}^1\sqrt{\frac{\mu^2-a^2}{1-\mu^2}}\,\d\mu = -2\kappa +ic_0\left(\lambda+a\right)^{\frac{3}{2}}+\mathcal{O}\left((\lambda+a)^{\frac{5}{2}}\right)
\end{equation*}
for $\lambda\in\mathcal{V}\cap(-1,-a)$, where $c_0$ is given in \eqref{app:1} and the function $(z+a)^{\frac{3}{2}}$ is defined for $z\in\mathbb{C}\backslash(-\infty,-a]$ with its branch fixed by the condition
\begin{equation*}
	(z+a)^{\frac{3}{2}}>0\hspace{0.5cm}\textnormal{if}\ z>-a.
\end{equation*}
\begin{figure}[tbh]
\begin{center}
\includegraphics[width=0.5\textwidth]{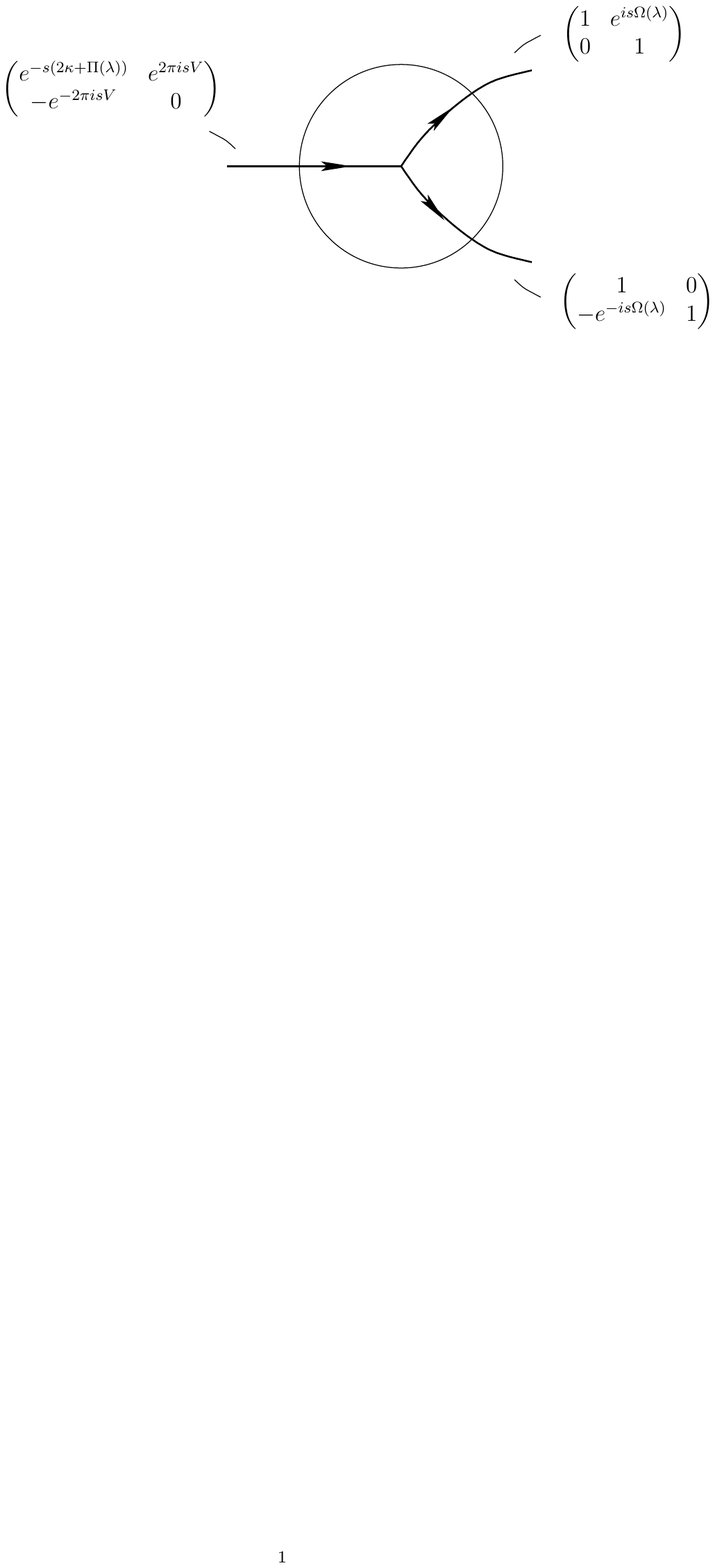}
\end{center}
\caption{A neighborhood $\mathcal{V}$ of $\lambda=-a$ with approximate jumps}
\label{figure6}
\end{figure}
%\begin{figure}[tbh]
%\begin{center}
%\resizebox{0.35\textwidth}{!}{\input{maneighborhood.pdf_t}}
%\caption{A neighborhood $\mathcal{V}$ of $\lambda=-a$ with approximate jumps}
%\label{figure6}
%\end{center}
%\end{figure}

Furthermore,
\begin{equation*}
	i\Omega(\lambda) = -2i\int_{\lambda}^a\sqrt{\frac{a^2-\mu^2}{1-\mu^2}}\,\d\mu = 2\pi i V+ic_0(\lambda+a)^{\frac{3}{2}}+\mathcal{O}\left((\lambda+a)^{\frac{5}{2}}\right),\hspace{0.5cm}\lambda\in\mathcal{V}\cap \gamma^+
\end{equation*}
and
\begin{equation*}
	-i\Omega(\lambda) = -2\pi iV-ic_0\left(\lambda+a\right)^{\frac{3}{2}}+\mathcal{O}\left((\lambda+a)^{\frac{5}{2}}\right),\hspace{0.5cm}\lambda\in\mathcal{V}\cap\gamma^-.
\end{equation*}
These behaviors suggest to consider the following change of variable
\begin{eqnarray}\label{p6}
	\zeta(\lambda) &=& \left(\frac{3s}{4}\right)^{\frac{2}{3}}\left(2\int_{-a}^{\lambda}\sqrt{\frac{\mu^2-a^2}{\mu^2-1}}\,\d\mu\right)^{\frac{2}{3}}\\
	&=&\left(s\sqrt{\frac{2a}{1-a^2}}\,\right)^{\frac{2}{3}}(\lambda+a)\left(1-\frac{1+3a^2}{10a(1-a^2)}(\lambda+a)+\mathcal{O}\big((\lambda+a)^2\big)\right),\hspace{0.5cm}|\lambda+a|<r\nonumber
\end{eqnarray}
which allows us to factorize the jump matrices depicted in Figure \ref{figure6}:
\begin{eqnarray}
	\begin{pmatrix}
	1 & e^{is\Omega(\lambda)}\\
	0 & 1\\
	\end{pmatrix} &=&e^{\frac{2}{3}i\zeta^{\frac{3}{2}}(\lambda)\sigma_3}e^{i\pi sV\sigma_3}\begin{pmatrix}
	1 & 1\\
	0 & 1\\
	\end{pmatrix}e^{-i\pi sV\sigma_3}e^{-\frac{2}{3}i\zeta^{\frac{3}{2}}(\lambda)\sigma_3},\hspace{0.2cm}\lambda\in\mathcal{V}\cap\gamma^+\label{fac1}\\
	\begin{pmatrix}
	1 & 0\\
	-e^{-is\Omega(\lambda)} & 1\\
	\end{pmatrix}&=&e^{\frac{2}{3}i\zeta^{\frac{3}{2}}(\lambda)\sigma_3}e^{i\pi sV\sigma_3}\begin{pmatrix}
	1 & 0\\
	-1 & 1\\
	\end{pmatrix}e^{-i\pi sV\sigma_3}e^{-\frac{2}{3}i\zeta^{\frac{3}{2}}(\lambda)\sigma_3},\ \lambda\in\mathcal{V}\cap\gamma^-\ \ \ \label{fac2}\\
	\begin{pmatrix}
	e^{-s(2\kappa+\Pi(\lambda))}& e^{2\pi isV}\\
	-e^{-2\pi isV} & 0\\
	\end{pmatrix}&=&e^{\frac{2}{3}i\zeta_-^{\frac{3}{2}}(\lambda)\sigma_3}e^{i\pi sV\sigma_3}\begin{pmatrix}
	1 & 1\\
	-1 & 0\\
	\end{pmatrix}e^{-i\pi sV\sigma_3}e^{-\frac{2}{3}i\zeta_+^{\frac{3}{2}}(\lambda)\sigma_3},\label{fac3}
\end{eqnarray} 
where we choose $\lambda\in\mathcal{V}\cap(-1,-a)$ in the latter identity. Similarly to \eqref{p1}, define for $\zeta\in\mathbb{C}$
\begin{equation}\label{p7}
	\tilde{A}_0(\zeta) = i\sqrt{\pi}e^{i\pi sV\sigma_3}e^{-i\frac{\pi}{3}\sigma_3}\begin{pmatrix}
	e^{-i\frac{\pi}{4}}e^{-2\pi isV} & 0\\
	0 & e^{-i\frac{5\pi}{12}}\\
	\end{pmatrix}\begin{pmatrix}
	e^{i\pi}\textnormal{Ai}'\left(e^{i\pi}\zeta\right) & -e^{i\frac{\pi}{3}}\textnormal{Ai}'\left(e^{i\frac{\pi}{3}}\zeta\right)\smallskip\\
	-\textnormal{Ai}\left(e^{i\pi}\zeta\right) & \textnormal{Ai}\left(e^{i\frac{\pi}{3}}\zeta\right)\\
	\end{pmatrix}e^{i\frac{\pi}{3}\sigma_3}e^{-i\pi sV\sigma_3}
\end{equation}
which implies that
\begin{equation*}
	\tilde{A}_0(\zeta) = \zeta^{\frac{1}{4}\sigma_3}\frac{i}{2}e^{-2\pi isV}\begin{pmatrix}
	1 & -ie^{2\pi isV}\\
	-1 & -ie^{2\pi isV}\\
	\end{pmatrix}\left[I+\frac{i}{48\zeta^{\frac{3}{2}}}\begin{pmatrix}
	1 & 6i e^{2\pi isV}\\
	6i e^{-2\pi isV} & -1\\
	\end{pmatrix}+\mathcal{O}\left(\zeta^{-\frac{6}{2}}\right)\right]e^{\frac{2}{3}i\zeta^{\frac{3}{2}}\sigma_3}
\end{equation*}
as $\zeta\rightarrow\infty$ in the sector $-\pi<\textnormal{arg}\,\zeta<0$. In the other sectors, similar formulae hold: 
\begin{eqnarray*}
	\tilde{A}_0(\zeta) &=&\zeta^{\frac{1}{4}\sigma_3}\frac{i}{2}e^{-2\pi isV}\begin{pmatrix}
	1 & -ie^{2\pi isV}\\
	-1 & -ie^{2\pi isV}\\
	\end{pmatrix}\left[I+\frac{i}{48\zeta^{\frac{3}{2}}}\begin{pmatrix}
	1 & 6i e^{2\pi isV}\\
	6i e^{-2\pi isV} & -1\\
	\end{pmatrix}+\mathcal{O}\left(\zeta^{-\frac{6}{2}}\right)\right]\\
	&&\times e^{\frac{2}{3}i\zeta^{\frac{3}{2}}\sigma_3} \begin{pmatrix}
	1 & 0\\
	e^{-2\pi isV} & 1\\
	\end{pmatrix},\hspace{0.5cm}\zeta\rightarrow\infty,\ \ 0<\textnormal{arg}\,\zeta<\frac{2\pi}{3}
\end{eqnarray*}
and 
\begin{eqnarray*}
	\tilde{A}_0(\zeta) &=&\zeta^{\frac{1}{4}\sigma_3}\frac{i}{2}e^{-2\pi isV}\begin{pmatrix}
	1 & -ie^{2\pi isV}\\
	-1 & -ie^{2\pi isV}\\
	\end{pmatrix}\left[I+\frac{i}{48\zeta^{\frac{3}{2}}}\begin{pmatrix}
	1 & 6i e^{2\pi isV}\\
	6i e^{-2\pi isV} & -1\\
	\end{pmatrix}+\mathcal{O}\left(\zeta^{-\frac{6}{2}}\right)\right]\\
	&&\times e^{\frac{2}{3}i\zeta^{\frac{3}{2}}\sigma_3}\begin{pmatrix}
	1 & -e^{2\pi isV}\\
	0 & 1\\
	\end{pmatrix} \begin{pmatrix}
	1 & 0\\
	e^{-2\pi isV} & 1\\
	\end{pmatrix},\hspace{0.5cm}\zeta\rightarrow\infty,\ \ \frac{2\pi}{3}<\textnormal{arg}\,\zeta<\pi.
\end{eqnarray*}
These properties suggest the definition (compare \eqref{p2}):
\begin{equation}\label{p8}
	\tilde{A}^{RH}(\zeta) = \left\{
                                   \begin{array}{ll}
                                     \tilde{A}_0(\zeta), & \hbox{$\textnormal{arg}\,\zeta\in(-\pi,-\frac{\pi}{3})$,}\smallskip \\
                                     \tilde{A}_0(\zeta)\begin{pmatrix}
                                     1 & 0\\
                                     -e^{-2\pi isV} & 1\\
                                     \end{pmatrix}, & \hbox{$\textnormal{arg}\,\zeta\in(-\frac{\pi}{3},\frac{\pi}{3})$,}\smallskip \\
                                     \tilde{A}_0(\zeta)\begin{pmatrix}
                                     1 & 0\\
                                     -e^{-2\pi isV} & 1\\
                                     \end{pmatrix}\begin{pmatrix}
                                     1 & e^{2\pi isV}\\
                                     0 & 1\\
                                     \end{pmatrix}, & \hbox{$\textnormal{arg}\,\zeta\in(\frac{\pi}{3},\pi)$,}
                                   \end{array}
                                 \right.
\end{equation}
The model function $\tilde{A}^{RH}(\zeta)$ satisfies the following RHP:
\begin{itemize}
	\item $\tilde{A}^{RH}(\zeta)$ is analytic for $\zeta\in\mathbb{C}\backslash\tilde{\Gamma}_{\zeta}$, where $\tilde{\Gamma}_{\zeta}$ is depicted in Figure \ref{figure7}
	\item The function $\tilde{A}^{RH}(\zeta)$ satifies the following jump relations
	\begin{eqnarray*}
		\tilde{A}^{RH}_+(\zeta)&=&\tilde{A}^{RH}_-(\zeta)\begin{pmatrix}
		1 & 0\\
		-e^{-2\pi isV} & 1\\
		\end{pmatrix},\hspace{0.5cm}\textnormal{arg}\,\zeta=-\frac{\pi}{3}\\
		\tilde{A}^{RH}_+(\zeta)&=&\tilde{A}^{RH}_-(\zeta)\begin{pmatrix}
		1 & e^{2\pi isV}\\
		0 & 1\\
		\end{pmatrix},\hspace{0.5cm}\textnormal{arg}\,\zeta=\frac{\pi}{3}\\
		\tilde{A}^{RH}_+(\zeta)&=&\tilde{A}^{RH}_-(\zeta)\begin{pmatrix}
		1 & e^{2\pi isV}\\
		-e^{-2\pi isV} & 0\\
		\end{pmatrix},\hspace{0.5cm}\textnormal{arg}\,\zeta=\pi\\
	\end{eqnarray*}
	\item As $\zeta\rightarrow\infty$, a similar argument as in the construction of \eqref{p2} leads us to the following asymptotics
	\begin{eqnarray*}
		\tilde{A}^{RH}(\zeta) &=& \zeta^{\frac{1}{4}\sigma_3}\frac{i}{2}e^{-2\pi isV}\begin{pmatrix}
	1 & -ie^{2\pi isV}\\
	-1 & -ie^{2\pi isV}\\
	\end{pmatrix}\bigg[I+\frac{i}{48\zeta^{\frac{3}{2}}}e^{i\pi sV\sigma_3}\begin{pmatrix}
	1 & 6i\\
	6i & -1\\
	\end{pmatrix}e^{-i\pi sV\sigma_3}\\
	&&+\mathcal{O}\left(\zeta^{-\frac{6}{2}}\right)\bigg]e^{\frac{2}{3}i\zeta^{\frac{3}{2}}\sigma_3}
	\end{eqnarray*}
\end{itemize}
\begin{figure}[tbh]
\begin{center}
\includegraphics[width=0.33\textwidth]{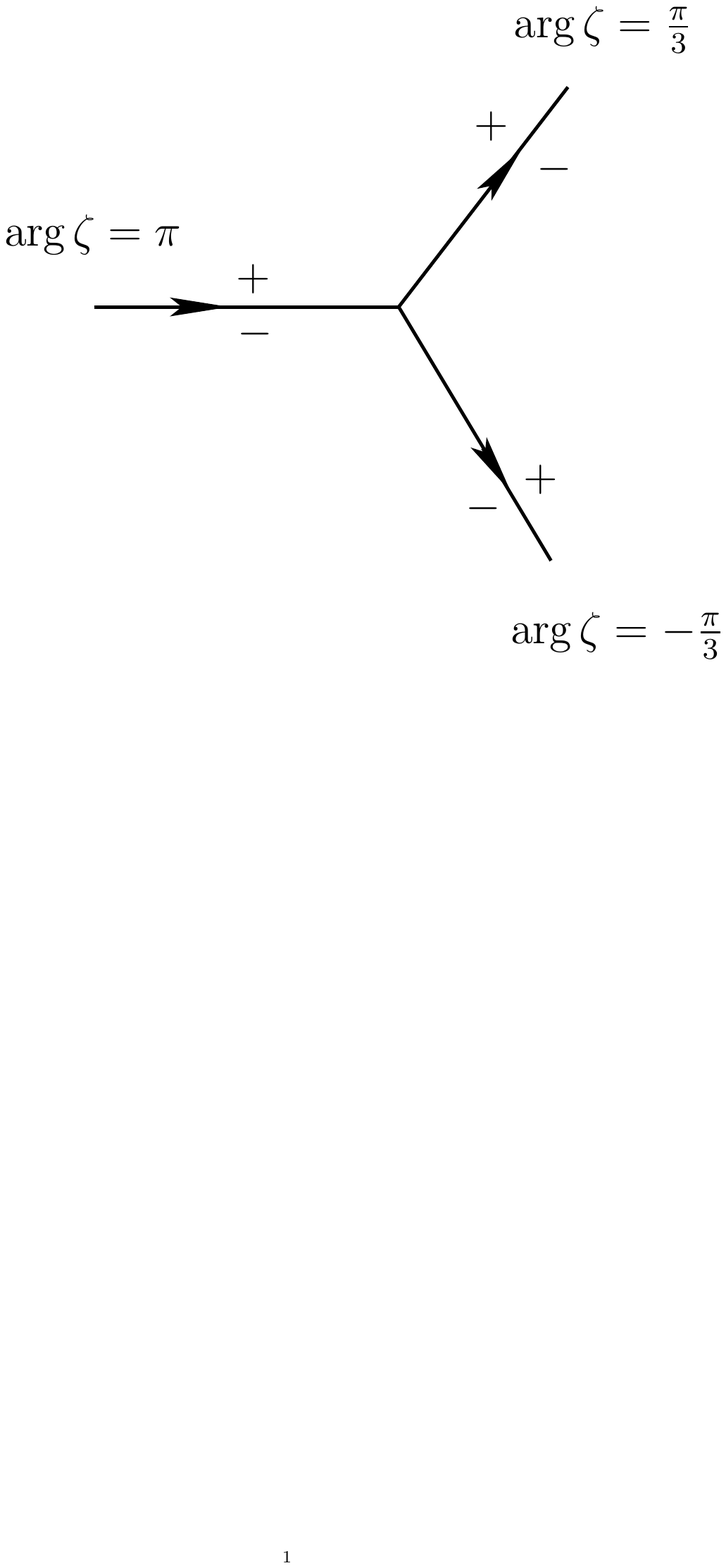}
\end{center}
\caption{Jump contour $\tilde{\Gamma}_{\zeta}$ for the model function $\tilde{A}^{RH}(\zeta)$}
\label{figure7}
\end{figure}
%\begin{figure}[tbh]
%\begin{center}
%\resizebox{0.2\textwidth}{!}{\input{maAiry.pdf_t}}
%\caption{Jump contour $\tilde{\Gamma}_{\zeta}$ for the model function $\tilde{A}^{RH}(\zeta)$}
%\label{figure7}
%\end{center}
%\end{figure}

We now define the approximate parametrix to the original $S$-RHP near $\lambda=-a$. Set
\begin{equation}\label{p9}
	V(\lambda)=B_{l_1}(\lambda) \tilde{A}^{RH}\big(\zeta(\lambda)\big)e^{-\frac{2}{3}i\zeta^{\frac{3}{2}}(\lambda)\sigma_3},\hspace{0.5cm}|\lambda+a|<r
\end{equation}
where $\zeta=\zeta(\lambda)$ is given in \eqref{p6} and we use the multiplier
\begin{equation*}
	B_{l_1}(\lambda) = M(\lambda)\begin{pmatrix}
	-ie^{2\pi isV} & ie^{2\pi isV}\\
	1 & 1\\
	\end{pmatrix}\tilde{\delta}(\lambda)^{-\sigma_3}\left(\zeta(\lambda)\frac{\lambda+1}{\lambda+a}\right)^{-\frac{1}{4}\sigma_3},\hspace{0.25cm}\tilde{\delta}(\lambda) = \left(\frac{\lambda+a}{\lambda+1}\right)^{\frac{1}{4}}\rightarrow 1,\ \ \lambda\rightarrow\infty.
\end{equation*}
Here $B_{l_1}(\lambda)$ is analytic in a full neighborhood of $\lambda=-a$: Indeed for $\lambda\in(-a-r,-a)$
\begin{eqnarray*}
	\left(B_{l_1}(\lambda)\right)_+ &=& M_-(\lambda)\begin{pmatrix}
	0 & e^{2\pi isV}\\
	-e^{-2\pi isV} & 0\\
	\end{pmatrix}\begin{pmatrix}
	-i e^{2\pi isV} & ie^{2\pi isV}\\
	1 & 1\\
	\end{pmatrix}\begin{pmatrix}
	-i & 0\\
	0 & i\\
	\end{pmatrix}\\
	&&\times\tilde{\delta}_-(\lambda)^{-\sigma_3}\left(\zeta(\lambda)\frac{\lambda+1}{\lambda+a}\right)^{-\frac{1}{4}\sigma_3} = \left(B_{l_1}(\lambda)\right)_-
\end{eqnarray*}
and since the singularity of $B_{l_1}(\lambda)$ at $\lambda=-a$ is at worst of square root type, it is in fact a removable singularity, hence analyticity follows. Next, the parametrix $V(\lambda)$ has jumps along the curves depcited in Figure \ref{figure7}, and we can again locally match these jump contours with the ones in the original $S$-RHP near $\lambda=-a$. The jump matrices coincide with those of the $S$-RHP for $\gamma=1$ in $\mathcal{V}$.\smallskip

The exact form of the left multiplier $B_{l_1}(\lambda)$ follows here also from the asymptotical matching relation between the model functions $V(\lambda)$ and $M(\lambda)$. Since
\begin{equation*}
	B_{l_1}(\lambda)\zeta^{\frac{1}{4}\sigma_3}(\lambda)\frac{i}{2}e^{-2\pi isV}\begin{pmatrix}
	1 & -ie^{2\pi isV}\\
	-1 & -ie^{2\pi isV}\\
	\end{pmatrix}=M(\lambda)
\end{equation*}
we obtain
\begin{align}
	&V(\lambda)=M(\lambda)\left[I+\frac{i}{48\zeta^{\frac{3}{2}}}e^{i\pi sV\sigma_3}\begin{pmatrix}
	1 & 6i\\
	6i & -1\\
	\end{pmatrix}e^{-i\pi sV\sigma_3}+\mathcal{O}\left(\zeta^{-\frac{6}{2}}\right)\right]\nonumber\\
	&=\bigg[I+\big(Q(\infty)\big)^{-1}Q(\lambda)\left\{\frac{i}{48\zeta^{\frac{3}{2}}}e^{i\pi sV\sigma_3}\begin{pmatrix}
	1 & 6i\\
	6i & -1\\
	\end{pmatrix}e^{-i\pi sV\sigma_3}+\mathcal{O}\left(\zeta^{-\frac{6}{2}}\right)\right\}\big(Q(\lambda)\big)^{-1}Q(\infty)\bigg]M(\lambda)\label{p10}
\end{align}
as $s\rightarrow\infty$ and $0<r_1\leq|\lambda+a|\leq r_2<r=\min\left\{\frac{a}{2},\frac{1}{2}(1-a)\right\}$ (i.e. $|\zeta|\rightarrow\infty$). Since $\zeta(\lambda)$ is of order $\mathcal{O}\big(s^{\frac{2}{3}}\big)$ on the latter annulus and $Q(\lambda)$ is bounded in $s$, equation \eqref{p10} implies
\begin{equation*}
	V(\lambda) = \big(I+o(1)\big)M(\lambda),\hspace{0.5cm}s\rightarrow\infty,\ \ 0<r_1\leq|\lambda+a|\leq r_2<r.
\end{equation*}
%At this point we move on to the remaining endpoints $\lambda=\pm 1$.
%-----------------------------------------------------------------------------------------------------------------------------------------------------------------------------------------------------------------------------------------------------------------------
\subsection{Exact parametrix at the branch point $\lambda=1$}\label{rendp}
We will manage to construct exact model functions near the endpoints $\lambda=\pm 1$, however also here, compared to the ``standard'' cases as in \cite{DIZ}, novel features in the parametrix analysis appear. We will use Bessel functions as in \cite{DIZ}, but they will have to be combined with certain logarithmic ``corrections'':\smallskip

Fix a neighborhood $\hat{\mathcal{U}}$ of radius less than $\frac{1}{2}(1-a)$ as shown in Figure \ref{figure8}. We want to construct a model function which has the depicted jump condition for the $S$-RHP and which matches $M(\lambda)$ to leading order as $s\rightarrow\infty,\gamma\uparrow 1$ on the boundary $\partial\hat{\mathcal{U}}$. To this end notice that
\begin{equation*}
	\Pi(\lambda) = -2\int_{\lambda}^1\sqrt{\frac{\mu^2-a^2}{1-\mu^2}}\,\d\mu = -d_0\left|1-\lambda\right|^{\frac{1}{2}}+\mathcal{O}\left(|\lambda-1|^{\frac{3}{2}}\right),\hspace{0.5cm}d_0 = \sqrt{8(1-a^2)}>0
\end{equation*}
for $\lambda\rightarrow 1,\lambda\in\hat{\mathcal{U}}\cap(a,1)$.% where the function $\sqrt{z-1}$ is defined for $z\in\mathbb{C}\backslash(-\infty,1]$ with its branch fixed by the condition
%\begin{equation*}
%	\left(z-1\right)^{\frac{1}{2}}>0,\hspace{0.5cm}\textnormal{if}\ \ \ z>1.
%\end{equation*}
\begin{figure}[tbh]
\begin{center}
\includegraphics[width=0.45\textwidth]{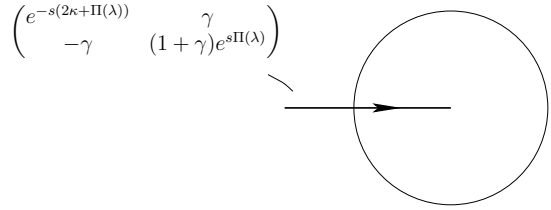}
\end{center}
\caption{A neighborhood $\hat{\mathcal{U}}$ of $\lambda=1$}
\label{figure8}
\end{figure}
%\begin{figure}[tbh]
%\begin{center}
%\resizebox{0.4\textwidth}{!}{\input{1neighborhood.pdf_t}}
%\caption{A neighborhood $\hat{\mathcal{U}}$ of $\lambda=1$}
%\label{figure8}
%\end{center}
%\end{figure}

This observation combined with the fact that % singular endpoint behavior \eqref{Xsing}
\begin{equation*}
	S(\lambda) = \mathcal{O}\big(\left|\ln|\lambda-1|\right|),\hspace{0.5cm}\lambda\rightarrow 1
\end{equation*}
suggests to use the Bessel functions $H_{0}^{(1)}(\zeta)$ and $H_{0}^{(2)}(\zeta)$ for our construction. This idea can be justified rigorously as follows. Recall that the stated Hankel functions are unique linearly independent solutions to Bessel's equation
\begin{equation*}
	zw''+w'+zw=0,
\end{equation*}
satisfying the following asymptotics as $\zeta\rightarrow\infty$ in the sector $-\pi<\textnormal{arg}\,\zeta<\pi$ (cf. \cite{BE})
\begin{eqnarray*}
	H_{0}^{(1)}(\zeta)&\sim&\sqrt{\frac{2}{\pi\zeta}}e^{i(\zeta-\frac{\pi}{4})}\left(1-\frac{i}{8\zeta}-\frac{9}{128\zeta^2}
	+\mathcal{O}\big(\zeta^{-3}\big)\right)\\
	H_{0}^{(2)}(\zeta)&\sim&\sqrt{\frac{2}{\pi\zeta}}e^{-i(\zeta-\frac{\pi}{4})}\left(1+\frac{i}{8\zeta}-\frac{9}{128\zeta^2}
	+\mathcal{O}\big(\zeta^{-3}\big)\right),
\end{eqnarray*}
where $\sqrt{\zeta}$ is taken positive for $\zeta>0$, and defined with $-\pi<\textnormal{arg}\,\zeta<\pi$. We also need the following monodromy relations, valid on the entire universal covering of the punctured plane
\begin{eqnarray*}
	H_{0}^{(1)}\big(\zeta e^{i\pi}\big) &=& -H_{0}^{(2)}(\zeta),\hspace{0.5cm} H_{0}^{(2)}\big(\zeta e^{-i\pi}\big)=-H_{0}^{(1)}(\zeta),\\
	 H_{0}^{(2)}\big(\zeta e^{i\pi}\big)&=& H_{0}^{(1)}(\zeta)+2H_{0}^{(2)}(\zeta),
\end{eqnarray*}
and the following expansions at the origin (compare to \eqref{Xsing})
\begin{equation}\label{Hankelorigin}
	H_0^{(1)}(\zeta) = a_0+a_1\ln\zeta+a_2\zeta^2+a_3\zeta^2\ln\zeta +\mathcal{O}\big(\zeta^4\ln\zeta),\ \zeta\rightarrow 0,
\end{equation}
with coefficients $a_i$ given as
\begin{equation}\label{excoeff}
	a_0=1+\frac{2i\gamma_E}{\pi}-\frac{2i}{\pi}\ln 2,\ \ a_1=\frac{2i}{\pi},\ \ a_2=\frac{i}{2\pi}(1-\gamma_E)-\frac{1}{4}+\frac{i}{2\pi}\ln 2,\ \ a_3=-\frac{i}{2\pi}
\end{equation}
where $\gamma_E$ is Euler's constant. The expansion for $H_0^{(2)}(\zeta)$ is with the replacement $a_i\mapsto \bar{a}_i$, of the coefficient $a_i$ by its complex conjugate, identical to \eqref{Hankelorigin}.\footnote{In more detail, we have $H_0^{(1)}(\zeta) = h_1(\zeta)+h_2(\zeta)\ln \zeta,\zeta\rightarrow 0$ where $h_j(\zeta),j=1,2$ are analytic at $\zeta=0$. A similar statement is true for $H_0^{(2)}(\zeta)$.\label{foot1}} Let
\begin{equation}\label{p15}
	P_{BE}^{RH}(\zeta) = \sqrt{\frac{\pi}{2}}\begin{pmatrix}
	H_{0}^{(1)}(\sqrt{\zeta}) & H_{0}^{(2)}(\sqrt{\zeta})\smallskip\\
	\sqrt{\zeta}\big(H_{0}^{(1)}\big)'(\sqrt{\zeta}) & \sqrt{\zeta}\big(H_{0}^{(2)}\big)'(\sqrt{\zeta})\\
	\end{pmatrix},\hspace{0.2cm}-\pi<\textnormal{arg}\,\zeta\leq\pi,
\end{equation}
on the punctured plane $\zeta\in\mathbb{C}\backslash\{0\}$ where $H'(x)=\frac{\d}{\d x}H(x)$, and observe that
\begin{eqnarray*}
	P_{BE}^{RH}(\zeta) &=&\zeta^{-\frac{1}{4}\sigma_3}\begin{pmatrix}
	1 & 1\\
	i & -i\\
	\end{pmatrix}e^{-i\frac{\pi}{4}\sigma_3}\Bigg[I+\frac{i}{8\sqrt{\zeta}}\begin{pmatrix}
	1 & 2i\\
	2i & -1\\
	\end{pmatrix}+\frac{3}{128\zeta}\begin{pmatrix}
	1 & -4i\\
	4i & 1\\
	\end{pmatrix}\\
	&&+\mathcal{O}\left(\zeta^{-\frac{3}{2}}\right)\Bigg]e^{i\sqrt{\zeta}\sigma_3},\hspace{0.25cm}\zeta\rightarrow\infty,
\end{eqnarray*}
in the sector $-\pi<\textnormal{arg}\,\zeta\leq\pi$. In order to determine the behavior of the model function $P_{BE}^{RH}(\zeta)$ on the negative real axis $\textnormal{arg}\,\zeta=\pi$ (compare Figure \ref{figure8}) we use the aforementioned monodromy relations:
\begin{eqnarray*}
	H_{0}^{(2)}\big(\sqrt{\zeta}_+\big)&=&H_{0}^{(2)}\big(\sqrt{\zeta}_-e^{i\pi}\big) = H_{0}^{(1)}\big(\sqrt{\zeta}_-\big)+2H_{0}^{(2)}\big(\sqrt{\zeta}_-\big),\\
	\big(H_{0}^{(2)}\big)'\big(\sqrt{\zeta}_+\big)&=&e^{-i\pi}\Big(\big(H_{0}^{(1)}\big)'\big(\sqrt{\zeta}_-\big)+2\big(H_{0}^{(2)}\big)'\big(\sqrt{\zeta}_-\big)\Big),
\end{eqnarray*}
and
\begin{eqnarray*}
	H_{0}^{(1)}\big(\sqrt{\zeta}_+\big) &=&H_{0}^{(1)}\big(\sqrt{\zeta}_-e^{i\pi}\big) = -H_{0}^{(2)}\big(\sqrt{\zeta}_-\big),\\
	\big(H_{0}^{(1)}\big)'\big(\sqrt{\zeta}_+\big) &=&\big(H_{0}^{(2)}\big)'\big(\sqrt{\zeta}_-\big).
\end{eqnarray*}
Therefore
\begin{equation*}
	\big(P_{BE}^{RH}(\zeta)\big)_+=\big(P_{BE}^{RH}(\zeta)\big)_-\begin{pmatrix}
	0 & 1\\
	-1 & 2\\
	\end{pmatrix},\hspace{0.5cm}\textnormal{arg}\,\zeta=\pi.
\end{equation*}
%and together $P_{BE}^{RH}(\zeta)$ solves the model RHP depicted in Figure \ref{figure9}.
%\begin{figure}[tbh]
%\begin{center}
%\resizebox{0.3\textwidth}{!}{\input{1para.pdf_t}}
%\caption{A model problem near $\lambda=+1$ which can be solved explicitly using Bessel functions}
%\label{figure9}
%\end{center}
%\end{figure}

Thus $P_{BE}^{RH}(\zeta)$ has the following properties.
\begin{itemize}
	\item $P_{BE}^{RH}(\zeta)$ is analytic for $\zeta\in\mathbb{C}\backslash\left\{\textnormal{arg}\,\zeta=\pi\right\}$
	\item The following jump relation holds on the line $\textnormal{arg}\,\zeta=\pi$ (see Figure \ref{figure9}):
	\begin{equation*}
		\big(P_{BE}^{RH}(\zeta)\big)_+=\big(P_{BE}^{RH}(\zeta)\big)_-\begin{pmatrix}
	0 & 1\\
	-1 & 2\\
	\end{pmatrix}.
	\end{equation*}
	\begin{figure}[tbh]
\begin{center}
\includegraphics[width=0.33\textwidth]{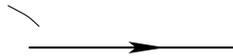}
\end{center}
\caption{A model problem near $\lambda=+1$ which can be solved explicitly using Bessel functions}
\label{figure9}
\end{figure}

	\item As $\zeta\rightarrow 0$ with $-\pi<\textnormal{arg}\,\zeta\leq \pi$
	\begin{equation}\label{Psing}
		P_{BE}^{RH}(\zeta) =\sqrt{\frac{\pi}{2}}\Bigg[\begin{pmatrix}
		a_0 & \bar{a}_0\\
		a_1 & \bar{a}_1\\
		\end{pmatrix}+\ln\zeta\begin{pmatrix}
		\frac{a_1}{2} & \frac{\bar{a}_1}{2}\\
		0 & 0\\
		\end{pmatrix}
		+\mathcal{O}\big(\zeta\left|\ln\left|\zeta\right|\right|\big)\Bigg]%+\zeta\begin{pmatrix}
		%a_2 & \bar{a}_2 \\
		%2a_2+a_3 & 2\bar{a}_2+\bar{a}_3\\
		%\end{pmatrix}+\mathcal{O}\big(\zeta\ln\zeta\big)\Bigg]
	\end{equation}
	\item As established above we have for $\zeta\rightarrow\infty$ in a whole neighborhood of infinity
	\begin{eqnarray*}
	P_{BE}^{RH}(\zeta) &=&\zeta^{-\frac{1}{4}\sigma_3}\begin{pmatrix}
	1 & 1\\
	i & -i\\
	\end{pmatrix}e^{-i\frac{\pi}{4}\sigma_3}\Bigg[I+\frac{i}{8\sqrt{\zeta}}\begin{pmatrix}
	1 & 2i\\
	2i & -1\\
	\end{pmatrix}+\frac{3}{128\zeta}\begin{pmatrix}
	1 & -4i\\
	4i & 1\\
	\end{pmatrix}\\
	&&+\mathcal{O}\left(\zeta^{-\frac{3}{2}}\right)\Bigg]e^{i\sqrt{\zeta}\sigma_3},\hspace{0.25cm}\zeta\rightarrow\infty.
	\end{eqnarray*}
\end{itemize}
Let us now move ahead in our construction of the exact parametrix to the solution of the original $S$-RHP near $\lambda=1$. We proceed in two steps. First introduce the change of variable
\begin{equation}\label{p16}
	\zeta(\lambda) = \left(s\int_1^{\lambda}\sqrt{\frac{\mu^2-a^2}{\mu^2-1}}\,\d\mu\right)^2 = s^2g^2(\lambda),\hspace{0.5cm}|\lambda-1|<r,\ \ -\pi<\textnormal{arg}\,\zeta<\pi,
\end{equation}
i.e.,
\begin{equation*}
	\sqrt{\zeta(\lambda)} = sg(\lambda).
\end{equation*}
This change of coordinates $\lambda\mapsto\zeta$ is locally conformal since
\begin{equation*}
	\zeta(\lambda) = 2s^2(1-a^2)(\lambda-1)\left(1+\frac{3+a^2}{6(1-a^2)}(\lambda-1)+\mathcal{O}\big((\lambda-1)^2\big)\right),\hspace{0.5cm}|\lambda-1|<r,
\end{equation*}
and it enables us to define a parametrix $E(\lambda)$ near $\lambda=+1$ as follows:
\begin{equation}\label{p17}
	E(\lambda) = B_{r_2}(\lambda)P_{BE}^{RH}\big(\zeta(\lambda)\big)\bigg[I-\frac{e^{-2s\kappa}}{2\pi i}\begin{pmatrix}
	-1 & 1\\
	-1 & 1\\
	\end{pmatrix}\ln\left(\frac{\lambda-1}{\lambda+1}\right)\bigg]e^{-i\sqrt{\zeta(\lambda)}\sigma_3},\hspace{0.5cm}|\lambda-1|<r,
\end{equation}
with $\zeta(\lambda)$ as in \eqref{p16} and the multiplier $B_{r_2}(\lambda)$ given by
\begin{equation}\label{mulright}
	B_{r_2}(\lambda) = M(\lambda)\frac{1}{2}e^{i\frac{\pi}{4}\sigma_3}\begin{pmatrix}
	1 & -i\\
	1 & i\\
	\end{pmatrix}\delta(\lambda)^{-\sigma_3}\left(\zeta(\lambda)\frac{\lambda-a}{\lambda-1}\right)^{\frac{1}{4}\sigma_3},
\end{equation}
where the function $\delta(\lambda)$ was introduced above as
\begin{equation*}
	\delta(\lambda) = \left(\frac{\lambda-a}{\lambda-1}\right)^{\frac{1}{4}}\rightarrow 1,\hspace{0.5cm}\lambda\rightarrow\infty.
\end{equation*} 
By construction, $B_{r_2}(\lambda)$ has no jump along the line segment $(1-r,1)$, indeed
\begin{eqnarray*}
	\big(B_{r_2}(\lambda)\big)_+ &=& M_-(\lambda)\begin{pmatrix}
	0 & 1\\
	-1 & 0\\
	\end{pmatrix}\frac{1}{2}e^{i\frac{\pi}{4}\sigma_3}\begin{pmatrix}
	1 & -i\\
	1 & i\\
	\end{pmatrix}\delta_-(\lambda)^{-\sigma_3}\begin{pmatrix}
	i & 0\\
	0 & -i\\
	\end{pmatrix}\\
	&&\times\left(\zeta(\lambda)\frac{\lambda-a}{\lambda-1}\right)^{\frac{1}{4}\sigma_3}_-= \big(B_{r_2}(\lambda)\big)_-,\hspace{0.5cm}\lambda\in(1-r,1).
\end{eqnarray*}
Hence we are left with a possible singularity at $\lambda=1$, which again is at worst of square root type and therefore is removable. This implies the analyticity of $B_{r_2}(\lambda)$ in a full neighborhood of $\lambda=1$. Next, the parametrix $E(\lambda)$ has a jump along the curve depicted in Figure \ref{figure8}, which is described by the same matrix as in the original $S$-RHP, indeed
\begin{eqnarray*}
	E_+(\lambda) &=& B_{r_2}(\lambda)\Big(P_{BE}^{RH}\big(\zeta(\lambda)\big)\Big)_-\begin{pmatrix}
	0 & 1\\
	-1 & 2\\
	\end{pmatrix}\bigg[I-\frac{e^{-2s\kappa}}{2\pi i}\begin{pmatrix}
	-1 & 1\\
	-1 & 1\\
	\end{pmatrix}\ln\left(\frac{\lambda-1}{\lambda+1}\right)\bigg]_-\\
	&&\times\begin{pmatrix}
	1-(\gamma-1) & \gamma-1\\
	-(\gamma-1) & 1+(\gamma-1)\\
	\end{pmatrix}e^{-i\sqrt{\zeta(\lambda)}_+\sigma_3}\\
	&=&E_-(\lambda)e^{i\sqrt{\zeta(\lambda)}_-\sigma_3}\begin{pmatrix}
	1-\gamma & \gamma\\
	-\gamma & \gamma+1\\
	\end{pmatrix}e^{-i\sqrt{\zeta(\lambda)}_+\sigma_3}\\
	&=&E_-(\lambda)\begin{pmatrix}
	e^{-s(2\kappa+\Pi(\lambda))} & \gamma\\
	-\gamma & (1+\gamma)e^{s\Pi(\lambda)}\\
	\end{pmatrix},\hspace{0.5cm}\textnormal{arg}\,(\lambda-1)=\pi.
\end{eqnarray*}
Furthermore, the behavior of $E(\lambda)$ at the endpoint $\lambda=+1$ matches that of $S(\lambda)$: First, compare \eqref{Psing} and the footnote \ref{foot1},
\begin{equation}\label{add1}
	P_{BE}^{RH}\big(\zeta(\lambda)\big) = \check{P}_{BE}^{RH}\big(\zeta(\lambda)\big)\bigg[I+\frac{1}{2\pi i}\begin{pmatrix}
	-1 & 1\\
	-1 & 1\\
	\end{pmatrix}\ln\left(\frac{\lambda-1}{\lambda+1}\right)\bigg],\hspace{0.5cm}\lambda\rightarrow 1
\end{equation} 
with a locally analytic function $\check{P}_{BE}^{RH}\big(\zeta(\lambda)\big)$. Hence from the analyticity of $B_{r_2}(\lambda)$ at $\lambda=1$ and \eqref{p16}, we obtain
\begin{equation}\label{add2}
	E(\lambda)e^{isg(\lambda)\sigma_3} =B_{r_2}(\lambda)\check{P}_{BE}^{RH}\big(\zeta(\lambda)\big)\bigg[I+\frac{\gamma}{2\pi i}\begin{pmatrix}
	-1 & 1\\
	-1 & 1\\
	\end{pmatrix}\ln\left(\frac{\lambda-1}{\lambda+1}\right)\bigg],\hspace{0.5cm}\lambda\rightarrow 1
\end{equation}
which agrees with \eqref{Xsing}. Thus the ratio $N_{r_1}(\lambda)$ of $S(\lambda)$ with $E(\lambda)$ is locally analytic, i.e.,
\begin{equation}\label{mana1}
	S(\lambda) = N_{r_1}(\lambda)E(\lambda),\hspace{0.5cm}|\lambda-1|<r<\frac{1}{2}(1-a).
\end{equation}
Let us now derive the matching relation between the model functions $E(\lambda)$ and $M(\lambda)$. First, note that
\begin{equation*}
	B_{r_2}(\lambda)\zeta^{-\frac{1}{4}\sigma_3}\begin{pmatrix}
	1 & 1\\
	i & -i\\
	\end{pmatrix}e^{-i\frac{\pi}{4}\sigma_3} = M(\lambda)
\end{equation*}
and therefore
\begin{eqnarray}
	E(\lambda)&=&M(\lambda)\Bigg[I+\frac{i}{8\sqrt{\zeta}}\begin{pmatrix}
	1 & 2i\\
	2i & -1\\
	\end{pmatrix}+\frac{3}{128\zeta}\begin{pmatrix}
	1 & -4i\\
	4i & 1\\
	\end{pmatrix}+\mathcal{O}\left(\zeta^{-\frac{3}{2}}\right)\Bigg]\nonumber\\
	&&\times\bigg[I-\frac{e^{-2s\kappa}}{2\pi i}\begin{pmatrix}
	-1 & e^{2isg(\lambda)}\\
	-e^{-2isg(\lambda)} & 1\\
	\end{pmatrix}\ln\left(\frac{\lambda-1}{\lambda+1}\right)\bigg]
	\label{p18}
\end{eqnarray}
as $s\rightarrow\infty,\gamma\uparrow 1$ for $\kappa\in[\delta,1-\delta],\delta>0$ and $0<r_1\leq |\lambda-1|\leq r_2<\frac{1}{2}(1-a)$ (hence $|\zeta|\rightarrow\infty$). For values of $\lambda$ chosen from a circle $|\lambda-1|=r$, $r_1<r<r_2$, we have
\begin{equation*}
	\Big|e^{-2s\kappa\pm 2isg(\lambda)}\Big| \leq e^{-2s\kappa}e^{\mp 2s\sqrt{r}\sin(\frac{1}{2}\textnormal{arg}(\lambda-1))c}
\end{equation*}
with some constant $c>0$ whose values is not important. Thus, for instance, the choice of 
\begin{equation*}
	\sqrt{r} < \frac{\kappa}{2c}
\end{equation*}
ensures that the third factor in equation \eqref{p18} is exponentially close to $I$ for $\kappa\in[\delta,1-\delta], |\lambda-1|=r$. Since the function $\zeta(\lambda)$ grows quadratically in $s$ on the latter circle and $Q(\lambda)$ is bounded there, equation \eqref{p18} yields in fact the desired matching relation between the model functions $E(\lambda)$ and $M(\lambda)$, for some $c>0$,
\begin{equation*}
	E(\lambda) = \big(I+o(1)\big)M(\lambda),\hspace{0.5cm}s\rightarrow\infty,\gamma\uparrow 1,\kappa\in[\delta,1-\delta],\delta>0,\ \ 0<r_1\leq|\lambda-1|\leq r_2<\left(\frac{\kappa}{2c}\right)^2.
\end{equation*}
%--------------------------------------------------------------------------------------------------------------------------------------------------------------------------------------------------------------------------------------------------
\subsection{Exact parametrix at the branch point $\lambda=-1$}\label{lendp}
The construction of the remaining parametrix near $\lambda=-1$ is similar to that of the previous section. Fix a small neighborhood $\hat{\mathcal{V}}$ of radius $r<\frac{1}{2}(1-a)$ as shown in Figure \ref{figure10} below and observe that
\begin{equation*}
	\Pi(\lambda) = -2\int_{-\lambda}^1\sqrt{\frac{\mu^2-a^2}{1-\mu^2}}\,\d\mu = -d_0|1+\lambda|^{\frac{1}{2}}+\mathcal{O}\left(|1+\lambda|^{\frac{3}{2}}\right),\hspace{0.5cm}d_0 = \sqrt{8(1-a^2)}>0
\end{equation*}
for $\lambda\rightarrow -1, \lambda\in\hat{\mathcal{V}}\cap(-1,-a)$. % where $\sqrt{1+z}$ is defined for $z\in\mathbb{C}\backslash[-1,\infty)$ with its branch fixed via
%\begin{equation*}
%	\textnormal{arg}\,(z+1) = \pi,\hspace{0.5cm}\textnormal{if}\ \ z<-1.
%\end{equation*}
\begin{figure}[tbh]
\begin{center}
\hspace{0.7cm}\includegraphics[width=0.45\textwidth]{}
\end{center}
\caption{A neighborhood $\hat{\mathcal{V}}$ of $\lambda=-1$}
\label{figure10}
\end{figure}
%\begin{figure}[tbh]
%\begin{center}
%\resizebox{0.2\textwidth}{!}{\input{m1neighborhood.pdf_t}}
%\caption{A neighborhood $\hat{\mathcal{V}}$ of $\lambda=-1$}
%\label{figure10}
%\end{center}
%\end{figure}

Similarly to \eqref{p15}, we let 
\begin{eqnarray}
	\tilde{P}_{BE}^{RH}(\zeta) &=& \sqrt{\frac{\pi}{2}}e^{i\pi sV\sigma_3}\begin{pmatrix}
	e^{-i\frac{\pi}{2}}\sqrt{\zeta}\big(H_{0}^{(2)}\big)'\big(e^{-i\frac{\pi}{2}}\sqrt{\zeta}\big) & e^{-i\frac{\pi}{2}}\sqrt{\zeta}\big(H_{0}^{(1)}\big)'\big(e^{-i\frac{\pi}{2}}\sqrt{\zeta}\big)\smallskip\\
	H_{0}^{(2)}\big(e^{-i\frac{\pi}{2}}\sqrt{\zeta}\big) & H_{0}^{(1)}\big(e^{-i\frac{\pi}{2}}\sqrt{\zeta}\big)\\
	\end{pmatrix}e^{-i\pi sV\sigma_3}\nonumber\\
	&=&e^{i\pi sV\sigma_3}\sigma_1P_{BE}^{RH}\big(e^{-i\pi}\zeta\big)\sigma_1 e^{-i\pi sV\sigma_3},\hspace{0.5cm}0<\textnormal{arg}\,\zeta<2\pi,\label{p19}
\end{eqnarray}
on the punctured plane $\zeta\in\mathbb{C}\backslash\{0\}$, where the branch of the root $\sqrt{\zeta}$ is chosen with $0<\textnormal{arg}\,\zeta<2\pi$. Applying similar arguments as in the previous section, we obtain that $\tilde{P}_{BE}^{RH}(\zeta)$ solves the following model problem:
\begin{figure}[tbh]
\begin{center}
\includegraphics[width=0.35\textwidth]{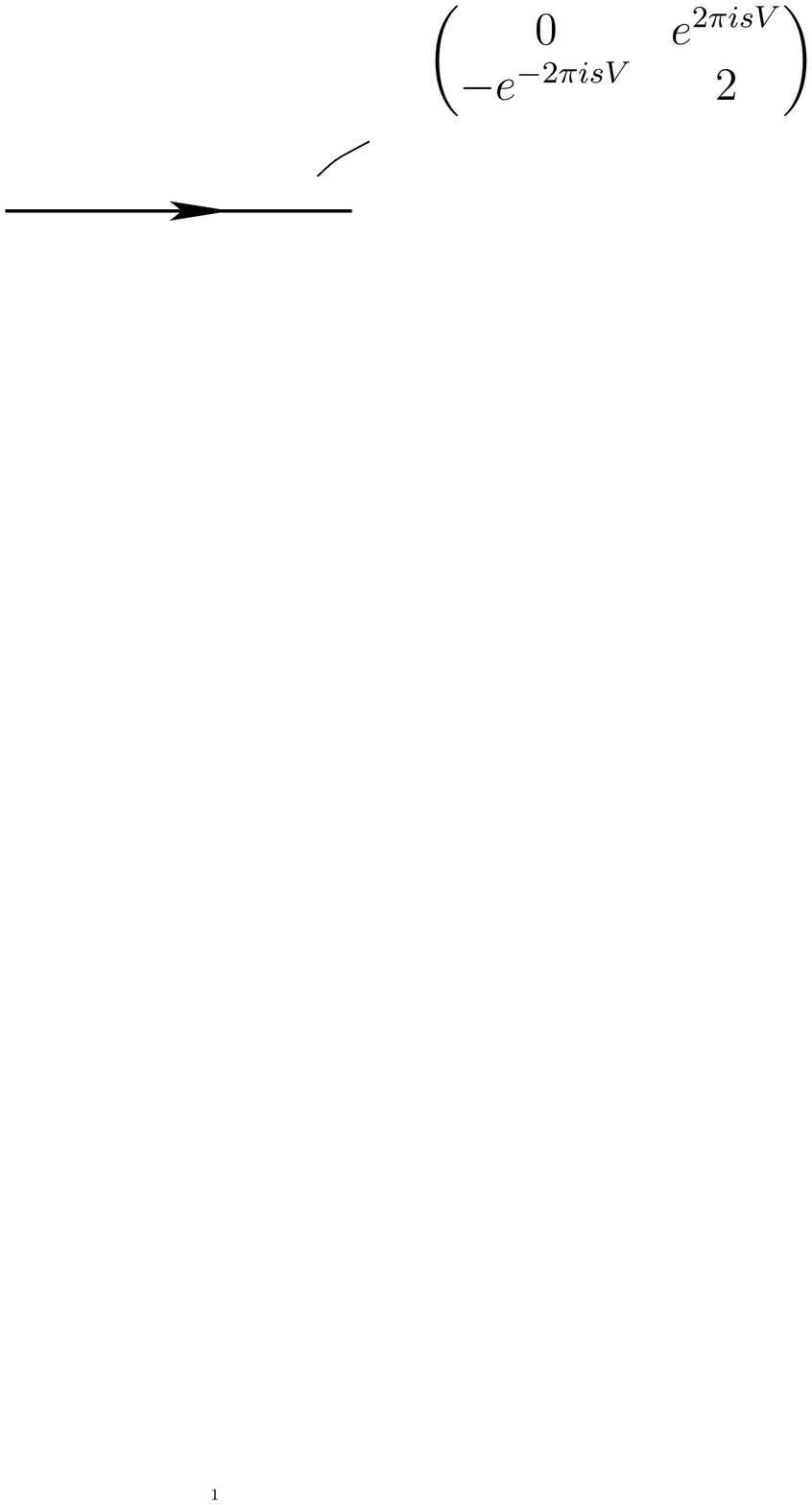}
\end{center}
\caption{A model problem near $\lambda=-1$ which can be solved explicitly using Bessel functions}
\label{figure11}
\end{figure}
%\begin{figure}[tbh]
%\begin{center}
%\resizebox{0.2\textwidth}{!}{\input{m1para.pdf_t}}
%\caption{A model problem near $\lambda=-1$ which can be solved explicitly using Bessel functions}
%\label{figure11}
%\end{center}
%\end{figure}

%More precisely $\tilde{P}_{BE}^{RH}(\zeta)$ has the following analytic properties
\begin{itemize}
	\item $\tilde{P}_{BE}^{RH}(\zeta)$ is analytic for $\zeta\in\mathbb{C}\backslash\{\textnormal{arg}\,\zeta=0\}$.
	\item The following jump relation holds on the line $\textnormal{arg}\,\zeta=0$ (see Figure \ref{figure11}):
	\begin{eqnarray*}
		\big(\tilde{P}_{BE}^{RH}(\zeta)\big)_+ &=& \big(\tilde{P}_{BE}^{RH}(\zeta)\big)_-e^{i\pi sV\sigma_3}\begin{pmatrix}
		0 & 1\\
		-1 & 2\\
		\end{pmatrix}e^{-i\pi sV\sigma_3}\\
		&=&\big(\tilde{P}_{BE}^{RH}(\zeta)\big)_-\begin{pmatrix}
		0 & e^{2\pi isV}\\
		-e^{-2\pi isV} & 2\\
		\end{pmatrix}.%,\hspace{0.5cm}\textnormal{arg}\,\zeta=0
	\end{eqnarray*}
	\item For $\zeta\rightarrow 0$, with $0<\textnormal{arg}\,\zeta\leq 2\pi$,
	\begin{equation*}
		\tilde{P}_{BE}^{RH}(\zeta)=\sqrt{\frac{\pi}{2}}e^{i\pi sV\sigma_3}\bigg[\begin{pmatrix}
		\bar{a}_1 & a_1\\
		\bar{a}_0 & a_0\\
		\end{pmatrix}+\ln\left(e^{-i\pi}\zeta\right)\begin{pmatrix}
		0 & 0\\
		\frac{\bar{a}_1}{2} & \frac{a_1}{2}\\
		\end{pmatrix}+\mathcal{O}\big(\zeta\left|\ln\left|\zeta\right|\right|\big)\bigg]e^{-i\pi sV\sigma_3},
	\end{equation*}
	where $a_0,a_1$ are given in \eqref{excoeff}.
	\item As $\zeta\rightarrow\infty$, we have in a full neighborhood of infinity
	\begin{eqnarray*}
		\tilde{P}_{BE}^{RH}(\zeta)&=&\big(e^{-i\pi}\zeta\big)^{\frac{1}{4}\sigma_3}\begin{pmatrix}
		-i & ie^{2\pi isV}\\
		e^{-2\pi isV} & 1\\
		\end{pmatrix}e^{i\frac{\pi}{4}\sigma_3}\Bigg[I+\frac{ie^{i\pi sV\sigma_3}}{8\sqrt{\zeta}}\begin{pmatrix}
		-i & -2\\
		-2 & i\\
		\end{pmatrix}\\
		&&\times e^{-i\pi sV\sigma_3}+\frac{3ie^{i\pi sV\sigma_3}}{128\zeta}\begin{pmatrix}
		i & -4\\
		4 & i\\
		\end{pmatrix}e^{-i\pi sV\sigma_3}+\mathcal{O}\left(\zeta^{-\frac{3}{2}}\right)\Bigg]e^{-\sqrt{\zeta}\sigma_3}.
	\end{eqnarray*}
\end{itemize}
To finalize the construction of the parametrix near $\lambda=-1$, define
\begin{equation}\label{p20}
	\zeta(\lambda) = \left(is\int_{-1}^{\lambda}\sqrt{\frac{\mu^2-a^2}{\mu^2-1}}\,\d\mu\right)^2,\hspace{0.5cm}|\lambda+1|<r,\ \ 0<\textnormal{arg}\,\zeta< 2\pi
\end{equation}
i.e.,
\begin{equation*}
	\sqrt{\zeta(\lambda)} = -is\int_{-1}^{\lambda}\sqrt{\frac{\mu^2-a^2}{\mu^2-1}}\,\d\mu=i\pi s V-isg(\lambda).
\end{equation*}
This change of variable is locally conformal, since
\begin{equation*}
	\zeta(\lambda) = 2s^2(1-a^2)(\lambda+1)\left(1-\frac{3+a^2}{6(1-a^2)}(\lambda+1)+\mathcal{O}\big((\lambda+1)^2\big)\right),\hspace{0.5cm}|\lambda+1|<r,
\end{equation*}
and we can define the parametrix $F(\lambda)$ near $\lambda=-1$ as follows:
\begin{equation}\label{p21}
	F(\lambda) = B_{l_2}(\lambda)\tilde{P}_{BE}^{RH}\big(\zeta(\lambda)\big)e^{i\pi sV\sigma_3}\bigg[I-\frac{e^{-2s\kappa}}{2\pi i}\begin{pmatrix}
	-1 & 1\\
	-1 & 1\\
	\end{pmatrix}\ln\left(\frac{\lambda-1}{\lambda+1}\right)\bigg]e^{-i\pi sV\sigma_3}e^{\sqrt{\zeta(\lambda)}\sigma_3},\hspace{0.15cm}|\lambda+1|<r.
\end{equation}
Here $\zeta(\lambda)$ is given in \eqref{p20} and we set
\begin{equation*}
	B_{l_2}(\lambda) = M(\lambda)\frac{1}{2}e^{-i\frac{\pi}{4}\sigma_3}\begin{pmatrix}
	i & e^{2\pi isV}\\
	-ie^{-2\pi isV} & 1\\
	\end{pmatrix}\tilde{\delta}(\lambda)^{\sigma_3}\left(e^{-i\pi}\zeta(\lambda)\frac{\lambda+a}{\lambda+1}\right)^{-\frac{1}{4}\sigma_3}
\end{equation*}
with
\begin{equation*}
	\tilde{\delta}(\lambda) = \left(\frac{\lambda+a}{\lambda+1}\right)^{\frac{1}{4}}\rightarrow 1,\hspace{0.5cm}\lambda\rightarrow\infty.
\end{equation*}
Here $B_{l_2}(\lambda)$ is analytic in a full neighborhood of $\lambda=-1$, since
\begin{eqnarray*}
	\big(B_{l_2}(\lambda)\big)_+ &=& M_-(\lambda)\begin{pmatrix}
				0 & e^{2\pi isV}\\
				-e^{-2\pi isV} & 0\\
				\end{pmatrix}\frac{1}{2}e^{-i\frac{\pi}{4}\sigma_3}\begin{pmatrix}
				i & e^{2\pi isV}\\
				-ie^{-2\pi isV} & 1\\
				\end{pmatrix}\begin{pmatrix}
				i & 0\\
				0 & -i\\
				\end{pmatrix}\\
				&&\times\tilde{\delta}_-(\lambda)^{\sigma_3}\left(\zeta(\lambda)\frac{\lambda+a}{\lambda+1}\right)^{-\sigma_3/4} = \big(B_{l_2}(\lambda)\big)_-,\hspace{0.5cm}\lambda\in(-1,-1+r),
\end{eqnarray*}
and the possible singularity of $B_{l_2}(\lambda)$ at $\lambda=-1$ is at worst of square-root type, hence is removable. By construction, the parametrix $F(\lambda)$ has a jump along the curve depicted in Figure \ref{figure11} and this jump is given by
\begin{eqnarray*}
	F_+(\lambda)&=&F_-(\lambda)e^{-\sqrt{\zeta(\lambda)}_-\sigma_3}e^{i\pi sV\sigma_3}\begin{pmatrix}
	1-\gamma & \gamma\\
	-\gamma & 1+\gamma\\
	\end{pmatrix}e^{-i\pi sV\sigma_3}e^{\sqrt{\zeta(\lambda)}_+\sigma_3}\\
	&=&F_-(\lambda)\begin{pmatrix}
	e^{-s(2\kappa+\Pi(\lambda))} & \gamma e^{is\Omega(\lambda)}\\
	-\gamma e^{-is\Omega(\lambda)} & (1+\gamma)e^{s\Pi(\lambda)}\\
	\end{pmatrix},\hspace{0.5cm}\textnormal{arg}\,(\lambda+1)=0.
\end{eqnarray*}
Furthermore, by construction, the behavior of $F(\lambda)$ at the endpoint $\lambda=-1$ matches the behavior of $S(\lambda)$: We have for $\lambda\rightarrow -1$,
\begin{equation*}
	\tilde{P}_{BE}^{RH}\big(\zeta(\lambda)\big) = \check{\tilde{P}}_{BE}^{RH}\big(\zeta(\lambda)\big)\bigg[I+\frac{1}{2\pi i}\begin{pmatrix}
	-1 & 1\\
	-1 & 1\\
	\end{pmatrix}\ln\left(\frac{\lambda-1}{\lambda+1}\right)\bigg]e^{-i\pi sV\sigma_3}
\end{equation*}
with a locally analytic function $\check{\tilde{P}}_{BE}^{RH}\big(\zeta(\lambda)\big)$. Now using the analyticity of $B_{l_2}(\lambda)$, we obtain that the behavior of $F(\lambda)$ at $\lambda=-1$ matches \eqref{Xsing}:
\begin{equation*}
	F(\lambda)e^{-\sqrt{\zeta(\lambda)}\sigma_3}e^{i\pi sV\sigma_3} = B_{l_2}(\lambda)\check{\tilde{P}}_{BE}^{RH}\big(\zeta(\lambda)\big)\bigg[I+\frac{\gamma}{2\pi i}\begin{pmatrix}
	-1 & 1\\
	-1 & 1\\
	\end{pmatrix}\ln\left(\frac{\lambda-1}{\lambda+1}\right)\bigg],\hspace{0.5cm}\lambda\rightarrow -1.
\end{equation*}
Therefore the ratio $N_{l_2}(\lambda)$ of $S(\lambda)$ with $F(\lambda)$ is locally analytic, i.e.
\begin{equation}\label{mana2}
	S(\lambda) = N_{l_2}(\lambda)F(\lambda),\hspace{0.5cm}|\lambda+1|<r.
\end{equation}
To obtain a matching condition between $F(\lambda)$ and $M(\lambda)$, note that
\begin{equation*}
	B_{l_2}(\lambda)\big(e^{-i\pi}\zeta\big)^{\frac{1}{4}\sigma_3}\begin{pmatrix}
	-i & ie^{2\pi isV}\\
	e^{-2\pi isV} & 1\\
	\end{pmatrix}e^{i\frac{\pi}{4}\sigma_3} = M(\lambda)
\end{equation*}
and therefore
\begin{eqnarray}
	F(\lambda)&=&M(\lambda)\Bigg[I+\frac{ie^{i\pi sV\sigma_3}}{8\sqrt{\zeta}}\begin{pmatrix}
		-i & -2\\
		-2 & i\\
		\end{pmatrix}e^{-i\pi sV\sigma_3}+\frac{3ie^{i\pi sV\sigma_3}}{128\zeta}\begin{pmatrix}
		i & -4\\
		4 & i\\
		\end{pmatrix}e^{-i\pi sV\sigma_3}+\mathcal{O}\left(\zeta^{-\frac{3}{2}}\right)\Bigg]\nonumber\\
		&&\times\bigg[I-\frac{e^{-2s\kappa}}{2\pi i}\begin{pmatrix}
		-1 & e^{2\pi isV-2\sqrt{\zeta(\lambda)}}\\
		-e^{-2\pi isV+2\sqrt{\zeta(\lambda)}}& 1\\
		\end{pmatrix}\ln\left(\frac{\lambda-1}{\lambda+1}\right)\bigg]\label{p22}
\end{eqnarray}
for $\kappa\in[\delta,1-\delta],\delta>0$ and $0<r_1\leq |\lambda+1|\leq r_2<\frac{1}{2}(1-a)$ as $s\rightarrow\infty$ (hence $|\zeta|\rightarrow\infty$). Also here, for values of $\lambda$ on the circle $|\lambda+1|=r$, we have
\begin{equation*}
	\Big|e^{-2s\kappa\pm 2\pi isV\mp 2\sqrt{\zeta(\lambda)}}\Big|\leq e^{-2s\kappa}e^{\pm 2s\sqrt{r}\sin(\frac{1}{2}\textnormal{arg}(\lambda+1))c},\hspace{0.5cm}c>0
\end{equation*}
and therefore, with $r_1<r<r_2$ for sufficiently small $r_2$, the third factor in \eqref{p22} approaches the identity matrix in the described double scaling limit with $\kappa\in[\delta,1-\delta]$ exponentially fast in $s$. Together with the boundedness of $Q(\lambda)$ in $s$ in the latter annulus and the quadratic growth of $\zeta(\lambda)$, equation \eqref{p22} provides us therefore with the following formula:
\begin{equation*}
	F(\lambda) = \big(I+o(1)\big)M(\lambda),\hspace{0.5cm}s\rightarrow\infty,\gamma\uparrow 1,\kappa\in[\delta,1-\delta],\delta>0,\ \ 0<r_1\leq |\lambda+1|\leq r_2<\left(\frac{\kappa}{2c}\right)^2.
\end{equation*}
%We can now use the model functions $U(\lambda),V(\lambda),E(\lambda),F(\lambda)$ and $M(\lambda)$ to employ our final transformation within the asymptotical solution of the $S$-RHP.
%---------------------------------------------------------------------------------------------------------------------------------------------------------------------------------------------------------------
\subsection{Final transformation and error analysis}\label{errorana}
Set
\begin{equation}\label{it1}
	R(\lambda) = S(\lambda)\left\{
                                   \begin{array}{ll}
                                     \big(U(\lambda)\big)^{-1}, & \hbox{$|\lambda-a|<r_1$,}\smallskip \\
                                     \big(V(\lambda)\big)^{-1}, & \hbox{$|\lambda+a|<r_1$,}\smallskip \\
                                     \big(E(\lambda)\big)^{-1}, & \hbox{$|\lambda-1|<r_2$,}\smallskip \\
                                     \big(F(\lambda)\big)^{-1}, & \hbox{$|\lambda+1|<r_2$,}\smallskip \\
                                     \big(M(\lambda)\big)^{-1}, & \hbox{$|\lambda\mp a|>r_1,\,|\lambda\mp 1|>r_2$}
                                   \end{array}
                                 \right.
\end{equation}
where $0<r_1<\min\left\{\frac{a}{2},r_2\right\}$ and $0<r_2<(\frac{\kappa}{2c})^2$ remain fixed. With $C_{r_i}$ and $C_{l_i}$ denoting the clockwise oriented circles depicted in Figure \ref{figure12} below, the ratio-function $R(\lambda)$ solves the following RHP, which follows from our constructions above:
\begin{figure}[tbh]
\begin{center}
\includegraphics[width=0.85\textwidth]{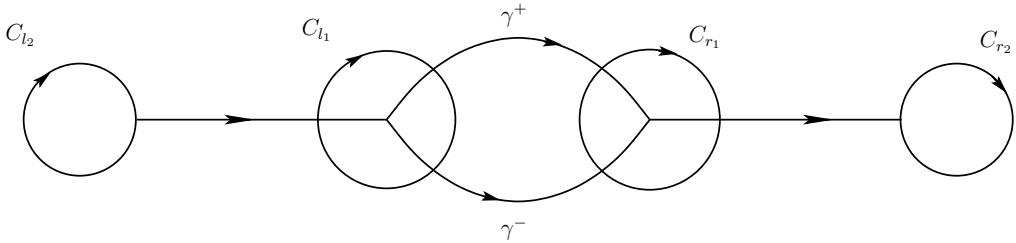}
\end{center}
\caption{The jump graph $\Sigma_R$ for the ratio-function $R(\lambda)$}
\label{figure12}
\end{figure}
%\begin{figure}[tbh]
%\begin{center}
%\resizebox{0.8\textwidth}{!}{\input{ratioRHP.pdf_t}}
%\caption{The jump graph $\Sigma_R$ for the ratio-function $R(\lambda)$}
%\label{figure12}
%\end{center}
%\end{figure}

\begin{itemize}
	\item $R(\lambda)$ is analytic for $\lambda\in\mathbb{C}\backslash \Sigma_{R}$
	with
	\begin{equation*}
		\Sigma_{R} = C_{r_1}\cup C_{r_2}\cup C_{l_1}\cup C_{l_2}\cup \gamma^+\cup\gamma^-\cup(-1+r_2,-a)\cup(a,1-r_2)
	\end{equation*}
	\item For the jumps, we have on the lens boundaries $\gamma^+,\gamma^-$ away from $\lambda=\pm a$,
	\begin{eqnarray*}
		R_+(\lambda)&=&R_-(\lambda)M(\lambda)\begin{pmatrix}
		1 & \gamma e^{is\Omega(\lambda)}\\
		0 & 1\\
		\end{pmatrix}\big(M(\lambda)\big)^{-1},\hspace{0.5cm}\lambda\in\gamma^+\cap\big\{|\lambda\mp a|>r_1\big\},\\
		R_+(\lambda)&=&R_-(\lambda)M(\lambda)\begin{pmatrix}
		1 & 0\\
		-\gamma e^{-is\Omega(\lambda)} & 1\\
		\end{pmatrix}\big(M(\lambda)\big)^{-1},\hspace{0.5cm}\lambda\in\gamma^-\cap\big\{|\lambda\mp a|>r_1\big\}.
	\end{eqnarray*}
	On the line segments: first, for $(-1+r_2,-a-r_1)\cup(a+r_1,1-r_2)$,
	\begin{eqnarray*}
		R_+(\lambda)&=&R_-(\lambda)M_-(\lambda)\begin{pmatrix}
		\gamma & -e^{-s(2\kappa+\Pi(\lambda)-i\Omega(\lambda))}\\
		(1+\gamma)e^{s\Pi(\lambda)-is\Omega(\lambda)} & \gamma\\
		\end{pmatrix}\big(M_-(\lambda)\big)^{-1},\\%\hspace{0.25cm}\lambda\in(-1+r_2,-a-r_1)\\
		R_+(\lambda)&=&R_-(\lambda)M_-(\lambda)\begin{pmatrix}
		\gamma & -e^{-s(2\kappa+\Pi(\lambda))}\\
		(1+\gamma)e^{s\Pi(\lambda)} & \gamma\\
		\end{pmatrix}\big(M_-(\lambda)\big)^{-1},\hspace{0.15cm}\lambda\in(a+r_1,1-r_2).
	\end{eqnarray*}
	Next, on the circle boundaries,
	\begin{eqnarray}
		R_+(\lambda)=R_-(\lambda)U(\lambda)\big(M(\lambda)\big)^{-1},\ \ \lambda\in C_{r_1},&&R_+(\lambda)=R_-(\lambda)V(\lambda)\big(M(\lambda)\big)^{-1},\ \ \lambda\in C_{l_1},\label{circle1}\\
		R_+(\lambda)=R_-(\lambda)E(\lambda)\big(M(\lambda)\big)^{-1},\ \ \lambda\in C_{r_2},&&R_+(\lambda)=R_-(\lambda)F(\lambda)\big(M(\lambda)\big)^{-1},\ \ \lambda\in C_{l_2},\label{circle2}
	\end{eqnarray}
	and finally on the parts of the original jump contours inside $C_{l_1}$ and $C_{r_1}$: first, near $\lambda=a$,
	\begin{eqnarray}\label{Gr1}
		R_+(\lambda) &=&R_-(\lambda)U_-(\lambda)\begin{pmatrix}
		1 & -e^{-2s\kappa+is\Omega(\lambda)}\\
		0 & 1\\
		\end{pmatrix}\big(U_-(\lambda)\big)^{-1},\hspace{0.1cm} \lambda\in\gamma^+\cap\big\{|\lambda-a|<r_1\big\}\\
		R_+(\lambda) &=&R_-(\lambda)U_-(\lambda)\begin{pmatrix}
		1 & 0\\
		e^{-2s\kappa-is\Omega(\lambda)} & 1\\
		\end{pmatrix}\big(U_-(\lambda)\big)^{-1},\hspace{0.1cm}\lambda\in\gamma^-\cap\big\{|\lambda-a|<r_1\big\}\nonumber\\
		R_+(\lambda)&=&R_-(\lambda)U_-(\lambda)\begin{pmatrix}
		\gamma & -e^{-2s\kappa}e^{-s(2\kappa+\Pi(\lambda))}\\
		(1+\gamma)e^{s\Pi(\lambda)} & \gamma+(1+\gamma)e^{-2s\kappa}\\
		\end{pmatrix}\big(U_-(\lambda)\big)^{-1},\hspace{0.1cm}\lambda\in[a,a+r_1).\nonumber
	\end{eqnarray}
	And second, near $\lambda=-a$,
	\begin{eqnarray}\label{Gl1}
		R_+(\lambda) &=&R_-(\lambda)V_-(\lambda)\begin{pmatrix}
		1 & -e^{-2s\kappa+is\Omega(\lambda)}\\
		0 & 1\\
		\end{pmatrix}\big(V_-(\lambda)\big)^{-1},\hspace{0.1cm}\lambda\in\gamma^+\cap\big\{|\lambda+a|<r_1\big\}\\
		R_+(\lambda) &=&R_-(\lambda)V_-(\lambda)\begin{pmatrix}
		1 & 0\\
		e^{-2s\kappa-is\Omega(\lambda)} & 1\\
		\end{pmatrix}\big(V_-(\lambda)\big)^{-1},\hspace{0.1cm}\lambda\in\gamma^-\cap\big\{|\lambda+a|<r_1\big\}\nonumber\\
		R_+(\lambda) &=&R_-(\lambda)V_-(\lambda)\begin{pmatrix}
		\gamma & -e^{-2s\kappa}e^{-s(2\kappa+\Pi(\lambda))+is\Omega(\lambda)}\\
		(1+\gamma)e^{s\Pi(\lambda)-is\Omega(\lambda)} &\gamma+(1+\gamma)e^{-2s\kappa}\\
		\end{pmatrix}\big(V_-(\lambda)\big)^{-1}\nonumber
	\end{eqnarray}
	for $\lambda\in(-a-r_1,-a]$ in the latter jump relation.
	\item $R(\lambda)$ is analytic at $\lambda=\pm 1$. This follows from our observation that the parametrices $E(\lambda)$ and $F(\lambda)$ match the endpoint behavior of $S(\lambda)$ as stated in \eqref{Xsing}, see also \eqref{mana1} and \eqref{mana2}.
	\item As $\lambda\rightarrow\infty$, we have $R(\lambda)\rightarrow I$, valid in a full neighbhorhood of infinity.
\end{itemize}
We can solve the ratio-RHP asymptotically for $\kappa\in[\delta,1-\delta],\delta>0$, provided all its jumps are close to the identity matrix, see \cite{DZ}. By Proposition \ref{prop3}, see also our discussion at the end of section \ref{sec3}, the jump matrices corresponding to the lens boundaries $\gamma^+\cup\gamma^-$ away from the branch points $\lambda=\pm a$ are in fact exponentially close to the identity matrix
\begin{equation}\label{DZ1}
	\|M\bigl(\begin{smallmatrix}
	1 & \gamma e^{is\Omega(\cdot)}\\
	0 & 1\\
	\end{smallmatrix}\bigr)(M)^{-1}-I\|_{L^2\cap L^{\infty}(\gamma^+\cap\{|\lambda\mp a|>r_1\})}\leq c_1 e^{-c_2as},\hspace{0.5cm}\kappa\in[\delta,1-\delta],
\end{equation}
with constants $c_i>0$, and, similarly,
\begin{equation}\label{DZ2}
	\|M\bigl(\begin{smallmatrix}
	1 & 0\\
	-\gamma e^{-is\Omega(\cdot)} & 1\\
	\end{smallmatrix}\bigr)(M)^{-1}-I\|_{L^2\cap L^{\infty}(\gamma^-\cap\{|\lambda\mp a|>r_1\})}\leq c_3 e^{-c_4as},\hspace{0.5cm}\kappa\in[\delta,1-\delta].
\end{equation}
For the line segment between the circles $C_{l_i}$, i.e. for $\lambda\in(-1+r_2,-a-r_1)$ we recall \eqref{e02} and the fact that $\gamma=1-e^{-2\kappa s}$ and deduce that
\begin{equation}\label{DZ3}
	\|M_-\Bigl(\begin{smallmatrix}
	\gamma & -e^{-s(2\kappa+\Pi(\cdot))+is\Omega(\cdot)}\\
	(1+\gamma)e^{s\Pi(\cdot)-is\Omega(\cdot)} & \gamma\\
	\end{smallmatrix}\Bigr)(M_-)^{-1}-I\|_{L^2\cap L^{\infty}(-1+r_2,-a-r_1)}\leq c_5e^{-c_6a\kappa s},
\end{equation}
whereas \eqref{e01} implies that on the other line segment $(a+r_1,1-r_2)$
\begin{equation}\label{DZ4}
	\|M_-\Bigl(\begin{smallmatrix}
	\gamma & -e^{-s(2\kappa+\Pi(\cdot))}\\
	(1+\gamma)e^{s\Pi(\cdot)} & \gamma\\
	\end{smallmatrix}\Bigr)(M_-)^{-1}-I\|_{L^2\cap L^{\infty}(a+r_1,1-r_2)}\leq c_7e^{-c_8a\kappa s},\hspace{0.5cm}\kappa\in[\delta,1-\delta].
\end{equation}
The behavior of the jumps on the circle boundaries can be derived from the matching relations \eqref{p5},\eqref{p10}, \eqref{p18} and \eqref{p22}
\begin{eqnarray}
	\|UM^{-1}-I\|_{L^2\cap L^{\infty}(C_{r_1})}\leq c_9s^{-1}&&\|VM^{-1}-I\|_{L^2\cap L^{\infty}(C_{l_1})}\leq c_{10}s^{-1}\label{DZ5}\\
	\|EM^{-1}-I\|_{L^2\cap L^{\infty}(C_{r_2})}\leq c_{11}s^{-1}&&\|FM^{-1}-I\|_{L^2\cap L^{\infty}(C_{l_2})}\leq c_{12}s^{-1}\label{DZ6}
\end{eqnarray}
for $\kappa\in[\delta,1-\delta],\delta>0$. Consider now the inside of the circles $C_{l_1}$ and $C_{r_1}$ and recall that the model functions $U(\lambda)$ and $V(\lambda)$ are bounded at $\lambda=a$ and $\lambda=-a$. Hence from the previously listed jump conditions, we obtain
\begin{equation}\label{DZ61}
	\|U_-G_{r_1}\big(U_-\big)^{-1}-I\|_{L^2\cap L^{\infty}(|\lambda-a|<r_1)}\leq c_{13}e^{-c_{14}s\kappa}
\end{equation}
where the jump matrix $G_{r_1}$ of $R(\lambda)$ can be read from \eqref{Gr1}. Finally, with $G_{l_1}$ given in \eqref{Gl1},
\begin{equation}\label{DZ62}
	\|V_-G_{l_1}\big(V_-\big)^{-1}-I\|_{L^2\cap L^{\infty}(|\lambda+a|<r_1)}\leq c_{15}e^{-c_{16}s\kappa}.
\end{equation}
We can now combine the estimates \eqref{DZ1} -- \eqref{DZ62} to derive for the jump matrix $G_R(\lambda)$ of $R(\lambda)$:
\begin{equation}\label{DZ7}
	\|G_{R}-I\|_{L^2\cap L^{\infty}(\Sigma_{R})}\leq cs^{-1},\hspace{0.5cm}s\rightarrow\infty,\,\gamma\uparrow 1,\,\kappa\in[\delta,1-\delta],\delta>0.
\end{equation}
This enables us to solve the ratio problem, which is equivalent to the singular integral equation
\begin{equation*}
	R_-(\lambda) = I+\frac{1}{2\pi i}\int_{\Sigma_R}R_-(w)\big(G_R(w)-I\big)\frac{\d w}{w-\lambda_-}.
\end{equation*}
From \eqref{DZ7}, we see that the underlying integral operator is a contraction for $\kappa\in[\delta,1-\delta]$ and we can solve the latter equation iteratively in $L^2(\Sigma_R)$; its unique solution satisfies
\begin{equation}\label{DZ8}
	\|R_--I\|_{L^2(\Sigma_R)}\leq cs^{-1},\hspace{0.5cm}s\rightarrow\infty,\gamma\uparrow 1,\,\kappa\in[\delta,1-\delta],\delta>0.
\end{equation}
The solution $R(\lambda)$ at hand, we now derive the leading $s$-dependent terms in Theorem \ref{theo1}.
%----------------------------------------------------------------------------------------------------------------------------------------------------------------------------------------------------------------------------------------------------------------------
\section{Proof of Theorem \ref{theo1} for $\kappa\in[\delta,1-\delta],\delta>0$}\label{sec5}
\subsection{Derivation of leading terms} We use \eqref{IIKS2}
\begin{equation*}
	\frac{\partial}{\partial s}\ln\det(I-\gamma K_s)=-i\big(Y_1^{11}-Y_1^{22}\big),
\end{equation*}
where the connection to the $Y$-RHP is established through
\begin{equation*}
	Y(\lambda) = I+\frac{Y_1}{\lambda}+\mathcal{O}\big(\lambda^{-2}\big),\ \ \lambda\rightarrow\infty,\hspace{0.5cm} Y_1=(Y_1^{jk}).
\end{equation*}
First trace back the relevant transformations and obtain
\begin{eqnarray*}
	Y_1 &=& \lim_{\lambda\rightarrow\infty}\lambda\Big(Y(\lambda)-I\Big) = \lim_{\lambda\rightarrow\infty}\lambda\Big(e^{-isl\sigma_3}X(\lambda)e^{is(g(\lambda)-\lambda)\sigma_3}-I\Big)\\
	&=&\lim_{\lambda\rightarrow\infty}\lambda\Big(e^{-is\ell\sigma_3}S(\lambda)e^{is(g(\lambda)-\lambda)\sigma_3}-I\Big) = \lim_{\lambda\rightarrow\infty}\lambda\Big(e^{-is\ell\sigma_3}R(\lambda)M(\lambda)e^{is(g(\lambda)-\lambda)\sigma_3}-I\Big).
\end{eqnarray*}
Since (compare \eqref{Masy}) for $\lambda\rightarrow\infty$
\begin{equation*}
	R(\lambda) = I+\frac{i}{2\pi\lambda}\int_{\Sigma_R}R_-(w)\big(G_R(w)-I\big)\,\d w +\mathcal{O}\big(\lambda^{-2}\big),\hspace{0.5cm}M(\lambda)=I+\frac{M_1}{\lambda}+\mathcal{O}\big(\lambda^{-2}\big),
\end{equation*}
we obtain using \eqref{gfuncasy},
\begin{equation*}
	Y_1 = -\frac{is}{2}(1-a^2)\sigma_3+e^{-is\ell\sigma_3}M_1e^{is\ell\sigma_3}+\frac{i}{2\pi}e^{-is\ell\sigma_3}\int_{\Sigma_R}R_-(w)\big(G_R(w)-I\big)\d w\,e^{is\ell\sigma_3},
\end{equation*}
and therefore, using \eqref{DZ7}, we have
\begin{eqnarray}\label{sd4}
	\frac{\partial}{\partial s}\ln\det(I-\gamma K_s) &=& -s(1-a^2)-i\big(M_1^{11}-M_1^{22}\big)\\
	&&+\frac{1}{2\pi}\int_{\Sigma_R}\bigg[\big(G_R(w)-I\big)_{11}-\big(G_R(w)-I\big)_{22}\bigg]\,\d w+\mathcal{O}\big(s^{-2}\big).\nonumber
\end{eqnarray}
Let us consider the first two terms in \eqref{sd4}. Since by \eqref{Mdef} with $c$ given in \eqref{Anorm},
\begin{equation*}
	M_1^{11}=-c\big(\ln\theta(sV|\tau)\big)',\hspace{0.8cm}M_1^{22}=c\big(\ln\theta(sV|\tau)\big)'
\end{equation*}
%\begin{eqnarray*}
%	M_1^{11} &=& -c\Big[\big(\log\theta(u_{\infty}+sV+d|\tau)\big)'-\big(\log\theta(u_{\infty}+d|\tau)\big)'\,\Big]\\
%	M_1^{22} &=& -c\Big[\big(\log\theta(u_{\infty}-sV+d|\tau)\big)'-\big(\log\theta(u_{\infty}+d|\tau)\big)'\,\Big]
%\end{eqnarray*}
we have
\begin{equation}\label{Meq}
	-i\big(M_1^{11}-M_1^{22}\big) = 2ic\big(\ln\theta(sV|\tau)\big)',%ic\big(\ln\theta(u_{\infty}+sV+d)\big)'-ic\big(\ln\theta(u_{\infty}-sV+d)\big)'.
\end{equation}
where $(')$ always indicates differentiation with respect to the first argument of the given theta function. At this point we recall that the branchpoints $z=\pm a$ are in fact $s$-dependent by equation \eqref{Dya}. Hence we have to view all the quantities in \eqref{sd4} as functions of $s$
\begin{equation*}
	a=a(s),\ \ u_{\infty}=u_{\infty}(s),\ \ V=V(s),\ \ d=d(s).
\end{equation*}
Also, the elliptic nome $\tau=\tau(s)$ of the Jacobi theta function is $s$-dependent, hence the arguments in \eqref{Meq} have to viewed as
\begin{equation*}
	\theta(sV|\tau)=\theta\big(sV(s)| \tau(s)\big).
\end{equation*}
In this setting the following Proposition is useful.
\begin{prop}\label{prop10} There holds the following relation between the normalizing constant $c$ given in \eqref{Anorm}, the elliptic nome $\tau$ given in \eqref{Dymod} and the parameter $V$ given in \eqref{Vconst}
\begin{equation}\label{parrel}
		\pi V+i\tau\kappa = 2\pi ic.
\end{equation}
Moreover, we have the differential identity (with $v$ fixed)
%There holds the following differential relation between the normalizing constant $c$ given in \eqref{Anorm} and the parameter $V$ given in \eqref{Vconst} 
\begin{equation}\label{diffrel}
	\frac{\partial}{\partial s}\big(sV(s)\big) = 2ic(s).
\end{equation}
\end{prop}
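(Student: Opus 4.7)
The plan is to reduce both identities to the classical Legendre relation for complete elliptic integrals. Let us abbreviate the four elliptic periods naturally associated with $\Gamma$ as
\begin{equation*}
A=\int_a^1\frac{d\mu}{\sqrt{(1-\mu^2)(\mu^2-a^2)}},\quad A'=\int_a^1\!\!\sqrt{\frac{\mu^2-a^2}{1-\mu^2}}\,d\mu,\quad B=\int_{-a}^a\frac{d\mu}{\sqrt{(a^2-\mu^2)(1-\mu^2)}},\quad B'=\int_{-a}^a\!\!\sqrt{\frac{a^2-\mu^2}{1-\mu^2}}\,d\mu.
\end{equation*}
By \eqref{Anorm}, \eqref{Bper}, \eqref{Vconst}, \eqref{Dya} we have $c=i/(2A)$, $\tau=2cB=iB/A$, $V=-B'/\pi$, and $\kappa=A'$. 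Substituting these into \eqref{parrel} and clearing factors, one sees that the claim is equivalent to
\begin{equation*}
AB'+A'B=\pi.
\end{equation*}

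To verify this Legendre-type identity, I would perform the substitutions $\mu=a\sin\psi$ on $(-a,a)$ and $\mu=\sqrt{a^2+(1-a^2)\sin^2\theta}$ (followed by $\theta\mapsto\pi/2-\phi$) on $(a,1)$. With $k=a$ and complementary modulus $k'=\sqrt{1-a^2}$, a direct calculation gives
\begin{equation*}
A=K(k'),\qquad A'=E(k')-k^2K(k'),\qquad B=2K(k),\qquad B'=2\bigl[E(k)-k'^2K(k)\bigr].
\end{equation*}
Substituting and collecting terms, the two $KK'$-contributions combine (using $k^2+k'^2=1$) and
\begin{equation*}
AB'+A'B = 2\bigl[EK'+E'K-KK'\bigr] = \pi,
\end{equation*}
which is precisely Legendre's relation $EK'+E'K-KK'=\pi/2$. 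This establishes \eqref{parrel}.

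For \eqref{diffrel}, since $v$ is held fixed we have $\kappa(s)=v/s$, so $d\kappa/ds=-\kappa/s$, and the chain rule gives
\begin{equation*}
\frac{\partial}{\partial s}\bigl(sV(s)\bigr)=V(\kappa)-\kappa\,V'(\kappa).
\end{equation*}
To compute $V'(\kappa)$, differentiate under the integral sign: the boundary contributions at $\mu=\pm a$ vanish because the integrands $\sqrt{(\mu^2-a^2)/(1-\mu^2)}$ and $\sqrt{(a^2-\mu^2)/(1-\mu^2)}$ do, and one obtains $d\kappa/da=-aA$ and $dB'/da=aB$. Hence $V'(\kappa)=-(1/\pi)(dB'/da)(da/d\kappa)=B/(\pi A)$, and
\begin{equation*}
V-\kappa V'(\kappa)=-\frac{B'}{\pi}-\frac{A'B}{\pi A}=-\frac{AB'+A'B}{\pi A}=-\frac{1}{A}=2ic,
\end{equation*}
by \eqref{parrel} (equivalently, by $AB'+A'B=\pi$) and the definition of $c$. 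Thus \eqref{diffrel} follows from \eqref{parrel}. The only real obstacle is the careful bookkeeping for \eqref{parrel}: once the four periods are correctly identified with the classical $K,K',E,E'$ at matching moduli, the algebra collapses to Legendre, and \eqref{diffrel} is then a one-line consequence together with the implicit-differentiation computation of $V'(\kappa)$.
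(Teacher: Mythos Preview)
Your argument is correct. Both \eqref{parrel} and \eqref{diffrel} are established, and the bookkeeping with the four periods $A,A',B,B'$ and their identification with $K(k'),E(k')-k^2K(k'),2K(k),2[E(k)-k'^2K(k)]$ checks out; the computation $AB'+A'B=2[EK'+E'K-KK']=\pi$ is exactly Legendre's relation.

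The paper takes a different route to \eqref{parrel}: it applies Riemann's bilinear relation on $\Gamma$ to the pair consisting of the holomorphic differential $\omega$ and the meromorphic differential of the second kind $d\widehat{\omega}=\sqrt{(\lambda^2-a^2)/(\lambda^2-1)}\,d\lambda$, reading off the residue contribution $4\pi ic$ at infinity and identifying the $A$- and $B$-periods of $d\widehat{\omega}$ with $-2i\kappa$ and $2\pi V$. For \eqref{diffrel} both approaches coincide: the chain-rule identity $\partial_s(sV)=V-\kappa V'(\kappa)=V+i\kappa\tau/\pi$ is computed in the same way. The two arguments for \eqref{parrel} are really the same fact seen from different angles, since Legendre's relation is precisely the genus-one instance of the Riemann bilinear relation for $(dz/w,\,z\,dz/w)$; your version is more elementary and self-contained, while the paper's version is coordinate-free and makes the extension to higher-genus analogues transparent.
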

\begin{proof} Differentiating the left hand side in \eqref{diffrel} (recall that $\frac{\partial}{\partial s}$ is the partial derivative with $v=\kappa s$ fixed) gives
\begin{equation*}
	\frac{\partial}{\partial s}\left(sV(s)\right) =V+s\frac{\partial V}{\partial\kappa}\frac{\partial\kappa}{\partial s} = V-\kappa\frac{\partial V}{\partial a}\frac{\partial a}{\partial\kappa}=V+\kappa \frac{i\tau}{\pi}
\end{equation*}
where the last equality follows from \eqref{Dya}, \eqref{Vconst} and \eqref{Bper}. Let us consider the following meromorphic differential of the second kind on $\Gamma$
\begin{equation*}
	\d\widehat{\omega} = \sqrt{\frac{\lambda^2-a^2}{\lambda^2-1}}\,\d\lambda.
\end{equation*}
In terms of the local variable $\xi=\lambda^{-1}$,
\begin{equation*}
	\d\widehat{\omega}(\xi) = \mp\left(1+\frac{1}{2}(1-a^2)\xi^2+\mathcal{O}\big(\xi^4\big)\right)\frac{\d\xi}{\xi^2},\hspace{0.5cm} \omega = \mp c\left(1+\mathcal{O}\left(\xi^2\right)\right)d\xi
\end{equation*}
as $\xi\rightarrow 0$ on the first (respectively, second) sheet of $\Gamma$. Thus by Riemann's bilinear relations
\begin{equation*}
	\int_{A_1}\omega\int_{B_1}\d\widehat{\omega}-\int_{A_1}\d\widehat{\omega}\int_{B_1}\omega = 4\pi ic.
\end{equation*}
However $\int_{A_1}\omega = 1,\int_{B_1}\d\widehat{\omega}=2\pi V$ and $\int_{A_1}\d\widehat{\omega} = -2i\kappa$ and therefore
\begin{equation*}
	2\pi V+2i\kappa\tau = 4\pi ic,
\end{equation*}
which gives equation \eqref{parrel} and thus also \eqref{diffrel}.
\end{proof} 
Equation \eqref{diffrel} allows us to rewrite \eqref{Meq} as follows
\begin{equation*}
	-i\big(M_1^{11}-M_1^{22}\big) =2ic\big(\ln\theta(sV)\big)'=\frac{\partial}{\partial s}\ln\theta(sV|\tau)-\frac{\partial}{\partial\tau}\ln\theta(x|\tau)\big|_{x=sV}\frac{\partial\tau}{\partial s}%\\
	%&=&\frac{\partial}{\partial s}\ln\theta(sV|\tau)+\frac{i}{2\pi}\frac{\partial^2}{\partial x^2}\ln\theta(x|\tau)\big|_{x=sV}\frac{\partial\tau}{\partial s}
	%\frac{1}{2}\frac{\partial}{\partial s}\ln\Big(\theta(u_{\infty}+sV+d)\theta(u_{\infty}-sV+d)\Big)=\frac{\partial}{\partial s}\ln\theta(sV)
\end{equation*}
and using \eqref{de} and substituting this expression into \eqref{sd4}, we obtain
\begin{eqnarray}\label{lint0}
	\frac{\partial}{\partial s}\ln\det(I-\gamma K_s) &=& -s\big(1-a^2\big)+\frac{\partial}{\partial s}\ln\theta(sV)+\frac{i}{4\pi}\frac{1}{\theta(sV|\tau)}\frac{\partial^2}{\partial x^2}\theta(x|\tau)\big|_{x=sV}\frac{\partial\tau}{\partial s}\\
	&&+\frac{1}{2\pi}\int_{\Sigma_R}\bigg[\big(G_R(w)-I\big)_{11}-\big(G_R(w)-I\big)_{22}\bigg]\,\d w+\mathcal{O}\big(s^{-2}\big)\nonumber.
\end{eqnarray}
\subsection{Computation of $\mathcal{O}(s^{-1})$ term}\label{firstcorr}
Now, we start a somewhat involved computation of the line integral
\begin{equation}\label{lint1}
	\int_{\Sigma_R} \big(G_R(w)-I\big)\,\d w = \sum_{i=1}^2\left[\oint_{C_{r_i}}\big(G_R(w)-I\big)\,\d w+\oint_{C_{l_i}}\big(G_R(w)-I\big)\,\d w\right]+\mathcal{O}\left(e^{-c_{17}a\kappa s}\right)
\end{equation}
where the error term is the exponentially small contributions arising from the rest of the jump contour $\Sigma_R$, see \eqref{DZ1},\eqref{DZ2},\eqref{DZ3},\eqref{DZ4} and \eqref{DZ61},\eqref{DZ62}. The following computation is similar in style to the one carried out in \cite{BL}:\smallskip

On the relevant circles, we have for $\kappa\in[\delta,1-\delta]$ by \eqref{circle1},\eqref{circle2} and \eqref{p5},\eqref{p10},\eqref{p18},\eqref{p22},
\begin{eqnarray*}
	G_R(\lambda)-I&=&\frac{1}{48}\,\zeta^{-\frac{3}{2}}(\lambda)M(\lambda)\begin{pmatrix}
	-1 & 6i\\
	6i & 1\\
	\end{pmatrix}\big(M(\lambda)\big)^{-1}+\mathcal{O}\big(s^{-2}\big),\hspace{0.5cm}\lambda\in C_{r_1}\\
	G_R(\lambda)-I &=&\frac{i}{48}\,\zeta^{-\frac{3}{2}}(\lambda)M(\lambda)\begin{pmatrix}
	1 & 6i e^{2\pi isV}\\
	6i e^{-2\pi isV} & -1\\
	\end{pmatrix}\big(M(\lambda)\big)^{-1} +\mathcal{O}\big(s^{-2}\big),\hspace{0.5cm}\lambda\in C_{l_1}\\
	G_R(\lambda)-I&=&\frac{i}{8}\,\zeta^{-\frac{1}{2}}(\lambda)M(\lambda)\begin{pmatrix}
	1 & 2i\\
	2i & -1\\
	\end{pmatrix}\big(M(\lambda)\big)^{-1}+\mathcal{O}\big(s^{-2}\big),\hspace{0.5cm}\lambda\in C_{r_2}\\
	G_R(\lambda)-I&=&\frac{i}{8}\,\zeta^{-\frac{1}{2}}(\lambda)M(\lambda)\begin{pmatrix}
	-i & -2e^{2\pi isV}\\
	-2e^{-2\pi isV} & i\\
	\end{pmatrix}\big(M(\lambda)\big)^{-1}+\mathcal{O}\big(s^{-2}\big),\hspace{0.5cm}\lambda\in C_{l_2}.
\end{eqnarray*}
Next from \eqref{Mdef},
\begin{equation*}
	M(\lambda) = \frac{1}{2}\begin{pmatrix}
	\big(\beta(\lambda)+\beta^{-1}(\lambda)\big)N_{11}(\lambda) & -i\big(\beta(\lambda)-\beta^{-1}(\lambda)\big)N_{12}(\lambda)\\
	i\big(\beta(\lambda)-\beta^{-1}(\lambda)\big)N_{21}(\lambda) & \big(\beta(\lambda)+\beta^{-1}(\lambda)\big)N_{22}(\lambda)\\
	\end{pmatrix}
\end{equation*}
with
\begin{align*}
	N_{11}(\lambda)&= \frac{\theta(0)}{\theta(sV)}\frac{\theta(u(\lambda)+sV+d)}{\theta(u(\lambda)+d)}, & N_{12}(\lambda)&=\frac{\theta(0)}{\theta(sV)}\frac{\theta(-u(\lambda)+sV+d)}{\theta(-u(\lambda)+d)}\\
	N_{21}(\lambda)&=\frac{\theta(0)}{\theta(sV)}\frac{\theta(u(\lambda)+sV-d)}{\theta(u(\lambda)-d)},    & N_{22}(\lambda)&=\frac{\theta(0)}{\theta(sV)}\frac{\theta(-u(\lambda)+sV-d)}{\theta(-u(\lambda)-d)}.
\end{align*} 
Here the functions $N_{jk}(\lambda)$ are analytic in $\mathbb{C}\backslash J$, and from Proposition \ref{prop5},
\begin{align*}
	\big(N_{11}(\lambda)\big)_{\pm} &= \big(N_{12}(\lambda)\big)_{\mp}, &\big(N_{21}(\lambda)\big)_{\pm} &= \big(N_{22}(\lambda)\big)_{\mp},&\lambda\in(a,1)\\
	\big(N_{11}(\lambda)\big)_{\pm} &= \big(N_{12}(\lambda)\big)_{\mp}e^{-2\pi isV},&\big(N_{21}(\lambda)\big)_{\pm} &= \big(N_{22}(\lambda)\big)_{\mp}e^{-2\pi isV},&\lambda\in(-1,-a).
\end{align*}
Thus
\begin{eqnarray}
	\big(G_R(\lambda)-I\big)_{11} &=&\frac{1}{192}\,\zeta^{-\frac{3}{2}}(\lambda)\bigg[-\big(\beta^2+\beta^{-2}\big)\left(N_{11}N_{22}+N_{12}N_{21}\right)\label{G11}\\
	&&+6\big(\beta^2-\beta^{-2}\big)\left(N_{11}N_{21}+N_{12}N_{22}\right)-2\left(N_{11}N_{22}-N_{12}N_{21}\right)\bigg]+\mathcal{O}\big(s^{-2}\big)\nonumber\\
	&=&-\big(G_R(\lambda)-I\big)_{22}+\mathcal{O}\left(s^{-2}\right),\hspace{0.5cm}\lambda\in C_{r_1}\nonumber
\end{eqnarray}
and we notice (see again Proposition \ref{prop5}) that $N_{11}N_{22}+N_{12}N_{21}$ and $N_{11}N_{21}+N_{12}N_{22}$ have no jumps in a small neighborhood of $\lambda=a$ and are bounded at $\lambda=a$ (compare Proposition \ref{prop8}), hence are locally analytic. The function $N_{11}N_{22}-N_{12}N_{22}$, on the other hand, satisfies the relation:
\begin{equation}\label{tev1}
	\big(N_{11}N_{22}-N_{12}N_{22}\big)_+ = -\big(N_{11}N_{22}-N_{12}N_{22}\big)_-,\hspace{0.5cm}\lambda\in(a,1).
\end{equation}
Similar statements hold for the remaining cases. First, for $\lambda\in C_{l_1}$,
\begin{eqnarray*}
	\big(G_R(\lambda)-I\big)_{11} &=&\frac{i}{192}\,\zeta^{-\frac{3}{2}}(\lambda)\bigg[\big(\beta^2+\beta^{-2}\big)\left(N_{11}N_{22}+N_{12}N_{21}\right)+6\big(\beta^2-\beta^{-2}\big)\\
	&&\times\left(N_{11}N_{21}e^{2\pi isV}+N_{12}N_{22}e^{-2\pi isV}\right)+2\left(N_{11}N_{22}-N_{12}N_{21}\right)\bigg]+\mathcal{O}\big(s^{-2}\big)\\
	&=&-\big(G_R(\lambda)-I\big)_{22}+\mathcal{O}\left(s^{-2}\right),\hspace{0.5cm}\lambda\in C_{l_1}
\end{eqnarray*}
with functions $N_{11}N_{22}+N_{12}N_{21}$ and $N_{11}N_{21}e^{2\pi isV}+N_{12}N_{22}e^{-2\pi isV}$ analytic at $\lambda=-a$
and
\begin{equation}\label{tev2}
	\big(N_{11}N_{22}-N_{12}N_{21}\big)_+ = -\big(N_{11}N_{22}-N_{12}N_{21}\big)_-,\hspace{0.5cm}\lambda\in(-1,-a).
\end{equation}
Next, for $\lambda\in C_{r_2}$,
\begin{eqnarray*}
	\big(G_R(\lambda)-I\big)_{11} &=&\frac{i}{32}\,\zeta^{-\frac{1}{2}}(\lambda)\bigg[\big(\beta^2+\beta^{-2}\big)\left(N_{11}N_{22}+N_{12}N_{21}\right)\\
	&&+2\big(\beta^2-\beta^{-2}\big)\left(N_{11}N_{21}+N_{12}N_{22}\right)+2\left(N_{11}N_{22}-N_{12}N_{21}\right)\bigg]+\mathcal{O}\big(s^{-2}\big)\\
	&=&-\big(G_R(\lambda)-I\big)_{22}+\mathcal{O}\left(s^{-2}\right),\hspace{0.5cm}\lambda\in C_{r_2},
\end{eqnarray*}
where $N_{11}N_{22}+N_{12}N_{21}$ and $N_{11}N_{21}+N_{12}N_{22}$ are also analytic at $\lambda=1$, and the jump of $N_{11}N_{22}-N_{12}N_{21}$ is given in \eqref{tev1}. Finally, for $\lambda\in C_{l_2}$,
\begin{eqnarray*}
	\big(G_R(\lambda)-I\big)_{11} &=&\frac{1}{32}\,\zeta^{-\frac{1}{2}}(\lambda)\bigg[\big(\beta^2+\beta^{-2}\big)\left(N_{11}N_{22}+N_{12}N_{21}\right)-2\big(\beta^2-\beta^{-2}\big)\\
	&&\times\left(N_{11}N_{21}e^{2\pi isV}+N_{12}N_{22}e^{-2\pi isV}\right)+2\left(N_{11}N_{22}-N_{12}N_{21}\right)\bigg]+\mathcal{O}\big(s^{-2}\big)\\
	&=&-\big(G_R(\lambda)-I\big)_{22}+\mathcal{O}\left(s^{-2}\right),\hspace{0.5cm}\lambda\in C_{l_2}
\end{eqnarray*}
where $N_{11}N_{22}+N_{12}N_{21}$ and $N_{11}N_{21}e^{2\pi isV}+N_{12}N_{22}e^{-2\pi isV}$ are analytic at $\lambda=-1$ and the jump relation of $N_{11}N_{22}-N_{12}N_{21}$ is given in \eqref{tev2}. We note the following expansions (cf. Proposition \ref{prop4}):
\begin{equation*}
	u_{\pm}(\lambda) = \mp\frac{1}{2}+\frac{2c}{\sqrt{2a(1-a^2)}}\sqrt{a-\lambda}\left[1+\frac{1-5a^2}{12a(1-a^2)}(a-\lambda)+\mathcal{O}\big((a-\lambda)^2\big)\right],\,\lambda\in(a-r_1,a),
\end{equation*}
%for $\lambda\in(-a,-a+r_1)$
\begin{equation*}
	u_{\pm}(\lambda) = \mp\frac{1}{2}+\frac{\tau}{2}-\frac{2c}{\sqrt{2a(1-a^2)}}\sqrt{\lambda+a}\left[1+\frac{1-5a^2}{12a(1-a^2)}(\lambda+a)+\mathcal{O}\big((\lambda+a)^2\big)\right],\,\lambda\in(-a,-a+r_1),
\end{equation*}
where $c$ is given in \eqref{Anorm} and $\tau$ in \eqref{Bper}. Furthermore,
%\begin{equation*}
%	u_{\pm}(\lambda) = \mp\frac{1}{2}\mp\frac{2ic}{\sqrt{2a(1-a^2)}}\sqrt{\lambda-a}\left[1+\frac{5a^2-1}{12a(1-a^2)}(\lambda-a)+O\big((\lambda-a)^2\big)\right],\hspace{0.5cm}\lambda\in(a,a+r_1)
%\end{equation*}
%and for $\lambda\in(-a-r_1,-a)$
%\begin{equation*}
%	u_{\pm}(\lambda) = \mp\frac{1}{2}+\frac{\tau}{2}\mp\frac{2ic}{\sqrt{2a(1-a^2)}}\sqrt{-a-\lambda}\left[1+\frac{5a^2-1}{12a(1-a^2)}(-a-\lambda)+O\big((-a-\lambda)^2\big)\right].
%\end{equation*}
%Also, for $\lambda\in(1,1+r_2)$,
\begin{equation*}
	u(\lambda) = \frac{2c}{\sqrt{2(1-a^2)}}\sqrt{\lambda-1}\left[1-\frac{5-a^2}{12(1-a^2)}(\lambda-1)+\mathcal{O}\big((\lambda-1)^2\big)\right],\hspace{0.5cm}\lambda\in(1,1+r_2),
\end{equation*}
%and moreover in case $\lambda\in(-1-r_2,-1)$
\begin{equation*}
	u(\lambda) = \frac{\tau}{2}-\frac{2c}{\sqrt{2(1-a^2)}}\sqrt{-1-\lambda}\left[1-\frac{5-a^2}{12(1-a^2)}(-1-\lambda)+\mathcal{O}\big((-1-\lambda)^2\big)\right], \ \ \ \ \lambda\in(-1-r_2,-1).
\end{equation*}
Note also that
\begin{equation*}
	\beta^2(\lambda)\pm \beta^{-2}(\lambda) = \sqrt{\frac{(\lambda+a)(\lambda-1)}{(\lambda+1)(\lambda-a)}}\pm\sqrt{\frac{(\lambda+1)(\lambda-a)}{(\lambda+a)(\lambda-1)}}
\end{equation*}
and, from \eqref{change1},\eqref{p6},\eqref{p16} and \eqref{p20}, with the appropriate choice of the branches,
\begin{eqnarray*}
 \zeta^{-\frac{3}{2}}(\lambda)&=&\frac{i}{s}\sqrt{\frac{1-a^2}{2a}}(a-\lambda)^{-\frac{3}{2}}\left[1+\frac{3(1+3a^2)}{20a(1-a^2)}(a-\lambda)+\mathcal{O}\big((a-\lambda)^2\big)\right],\,\lambda\in(a-r_1,a),\\
 \zeta^{-\frac{3}{2}}(\lambda)&=&\frac{1}{s}\sqrt{\frac{1-a^2}{2a}}(\lambda+a)^{-\frac{3}{2}}\left[1+\frac{3(1+3a^2)}{20a(1-a^2)}(\lambda+a)+\mathcal{O}\big((\lambda+a)^2\big)\right],\,\lambda\in(-a,-a+r_1),\\
 \zeta^{-\frac{1}{2}}(\lambda)&=&\frac{1}{s\sqrt{2(1-a^2)}}(\lambda-1)^{-\frac{1}{2}}\left[1-\frac{3+a^2}{12(1-a^2)}(\lambda-1)+\mathcal{O}\big((\lambda-1)^2\big)\right],\,\lambda\in(1,1+r_2),
\end{eqnarray*}
and
\begin{equation*}
 \zeta^{-\frac{1}{2}}(\lambda)=\frac{-i}{s\sqrt{2(1-a^2)}}(-1-\lambda)^{-\frac{1}{2}}\left[1-\frac{3+a^2}{12(1-a^2)}(-1-\lambda)+\mathcal{O}\big((-1-\lambda)^2\big)\right],\,\lambda\in(-1-r_2,-1).
\end{equation*}
It now follows that the functions $(\beta^2\pm \beta^{-2})\zeta^{-\frac{3}{2}}(\lambda)$ and $(\beta^2\pm \beta^{-2})\zeta^{-\frac{1}{2}}(\lambda)$ are locally meromorphic in a neighborhood of the relevant branch points. More specifically, near $\lambda=a$,
\begin{equation*}
	\big(\beta^2\pm \beta^{-2}\big)\zeta^{-\frac{3}{2}}(\lambda) = \frac{i}{s}\bigg[\frac{1-a}{(a-\lambda)^2}+\frac{1}{a-\lambda}\left\{\pm\frac{1+a}{2a}-\frac{1-10a-7a^2}{10a(1+a)}\right\}+\mathcal{O}(1)\bigg],\ \ \ \lambda\rightarrow a,
\end{equation*}
and near $\lambda=-a$,
\begin{equation*}
	\big(\beta^2\pm \beta^{-2}\big)\zeta^{-\frac{3}{2}}(\lambda) = \pm\frac{1}{s}\bigg[\frac{1-a}{(\lambda+a)^2}+\frac{1}{\lambda+a}\left\{\pm\frac{1+a}{2a}-\frac{1-10a-7a^2}{10a(1+a)}\right\}+\mathcal{O}(1)\bigg],\ \ \ \lambda\rightarrow-a.
\end{equation*}
Next, near $\lambda=1$,
\begin{equation*}
	\big(\beta^2\pm\beta^{-2}\big)\zeta^{-\frac{1}{2}}(\lambda) = \frac{1}{s}\bigg[\pm \frac{1}{1+a}\frac{1}{\lambda-1}+\left\{\frac{1}{2(1-a)}\pm\frac{a(3-a)}{3(1+a)(1-a^2)}\right\}+\mathcal{O}(\lambda-1)\bigg],\ \ \ \lambda\rightarrow 1,
\end{equation*}
and, near $\lambda=-1$,
\begin{equation*}
	\big(\beta^2\pm\beta^{-2}\big)\zeta^{-\frac{1}{2}}(\lambda) =-\frac{i}{s}\bigg[\frac{1}{1+a}\frac{1}{(-1-\lambda)}\pm\left\{\frac{1}{2(1-a)}\pm\frac{a(3-a)}{3(1+a)(1-a^2)}\right\}+\mathcal{O}(-1-\lambda)\bigg],\ \ \ \lambda\rightarrow-1.
\end{equation*}
As we noticed above, certain combinations of the functions $N_{jk}(\lambda)$ are in fact analytic in a neighborhood of the relevant branch points. Indeed, at $\lambda=a$, by the expansion of $u(\lambda)$ near $a$ given above
\begin{eqnarray*}
	N_{11}(\lambda)N_{22}(\lambda)+N_{12}(\lambda)N_{21}(\lambda) &=& \Xi_0+\frac{c^2}{a(1-a^2)}\frac{\d^2}{\d u^2}\big(N_{11}N_{22}+N_{12}N_{21}\big)\Big|_{\lambda=a}(a-\lambda)\\ &&\ \ \ +\mathcal{O}\big((a-\lambda)^2\big),\ \ \ \lambda\rightarrow a,\\
	N_{11}(\lambda)N_{21}(\lambda)+N_{12}(\lambda)N_{22}(\lambda) &=& \Xi_0+\frac{c^2}{a(1-a^2)}\frac{\d^2}{\d u^2}\big(N_{11}N_{21}+N_{12}N_{22}\big)\Big|_{\lambda=a}(a-\lambda)\\ &&\ \ \ +\mathcal{O}\big((a-\lambda)^2\big),\ \ \ \lambda\rightarrow a,
\end{eqnarray*}
with the abbreviation
\begin{equation}\label{abbrev1}
	\Xi_k =\Xi_k(sV,\kappa)= 2\frac{\theta_3^2(0)}{\theta_3^2(sV)}\frac{\theta_k(sV+d)\theta_k(sV-d)}{\theta_k^2(d)},\hspace{0.5cm}k=0,1,2,3,
\end{equation}
involving $V$ given by \eqref{Vconst}, $d$ given by \eqref{dchoice},\eqref{Bper},\eqref{Anorm}, and the Jacobi theta functions $\theta_0,\theta_1,\theta_2$ and $\theta_3$ (see Appendix \ref{thetaid}). Next, at $\lambda=-a$
\begin{eqnarray*}
	N_{11}(\lambda)N_{22}(\lambda)+N_{12}(\lambda)N_{21}(\lambda)&=&-\Xi_1+\frac{c^2}{a(1-a^2)}\frac{\d^2}{\d u^2}\big(N_{11}N_{22}+N_{12}N_{21}\big)\Big|_{\lambda=-a}(\lambda+a)\\ &&\ \ \ +\mathcal{O}\big((\lambda+a)^2\big),
\end{eqnarray*}
and at $\lambda=-a$,
\begin{align*}
	N_{11}(\lambda)N_{21}(\lambda)e^{2\pi isV}+N_{12}(\lambda)N_{22}(\lambda)e^{-2\pi isV}&=-\Xi_1+\frac{c^2}{a(1-a^2)}\frac{\d^2}{\d u^2}\big(N_{11}N_{21}e^{2\pi isV}\\
	&+N_{12}N_{22}e^{-2\pi isV}\big)\Big|_{\lambda=-a}(\lambda+a)+\mathcal{O}\big((\lambda+a)^2\big).
\end{align*}
At $\lambda=1$,
\begin{eqnarray*}
	N_{11}(\lambda)N_{22}(\lambda)+N_{12}(\lambda)N_{21}(\lambda) &=&\Xi_3+\frac{c^2}{1-a^2}\frac{\d^2}{\d u^2}\big(N_{11}N_{22}+N_{12}N_{21}\big)\Big|_{\lambda=1}(\lambda-1)\\
	&&\ \ \ +\mathcal{O}\big((\lambda-1)^2\big),\\
	N_{11}(\lambda)N_{21}(\lambda)+N_{12}(\lambda)N_{22}(\lambda)&=&\Xi_3+\frac{c^2}{1-a^2}\frac{\d^2}{\d u^2}\big(N_{11}N_{21}+N_{12}N_{22}\big)\Big|_{\lambda=1}(\lambda-1)\\
	&&\ \ \ +\mathcal{O}\big((\lambda-1)^2\big),
\end{eqnarray*}
and finally at $\lambda=-1$,
\begin{eqnarray*}
	N_{11}(\lambda)N_{22}(\lambda)+N_{12}(\lambda)N_{21}(\lambda)&=&\Xi_2+\frac{c^2}{1-a^2}\frac{\d^2}{\d u^2}\big(N_{11}N_{22}+N_{12}N_{21}\big)\Big|_{\lambda=-1}(-1-\lambda)\\
	&&\ \ \ +\mathcal{O}\big((-1-\lambda)^2\big),
\end{eqnarray*}
%as well as
\begin{align*}
	N_{11}(\lambda)N_{21}(\lambda)e^{2\pi isV}+N_{12}(\lambda)N_{22}(\lambda)e^{-2\pi isV} &=\Xi_2+\frac{c^2}{1-a^2}\frac{\d^2}{\d u^2}\big(N_{11}N_{21}e^{2\pi isV}\\
	&+N_{12}N_{22}e^{-2\pi isV}\big)\Big|_{\lambda=-1}(-1-\lambda)+\mathcal{O}\big((-1-\lambda)^2\big).
\end{align*}
Since the function $N_{11}N_{22}-N_{12}N_{21}$ is an odd function of $u$, we can use the previously derived expansions of $u(\lambda)$ near the branch points, and obtain
\begin{align*}
	N_{11}(\lambda)N_{22}(\lambda)&-N_{12}(\lambda)N_{21}(\lambda)=\Xi_0\left[\frac{\theta_0'(sV+d)}{\theta_0(sV+d)}-\frac{\theta_0'(sV-d)}{\theta_0(sV-d)}-2\frac{\theta_0'(d)}{\theta_0(d)}\right]\\
	&\times\frac{2c\sqrt{a-\lambda}}{\sqrt{2a(1-a^2)}}\Big(1+\mathcal{O}\big(a-\lambda\big)\Big),\hspace{0.5cm}\lambda\in(a-r_1,a),
\end{align*}
\begin{align*}
	N_{11}(\lambda)N_{22}(\lambda)&-N_{12}(\lambda)N_{21}(\lambda)=\Xi_1\left[\frac{\theta_1'(sV+d)}{\theta_1(sV+d)}-\frac{\theta_1'(sV-d)}{\theta_1(sV-d)}-2\frac{\theta_1'(d)}{\theta_1(d)}\right]\\
	&\times\frac{2c\sqrt{\lambda+a}}{\sqrt{2a(1-a^2)}}\Big(1+\mathcal{O}\big(\lambda+a\big)\Big),\hspace{0.5cm}\lambda\in(-a,-a+r_1),
\end{align*}
\begin{align*}
	N_{11}(\lambda)N_{22}(\lambda)&-N_{12}(\lambda)N_{21}(\lambda)=\Xi_3\left[\frac{\theta_3'(sV+d)}{\theta_3(sV+d)}-\frac{\theta_3'(sV-d)}{\theta_3(sV-d)}-2\frac{\theta_3'(d)}{\theta_3(d)}\right]\\
	&\times\frac{2c\sqrt{\lambda-1}}{\sqrt{2(1-a^2)}}\Big(1+\mathcal{O}\big(\lambda-1)\Big),\hspace{0.5cm}\lambda\in(1,1+r_2),
\end{align*}
\begin{align*}
	N_{11}(\lambda)N_{22}(\lambda)&-N_{12}(\lambda)N_{21}(\lambda)=-\Xi_2\left[\frac{\theta_2'(sV+d)}{\theta_2(sV+d)}-\frac{\theta_2'(sV-d)}{\theta_2(sV-d)}-2\frac{\theta_2'(d)}{\theta_2(d)}\right]\\
	&\times\frac{2c\sqrt{-1-\lambda}}{\sqrt{2(1-a^2)}}\Big(1+\mathcal{O}\big(-1-\lambda\big)\Big),\hspace{0.5cm}\lambda\in(-1-r_2,-1).
\end{align*}
Substituting these expansions into \eqref{G11}, we obtain
\begin{align*}
	\big(G_R(\lambda)&-I\big)_{11} =\frac{i}{192s}\Bigg[\frac{5(1-a)\Xi_0}{(a-\lambda)^2}+\frac{1}{a-\lambda}\Bigg\{\frac{6c^2}{a(1+a)}\frac{\d^2}{\d u^2}\big(N_{11}N_{21}+N_{12}N_{22}\big)\Big|_{\lambda=a}\\
	&-\frac{c^2}{a(1+a)}\frac{\d^2}{\d u^2}\big(N_{11}N_{22}+N_{12}N_{21}\big)\Big|_{\lambda=a}-\frac{2c\,\Xi_0}{a}
	\left(\frac{\theta_0'(sV+d)}{\theta_0(sV+d)}-\frac{\theta_0'(sV-d)}{\theta_0(sV-d)}-2\frac{\theta_0'(d)}{\theta_0(d)}\right)\\
	&-\frac{2(2+a)}{a(1+a)}\Xi_0\Bigg\}+\mathcal{O}(1)\Bigg]+\mathcal{O}\left(s^{-2}\right),\hspace{0.5cm}\lambda\in C_{r_1},\ \ s\rightarrow\infty.
\end{align*}
Similarly,
\begin{align*}
	\big(G_R(\lambda)&-I\big)_{11} = \frac{i}{192s}\Bigg[\frac{5(1-a)\Xi_1}{(\lambda+a)^2}-\frac{1}{\lambda+a}\Bigg\{\frac{6c^2}{a(1+a)}\frac{\d^2}{\d u^2}\big(N_{11}N_{21}e^{2\pi isV}+N_{12}N_{22}e^{-2\pi isV}\big)\Big|_{\lambda=-a}\\
	&-\frac{c^2}{a(1+a)}\frac{\d^2}{\d u^2}\big(N_{11}N_{22}+N_{12}N_{21}\big)\Big|_{\lambda=-a}-\frac{2c\,\Xi_1}{a}\left(\frac{\theta_1'(sV+d)}{\theta_1(sV+d)}
	-\frac{\theta_1'(sV-d)}{\theta_1(sV-d)}-2\frac{\theta_1'(d)}{\theta_1(d)}\right)\\
	&+\frac{2(2+a)}{a(1+a)}\Xi_1\Bigg\}+\mathcal{O}(1)\Bigg]+\mathcal{O}\left(s^{-2}\right),\hspace{0.5cm}\lambda\in C_{l_1},
\end{align*}
and
\begin{eqnarray*}
	\big(G_R(\lambda)-I\big)_{11} &=& \frac{i}{32s}\Bigg[-\frac{\Xi_3}{(1+a)(\lambda-1)}+\mathcal{O}(1)\Bigg]+\mathcal{O}\left(s^{-2}\right),\hspace{0.5cm}\lambda\in C_{r_2},\\
	\big(G_R(\lambda)-I\big)_{11} &=& \frac{i}{32s}\Bigg[\frac{\Xi_2}{(1+a)(-1-\lambda)}+\mathcal{O}(1)\Bigg]+\mathcal{O}\left(s^{-2}\right),\hspace{0.5cm}\lambda\in C_{l_2}.
\end{eqnarray*}
We now compute the terms involving the second order derivatives. First, at $\lambda=a$,
\begin{eqnarray*}
	\frac{\d^2}{\d u^2}\big(N_{11}N_{21}+N_{12}N_{22}\big)\Big|_{\lambda=a}&=&\Xi_0\Bigg[\frac{\theta_0''(sV+d)}{\theta_0(sV+d)}-2\left(\frac{\theta_0'(d)}{\theta_0(d)}\right)'
	+\frac{\theta_0''(sV-d)}{\theta_0(sV-d)}\\
	&&+2\frac{\theta_0'(sV-d)}{\theta_0(sV-d)}\frac{\theta_0'(sV+d)}{\theta_0(sV+d)}\Bigg],
\end{eqnarray*}
then
\begin{align*}
	\frac{\d^2}{\d u^2}\big(N_{11}N_{22}&+N_{12}N_{21}\big)\Big|_{\lambda=a}=\Xi_0\Bigg[\frac{\theta_0''(sV+d)}{\theta_0(sV+d)}-2\left(\frac{\theta_0'(d)}{\theta_0(d)}\right)'
	+\frac{\theta_0''(sV-d)}{\theta_0(sV-d)}\\
	&-2\frac{\theta_0'(sV-d)}{\theta_0(sV-d)}\frac{\theta_0'(sV+d)}{\theta_0(sV+d)}+4\bigg\{\frac{\theta_0'(sV-d)}{\theta_0(sV-d)}+\frac{\theta_0'(d)}{\theta_0(d)}
	-\frac{\theta_0'(sV+d)}{\theta_0(sV+d)}\bigg\}\frac{\theta_0'(d)}{\theta_0(d)}\Bigg],
\end{align*}
and the second derivatives near $\lambda=1$ are, up to the subscript replacement $0\mapsto 3$, identical to the latter. At $\lambda=-a$
\begin{eqnarray*}
	\frac{\d^2}{\d u^2}\big(N_{11}N_{21}e^{2\pi isV}+N_{12}N_{22}e^{-2\pi isV}\big)\Big|_{\lambda=-a}&=&-\Xi_1\Bigg[\frac{\theta_1''(sV+d)}{\theta_1(sV+d)}-2\left(\frac{\theta_1'(d)}{\theta_1(d)}\right)'+\frac{\theta_1''(sV-d)}{\theta_1(sV-d)}\\
	&&+2\frac{\theta_1'(sV-d)}{\theta_1(sV-d)}\frac{\theta_1'(sV+d)}{\theta_1(sV+d)}\Bigg],
\end{eqnarray*}
and 
\begin{align*}
	\frac{\d^2}{\d u^2}\big(N_{11}N_{22}&+N_{12}N_{21}\big)\Big|_{\lambda=-a}=-\Xi_1\Bigg[\frac{\theta_1''(sV+d)}{\theta_1(sV+d)}-2\left(\frac{\theta_1'(d)}{\theta_1(d)}\right)'
	+\frac{\theta_1''(sV-d)}{\theta_1(sV-d)}\\
	&-2\frac{\theta_1'(sV-d)}{\theta_1(sV-d)}\frac{\theta_1'(sV+d)}{\theta_1(sV+d)}+4\bigg\{\frac{\theta_1'(sV-d)}{\theta_1(sV-d)}+\frac{\theta_1'(d)}{\theta_1(d)}
	-\frac{\theta_1'(sV+d)}{\theta_1(sV+d)}\bigg\}\frac{\theta_1'(d)}{\theta_1(d)}\Bigg],
\end{align*}
furthermore, at $\lambda=-1$,
\begin{eqnarray*}
	\frac{\d^2}{\d u^2}\big(N_{11}N_{21}e^{2\pi isV}+N_{12}N_{22}e^{-2\pi isV}\big)\Big|_{\lambda=-1}&=&\Xi_2\Bigg[\frac{\theta_2''(sV+d)}{\theta_2(sV+d)}-2\left(\frac{\theta_2'(d)}{\theta_2(d)}\right)'+\frac{\theta_2''(sV-d)}{\theta_2(sV-d)}\\
	&&+2\frac{\theta_2'(sV-d)}{\theta_2(sV-d)}\frac{\theta_2'(sV+d)}{\theta_2(sV+d)}\Bigg],
\end{eqnarray*}
and
\begin{align*}
	\frac{\d^2}{\d u^2}\big(N_{11}N_{22}&+N_{12}N_{21}\big)\Big|_{\lambda=-1}=\Xi_2\Bigg[\frac{\theta_1''(sV+d)}{\theta_1(sV+d)}-2\left(\frac{\theta_1'(d)}{\theta_1(d)}\right)'
	+\frac{\theta_1''(sV-d)}{\theta_1(sV-d)}\\
	&-2\frac{\theta_1'(sV-d)}{\theta_1(sV-d)}\frac{\theta_1'(sV+d)}{\theta_1(sV+d)}+4\bigg\{\frac{\theta_1'(sV-d)}{\theta_1(sV-d)}+\frac{\theta_1'(d)}{\theta_1(d)}
	-\frac{\theta_1'(sV+d)}{\theta_1(sV+d)}\bigg\}\frac{\theta_1'(d)}{\theta_1(d)}\Bigg].
\end{align*}
In order to evaluate the integrals in \eqref{lint1}, we apply the residue theorem and obtain:
\begin{equation*}
	\oint_{C_{r_1}}\!\!\!\!\big(G_R(w)-I\big)_{11}\,\d w =-\frac{2\pi\,\Xi_0}{192s}\frac{\Theta_0}{a(1+a)}+\mathcal{O}\left(s^{-2}\right),\  \oint_{C_{l_1}}\!\!\!\big(G_R(w)-I\big)_{11}\,\d w =\frac{2\pi\,\Xi_1}{192s}\frac{\Theta_1}{a(1+a)}+\mathcal{O}\left(s^{-2}\right)%\Bigg[\frac{5c^2}{a(1+a)}\bigg\{\frac{\theta_0''(sV+d)}%{\theta_0(sV+d)}-2\left(\frac{\theta_0'(d)}{\theta_0(d)}\right)'+\frac{\theta_0''(sV-d)}{\theta_0(sV-d)}\bigg\}\\
	%&+\frac{14c^2}{a(1+a)}\frac{\theta_0'(sV-d)}{\theta_0(sV-d)}\frac{\theta_0'(sV+d)}{\theta_0(sV+d)}-\frac{4c^2}{a(1+a)}\left\{\frac{\theta_0'(sV-d)}{\theta_0(sV-d)}
	%+\frac{\theta_0'(d)}{\theta_0(d)}-\frac{\theta_0'(sV+d)}{\theta_0(sV+d)}\right\}\frac{\theta_0'(d)}{\theta_0(d)}\\
	%&-\frac{2c}{a}\left\{\frac{\theta_0'(sV+d)}{\theta_0(sV+d)}-2\frac{\theta_0'(d)}{\theta_0(d)}-\frac{\theta_0'(sV-d)}{\theta_0(sV-d)}\right\}-\frac{2(2+a)}{a(1+a)}\Bigg]+\mathcal{O}%\left(s^{-2}\right),
\end{equation*}
%and
%\begin{align*}
%	\oint_{C_{l_1}}&\big(G_R(w)-I\big)_{11}\,\d w =\frac{2\pi\,\Xi_1}{192s}\Bigg[\frac{5c^2}{a(1+a)}\bigg\{\frac{\theta_1''(sV+d)}{\theta_1(sV+d)}-2\left(\frac{\theta_1'(d)}{\theta_1(d)}%\right)'+\frac{\theta_1''(sV-d)}{\theta_1(sV-d)}\bigg\}\\
%&+\frac{14c^2}{a(1+a)}\frac{\theta_1'(sV-d)}{\theta_1(sV-d)}\frac{\theta_1'(sV+d)}{\theta_1(sV+d)}-\frac{4c^2}{a(1+a)}\left\{\frac{\theta_1'(sV-d)}{\theta_1(sV-d)}
%	+\frac{\theta_1'(d)}{\theta_1(d)}-\frac{\theta_1'(sV+d)}{\theta_1(sV+d)}\right\}\frac{\theta_1'(d)}{\theta_1(d)}\\
%	&+\frac{2c}{a}\left\{\frac{\theta_1'(sV+d)}{\theta_1(sV+d)}-2\frac{\theta_1'(d)}{\theta_1(d)}-\frac{\theta_1'(sV-d)}{\theta_1(sV-d)}\right\}-\frac{2(2+a)}{a(1+a)}\Bigg]+\mathcal{O}%\left(s^{-2}\right),
%\end{align*}
where we set
\begin{align}
	\Theta_0=\Theta_0(sV,\kappa)&=5c^2\bigg\{\frac{\theta_0''(sV+d)}{\theta_0(sV+d)}-2\left(\frac{\theta_0'(d)}{\theta_0(d)}\right)'+\frac{\theta_0''(sV-d)}{\theta_0(sV-d)}\bigg\}
+14c^2\frac{\theta_0'(sV-d)}{\theta_0(sV-d)}\frac{\theta_0'(sV+d)}{\theta_0(sV+d)}\nonumber\\&-4c^2\left\{\frac{\theta_0'(sV-d)}{\theta_0(sV-d)}
	+\frac{\theta_0'(d)}{\theta_0(d)}-\frac{\theta_0'(sV+d)}{\theta_0(sV+d)}\right\}\frac{\theta_0'(d)}{\theta_0(d)}
	-2c(1+a)\bigg\{\frac{\theta_0'(sV+d)}{\theta_0(sV+d)}-2\frac{\theta_0'(d)}{\theta_0(d)}\nonumber\\
	&-\frac{\theta_0'(sV-d)}{\theta_0(sV-d)}\bigg\}-2(2+a)\label{ell1}
\end{align}
and
\begin{align}
	\Theta_1=\Theta_1(sV,\kappa)&=5c^2\bigg\{\frac{\theta_1''(sV+d)}{\theta_1(sV+d)}-2\left(\frac{\theta_1'(d)}{\theta_1(d)}\right)'+\frac{\theta_1''(sV-d)}{\theta_1(sV-d)}\bigg\}
	+14c^2\frac{\theta_1'(sV-d)}{\theta_1(sV-d)}\frac{\theta_1'(sV+d)}{\theta_1(sV+d)}\nonumber\\
&-4c^2\left\{\frac{\theta_1'(sV-d)}{\theta_1(sV-d)}
	+\frac{\theta_1'(d)}{\theta_1(d)}-\frac{\theta_1'(sV+d)}{\theta_1(sV+d)}\right\}\frac{\theta_1'(d)}{\theta_1(d)}
	+2c(1+a)\bigg\{\frac{\theta_1'(sV+d)}{\theta_1(sV+d)}-2\frac{\theta_1'(d)}{\theta_1(d)}\nonumber\\
	&-\frac{\theta_1'(sV-d)}{\theta_1(sV-d)}\bigg\}-2(2+a).\label{ell2}
\end{align}
Furthermore,
\begin{equation*}
	\oint_{C_{r_2}}\big(G_R(w)-I\big)_{11}\,\d w = -\frac{2\pi}{32s}\frac{\Xi_3}{1+a}+\mathcal{O}\left(s^{-2}\right),\hspace{0.35cm}\oint_{C_{l_2}}\big(G_R(w)-I\big)_{11}\,\d w =- \frac{2\pi}{32s}\frac{\Xi_2}{1+a}+\mathcal{O}\left(s^{-2}\right),
\end{equation*}
and substituting these results into \eqref{lint0}, we obtain 
\begin{eqnarray}
	\frac{\partial}{\partial s}\ln\det\left(I-\gamma K_s\right)&=&-s\big(1-a^2\big)+\frac{\partial}{\partial s}\ln\theta(sV)-\frac{i}{4\pi s}\frac{\kappa}{\theta(sV|\tau)}\frac{\partial^2}{\partial x^2}\theta(x|\tau)\big|_{x=sV}\frac{\partial\tau}{\partial\kappa}\nonumber\\
	&&\hspace{0.4cm}-\frac{\Xi_0\Theta_0-\Xi_1\Theta_1+6a(\Xi_3+\Xi_2)}{96sa(1+a)}+\mathcal{O}\left(s^{-2}\right),\label{exp1}
\end{eqnarray}
uniformly for $\kappa\in[\delta,1-\delta],\delta>0$.
In order to simplify the numerator in the $\mathcal{O}\left(s^{-1}\right)$ contribution in \eqref{exp1}, we now use some identities for the theta functions, see Appendix \ref{thetaid}. First, using the connection formula \eqref{conn} between $\theta_0(z)$ and $\theta_1(z)$ and its derivative, and recalling that $d=-\frac{\tau}{4}$, we obtain (away from zeros of denominators)
\begin{equation}\label{Ig1}
	\frac{\theta_1'(z+d)}{\theta_1(z+d)}=\frac{\theta_0'(z-d)}{\theta_0(z-d)}+i\pi.
\end{equation}
Using the second derivative of \eqref{conn}, we similarly obtain (provided $z$ is not a zero of a denominator)
\begin{equation}\label{Ig2}
	\frac{\theta_1''(z+d)}{\theta_1(z+d)}=\frac{\theta_0''(z-d)}{\theta_0(z-d)}+2\pi i\frac{\theta_0'(z-d)}{\theta_0(z-d)}-\pi^2.
\end{equation}
Substituting \eqref{Ig1} and \eqref{Ig2} with $z=sV,z=-sV,z=0$ into \eqref{ell2} and using the fact that $\theta_0(z)$ is even and $\theta_1(z)$ is odd, we obtain that
\begin{equation*}
	\Theta_1=\Theta_0.
\end{equation*}
Moreover, it easily follows from \eqref{conn} that
\begin{equation*}
	\Xi_1=-\Xi_0,\hspace{0.5cm}\Xi_2=\Xi_3.
\end{equation*}
This implies that \eqref{exp1} reduces to
\begin{eqnarray}
	\frac{\partial}{\partial s}\ln\det\left(I-\gamma K_s\right) &=& -s(1-a^2)+\frac{\partial}{\partial s}\ln\theta(sV)-\frac{i}{4\pi s}\frac{\kappa}{\theta(sV|\tau)}\frac{\partial^2}{\partial x^2}\theta(x|\tau)\big|_{x=sV}\frac{\partial\tau}{\partial\kappa}\nonumber \\
	&&\hspace{0.4cm}-\frac{\Xi_0\Theta_0+6a\Xi_2}{48sa(1+a)}+\mathcal{O}\left(s^{-2}\right),\label{exf1}
\end{eqnarray}
where the error term is uniform for $\kappa\in[\delta,1-\delta],\delta>0$.\smallskip

As can be seen from the definitions \eqref{abbrev1},\eqref{ell1},\eqref{ell2},\eqref{Vconst} $\Xi_j=\Xi_j(sV,\kappa)$ and $\Theta_j=\Theta_j(sV,\kappa), V=V(\kappa)$. The following Proposition will be useful:
\begin{prop}\label{integrab} The function
\begin{equation*}
	M(x,\kappa) = \frac{\Xi_0(x,\kappa)\Theta_0(x,\kappa)+6a(\kappa)\Xi_2(x,\kappa)}{48a(\kappa)(1+a(\kappa))}+\frac{i}{4\pi}\frac{\kappa}{\theta(x|\tau)}\frac{\partial^2}{\partial y^2}\theta(y|\tau)\big|_{y=x}\frac{\partial\tau}{\partial\kappa},
\end{equation*}
with $x\in\mathbb{R},\kappa=\frac{v}{s}\in[0,1-\delta]$, where $\Xi_k$ are given by \eqref{abbrev1}, $\Theta_0$ by \eqref{ell1}, and $a(\kappa)$ by \eqref{Dya}, is periodic in the $x$-variable, namely
\begin{equation*}
	M(x+n,\kappa) = M(x,\kappa),\ \ \ x\in\mathbb{R},\ \ n\in\mathbb{Z}.
\end{equation*}
Moreover, for $s\geq s_0$ chosen so that $\kappa\in[0,1-\delta]$,
\begin{equation}\label{integrab111}
	\left|\int_s^{\infty}M(tV\big(vt^{-1}\big),vt^{-1})\frac{\d t}{t}\right|\leq c_0(\delta)<\infty,
\end{equation}
where $c_0(\delta)$ does not depend on $s,v$. In the limit $\kappa\downarrow 0$,
\begin{equation}\label{integrab112}
	\int_s^{\infty}M(tV\big(vt^{-1}\big),vt^{-1})\frac{\d t}{t}=\mathcal{O}\left(\kappa\right),\ \ \ \ \kappa\downarrow 0.
\end{equation}
\end{prop}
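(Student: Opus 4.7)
\noindent\textit{Plan of proof.} The $1$-periodicity of $M(x,\kappa)$ in $x$ follows immediately from the quasi-periodicities $\theta_k(z+1)=\pm\theta_k(z)$, $k=0,1,2,3$: each of $\theta_k^2(z)$, $\theta_k(z+d)\theta_k(z-d)$, $\theta_k'(z)/\theta_k(z)$ and $\theta_k''(z)/\theta_k(z)$ is then $1$-periodic in $z$, so every ratio appearing in $\Xi_k$, $\Theta_0$ and in $\theta''(x|\tau)/\theta(x|\tau)$ is $1$-periodic in $x$.

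For the two integral estimates I would substitute $\kappa=v/t$, which gives
\[
\int_s^\infty M\bigl(tV(vt^{-1}),vt^{-1}\bigr)\frac{\d t}{t}=\int_0^{v/s}M\bigl(vV(\kappa)/\kappa,\kappa\bigr)\frac{\d\kappa}{\kappa},
\]
so that both \eqref{integrab111} (using $v/s\leq 1-\delta$) and \eqref{integrab112} (viewed as $v/s=\kappa\downarrow 0$) reduce to the single uniform pointwise bound $|M(x,\kappa)|\leq C\kappa$ on $\mathbb{R}\times[0,1-\delta]$; indeed this bound yields $\bigl|\int_0^{v/s}M\,\d\kappa/\kappa\bigr|\leq C\,v/s$, which is at most $C(1-\delta)$ and $O(\kappa)$ respectively. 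Since $M$ is smooth on $\mathbb{R}\times(0,1-\delta]$ and $-i\tau(\kappa)$ is bounded above and bounded away from $0$ on every subinterval $[\eta,1-\delta]$, the bound is trivial there and only the regime $\kappa\downarrow 0$ requires genuine work.

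In that regime Corollary \ref{cor1} gives $\tau(\kappa)=-(2i/\pi)\ln(\kappa/(4\pi))+o(1)$, so the nome $q=e^{i\pi\tau}$ obeys $q^{1/2}=\kappa/(4\pi)+o(\kappa)$, together with $c(\kappa)=i/\pi+O(\kappa)$ and $a(\kappa)=1-2\kappa/\pi+O(\kappa^2)$. Because $d=-\tau/4$ gives $e^{\pm 2\pi id}=q^{\mp 1/2}$, the Fourier series $\theta_0(z)=1-2q\cos(2\pi z)+O(q^4)$, $\theta_2(z)=2q^{1/4}\cos(\pi z)+O(q^{9/4})$ (together with those for their derivatives) yield explicit $q^{1/2}$-expansions of $\theta_k(x\pm d)$, $\theta_k(d)$ and of the logarithmic derivatives $\theta_k'(x\pm d)/\theta_k(x\pm d)$, $\theta_k''(x\pm d)/\theta_k(x\pm d)$ and $\theta_k'(d)/\theta_k(d)$, uniformly in $x\in[0,1]$. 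Inserting these into \eqref{abbrev1} and \eqref{ell1} I find, uniformly in $x$,
\[
\Xi_0=2+\tfrac{2\kappa}{\pi}\sin^2(\pi x)+O(\kappa^2),\ \ \Xi_2=2-\tfrac{2\kappa}{\pi}\sin^2(\pi x)+O(\kappa^2),\ \ \Theta_0=-6+\tfrac{12\kappa}{\pi}\bigl(1+\sin^2(\pi x)\bigr)+O(\kappa^2),
\]
whence a direct arithmetic check shows $\Xi_0\Theta_0+6a(\kappa)\Xi_2=O(\kappa^2)$. Dividing by $48a(1+a)$ makes the first summand of $M$ of order $O(\kappa^2)$; the second summand is controlled via $\theta''(x|\tau)/\theta(x|\tau)=-8\pi^2 q\cos(2\pi x)+O(q^2)=O(\kappa^2)$ and $\kappa\,\partial\tau/\partial\kappa=O(1)$, also $O(\kappa^2)$. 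Thus $|M(x,\kappa)|\leq C\kappa$ (in fact $O(\kappa^2)$), which suffices.

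The main technical obstacle is precisely the leading-order cancellation $\Xi_0\Theta_0+6a\Xi_2=O(\kappa^2)$: one must track the theta expansions with enough precision that the $O(1)$ parts $(-12$ from $\Xi_0\Theta_0$, $+12$ from $6a\Xi_2)$ \emph{and} the $O(\kappa)$ corrections (with coefficients involving $2/\pi$ and $12/\pi$ and the same function $\sin^2(\pi x)$) compensate pairwise. Once this cancellation has been verified, the pointwise bound is established and both \eqref{integrab111} and \eqref{integrab112} follow directly from the displayed change of variables.
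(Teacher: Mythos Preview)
Your approach matches the paper's: the same change of variable $u=v/t$, the same reduction to a pointwise bound on $M$, and the same two-regime analysis (theta-series expansions as $\kappa\downarrow 0$, plain boundedness for $\kappa\in[\eta,1-\delta]$). The paper only establishes and uses $M=O(u)$ near $u=0$ rather than your sharper claimed $O(\kappa^2)$ cancellation, and on the compact region it makes the lower bound for $\theta_3(x\,|\,\tau)$ explicit via Poisson summation rather than subsuming it under ``smoothness,'' but these are presentational differences, not substantive ones.
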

\begin{proof}
The periodicity is a consequence of the periodicity of the theta functions. To verify \eqref{integrab111}, \eqref{integrab112} first change the variable
\begin{equation}\label{ii1}
	\int_s^RM(tV\big(vt^{-1}\big),vt^{-1})\frac{\d t}{t} = \int_{\frac{v}{R}}^{\kappa}M\left(\frac{v}{u}V(u),u\right)\frac{\d u}{u}
\end{equation}
and use Corollary \ref{cor1} and the definitions of $\theta$-functions. This gives, as $u\downarrow 0$, uniformly in $v\geq v_0$
\begin{equation}\label{est0}
	M\left(\frac{v}{u}V(u),u\right) = \mathcal{O}\left(u\right)%\left(uv^{-1}\right)
\end{equation}
(since $\tau = -\frac{2i}{\pi}\ln\frac{u}{4\pi}+o(1)$ for $u\downarrow 0$) and therefore in the same limit
\begin{equation}\label{est}
	\theta\left(\frac{v}{u}V(u)\right) = 1+\mathcal{O}\left(u^2\right),\  \ \Xi_0=2\big(1+\mathcal{O}\left(u\right)\big),\ \ \Xi_2=2\big(1+\mathcal{O}\left(u\right)\big),\ \ \ \Theta_0=-6\big(1+\mathcal{O}\left(u\right)\big).
\end{equation}
Letting $R\rightarrow\infty$ in \eqref{ii1}, we see that the original integral converges at infinity. Moreover, in the limit $\kappa\downarrow 0$, we obtain the estimate \eqref{integrab112}. To prove the estimate \eqref{integrab111}, we write for $0<\epsilon<\kappa$
\begin{align}
	\left|\int_s^{\infty}M(tV\big(vt^{-1}\big),vt^{-1})\frac{\d t}{t}\right|&=\left|\int_0^{\kappa}M\left(\frac{v}{u}V(u),u\right)\frac{\d u}{u}\right|\nonumber\\
	&\leq \left|\int_0^{\epsilon}M\left(\frac{v}{u}V(u),u\right)\frac{\d u}{u}\right|+\left|\int_{\epsilon}^{\kappa}M\left(\frac{v}{u}V(u),u\right)\frac{\d u}{u}\right|.\label{b1}
\end{align} 
The first integral in the r.h.s. is bounded by a constant uniformly in $v\geq v_0$ by \eqref{est0}. We now estimate the second integral for $0<\kappa\leq 1-\delta<1$. Note that in this region, $a=a(\kappa)$ is bounded away from zero. To obtain a bound for $M$, we need to provide upper bounds for the $\theta$-functions and their derivatives and a lower bound for $|\theta_3(sV|\tau)|$ on the interval of integration. We have
\begin{equation*}
	\big|\theta_3(sV|\tau)\big|\leq\sum_{k=-\infty}^{\infty}e^{i\pi\tau k^2}\leq 1+2\sum_{k=1}^{\infty}e^{i\pi\tau k}=\frac{1+e^{i\pi\tau}}{1-e^{i\pi\tau}},
\end{equation*}
which is bounded by some $\tilde{c}_0(\delta)$ for $0<\kappa\leq 1-\delta$ (cf. Corollary \ref{cor1}). Similar estimates hold for other $\theta$-functions and their derivatives.\smallskip

Furthermore, using the Poisson summation formula, we obtain
\begin{equation*}
	\theta_3(sV|\tau)=\frac{1}{\sqrt{-i\tau}}\sum_{n=-\infty}^{\infty}e^{\frac{\pi}{i\tau}(sV-n)^2}\geq \frac{1}{\sqrt{-i\tau}}e^{\frac{\pi}{4i\tau}},
\end{equation*}
which is bounded away from zero by some $\hat{c}_0(\epsilon)$ (since $0<c_0'(\delta)\leq-i\tau\leq c_0''(\epsilon)<\infty$) for $1-\delta\geq\kappa\geq\epsilon$.\smallskip%  And since $\theta_3(sV|\tau)=1+\mathcal{O}\left(\kappa^2\right)$, we also have a positive lower bound for $\theta_3(sV|\tau)$ as $-i\tau\rightarrow+\infty$, i.e. $\kappa\downarrow 0$.\smallskip%(since $-i\tau\leq c_0'(\epsilon)<\infty$) for $1>\kappa\geq \epsilon>0$.\smallskip

Collecting our estimates together, we obtain that, uniformly for $v\geq v_0,s\geq s_0$,
\begin{equation*}
	\left|M\left(\frac{v}{u}V(u),u\right)\right|\leq c_1(\epsilon,\delta)<\infty
\end{equation*}
for $\epsilon\leq u\leq\kappa\leq 1-\delta$. Using this estimate in the second integral in the r.h.s. of \eqref{b1}, we finally prove \eqref{integrab111}.
\end{proof}

\subsection{Integration of the logarithmic $s$-derivative}\label{A:int}
For $\kappa\in[0,1-\delta]$, we can integrate the first term in \eqref{exf1} as follows
\begin{equation*}
	\int s\left(1-a^2(s)\right)\d s = \frac{s^2}{2}(1-a^2) +\int s^2a(s)\frac{\partial a}{\partial s}\,\d s.
\end{equation*}
However by \eqref{Dya},\eqref{Anorm} we have for $v=\kappa s$ fixed, 
\begin{equation*}
	\frac{\partial a}{\partial s} =\frac{\partial a}{\partial\kappa}\frac{\partial\kappa}{\partial s}= -\frac{2ic}{a}\frac{\kappa}{s},%\hspace{0.5cm} \kappa s=v
\end{equation*}
and recalling \eqref{diffrel}, we obtain%hence combined with Proposition \ref{prop10}
\begin{equation*}
	-\int s\left(1-a^2(s)\right)\d s = -\frac{s^2}{2}\big(1-a^2\big)+vsV. %= -\frac{s^2}{2}\big(1-a^2\big)+\frac{\ln(1-\gamma)}{2\pi}s\int_{-a}^a\sqrt{\frac{a^2-\mu^2}{1-\mu^2}}\,\d\mu.
\end{equation*}
In view of \eqref{exf1} and Proposition \ref{integrab} we define
\begin{equation}\label{Dsv}
	D(s,v)=-\frac{s^2}{2}\left(1-a^2(\kappa)\right)+vsV(\kappa)+\ln\theta_3\big(sV(\kappa)\big|\tau(\kappa)\big)+\int_s^{\infty}M\big(tV(vt^{-1}),vt^{-1}\big)\frac{\d t}{t}
\end{equation}
where $\kappa=\frac{v}{s}\in[0,1-\delta]$. Let in addition
\begin{equation}\label{jsvform}
	j(s,v)=\frac{\partial}{\partial s}\Big(\ln\det(I-\gamma K_s)-D(s,v)\Big),
\end{equation}
so that expansion \eqref{exf1} is equivalent to the following estimation
\begin{cor}\label{corset:1} For any given $\delta\in(0,\frac{1}{2}]$ there exists positive constants $s_0=s_0(\delta),v_0=v_0(\delta)$ and $c=c(\delta)$ such that
\begin{equation*}
	\big|j(s,v)\big|\leq cs^{-2},\ \ \ \ \forall\,s\geq s_0(\delta),\ \ v\geq v_0(\delta),\ \ \ \ \ s\delta\leq v\leq s(1-\delta).
\end{equation*}
\end{cor}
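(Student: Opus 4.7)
The plan is to verify that the corollary is a direct restatement of the asymptotic identity \eqref{exf1} once one computes $\partial D/\partial s$ term by term. Write
$$
D(s,v)=D_1(s,v)+D_2(s,v)+D_3(s,v),\qquad
\begin{cases}
D_1 = -\tfrac{s^2}{2}(1-a^2(\kappa))+vsV(\kappa),\\[2pt]
D_2 = \ln\theta_3\bigl(sV(\kappa)\,|\,\tau(\kappa)\bigr),\\[2pt]
D_3 = \int_s^{\infty} M\!\bigl(tV(vt^{-1}),vt^{-1}\bigr)\,\frac{\d t}{t},
\end{cases}
$$
and differentiate each piece with $v$ held fixed, keeping in mind that $\kappa=v/s$ so $\partial\kappa/\partial s=-\kappa/s$.

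For $D_1$ I would compute $\partial D_1/\partial s$ by the chain rule, producing
$-s(1-a^2)+v\bigl[V-a\,a_\kappa-\kappa V_\kappa\bigr]$.
The bracketed quantity must vanish; to see this I would use the explicit formulas $a_\kappa=2ic/a$ and $V_\kappa=-i\tau/\pi$, obtained by differentiating \eqref{Dya} and \eqref{Dyfreq} and applying \eqref{Anorm}, \eqref{Bper}; the vanishing then becomes precisely the Riemann bilinear identity $\pi V+i\tau\kappa=2\pi ic$ proved in Proposition \ref{prop10}. Thus $\partial D_1/\partial s=-s(1-a^2)$. For $D_2$ the total $s$-derivative is, by the same chain rule and the heat equation $\partial_\tau\theta=\tfrac{1}{4\pi i}\partial_x^2\theta$ from \eqref{de},
$$
\frac{\partial D_2}{\partial s}=2ic\,\frac{\theta'(sV)}{\theta(sV)}-\frac{i\kappa}{4\pi s}\cdot\frac{\theta''(sV|\tau)}{\theta(sV|\tau)}\cdot\frac{\partial\tau}{\partial\kappa},
$$
where I have used $\partial(sV)/\partial s=2ic$ from Proposition \ref{prop10} and $\partial\tau/\partial s=-(\kappa/s)\partial\tau/\partial\kappa$.

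For $D_3$ the fundamental theorem of calculus gives $\partial D_3/\partial s=-\tfrac{1}{s}M(sV(\kappa),\kappa)$, and inserting the explicit form \eqref{Mformula} of $M$ this becomes
$$
-\frac{\Xi_0\Theta_0+6a\,\Xi_2}{48sa(1+a)}-\frac{i\kappa}{4\pi s}\,\frac{\theta''(sV|\tau)}{\theta(sV|\tau)}\,\frac{\partial\tau}{\partial\kappa}.
$$
Summing $\partial D_i/\partial s$, $i=1,2,3$, and comparing with \eqref{exf1} (which, after absorbing the two $\tau$-derivative contributions via the heat-equation reduction used just before \eqref{exf1}, has exactly the same four leading summands), every explicit term cancels. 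What remains is $j(s,v)=\partial_s\ln\det(I-\gamma K_s)-\partial_s D=\mathcal{O}(s^{-2})$, with the implicit constant inherited from the error bound in \eqref{exf1}. Since the derivation of \eqref{exf1} is uniform on $\kappa\in[\delta,1-\delta]$, so is this bound, which is exactly the statement for $\delta s\le v\le(1-\delta)s$.

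The only genuinely delicate point is the cancellation establishing $\partial D_1/\partial s=-s(1-a^2)$: it is a small miracle that follows from the bilinear identity $\pi V+i\tau\kappa=2\pi ic$, and without that relation the $v$-term would leak into $j(s,v)$ at order $\mathcal{O}(1)$. Everything else is bookkeeping: matching the heat-equation correction coming from the $\tau$-dependence in $\ln\theta$ with the corresponding $\partial\tau/\partial\kappa$ piece produced by $M$, and invoking the FTC on $D_3$. No new estimate is needed beyond the uniform error bound already established in \eqref{exf1}.
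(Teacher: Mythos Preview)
Your approach is essentially the paper's own: the paper derives \eqref{exf1}, then in Section~\ref{A:int} integrates its leading terms (using the same bilinear identity \eqref{parrel} and the relation $\partial_s(sV)=2ic$ from Proposition~\ref{prop10}) to obtain $D(s,v)$, and simply declares the Corollary an equivalent restatement of \eqref{exf1}; you go in the reverse direction, differentiating $D$ and matching against \eqref{exf1}. There is a sign slip in your $\partial D_2/\partial s$ --- the $\tau$-derivative contribution should be $+\tfrac{i\kappa}{4\pi s}\,\theta''/\theta\cdot\partial\tau/\partial\kappa$, not $-$ --- but once corrected the two heat-equation terms from $\partial D_2/\partial s$ and $\partial D_3/\partial s$ cancel exactly as you claim, and the comparison with \eqref{exf1} goes through.
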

Define
\begin{equation*}
	F(s,v)=-\frac{4vs}{\pi}+\frac{2v^2}{\pi^2}\ln(4s)+2\ln b(v)
\end{equation*}
with $b(v)$ as in \eqref{BWBB2}. Observe that from \eqref{aasy} and \eqref{c:1},\eqref{c:2}, and \eqref{integrab112} we have
\begin{equation}\label{Dsvlower}
	D(s,v)=-\frac{4vs}{\pi}+\frac{2v^2}{\pi^2}\ln(4s)+\frac{v^2}{\pi^2}\left(3+2\ln\left(\frac{\pi}{v}\right)\right)+\mathcal{O}\left(s^{2-3\epsilon}\ln s\right),\ s\rightarrow\infty
\end{equation}
uniformly for $s^{1-\epsilon}\geq v>0$ with any given $\frac{2}{3}<\epsilon<1$. Now keep $v$ fixed through the rest of this section. For fixed $v\in(0,\infty)$, it is known from \eqref{BWBB1} that as $s\rightarrow\infty$,
\begin{equation*}\label{Acomp:3}
	\ln\det(I-\gamma K_s)=F(s,v)+\mathcal{O}\left(s^{-1}\right).
\end{equation*}
Hence, for fixed $v$, as $s\rightarrow\infty$,
\begin{eqnarray}
	\ln\det(I-\gamma K_s)-D(s,v) &=& F(s,v)-D(s,v)+\mathcal{O}\left(s^{-1}\right)\nonumber\\
	&=&-\frac{v^2}{\pi^2}\left(3+2\ln\left(\frac{\pi}{v}\right)\right)+2\ln b(v)+\mathcal{O}\left(s^{2-3\epsilon}\ln s\right),\ \ \ \frac{2}{3}<\epsilon<1.\label{n:1}
\end{eqnarray}
Set
\begin{equation}\label{AVdef}
	A(v)=2\ln b(v)-\frac{v^2}{\pi^2}\left(3+2\ln\left(\frac{\pi}{v}\right)\right),\ \ \ \ 0<v<\infty.
\end{equation}
Then, from \eqref{n:1} we obtain
%Let us define the ``integration constant" $A(v)$ by the relation
%\begin{equation}\label{n:2}
%	\ln\det(I-\gamma K_s) = D(s,v)+A(v)+J(s,v),
%\end{equation}
%where
%\begin{equation}\label{n:3}
%	J(s,v)=o(1),\ \ \ s\rightarrow\infty,\ \ \ 0<v<\infty.
%\end{equation}
%From \eqref{n:1} we obtain at once
\begin{prop}\label{Avfin} For every fixed $v\in(0,\infty)$,
\begin{equation}\label{DI:1}
	\ln\det(I-\gamma K_s)=D(s,v)+A(v)+o(1),\ \ \ s\rightarrow\infty,
\end{equation}
where the $s$ independent term $A(v)$ is given by formula \eqref{AVdef}.
%
%The $s$ independent term $A(v)$ in \eqref{n:2} equals
%\begin{equation*}
%	A(v)=2\ln b(v)-\frac{v^2}{\pi^2}\left(3+2\ln\left(\frac{\pi}{v}\right)\right),\ \ \ 0<v<\infty
%\end{equation*}
%with $b(v)$ as in \eqref{BWBB2}.
\end{prop}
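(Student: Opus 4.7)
The plan is to observe that Proposition \ref{Avfin} is essentially an immediate consequence of material already assembled in the excerpt: the small-$\kappa$ expansion \eqref{Dsvlower} of $D(s,v)$ on the one hand, and the classical Basor-Widom-Budylin-Buslaev asymptotic \eqref{BWBB1} on the other. The whole point of the definition \eqref{AVdef} is to package the difference between these two expansions.

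First, I would fix $v\in(0,\infty)$ and let $s\to\infty$, so that $\kappa=v/s\downarrow 0$. In this regime, for any fixed $\epsilon\in(2/3,1)$ the condition $s^{1-\epsilon}\geq v$ holds for all sufficiently large $s$, so the expansion \eqref{Dsvlower} is applicable and yields
\begin{equation*}
	D(s,v)=-\frac{4vs}{\pi}+\frac{2v^{2}}{\pi^{2}}\ln(4s)+\frac{v^{2}}{\pi^{2}}\left(3+2\ln\left(\frac{\pi}{v}\right)\right)+\mathcal{O}\bigl(s^{2-3\epsilon}\ln s\bigr).
\end{equation*}
This, in turn, rests on the small-$\kappa$ expansions \eqref{aasy}, \eqref{c:1}, \eqref{c:2} of $a(\kappa)$, $V(\kappa)$, $\tau(\kappa)$ together with the bound \eqref{integrab112} on the integral of $M$ — all previously established.

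Next, I would invoke \eqref{BWBB1}, which asserts that for fixed $v$,
\begin{equation*}
	\ln\det(I-\gamma K_{s})=F(s,v)+\mathcal{O}\bigl(s^{-1}\bigr)=-\frac{4vs}{\pi}+\frac{2v^{2}}{\pi^{2}}\ln(4s)+2\ln b(v)+\mathcal{O}\bigl(s^{-1}\bigr),
\end{equation*}
as $s\to\infty$. Subtracting the expansion of $D(s,v)$ from this gives
\begin{equation*}
	\ln\det(I-\gamma K_{s})-D(s,v)=2\ln b(v)-\frac{v^{2}}{\pi^{2}}\!\left(3+2\ln\!\left(\frac{\pi}{v}\right)\right)+\mathcal{O}\bigl(s^{2-3\epsilon}\ln s\bigr)+\mathcal{O}\bigl(s^{-1}\bigr).
\end{equation*}
By the definition \eqref{AVdef} of $A(v)$, the constant-in-$s$ part equals $A(v)$. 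Since $\epsilon>2/3$ yields $2-3\epsilon<0$, both error terms are $o(1)$ as $s\to\infty$, which gives \eqref{DI:1} and completes the argument.

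There is no real obstacle here: the proposition is in effect a bookkeeping identity that extracts the constant $A(v)$ by matching $D(s,v)$ against the known fixed-$v$ asymptotic \eqref{BWBB1}. The substantive work was done earlier — in Proposition \ref{integrab} (ensuring the integral of $M$ exists and is $\mathcal{O}(\kappa)$), in the small-$\kappa$ asymptotics of Corollary \ref{cor1}, and of course in the Basor-Widom/Budylin-Buslaev result \eqref{BWBB1} quoted from the literature. The only care required is to fix $\epsilon\in(2/3,1)$ and confirm that the condition $s^{1-\epsilon}\geq v$ of \eqref{Dsvlower} is met for $s$ large with $v$ held fixed, which is automatic.
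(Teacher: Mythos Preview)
Your proof is correct and follows essentially the same approach as the paper: subtract the small-$\kappa$ expansion \eqref{Dsvlower} of $D(s,v)$ from the Basor--Widom/Budylin--Buslaev asymptotic \eqref{BWBB1}, observe that the constant-in-$s$ part is precisely $A(v)$, and note both remainders are $o(1)$ since $\epsilon>2/3$. The paper's derivation is the paragraph culminating in equation \eqref{n:1}, which you have reproduced accurately.
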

\begin{rem} Note that, apart from the asymptotics in \eqref{Dsvlower} which involves an estimation of known functions, all we need to determine $A(v)$ is the fixed $v$ result \eqref{BWBB1} in \cite{BW,BB}. 
\end{rem}
Let
\begin{equation}\label{Jsvdef}
	J(s,v)=\ln\det(I-\gamma K_s)-D(s,v)-A(v),
\end{equation}
be the error term in \eqref{DI:1}. This function is defined for all $s>0,0<v<s(1-\delta)$ and, by its very definition, satisfies the estimate,
\begin{equation}\label{estJ0}
	J(s,v)=o(1),\ \ \ s\rightarrow\infty,
\end{equation}
for every fixed $v$. In the next section we will extend this estimate to the whole domain $0<v<s(1-\delta)$, and hence complete the proof of Theorem \ref{theo1}. In order to make this extension we note that $J(s,v)$ can also be expressed as the integral of the function $j(s,v)$ in \eqref{jsvform},
\begin{equation}\label{n:4}
	J(s,v)=-\int_s^{\infty}j(t,v)\,\d t.
\end{equation}
The constant of integration is fixed by the a-priori fact that $J(s,v)$, defined by \eqref{Jsvdef}, obeys \eqref{estJ0}. The existence of the integral on the right hand side is guaranteed by the estimate,
\begin{equation*}
	j(s,v)=\mathcal{O}\left(s^{1-3\epsilon}\ln s\right),\ \ \ \frac{2}{3}<\epsilon<1,
\end{equation*}
which in turn follows from the differentiability of \eqref{n:1} with respect to $s$ (\cite{BDIK}, see also Remark \ref{Busref}).
\section{Estimation of the ``error"-term $J(s,v)$}\label{error}
Our goal will be to obtain control over the error term $J(s,v)$ in a sufficiently large region in the $(s,v)$-plane below the main diagonal as shown in Figure \ref{Deiftdia2}. Note that in view of Corollary \ref{corset:1} we can already control the integrand $j(s,v)$ in the region
\begin{equation*}
	\big\{(s,v)\,|\,s\geq s_0(\delta),\ v\geq v_0(\delta):\ \ s\delta\leq v\leq s(1-\delta)\big\}.
\end{equation*}
This, however, is not yet sufficient to estimate $J(s,v)$ in \eqref{n:4}.
\subsection{Relaxation at the lower end}
Choose $s\geq s_0,v\geq v_0$ throughout such that
\begin{equation}\label{m1}
	0<s^{1-\epsilon}\leq v \leq s(1-\delta),\hspace{0.5cm} 0<\epsilon<1.%,\ \ f_1>0.
\end{equation}
In this case the radii of $C_{l_j}$ and $C_{r_j}$ in \eqref{it1} cannot be fixed any longer since
\begin{equation*}
	a=1-\frac{\hat{c}}{s^{\epsilon}}+\mathcal{O}\left(s^{-2\epsilon}\right),\ \ \ s\geq s_0,\ \ \textnormal{for some}\ \hat{c}>0.
\end{equation*}
Instead, we define $R(\lambda)$ as in \eqref{it1}, but with contracting as $s\rightarrow\infty$ radii
\begin{equation*}
	r_1 = c_1s^{-\epsilon}\hspace{0.5cm}r_2=c_2s^{-\epsilon},\hspace{1cm} 0<c_1,c_2:\ \ c_1+c_2<\hat{c}.%,\ \ \ \ \epsilon'<\epsilon%\min\left\{\epsilon,\frac{1}{2}(\epsilon+1)\right\}.
\end{equation*}
In this case, $G_R(\lambda)-I$ satisfies the following estimate: 
\begin{equation}\label{m2}
	\sup_{\lambda\in C_{r_1}\cup C_{l_1}}\left|G_R(\lambda)-I\right|=\mathcal{O}\left(s^{-(1-\epsilon)}\right)=o(1)%O\left(s^{-(\epsilon-2\epsilon'+1)}\right)+O\left(s^{-(1-\epsilon')}\right) = o(1)
\end{equation}
which can be read off from our computations in Section \ref{firstcorr}. We used, in particular, the fact,  that for $\kappa=cs^{-\epsilon}$ by Corollary \ref{cor1},
\begin{equation*}
	\tau=\frac{2i\epsilon}{\pi}\ln s+\mathcal{O}(1),
\end{equation*}
hence all expressions involving the theta functions are bounded, compare also the proof of Proposition \ref{integrab}. In order to give a similar estimate for $\lambda\in C_{r_2}\cup C_{l_2}$, we first recall \eqref{p18} and \eqref{p22}. In these formulae the exponential vanishing of the ``logarithmic corrections'' still takes place, since by \eqref{m1}
\begin{equation*}
	s\kappa\rightarrow+\infty
\end{equation*}
and we chose $r_2 = \mathcal{O}\left(s^{-\epsilon}\right)$. Hence our computations in Section \ref{firstcorr} lead to
\begin{equation}\label{m3}
	\sup_{\lambda\in C_{r_2}\cup C_{l_2}}\left|G_R(\lambda)-I\right|=\mathcal{O}\left(s^{-(1-\epsilon)}\right)=o(1).
\end{equation}
On the lens boundaries $\gamma^+\cup\gamma^-$ we first notice that
\begin{equation*}
	\sup_{\lambda\in\gamma^+\cup\gamma^-}\left|G_R(\lambda)-I\right|=\sup_{\lambda\in(\gamma^+\cup\gamma^-)\cap(C_{r_1}\cup C_{l_1})}\left|G_R(\lambda)-I\right|,
\end{equation*}
and for $\lambda\in(\gamma^+\cup\gamma^-)\cap(C_{r_1}\cup C_{l_1})$ we have
\begin{equation*}
	\beta(\lambda) = \left(\frac{(\lambda+a)(\lambda-1)}{(\lambda+1)(\lambda-a)}\right)^{\frac{1}{4}} = \mathcal{O}(1),%1+O\left(s^{-(\epsilon-\epsilon')}\right)
\end{equation*}
and therefore, by our previous remark on the boundedness of the theta functions,
\begin{equation*}
	M(\lambda) = \mathcal{O}(1),\hspace{0.5cm}\lambda\in(\gamma^+\cup\gamma^-)\cap(C_{r_1}\cup C_{l_1}).
\end{equation*}
Since 
\begin{equation*}
	\Re\left(\pm is\Omega(\lambda)\right)\leq -\tilde{c}s^{1-\epsilon},\ \ \tilde{c}>0,\ \ \lambda\in(\gamma^+\cup\gamma^-)\cap(C_{r_1}\cup C_{l_1})
\end{equation*}
we finally obtain
\begin{equation}\label{m4}
	\sup_{\lambda\in\gamma^+\cup\gamma^-}\left|G_R(\lambda)-I\right|=\mathcal{O}\left(e^{-\tilde{c}s^{1-\epsilon}}\right)=o(1).
\end{equation}
Next consider the contracting line segments $(-1+r_2,-a-r_1)\cup(a+r_1,1-r_2)$. Note that for some $\tilde{c}>0$,
\begin{equation*}
 \gamma=1-e^{-2\kappa s} =1+\mathcal{O}\left(e^{-\tilde{c}s^{1-\epsilon}}\right)
\end{equation*}
and
\begin{equation*}
	M_-(\lambda) = \mathcal{O}(1),\hspace{0.5cm}\lambda\in(-1+r_2,-a-r_1)\cup(a+r_1,1-r_2).
\end{equation*}
Now estimate, for $\lambda\in(a+r_1,1-r_2)$, say,
\begin{eqnarray*}
	\Pi(\lambda) &=&-2\int_{\lambda}^1\sqrt{\frac{\mu^2-a^2}{1-\mu^2}}\,\d\mu\leq -d_1\int_{1-r_2}^1\sqrt{\frac{\mu-a}{1-\mu}}\,\d\mu\leq -d_2\sqrt{r_2}\sqrt{1-a-r_2}\\
	2\kappa+\Pi(\lambda)&=&2\int_a^{\lambda}\sqrt{\frac{\mu^2-a^2}{1-\mu^2}}\,\d\mu\geq d_3\int_a^{a+r_1}\sqrt{\frac{\mu-a}{1-\mu}}\,\d\mu\geq \frac{d_4r_1^{\frac{3}{2}}}{\sqrt{1-a}},\ \ d_j>0
\end{eqnarray*}
and deduce for some $\tilde{c}>0$,
\begin{equation*}
	e^{s\Pi(\lambda)}=\mathcal{O}\left(e^{-\tilde{c}s^{1-\epsilon}}\right),\hspace{0.5cm} e^{-s(2\kappa+\Pi(\lambda))}=\mathcal{O}\left(e^{-\tilde{c}s^{1-\epsilon}}\right).
\end{equation*}
Using similar estimates for $\lambda\in(-1+r_2,-a-r_1)$, we obtain
\begin{equation}\label{m5}
	\sup_{\lambda\in(-1+r_2,-a-r_1)}\left|G_R(\lambda)-I\right|=\mathcal{O}\left(e^{-\tilde{c}s^{1-\epsilon}}\right)=\sup_{\lambda\in(a+r_1,1-r_2)}\left|G_R(\lambda)-I\right|.
\end{equation}
For the parts $\dot{C}_1$ of the original jump contours inside $C_{l_1}\cup C_{r_1}$ we recall that $U_-(\lambda)$ and $V_-(\lambda)$ are bounded, so we can apply similar estimates (see \eqref{Gr1} and \eqref{Gl1}) as above. This leads to
\begin{equation}\label{m6}
	\sup_{\lambda\in\dot{C}_1}\left|G_R(\lambda)-I\right|=\mathcal{O}\left(e^{-\tilde{c}s^{1-\epsilon}}\right).
\end{equation}
Combining \eqref{m2}--\eqref{m6} we obtain
\begin{equation}\label{DZ9}
	\|G_R-I\|_{L^2\cap L^{\infty}(\Sigma_R)}\leq \tilde{c}s^{-(1-\epsilon)}%s^{-\hat{\epsilon}},\hspace{0.5cm} \hat{\epsilon} = \min\left\{\epsilon-2\epsilon'+1,1-\epsilon\right\}
\end{equation}
which holds under the assumption \eqref{m1}. This should be compared to \eqref{DZ7} in the case $\kappa\in[\delta,1-\delta]$. The estimate \eqref{DZ9} ensures solvability of the R-RHP and we have for its solution
\begin{equation*}
	\|R_--I\|_{L^2(\Sigma_R)}\leq \tilde{c}s^{-(1-\frac{\epsilon}{2})}
\end{equation*}
subject to \eqref{m1}. This last estimation follows from the general small norm-theory of Riemann-Hilbert problems \cite{DZ}, since the scaling invariance of  $\frac{\d\lambda}{\lambda}$ on $\partial\mathbb{D}=\{\lambda\in\mathbb{C}:\ |\lambda|=1\}$ implies a slightly better $L^2$-estimation than \eqref{DZ9}, namely
\begin{equation*}
	\|G_R-I\|_{L^2(\Sigma_R)}\leq\tilde{c}s^{-(1-\frac{\epsilon}{2})}.
\end{equation*}
Now use Cauchy-Schwarz inequality and obtain
\begin{eqnarray*}
	\int_{\Sigma_R}R_-(w)\big(G_R(w)-I\big)\d w &=& \int_{\Sigma_R}\big(G_R(w)-I\big)\d w+\int_{\Sigma_R}\big(R_-(w)-I\big)\big(G_R(w)-I\big)\d w\\
	 &=& \int_{\Sigma_R}\big(G_R(w)-I\big)\d w+\mathcal{O}\left(s^{-2+\epsilon}\right).
\end{eqnarray*}
Hence, reproducing the steps which first lead us to \eqref{lint0} and then \eqref{exf1}, also recalling Proposition \ref{integrab}, we have for
\begin{equation*}
	j(s,v)=\frac{\partial}{\partial s}\Big(\ln\det(I-\gamma K_s)-D(s,v)\Big)
\end{equation*}
\begin{prop}\label{LowerAv} Given any $\delta,\epsilon\in(0,1)$, there exist positive constants $s_0=s_0(\delta,\epsilon),v_0=v_0(\delta,\epsilon)$ and $c=c(\epsilon,\delta)$ such that
\begin{equation*}
	\big|j(s,v)\big|\leq cs^{-2+\epsilon},\ \ \forall\,s\geq s_0(\delta,\epsilon),\ \ v\geq v_0(\delta,\epsilon),\ \ \ s^{1-\epsilon}\leq v\leq s(1-\delta).
\end{equation*}
\end{prop}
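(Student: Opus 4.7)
My plan is to adapt the Riemann--Hilbert analysis of Sections \ref{sec2}--\ref{firstcorr} to the case where $\kappa = v/s$ is allowed to tend to $0$ with rate controlled by $\kappa \geq s^{-\epsilon}$. The central difficulty is that the branch points $\pm a$ now approach the endpoints $\pm 1$ at rate $1-a = \mathcal{O}(s^{-\epsilon})$, so one can no longer use disks of fixed radius around the four branch points. Accordingly, I would re-define $R(\lambda)$ exactly as in \eqref{it1}, but with the radii of $C_{l_j}, C_{r_j}$ allowed to contract as $r_1 = c_1 s^{-\epsilon}$, $r_2 = c_2 s^{-\epsilon}$, with $c_1+c_2$ chosen smaller than the constant in the expansion $a = 1 - \hat c s^{-\epsilon} + \mathcal{O}(s^{-2\epsilon})$, so that the four disks remain disjoint and separated from the middle of each slit.

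The main work is then to re-estimate $\|G_R - I\|$ on every component of the new jump contour $\Sigma_R$. First, on the Airy disks $C_{r_1}, C_{l_1}$ and on the Bessel disks $C_{r_2}, C_{l_2}$, the matching relations \eqref{p5}, \eqref{p10}, \eqref{p18}, \eqref{p22} still hold, but the local variable $\zeta(\lambda)$ is now of order $s^{\frac{2}{3}(1-\epsilon)}$ or $s^{2(1-\epsilon)}$ on these circles; combined with uniform boundedness of $M(\lambda)$ and of the theta-function combinations (which follows from Corollary \ref{cor1} giving $\tau = \frac{2i\epsilon}{\pi}\ln s + \mathcal{O}(1)$, so all $\theta_j$ and their ratios remain controlled), this gives \eqref{m2}--\eqref{m3}. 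On the lens boundaries $\gamma^\pm$ outside the disks, Proposition \ref{prop3} combined with $s\kappa \geq s^{1-\epsilon}$ produces exponential decay, as does the estimate on the middle of the slits in \eqref{m5} using \eqref{e01}--\eqref{e02} and $\gamma = 1 - e^{-2\kappa s}$. On the portions of the original contour inside $C_{r_1}, C_{l_1}$, the analogous bound \eqref{m6} uses boundedness of the approximate parametrices $U_-, V_-$. Together these yield \eqref{DZ9}, and small-norm RH theory then solves the $R$-RHP with $\|R_--I\|_{L^2(\Sigma_R)} \leq \tilde c s^{-(1-\epsilon/2)}$.

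Finally, I would reproduce the derivation of Section \ref{firstcorr}. Cauchy--Schwarz on $\int R_-(G_R - I)\,\d w = \int (G_R-I)\,\d w + \mathcal{O}(s^{-2+\epsilon})$, followed by the residue computations at the simple and double poles of $(\beta^2\pm\beta^{-2})\zeta^{-1/2}$ and $(\beta^2\pm\beta^{-2})\zeta^{-3/2}$ inside the four shrinking disks, leads exactly to formula \eqref{exf1}, now with an improved error $\mathcal{O}(s^{-2+\epsilon})$ in place of $\mathcal{O}(s^{-2})$. Using the definition \eqref{Mformula} of $M(x,\kappa)$, this expansion is equivalent to the claimed estimate on $j(s,v)$.

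The main obstacle, which justifies the factor $s^\epsilon$ in the error, is exactly tracking the non-uniformity of the parametrix matching relations: the prefactor $\zeta^{-3/2}$ on the Airy disks contributes a factor $\sim (sr_1)^{-1} \cdot r_1^{-1}$ after residue evaluation, i.e.\ $s^{-1+\epsilon}$, and one has to verify that the bounds on the theta-function combinations $\Xi_j, \Theta_j$ (which depend on $\tau \sim \frac{\epsilon}{\pi}i\ln s$) do not degrade faster than polynomially in $\ln s$, which indeed they do not, as can be seen directly from the series representation of $\theta_3$ combined with $\Im\tau \to +\infty$.
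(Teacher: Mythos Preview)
Your proposal is correct and follows essentially the same approach as the paper's proof: shrinking the radii to $r_j = c_j s^{-\epsilon}$, re-estimating $G_R-I$ on each piece of $\Sigma_R$ (using that $\tau\to i\infty$ keeps the theta-combinations bounded, and that $s\kappa\geq s^{1-\epsilon}\to\infty$ preserves exponential decay on the lenses and slits), invoking small-norm theory to get $\|R_--I\|_{L^2}\leq \tilde c s^{-(1-\epsilon/2)}$, and then Cauchy--Schwarz plus the unchanged residue computation to arrive at \eqref{exf1} with error $\mathcal{O}(s^{-2+\epsilon})$. One minor slip: in your last paragraph the heuristic $(sr_1)^{-1}\cdot r_1^{-1}$ evaluates to $s^{-1+2\epsilon}$ rather than $s^{-1+\epsilon}$, and the error $\mathcal{O}(s^{-2+\epsilon})$ is a \emph{degraded} bound, not an ``improved'' one---but this does not affect the argument.
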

\subsection{Estimation of the error term}\label{esticruc}
Let us choose $\epsilon=\epsilon_k=\frac{k}{k+1}$ with $k\in\mathbb{Z}_{\geq 3}$ in Proposition \ref{LowerAv} so that
\begin{equation}\label{es:1}
	\big|j(s,v)\big|\leq c s^{-2+\frac{k}{k+1}}
\end{equation}
is valid for $(1-\delta)s\geq v\geq v_0>0$ and $0<s_0\leq s<v^{k+1}$ with $s_0>(1-\delta)^{-\frac{1}{\epsilon_k}}$. We first consider the case where $0<s_0\leq s<v^{k+1}$ and split the integral
\begin{equation}\label{Aint}
	J(s,v) = -\int_s^{v^{k+1}}j(t,v)\d t-\int_{v^{k+1}}^{\infty}j(t,v)\d t\equiv J_1(s,v)+B(v)
\end{equation}
and obtain from \eqref{es:1},
\begin{equation}\label{J1esti}
	\big|J_1(s,v)\big|\leq c(\delta,\epsilon_k)s^{-\frac{1}{k+1}},\ \ \ \ 0<s_0\leq s<v^{k+1},\ \ v\leq s(1-\delta),\ \ \ k\in\mathbb{Z}_{\geq 3}.
\end{equation}
In order to obtain a good estimation for $B(v)$, we now make use of \eqref{BBrefined} which will be derived in \cite{BDIK}. Namely, with
\begin{equation*}
	F(s,v)=-\frac{4vs}{\pi}+\frac{2v^2}{\pi^2}\ln(4s)+2\ln b(v);\hspace{1.5cm} q(s,v)=\frac{\partial}{\partial s}\Big(\ln\det(I-\gamma K_s)-F(s,v)\Big),
\end{equation*}
there exist positive constants $s_0=s_0(\epsilon_k)$ and $c=c(\epsilon_k)$ such that
\begin{equation}\label{Av:4}
	\big|q(s,v)\big|\leq cs^{-2+\frac{3}{k+1}},\ \ \ \ \forall\,s\geq v^{k+1}>0,\ s\geq s_0,\ \ k\in\mathbb{Z}_{\geq 3}.
\end{equation}
We have
\begin{equation}\label{Av:5}
	j(s,v)=q(s,v)+\frac{\partial}{\partial s}\Big(F(s,v)-D(s,v)\Big).
\end{equation}
%and from \eqref{Dsvlower}
%\begin{equation}\label{Av:6}
%	D(s,v)=F(s,v)+\frac{v^2}{\pi^2}\left(3+2\ln\left(\frac{\pi}{v}\right)\right)-2\ln b(v)+\mathcal{O}\left(s^{-1+\frac{3}{k+1}}\ln s\right)
%\end{equation}
%uniformly for $s\geq v^{k+1}\geq \epsilon_0>0,k\in\mathbb{Z}_{\geq 3}$ for any $\epsilon_0>0$.
\begin{rem}
From \eqref{Dsvlower} we have%Expansion \eqref{Av:6} can be differentiated with respect to $s$, we have
\begin{equation}\label{diffexp}
	\frac{\partial}{\partial s}\big(F(s,v)- D(s,v)\big)=\mathcal{O}\left(s^{-2+\frac{3}{k+1}}\ln s\right)
\end{equation}
uniformly for $s\geq v^{k+1}>0,s\geq s_0,k\in\mathbb{Z}_{\geq 3}$.
\end{rem}	
Now combine \eqref{diffexp} with \eqref{Av:4} and obtain in \eqref{Av:5} 
\begin{prop}\label{Aproof} There exist constants $s_0=s_0(\epsilon_k)$ and $c=c(\epsilon_k)$ such that
\begin{equation*}
	\big|j(s,v)\big|\leq cs^{-2+\frac{3}{k+1}}\ln s,\ \ \ \ \forall\,s\geq v^{k+1}>0,\ s\geq s_0,\ \ k\in\mathbb{Z}_{\geq 3}.
\end{equation*}
\end{prop}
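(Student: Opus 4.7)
\medskip
\noindent\textbf{Proof proposal.}
The proposition is a direct assembly of three items already at our disposal: the decomposition \eqref{Av:5}, the refined fixed-$v$ bound \eqref{Av:4} for $q(s,v)$ imported from Part II, and the differentiated $D$-expansion \eqref{diffexp}. Indeed, write
\[
j(s,v)=q(s,v)+\frac{\partial}{\partial s}\bigl(F(s,v)-D(s,v)\bigr),
\]
and observe that the range condition $s\geq v^{k+1}$ is equivalent to $\kappa=\frac{v}{s}\leq s^{-k/(k+1)}$, so we are firmly in the small-$\kappa$ regime. Setting $\epsilon=\frac{k}{k+1}\in(\tfrac34,1)$ for $k\geq 3$, the condition $v\leq s^{1-\epsilon}=s^{1/(k+1)}$ is precisely the domain of validity of both \eqref{BBrefined} (and hence \eqref{Av:4}) and \eqref{Dsvlower}.

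The first summand is controlled by \eqref{Av:4}: $|q(s,v)|\leq c s^{-2+3/(k+1)}$. This bound is obtained by differentiating the refined error $r(s,v)$ of \eqref{BBrefined} with respect to $s$, as noted in Remark \ref{Busref}; the argument belongs to Part II and is used here as a black box. For the second summand, the defining expression \eqref{Dsv} for $D(s,v)$ together with the explicit form of $F(s,v)$ and the expansions of $a(\kappa), V(\kappa), \tau(\kappa)$ in Corollary \ref{cor1}, the exponential smallness of $\theta_3(sV|\tau)-1$ as $-i\tau\to\infty$, and Proposition \ref{integrab}, yield
\[
F(s,v)-D(s,v)=A(v)+\mathcal{O}\!\left(s^{(2-k)/(k+1)}\ln s\right)
\]
uniformly for $s\geq v^{k+1}>0$, which is \eqref{Dsvlower} in the form we need. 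Differentiating the asymptotic in $s$ produces an error of formal size $s^{(2-k)/(k+1)-1}\ln s=s^{-2+3/(k+1)}\ln s$, which matches the target. Adding the two bounds gives Proposition \ref{Aproof}.

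The only genuine issue is justifying the termwise differentiation of the remainder in \eqref{Dsvlower}. This, however, does not require any new Riemann--Hilbert analysis: $D(s,v)$ is written in \eqref{Dsv} as a sum of elementary functions of $s,v,a(\kappa),V(\kappa),\tau(\kappa)$ plus $\ln\theta_3(sV|\tau)$ plus the integral $\int_s^\infty\!M(tV(v/t),v/t)\frac{\d t}{t}$. The last of these differentiates in closed form to $-\frac{1}{s}M(sV(\kappa),\kappa)$, which is $\mathcal{O}(\kappa)=\mathcal{O}(s^{-k/(k+1)})$ by \eqref{est0} and is hence dominated by the target bound. For each remaining ingredient one uses the chain rule with $\frac{\partial\kappa}{\partial s}=-\frac{\kappa}{s}$ and the small-$\kappa$ expansions of $a,V,\tau$ in Corollary \ref{cor1}, which are real-analytic in $\kappa\in(0,1)$, to estimate the derivative directly. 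This step is technical but routine, and is expected to be the only mildly delicate point of the proof; once it is in place the estimate \eqref{diffexp} follows, and combining it with \eqref{Av:4} through \eqref{Av:5} yields the proposition.
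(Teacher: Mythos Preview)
Your proposal is correct and follows exactly the paper's approach: the paper proves the proposition in one line by combining \eqref{Av:5}, \eqref{Av:4}, and \eqref{diffexp}. You supply additional justification for the termwise differentiation behind \eqref{diffexp}, which the paper simply asserts in the preceding Remark; this elaboration is sound and useful but not a different route.
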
	
With Proposition \ref{Aproof} at hand, we go back to \eqref{Aint} and obtain first
\begin{equation*}
	\big|B(v)\big|\leq c(\delta,\epsilon_k)s^{-1+\frac{3}{k+1}}\ln s,\ \ \ 0<s_0\leq s<v^{k+1},\ \ 0<v\leq s(1-\delta),\ \ k\in\mathbb{Z}_{\geq 3},
\end{equation*}
so that combined with the estimation \eqref{J1esti}, for any given $\delta\in(0,1)$, for $k=3$,
\begin{equation}\label{Av:7}
	\big|J(s,v)\big|\leq c(\delta)s^{-\frac{1}{4}}\ln s,\ \ \ \forall\,s\geq s_0,\ \ s^{\frac{1}{4}}\leq v\leq s(1-\delta).\smallskip
\end{equation}
On the other hand, for $s\geq s_0,0<v<s^{\frac{1}{4}}$, we can use Proposition \ref{Aproof} directly
%We now also see that the lower boundary restriction $f_1s^{1-\epsilon}\leq v$ is somewhat artificial: indeed for the (sufficiently large) values of $(s,v)$ such that 
%$v<f_1s^{1-\epsilon}$ we can directly use Proposition \ref{Aproof} 
and deduce, %for any $\epsilon\in(\frac{6}{7},1)$,
\begin{equation*}
	\big|J(s,v)\big|\leq \int_s^{\infty}\big|j(t,v)\big|\d t\leq c(\delta)s^{-\frac{1}{4}}\ln s,\ \ \ \ \ \ 0<v<s^{\frac{1}{4}}.%s^{-(1-\epsilon)},\ \ \ \ \ \ s>v^{\frac{1}{1-\epsilon}}> v_0.
\end{equation*}
We summarize
\begin{prop}\label{Errorfin} Given any $\delta\in(0,1)$ there exist positive constants $s_0=s_0(\delta),v_0=v_0(\delta)$ and $c=c(\delta)$ such that
\begin{equation*}
	\big|J(s,v)\big|\leq cs^{-\frac{1}{4}}\ln s,\ \ \ \forall\,s\geq s_0,\ \ 0<v\leq s(1-\delta).%s^{-(1-\epsilon)},\ \ \ \forall\,s>s_0,\ v>v_0:\ \ v\leq s(1-\delta).
\end{equation*}
\end{prop}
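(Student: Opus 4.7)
\textbf{Proof plan for Proposition \ref{Errorfin}.} The starting point is the representation $J(s,v)=-\int_s^\infty j(t,v)\,\d t$ from \eqref{n:4}, so the task reduces to integrating $s$-pointwise bounds on $j(t,v)$ along the ray $\{(t,v):t\geq s\}$ in the $(t,v)$-plane. As $t$ varies from $s$ to $+\infty$ with $v$ fixed, the ratio $v/t$ traverses $[0,v/s]$, so the ray crosses from region (i), where $\kappa=v/t$ is order one, into region (ii), where $\kappa\to 0$. The plan is therefore to split the integral at a carefully chosen matching point and to use different tools on the two pieces.

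First I would fix the exponent $k=3$ and split
\begin{equation*}
J(s,v)=-\int_s^{v^{k+1}}j(t,v)\,\d t-\int_{v^{k+1}}^\infty j(t,v)\,\d t\equiv J_1(s,v)+B(v),
\end{equation*}
valid whenever $s\leq v^{k+1}$ (the degenerate case $s>v^{k+1}$, which corresponds to $v<s^{1/4}$, is handled at the end). On the interval $s\leq t\leq v^{k+1}$ we have $v\geq t^{1/(k+1)}=t^{1-\epsilon_k}$ with $\epsilon_k=k/(k+1)$, and also $v\leq t(1-\delta)$ provided $v_0$ and $s_0$ are taken large enough; hence Proposition \ref{LowerAv} with $\epsilon=\epsilon_k$ applies and gives $|j(t,v)|\leq c\,t^{-2+k/(k+1)}$. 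Integrating in $t$ yields $|J_1(s,v)|\leq c(\delta)\,s^{-1/(k+1)}$, which for $k=3$ is the desired $s^{-1/4}$ bound (without the logarithm).

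The delicate step is the estimation of $B(v)$, because it runs all the way to infinity along a ray on which $\kappa\to 0$, where Proposition \ref{LowerAv} is no longer usable. Here I invoke the input from Part II in the form of the refined Basor--Widom/Budylin--Buslaev estimate \eqref{BBrefined}: differentiating in $s$ and using the notation $q(s,v)$ from the excerpt gives $|q(t,v)|\leq c\,t^{-2+3/(k+1)}$ for $t\geq v^{k+1}$ large. The identity $j=q+\partial_s(F-D)$ and the expansion \eqref{Dsvlower} (which, by differentiating in $s$, yields $\partial_s(F-D)=\mathcal{O}(t^{-2+3/(k+1)}\ln t)$ on the same range) combine to give Proposition \ref{Aproof}, i.e.\ $|j(t,v)|\leq c\,t^{-2+3/(k+1)}\ln t$ for $t\geq v^{k+1}$. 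Integration then supplies $|B(v)|\leq c(\delta)\,s^{-1+3/(k+1)}\ln s$, which for $k=3$ becomes $c\,s^{-1/4}\ln s$. Adding the two pieces gives \eqref{Av:7} in the band $s^{1/4}\leq v\leq s(1-\delta)$.

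Finally, in the complementary range $0<v<s^{1/4}$ one has $s>v^{k+1}$ from the outset with $k=3$, so Proposition \ref{Aproof} applies on the entire integration interval $[s,\infty)$ and a direct integration of the bound $|j(t,v)|\leq c\,t^{-2+3/(k+1)}\ln t$ gives $|J(s,v)|\leq c(\delta)\,s^{-1/4}\ln s$ at once. The two estimates match on their overlap and, combined, yield the uniform bound claimed for all $0<v\leq s(1-\delta)$. The main obstacle, as flagged above, is the global-in-$t$ estimate on $j$ past the matching point $t=v^{k+1}$: it cannot be produced inside Part I alone and genuinely relies on the refined fixed-$v$ asymptotics \eqref{BBrefined} proved in the sequel \cite{BDIK}, which is precisely why the paper imports this single external input.
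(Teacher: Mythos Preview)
Your proposal is correct and follows essentially the same route as the paper: split the integral \eqref{n:4} at $t=v^{k+1}$ with $k=3$, bound $J_1$ via Proposition \ref{LowerAv} with $\epsilon=\epsilon_k=\tfrac{k}{k+1}$, bound $B(v)$ by combining the Part II input \eqref{Av:4} with \eqref{diffexp} to get Proposition \ref{Aproof}, and then treat the remaining range $0<v<s^{1/4}$ by applying Proposition \ref{Aproof} directly on all of $[s,\infty)$. The only small point worth making explicit is that in the bound for $B(v)$ the integration produces $(v^{k+1})^{-1+3/(k+1)}\ln(v^{k+1})$, and one uses both $s\leq v^{k+1}$ and $v\leq (1-\delta)s$ to convert this into $c(\delta)s^{-1/4}\ln s$.
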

Combining the result of this Proposition with \eqref{Dsv}, \eqref{AVdef}, \eqref{Jsvdef}, \eqref{n:4}, we finish the proof of Theorem \ref{theo1}.
\begin{appendix}

\section{Some identities for theta functions}\label{thetaid}
The Jacobi theta functions $\theta_0(z)\equiv \theta_4(z),\theta_1(z),\theta_2(z),\theta_3(z)$ are given by the formulae (see e.g. \cite{BE,WW}) where $\Im\,\tau>0$,
\begin{eqnarray*}
	\theta_3(z|\tau) &\equiv& \theta(z|\tau) = 1+2\sum_{k=1}^{\infty}e^{\pi ik^2\tau}\cos(2\pi kz),\\%\sum_{k\in\mathbb{Z}}\exp\left\{\pi ik^2\tau+2\pi ikz\right\}\\
	\theta_0(z|\tau) &=&\theta_3\left(z+\frac{1}{2}\bigg|\,\tau\right) = 1+2\sum_{k=1}^{\infty}(-1)^ke^{\pi ik^2\tau}\cos\big(2\pi kz\big)\equiv \theta_4(z|\tau),\\
	\theta_2(z|\tau) &=& e^{\pi i\frac{\tau}{4}+\pi iz}\theta_3\left(z+\frac{\tau}{2}\bigg|\,\tau\right)=2\sum_{k=0}^{\infty}e^{\pi i\tau(k+\frac{1}{2})^2}\cos\big((2k+1)\pi z\big),\\
	\theta_1(z|\tau) &=& -ie^{\pi i\frac{\tau}{4}+\pi iz}\theta_3\left(z+\frac{1}{2}+\frac{\tau}{2}\bigg|\,\tau\right)=2\sum_{k=0}^{\infty}(-1)^ke^{\pi i(k+\frac{1}{2})^2\tau}\sin\big((2k+1)\pi z\big).
\end{eqnarray*}
We list certain identities for the theta functions. First,
\begin{equation}\label{conn}
	\theta_1\left(z-\frac{1}{2}\right) = -\theta_2(z),\hspace{0.5cm}\theta_0\left(z+\frac{1}{2}\right)=\theta_3(z),\hspace{0.5cm}\theta_0\left(z+\frac{\tau}{2}\right) = ie^{-\pi i\frac{\tau}{4}-i\pi z}\theta_1(z).
\end{equation}
The theta functions satisfy the following periodicity conditions with respect to $\Lambda$:
\begin{eqnarray*}
	\theta_1(z+n+m\tau) &=&\theta_1(z)(-1)^{n+m}e^{-\pi i\tau m^2-2\pi imz},\\
	\theta_2(z+n+m\tau) &=&\theta_2(z)(-1)^ne^{-\pi i\tau m^2-2\pi imz},\\
	\theta_0(z+n+m\tau) &=&\theta_0(z)(-1)^me^{-\pi i\tau m^2-2\pi imz},
\end{eqnarray*}
valid for any $n,m\in\mathbb{Z}$, and $z\in\mathbb{C}$. The zeros of the theta functions are given by
\begin{equation*}
	\theta_0\left(\frac{\tau}{2}\right)=0,\hspace{0.5cm}\theta_1(0) = 0,\hspace{0.5cm}\theta_2\left(\frac{1}{2}\right)=0,\hspace{0.5cm}\theta_3\left(\frac{1}{2}+\frac{\tau}{2}\right)=0,
\end{equation*}
up to shifts by vectors of the lattice $\Lambda$. All zeros are simple. Note also the addition formulae 
\begin{eqnarray*}
	\theta_0^2(0)\theta_0(w+z)\theta_0(w-z) &=&\theta_0^2(w)\theta_0^2(z)-\theta_1^2(w)\theta_1^2(z)\\
	\theta_0^2(0)\theta_1(w+z)\theta_1(w-z) &=&\theta_1^2(w)\theta_0^2(z)-\theta_0^2(w)\theta_1^2(z)\\
	\theta_0^2(0)\theta_2(w+z)\theta_2(w-z) &=&\theta_0^2(w)\theta_2^2(z)-\theta_1^2(w)\theta_3^2(z)\\
	\theta_0^2(0)\theta_3(w+z)\theta_3(w-z) &=&\theta_0^2(w)\theta_3^2(z)-\theta_1^2(w)\theta_2^2(z)
\end{eqnarray*}
and a duplication formula
\begin{equation}\label{double}
	\theta_0(2z)\theta_0^3(0) = \theta_3^4(z)-\theta_2^4(z)=\theta_0^4(z)-\theta_1^4(z),\hspace{0.5cm}z\in\mathbb{C}.
\end{equation}
Furthermore, all Jacobi theta functions $\theta=\theta(z|\tau)$ solve the partial differential equation
\begin{equation}\label{de}
	\frac{\partial^2}{\partial z^2}\theta(z|\tau)=4\pi i\frac{\partial}{\partial\tau}\theta(z|\tau).
\end{equation}
\end{appendix}
%----------------------------------------------------------------------------------------------------------------------------------------------------------------------------------------------------------------------------------------------------------------------

\end{document}